\documentclass{article}
\usepackage{url}
\usepackage{color}
\usepackage{bm}
\usepackage{amsmath}
\usepackage{amssymb}\usepackage{amsthm}
\usepackage{amsfonts}
\usepackage{amscd}
\setlength{\textwidth}{15cm}
\setlength{\textheight}{22cm}
\setlength{\oddsidemargin}{-0.5cm}
\setlength{\evensidemargin}{-0.5cm}
\begin{document}
\theoremstyle{plain}
\newtheorem*{ithm}{Theorem}
\newtheorem*{idefn}{Definition}
\newtheorem{thm}{Theorem}[section]
\newtheorem{lem}[thm]{Lemma}
\newtheorem{dlem}[thm]{Lemma/Definition}
\newtheorem{prop}[thm]{Proposition}
\newtheorem{set}[thm]{Setting}
\newtheorem{cor}[thm]{Corollary}
\newtheorem*{icor}{Corollary}
\theoremstyle{definition}
\newtheorem{assum}[thm]{Assumption}
\newtheorem{notation}[thm]{Notation}
\newtheorem{defn}[thm]{Definition}
\newtheorem{clm}[thm]{Claim}
\newtheorem{ex}[thm]{Example}
\theoremstyle{remark}
\newtheorem{rem}[thm]{Remark}
\numberwithin{equation}{section}

\newcommand{\unit}{\mathbb I}
\newcommand{\ali}[1]{{\mathfrak A}_{[ #1 ,\infty)}}
\newcommand{\alm}[1]{{\mathfrak A}_{(-\infty, #1 ]}}
\newcommand{\nn}[1]{\lV #1 \rV}
\newcommand{\br}{{\mathbb R}}
\newcommand{\dm}{{\rm dom}\mu}
\newcommand{\inn}{({\rm {inner}})}
\newcommand{\Ad}{\mathop{\mathrm{Ad}}\nolimits}
\newcommand{\Proj}{\mathop{\mathrm{Proj}}\nolimits}
\newcommand{\RRe}{\mathop{\mathrm{Re}}\nolimits}
\newcommand{\RIm}{\mathop{\mathrm{Im}}\nolimits}
\newcommand{\Wo}{\mathop{\mathrm{Wo}}\nolimits}
\newcommand{\Prim}{\mathop{\mathrm{Prim}_1}\nolimits}
\newcommand{\Primz}{\mathop{\mathrm{Prim}}\nolimits}
\newcommand{\ClassA}{\mathop{\mathrm{ClassA}}\nolimits}
\newcommand{\Class}{\mathop{\mathrm{Class}}\nolimits}
\newcommand{\diam}{\mathop{\mathrm{diam}}\nolimits}
\def\qed{{\unskip\nobreak\hfil\penalty50
\hskip2em\hbox{}\nobreak\hfil$\square$
\parfillskip=0pt \finalhyphendemerits=0\par}\medskip}
\def\proof{\trivlist \item[\hskip \labelsep{\bf Proof.\ }]}
\def\endproof{\null\hfill\qed\endtrivlist\noindent}
\def\proofof[#1]{\trivlist \item[\hskip \labelsep{\bf Proof of #1.\ }]}
\def\endproofof{\null\hfill\qed\endtrivlist\noindent}

\newcommand{\qe}{\sim_{\mathop{q.e.}}}
\newcommand{\bgu}[2]{\beta_{#1}^{U#2}}
\newcommand{\SDC}{{\mathfrak A}_{\mathop{\mathrm SDC}}}
\newcommand{\sdc}{\SDC}
\newcommand{\mkA}{{\mathfrak A}}
\newcommand{\mkB}{{\mathfrak B}}
\newcommand{\mkC}{{\mathfrak C}}
\newcommand{\mkD}{{\mathfrak D}}
\newcommand{\mkh}{{\mathfrak h}}
\newcommand{\mkk}{{\mathfrak K}}
\newcommand{\mkK}{{\mathfrak K}}
\newcommand{\varphii}{\varphi}
\newcommand{\sdcizs}{\sdc\lmk \mkk_1\oplus \mkk_0\oplus \mkk_3,  \mkC_1\oplus \mkC_0\oplus \mkC_3\rmk}
\newcommand{\sdciz}{\sdc\lmk \mkk_1\oplus \mkk_0,  \mkC_1\oplus \mkC_0\rmk}
\newcommand{\sdcin}{\sdc\lmk \mkk_1\oplus \mkk_2,  \mkC_1\oplus \mkC_2\rmk}
\newcommand{\sdcis}{\sdc\lmk \mkk_1\oplus \mkk_3,  \mkC_1\oplus \mkC_3\rmk}
\newcommand{\sdckcz}[1]{\sdc^{(0)}\lmk\mkk_{#1},\mkC_{#1}\rmk}
\newcommand{\sdckc}[1]{\sdc\lmk\mkk_{#1},\mkC_{#1}\rmk}
\newcommand{\bh}[1]{\caB\lmk\caH_{#1}\rmk}
\newcommand{\pg}{{\mathfrak S}(\bbZ^2)}
\newcommand{\oo}{{\boldsymbol\varphii}}
\newcommand{\caA}{{\mathcal A}}
\newcommand{\caB}{{\mathcal B}}
\newcommand{\caC}{{\mathcal C}}
\newcommand{\caD}{{\mathcal D}}
\newcommand{\caE}{{\mathcal E}}
\newcommand{\caF}{{\mathcal F}}
\newcommand{\caG}{{\mathcal G}}
\newcommand{\caH}{{\mathcal H}}
\newcommand{\caI}{{\mathcal I}}
\newcommand{\caJ}{{\mathcal J}}
\newcommand{\caK}{{\mathcal K}}
\newcommand{\caL}{{\mathcal L}}
\newcommand{\caM}{{\mathcal M}}
\newcommand{\caN}{{\mathcal N}}
\newcommand{\caO}{{\mathcal O}}
\newcommand{\caP}{{\mathcal P}}
\newcommand{\caQ}{{\mathcal Q}}
\newcommand{\caR}{{\mathcal R}}
\newcommand{\caS}{{\mathcal S}}
\newcommand{\caT}{{\mathcal T}}
\newcommand{\caU}{{\mathcal U}}
\newcommand{\caV}{{\mathcal V}}
\newcommand{\caW}{{\mathcal W}}
\newcommand{\caX}{{\mathcal X}}
\newcommand{\caY}{{\mathcal Y}}
\newcommand{\caZ}{{\mathcal Z}}
\newcommand{\bbA}{{\mathbb A}}
\newcommand{\bbB}{{\mathbb B}}
\newcommand{\bbC}{{\mathbb C}}
\newcommand{\bbD}{{\mathbb D}}
\newcommand{\bbE}{{\mathbb E}}
\newcommand{\bbF}{{\mathbb F}}
\newcommand{\bbG}{{\mathbb G}}
\newcommand{\bbH}{{\mathbb H}}
\newcommand{\bbI}{{\mathbb I}}
\newcommand{\bbJ}{{\mathbb J}}
\newcommand{\bbK}{{\mathbb K}}
\newcommand{\bbL}{{\mathbb L}}
\newcommand{\bbM}{{\mathbb M}}
\newcommand{\bbN}{{\mathbb N}}
\newcommand{\bbO}{{\mathbb O}}
\newcommand{\bbP}{{\mathbb P}}
\newcommand{\bbQ}{{\mathbb Q}}
\newcommand{\bbR}{{\mathbb R}}
\newcommand{\bbS}{{\mathbb S}}
\newcommand{\bbT}{{\mathbb T}}
\newcommand{\bbU}{{\mathbb U}}
\newcommand{\bbV}{{\mathbb V}}
\newcommand{\bbW}{{\mathbb W}}
\newcommand{\bbX}{{\mathbb X}}
\newcommand{\bbY}{{\mathbb Y}}
\newcommand{\bbZ}{{\mathbb Z}}
\newcommand{\str}{^*}
\newcommand{\lv}{\left \vert}
\newcommand{\rv}{\right \vert}
\newcommand{\lV}{\left \Vert}
\newcommand{\rV}{\right \Vert}
\newcommand{\la}{\left \langle}
\newcommand{\ra}{\right \rangle}
\newcommand{\ltm}{\left \{}
\newcommand{\rtm}{\right \}}
\newcommand{\lcm}{\left [}
\newcommand{\rcm}{\right ]}
\newcommand{\ket}[1]{\lv #1 \ra}
\newcommand{\bra}[1]{\la #1 \rv}
\newcommand{\lmk}{\left (}
\newcommand{\rmk}{\right )}
\newcommand{\al}{{\mathcal A}}
\newcommand{\md}{M_d({\mathbb C})}
\newcommand{\eaut}{\mathop{\mathrm{EAut}}\nolimits}
\newcommand{\qaut}{\mathop{\mathrm{QAut}}\nolimits}
\newcommand{\sqaut}{\mathop{\mathrm{SQAut}}\nolimits}
\newcommand{\gsqaut}{\mathop{\mathrm{GSQAut}}\nolimits}
\newcommand{\QLS}{\mathop{\mathcal{SL}}\nolimits}
\newcommand{\haut}{\mathop{\mathrm{HAut}}\nolimits}
\newcommand{\guaut}{\mathop{\mathrm{GUQAut}}\nolimits}
\newcommand{\IG}{\mathop{\mathrm{IG}}\nolimits}
\newcommand{\IP}{\mathop{\mathrm{IP}}\nolimits}
\newcommand{\ainn}{\mathop{\mathrm{AInn}}\nolimits}
\newcommand{\id}{\mathop{\mathrm{id}}\nolimits}
\newcommand{\Tr}{\mathop{\mathrm{Tr}}\nolimits}
\newcommand{\co}{\mathop{\mathrm{co}}\nolimits}
\newcommand{\sym}{\mathop{\mathrm{Sym}}\nolimits}
\newcommand{\sgn}{\mathop{\mathrm{sgn}}\nolimits}
\newcommand{\Ran}{\mathop{\mathrm{Ran}}\nolimits}
\newcommand{\Ker}{\mathop{\mathrm{Ker}}\nolimits}
\newcommand{\Aut}{\mathop{\mathrm{Aut}}\nolimits}
\newcommand{\QAut}{\mathop{\mathrm{QAut}}\nolimits}
\newcommand{\EAut}{\mathop{\mathrm{EAut}}\nolimits}
\newcommand{\SPT}{\mathop{\mathrm{SPT}}\nolimits}
\newcommand{\Inn}{\mathop{\mathrm{Inn}}\nolimits}
\newcommand{\spn}{\mathop{\mathrm{span}}\nolimits}
\newcommand{\Mat}{\mathop{\mathrm{M}}\nolimits}
\newcommand{\Dia}{\mathop{\mathrm{D}}\nolimits}
\newcommand{\UT}{\mathop{\mathrm{UT}}\nolimits}
\newcommand{\DT}{\mathop{\mathrm{DT}}\nolimits}
\newcommand{\GL}{\mathop{\mathrm{GL}}\nolimits}
\newcommand{\spa}{\mathop{\mathrm{span}}\nolimits}
\newcommand{\supp}{\mathop{\mathrm{supp}}\nolimits}
\newcommand{\rank}{\mathop{\mathrm{rank}}\nolimits}
\newcommand{\idd}{\mathop{\mathrm{id}}\nolimits}
\newcommand{\ran}{\mathop{\mathrm{Ran}}\nolimits}
\newcommand{\dr}{ \mathop{\mathrm{d}_{{\mathbb R}^k}}\nolimits} 
\newcommand{\dc}{ \mathop{\mathrm{d}_{\cc}}\nolimits} \newcommand{\drr}{ \mathop{\mathrm{d}_{\rr}}\nolimits} 
\newcommand{\auz}{\Aut^{(0)}}
\newcommand{\zin}{\mathbb{Z}}
\newcommand{\rr}{\mathbb{R}}
\newcommand{\cc}{\mathbb{C}}
\newcommand{\ww}{\mathbb{W}}
\newcommand{\nan}{\mathbb{N}}\newcommand{\bb}{\mathbb{B}}
\newcommand{\aaa}{\mathbb{A}}\newcommand{\ee}{\mathbb{E}}
\newcommand{\pp}{\mathbb{P}}
\newcommand{\wks}{\mathop{\mathrm{wk^*-}}\nolimits}
\newcommand{\Hom}{\mathop{\mathrm{Hom}}\nolimits}
\newcommand{\mk}{{\Mat_k}}
\newcommand{\mnz}{\Mat_{n_0}}
\newcommand{\mn}{\Mat_{n}}
\newcommand{\dist}{\mathrm{d}}
\newcommand{\braket}[2]{\left\langle#1,#2\right\rangle}
\newcommand{\ketbra}[2]{\left\vert #1\right \rangle \left\langle #2\right\vert}
\newcommand{\abs}[1]{\left\vert#1\right\vert}
\newtheorem{nota}{Notation}[section]
\def\qed{{\unskip\nobreak\hfil\penalty50
\hskip2em\hbox{}\nobreak\hfil$\square$
\parfillskip=0pt \finalhyphendemerits=0\par}\medskip}
\def\proof{\trivlist \item[\hskip \labelsep{\bf Proof.\ }]}
\def\endproof{\null\hfill\qed\endtrivlist\noindent}
\def\proofof[#1]{\trivlist \item[\hskip \labelsep{\bf Proof of #1.\ }]}
\def\endproofof{\null\hfill\qed\endtrivlist\noindent}
\newcommand{\pa}[1]{\pi_{#1}\alpha_{#1}}
\newcommand{\pat}{\lmk \pa L\hat\otimes \pa R\rmk}
\newcommand{\ao}[1]{a_\omega(#1)}
\newcommand{\eg}[2]{(-1)^{\ao{#1}}#2}
\newcommand{\unic}{\unit_{\bbC^2}}
\newcommand{\unih}[1]{\unit_{\caH_{#1}}}
\newcommand{\ZZ}{\bbZ_2\times\bbZ_2}
\newcommand{\SSS}{\mathcal{S}}
\newcommand{\cs}{S}
\newcommand{\ct}{t}
\newcommand{\hS}{S}
\newcommand{\vv}{{\boldsymbol v}}
\newcommand{\ala}{a}
\newcommand{\bet}{b}
\newcommand{\gam}{c}
\newcommand{\alphas}{\alpha}
\newcommand{\alphai}{\alpha^{(\sigma_{1})}}
\newcommand{\alphan}{\alpha^{(\sigma_{2})}}
\newcommand{\betas}{\beta}
\newcommand{\betai}{\beta^{(\sigma_{1})}}
\newcommand{\betan}{\beta^{(\sigma_{2})}}
\newcommand{\alphass}{\alpha^{{(\sigma)}}}
\newcommand{\uu}{V}
\newcommand{\vp}{\varsigma}
\newcommand{\vpr}{R}
\newcommand{\tg}{\tau_{\Gamma}}
\newcommand{\sgg}{\Sigma_{\Lambda}}
\newcommand{\nh}{1}
\newcommand{\rk}{2,a}
\newcommand{\nii}{1,a}
\newcommand{\nhh}{3,a}
\newcommand{\sjt}{2}
\newcommand{\sjtg}{2}
\newcommand{\bcg}{\caB(\caH_{\alpha})\otimes  C^{*}(\Sigma_{\Lambda})}
\newcommand{\cacn}[2]{\caA_{C_{#1}\cap H_{#2}^{{c_{#2}^{(2)}}}}}
\newcommand{\caci}[2]{\caA_{C_{#1}\cap H_{#2}^{{c_{#2}^{(1)}}}}}
\newcommand{\cac}[2]{\caA_{C_{#1}\cap H_{#2}^{c_{#2}}}}
\newcommand{\cacc}[3]{\caA_{\lmk C_{#1}\cap H_{#2}^{c^{(#3)}_{#2}} \rmk^c\cap H_{#2} }}
\newcommand{\Uo}{{\mathrm U}(1)}
\newcommand{\css}[2]{c_{#1}^{#2}}
\newcommand{\uss}[2]{u_{#1}^{#2}}
\newcommand{\kss}[2]{\kappa_{#1}^{#2}}
\newcommand{\mss}[3]{m_{#1#2}^{#3}}

\newcommand{\PD}{\mathcal{PD}(G)}
\newcommand{\PDz}{\mathcal{PD}_0(G)}
\newcommand{\ptp}{\lmk\pi_L\hat\otimes\pi_R\rmk}
\newcommand{\bebe}[3]{\bgu{#2}#1\lmk\eta_{#2#1}^\epsilon\rmk^{-1}
\bgu{#3}#1\lmk \lmk \eta_{#3#1}^{\eg{#2}\epsilon}\rmk_{-\ao{#2}\epsilon}\rmk^{-1}
\eta_{#2#3,#1}^\epsilon \lmk \bgu{#2#3}#1\rmk^{-1}}
\newcommand{\etas}[3]{\eta_{#1#2}^{#3}}
\newcommand{\os}[1]{\omega_{#1}}
\newcommand{\bebey}[4]{\bgu{#2}#1\lmk\eta_{#2#1}^{\epsilon #4}\rmk^{-1}
\bgu{#3}#1\lmk \lmk \eta_{#3#1}^{\eg{#2}{\epsilon #4}}\rmk_{-\ao{#2}{\epsilon }}\rmk^{-1}
\eta_{#2#3,#1}^{\epsilon #4} \lmk \bgu{#2#3}#1\rmk^{-1}}
\newcommand{\tdnh}{\tilde\delta_{28}}
\newcommand{\dnh}{\delta_{\nhh}}
\newcommand{\eijz}[2]{e_{#1#2}^{(\Lambda_0)}}
\newcommand{\tdsj}{\tilde \delta_{30}}
\newcommand{\tgsj}{\tilde\caG_{30}}
\newcommand{\dn}{\delta_{\nii}}
\newcommand{\tdnhs}{\tilde\delta_{28}}
\newcommand{\tdsjs}{\tilde \delta_{30}}
\newcommand{\sigman}{i}
\newcommand{\Rn}{2}
\newcommand{\Ln}{1}
\newcommand{\dro}{\delta_{\rk}}
\newcommand{\che}[1]{#1}

\title{An invariant of symmetry protected topological phases with on-site finite group symmetry for two-dimensional Fermion systems}

\author{Yoshiko Ogata \thanks{ Graduate School of Mathematical Sciences
The University of Tokyo, Komaba, Tokyo, 153-8914, Japan
Supported in part by
the Grants-in-Aid for
Scientific Research, JSPS.}}
\maketitle
\begin{abstract}
We consider SPT-phases with on-site finite group $G$ symmetry for two-dimensional Fermion systems.
We derive an invariant of the classification.
\end{abstract}

\section{Introduction}\label{introsec}
The notion of symmetry protected topological (SPT) phases was introduced by Gu and Wen [GW].
It is defined as follows:
we consider the set of all Hamiltonians with some symmetry, 
which have a unique gapped ground state in the bulk, and can be smoothly deformed into
a common trivial gapped Hamiltonian without closing the gap.
We say 
two such Hamiltonians are equivalent, if they can be smoothly deformed into
each other, without breaking the symmetry.
We call an equivalence class of this classification, a
symmetry protected topological (SPT) phase.
In \cite{1dFermi} we derived $\bbZ_2\times H^1(G, \Uo)\times H^2(G,\Uo)$-valued
invariant of one-dimensional Fermionic SPT-phases.
In this paper, we derive an invariant of SPT-phases  in two-dimensional Fermionic systems.
 
We start by summarizing standard setup of Fermionic systems on the two dimensional lattice $\bbZ^{2}$ \cite{BR1,BR2, EK}. We will use freely the basic notation in section \ref{notasec}.
We also use the facts and notation
of graded $C^*$-algebra from \cite{bla} and \cite{1dFermi}.

Let us first recall the definition of the self-dual CAR-algebra introduced by Araki \cite{sdc}.
(See also \cite{EK} Chapter 6.)
\begin{defn}
For a Hilbert space $\mkK$ with a complex conjugation $\mkC$ (i.e., anti-unitary such that $\mkC=\mkC^*$),
self-dual-CAR-algebra $\SDC(\mkK,\mkC)$
over $(\mkK,\mkC)$
is defined as  the universal enveloping $C^*$-algebra
generated by $\{B(f)\mid f\in \mkK\}$
such that
\begin{align}
\begin{split}
\mkK\ni f\mapsto B(f),\quad \text{linear}\\
\left\{ B(f), B(g)\right\}=\braket{f}{g}\unit,\\
B(f)^*=B(\mkC f),\quad f,g\in \mkK.
\end{split}
\end{align}
\end{defn}
If $u\in\caU(\mkK)$ satisfies $u\mkC=\mkC u$, then 
there exists an automorphism $\Xi_u\in \Aut(\SDC(\mkK,\mkC))$
such that $\Xi_u\lmk B(f)\rmk=B\lmk uf\rmk$, $f\in\mkK$.
In particular, for $u=-\unit_\mkK$, 
$\Theta_\mkK:=\Xi_{-\unit}$ defines an automorphism on $\SDC(\mkK,\mkC)$
satisfying $\Theta_\mkK^2=\id$. Hence it defines a grading over $\SDC(\mkK,\mkC)$.
In general, for a graded $C^*$-algebra $\mkB$, we denote by $\mkB^{(0)}$
its even part and by $\mkB^{(1)}$
its odd part. Elements in $\mkB^{(0)}$ or $\mkB^{(1)}$ are
said to be homogeneous.
For a homogeneous element $b\in \mkB$,
we denote by $\partial b$ the grading of $b$.
With this grading, $\SDC(\mkK,\mkC)$ has an odd self-adjoint unitary
(consider $B(f)$ for $f\in \mkK$ with $\mkC f=f$, $\lV f\rV^2=2$.)
We say a state $\omega$ on $\SDC(\mkK,\mkC)$ is homogeneous if it is invariant under $\Theta_\mkk$.
When an automorphism $\alpha$ on $\Aut(\SDC(\mkK,\mkC))$ commutes with $\Theta_\mkk$,
we say that $\alpha$ is graded. We denote by $\Aut^{(0)}\lmk\SDC(\mkK,\mkC)\rmk$
the set of all graded automorphisms on $\Aut(\SDC(\mkK,\mkC))$.

A basis projection for $(\mkK,\mkC)$ is an orthogonal projection
on $\mkK$ such that $p+\mkC p\mkC=\unit_\mkK$.
Basis projection exists if $\mkK$ is even or infinite dimensional.
If $p$ is a basis projection for $(\mkK,\mkC)$,
the self-dual CAR-algebra $\SDC(\mkK,\mkC)$ is $*$-isomorphic to the CAR-algebra
$\mkA_{\rm CAR}(p\mkK)$ over $p\mkK$
via a $*$-isomorphism $\gamma_{p} : \SDC(\mkK,\mkC)\to \mkA_{\rm CAR}(p\mkK)$
such that
\begin{align}\label{gamp}
\gamma_{p}\lmk B(f)\rmk=a^*(pf)+a(p\mkC f),\quad f\in \mkK.
\end{align}
Here, $a^*(g), a(g)$, $g\in p\mkK$ denotes the creation and anihilation operators of the CAR-algebra $\mkA_{\rm CAR}(p\mkK)$.
With the Fock state $\omega$ over $\mkA_{\rm CAR}(p\mkK)$ (i.e., the state with $\omega(a^*(g)a(g))=0$ for any
$g\in p\mkK$), via the $*$-isomorphism above,
we can define a state $\omega_p:=\omega\gamma_p$ over $\SDC(\mkK,\mkC)$.

For Hilbert spaces $\mkK_1,$ $\mkK_2$ with complex conjugations $\mkC_1,\mkC_2$,
there is a $*$-isomorphism
$\gamma_{12}: \SDC(\mkK_1,\mkC_1)\hat\otimes \SDC(\mkK_2,\mkC_2)
\to \SDC\lmk \mkK_1\oplus \mkK_2,\mkC_1\oplus \mkC_2\rmk$
such that
\begin{align}
\gamma_{12} \lmk B_1(f_1)\hat\otimes \unit +\unit\hat\otimes B_2(f_2)\rmk
=B(f_1\oplus f_2),\quad f_1\in\mkK_1,\; f_2\in\mkK_2.
\end{align}
(Recall the graded tensor product $\hat\otimes$ from \cite{bla} section 14.)
Here we denoted the generators of $\SDC(\mkK_i,\mkC_i)$ by $B_i(f_i)$, $f_i\in\mkK_i$.
We identify $\SDC(\mkK_1,\mkC_1)\hat\otimes \SDC(\mkK_2,\mkC_2)$ and
$ \SDC\lmk \mkK_1\oplus \mkK_2,\mkC_1\oplus \mkC_2\rmk$ via this isomorphism
throughout this paper, without writing
$\gamma_{12}$ explicitly.
In particular, for homogeneous states $\varphi_1,\varphi_2$ on  $\SDC(\mkK_1,\mkC_1)$,
$ \SDC(\mkK_2,\mkC_2)$,
we denote the state $\lmk \varphi_1\hat \otimes\varphi_2\rmk\circ\gamma_{12}^{-1}$ 
simply by $\varphi_1\hat \otimes\varphi_2$.
(Recall that $\hat\varphi_1\hat\otimes\hat\varphi_2$ is a state on
$\SDC(\mkK_1,\mkC_1)\hat\otimes \SDC(\mkK_2,\mkC_2)$
such that $(\hat\varphi_1\hat\otimes\hat\varphi_2)(a\hat\otimes b)=\varphi_1(a)\varphi_2(b)$.
 )
For graded automorphisms $\alpha_1\in \Aut^{(0)}\lmk \SDC(\mkK_1,\mkC_1)\rmk$,
$\alpha_2\in \Aut^{(0)}\lmk \SDC(\mkK_2,\mkC_2)\rmk$,
we denote the automorphism $\gamma_{12}\lmk\alpha_1\hat\otimes \alpha_2\rmk\gamma_{12}^{-1}$
simply by $\alpha_1\hat\otimes \alpha_2$,

Throughout this paper, we fix some $d\in2 \nan$.
For each $k\in\bbZ$, we set
\begin{align}
\begin{split}
H_{U}^{k}:=\bbZ\times \bbZ_{\ge k},\quad
H_{D}^{k}:=\bbZ\times \bbZ_{\le k},\quad
H_{L}^{k}:=\bbZ_{\le k}\times\bbZ,\quad
H_{R}^{k}:=\bbZ_{\ge k}\times \bbZ.
\end{split}
\end{align}
In particular,
left, right, upper, lower half planes are denoted by $H_L:=H_L^{-1}$, $H_R:=H_R^{0}$,
$H^U:=H_U^{0}$, $H_D:=H_D^{-1}$.
We set $\mkh:=l^2(\bbZ^2)$.
Let $\{\delta_{(x,y)} \mid (x,y)\in \bbZ^{2}\}$ be the standard basis of $l^{2}(\bbZ^{2})$,
and $\mathfrak C$, the complex conjugation on $l^{2}(\bbZ^{2})$
with respect to it.

For each $X\subset \bbZ^{2}$, we set
\begin{align}
\hat X:=
\left\{
(dx+j,y)\mid (x,y)\in X,\; j=0,\ldots,d-1
\right\}, 
\quad \mathfrak h_{X}:=l^{2}(\hat X)\subset l^{2}(\bbZ^{2}).
\end{align}
We denote the restriction $\mathfrak C\vert_{\mathfrak h_{X}}$
of $\mathfrak C$ on $\mathfrak h_{X}$
by $\mathfrak C_{X}$, 
and set
\begin{align}
\caA_{X}:=\mathfrak A_{\rm SDC}\lmk \mathfrak h_{X}, \mathfrak C_{X}\rmk,\quad
\tilde \caA_X:=\SDC\lmk\ l^2(X), \mkC\vert_{l^2(X)}\rmk.
\end{align}
In particular, we set $\al:=\al_{\bbZ^2}$.
We denote by $\Theta_X:=\Theta_{\mathfrak h_{X}}$ the grading automorphism
on $\caA_X$.
For $X=H_L,H_R$, we also set $\Theta_L:=\Theta_{H_L}$,
$\Theta_R:=\Theta_{H_R}$.

Now we define a reference state.
For each $(x,y)\in\bbZ^2$,
$\{\delta_{(z,y)}\mid z=dx,dx+1,\ldots,dx+d-1\}$
is a CONS of $\mkh_{(x,y)}:=\mkh_{\{(x,y)\}}$.
One can decompose $\mkh_{(x,y)}$ as
\begin{align}
\mkh_{(x,y)}=
\bigoplus_{j=0}^{\frac d2-1}\mkh_{(x,y)}^{(j)}
\end{align}
with its $2$-dimensional subspaces
\begin{align}
\mkh_{(x,y)}^{(j)}:=
\bbC-\spa\left\{
\delta_{(dx+2j,y)}, \delta_{(dx+2j+1,y)}
\right\},\quad j=0,\ldots,\frac d2-1.
\end{align}
Because each $\mkh_{(x,y)}^{(j)}$ is invariant under $\mkC$,
its restriction $\mkC^{(j)}_{(x,y)}:=\mkC\vert_{\mkh_{(x,y)}^{(j)}}$ gives
a complex conjugation 
on $\mkh_{(x,y)}^{(j)}$.

Let $p^{(j)}_{(x,y)}$
be the orthogonal projection on $\mkh_{(x,y)}^{(j)}$
 onto 
the one-dimensional subspace
spanned by $\delta_{(dx+2j,y)}+i\delta_{(dx+2j+1,y)}$.
By the definition, we see
that $\mkC^{(j)}_{(x,y)}$ is a basis projection for
$(\mkh_{(x,y)}^{(j)},\mkC^{(j)}_{(x,y)})$, i.e.,
\begin{align}\label{pjpa}
\mkC^{(j)}_{(x,y)} p^{(j)}_{(x,y)} \mkC^{(j)}_{(x,y)}
=\unit_{\mkh_{(x,y)}^{(j)}}-p^{(j)}_{(x,y)}.
\end{align}
Set projections $p_{(x,y)}$, $(x,y)\in\bbZ^2$,
$p_X$, $X\subset\bbZ^2$ on
$\mkh_{(x,y)}$, $\mkh_X$ respectively by
\begin{align}\label{pxx}
p_{(x,y)}:=\bigoplus_{j=0}^{\frac d2-1} p_{(x,y)}^{(j)} ,\quad
p_X:=\bigoplus_{(x,y)\in X} p_{(x,y)}.
\end{align}
By (\ref{pjpa}),
we see that $p_X$ is a basis projection for $(\mkh_X,\mkC_X)$
\begin{align}
\mkC_X p_X\mkC_X=\unit_{\mkh_X}-p_X.
\end{align}
In particular, $p:=p_{\bbZ^2}$ is a basis projection
for $(\mkh,\mkC)$.
From this basis projection $p$, we can 
construct a Fock state $\omega^{(0)}:=\omega_p$
on $\al$.
Set $g_{j,(x,y)}:=\delta_{(dx+2j,y)}+i\delta_{(dx+2j+1,y)}$,
for $(x,y)\in\bbZ^2$ and $j=0,\ldots,\frac d2-1$.
With $\gamma_p$ in (\ref{gamp}),
we note that
\begin{align}\label{abp}
\gamma_p^{-1}
\lmk a^*(p g_{j,(x,y)}) a(p g_{j,(x,y)}) \rmk
=B(g_{j,(x,y)}) B(g_{j,(x,y)})^*.
\end{align}

This $\omega^{(0)}$ is our reference state.
We also define Fock states on $\al_{H_L}$, $\al_{H_R}$
out of
basis projections $p_{H_L}$, $p_{H_R}$ on
$\mkh_{H_L}$, $\mkh_{H_R}$,
$\omega_L:=\omega_{p_{H_L}}$, $\omega_R:=\omega_{p_{H_R}}$. 
From the structure we
see that 
\begin{align}\label{zlrten}
\omega^{(0)}=\omega_{L}\hat\otimes \omega_{R}.
\end{align}

Throughout this paper, we fix a finite group $G$
and a unitary representation $U$
on $\bbC^d$,
commuting with the complex conjugation 
with respect to the standard basis of $\bbC^d$.
Identifying the standard basis of $\mkh_{(x,y)}$
$\{ \delta_{(dx+k,y)}\mid k=0,\ldots,d-1\}$
with that of $\bbC^d$,
we write the copy of $U$
on $\mathfrak h_{(x,y)}$ as $U_{(x,y)}$.
Note that $U_{(x,y)}(g)$, $g\in G$ and $\mkC_{(x,y)}$
commute,
hence $U_X(g):=\bigoplus_{(x,y)\in X} U_{(x,y)}(g)$, $g\in G$
and $\mkC_X$
commute, for each $X\subset\bbZ^2$.
As a result, for each $X\subset \bbZ^2$
and $g\in G$ there is an automorphism $\beta_g^X:=\Xi_{U_X(g)}$ such that
\begin{align}
\beta_g^X\lmk B(f)\rmk
=B\lmk U_X(g) f\rmk,\quad f\in \mkh_X.
\end{align}
We set
\begin{align}
\begin{split}
\beta_g:=\beta_g^{\bbZ^2},\quad
\beta_g^U:=\beta_g^{H_U^1},\quad\beta_g^D:=\beta_g^{H_D^1},\quad
\beta_g^{UR}:=\beta_f^{H_U^1\cap H_R},\quad \beta_g^{UL}:=\beta_f^{H_U^1\cap H_L}
\quad g\in G.
\end{split}
\end{align}


A mathematical model on two dimensional Fermionic system is fully specified by its
even  interaction $\Phi$.
We denote the set of all finite subsets of $\bbZ^2$ by ${\mathfrak S}_{\bbZ^{2}}$.
A uniformly bounded even interaction on $\caA$ 
is a map $\Phi: {\mathfrak S}_{\bbZ^{2}}\to \caA^{(0)}$ such that
\begin{align}
\Phi(X)=\Phi(X)^*\in \caA_{X}^{(0)},\quad X\in {\mathfrak S}_{\bbZ^{2}},
\end{align}
and 
\begin{align}
\sup_{X\in {\mathfrak S}_{\bbZ^{2}}}\lV \Phi(X)\rV<\infty.
\end{align}
It is of finite range with interaction length less than or equal to $R\in\nan$ if 
$\Phi(X)=0$ for any $X\in {\mathfrak S}_{\bbZ^{2}}$
whose diameter is larger than $R$.
An on-site interaction, i.e., an interaction with
$\Phi(X)=0$ unless $X$ consists of a single point, is said to be trivial.
An even interaction $\Phi$ is $\beta$-invariant if $\beta_g(\Phi(X))=\Phi(X)$
for any $X\in {\mathfrak S}_{\bbZ^{2}}$.
%
For a uniformly bounded and finite range even interaction $\Phi$ and $\Lambda\in {\mathfrak S}_{\bbZ^{2}}$
define the local Hamiltonian
\begin{align}
\lmk H_\Phi\rmk_\Lambda
:=\sum_{X\subset\Lambda} \Phi(X),
\end{align}
and denote the dynamics
\begin{align}
\tau^{(\Lambda)\Phi}_t (A):=e^{it\lmk H_\Phi\rmk_\Lambda}Ae^{-it\lmk H_\Phi\rmk_\Lambda},
\quad t\in \bbR,\quad A\in\caA.
\end{align}
By the uniform boundedness and finite rangeness of $\Phi$, 
 for each $A\in\caA$, the following limit exists
\begin{align}
\lim_{\Lambda\to\bbZ^{2}} \tau^{(\Lambda),\Phi}_t\lmk
A\rmk=:
\tau^{\Phi}_t\lmk A\rmk,\quad t\in\bbR,
\end{align}
and defines the dynamics $\tau^{\Phi}$ on $\caA$.
%
For a uniformly bounded and finite range even interaction $\Phi$,
a state $\varphi$ on $\caA$ is called a \mbox{$\tau^{\Phi}$-ground} state
if the inequality
$
-i\,\varphi(A^*{\delta_{\Phi}}(A))\ge 0
$
holds
for any element $A$ in the domain $\caD({\delta_{\Phi}})$ of the generator ${\delta_\Phi}$.
Let $\varphi$ be a $\tau^\Phi$-ground state, with a GNS triple $(\caH_\varphi,\pi_\varphi,\Omega_\varphi)$.
Then there exists a unique positive operator $H_{\varphi,\Phi}$ on $\caH_\varphi$ such that
$e^{itH_{\varphi,\Phi}}\pi_\varphi(A)\Omega_\varphi=\pi_\varphi(\tau^t_\Phi(A))\Omega_\varphi$,
for all $A\in\caA$ and $t\in\mathbb R$.
We call this $H_{\varphi,\Phi}$ the bulk Hamiltonian associated with $\varphi$.
\begin{defn}
We say that an interaction $\Phi$ has a unique gapped ground state if 
(i)~the $\tau^\Phi$-ground state, which we denote as $\omega_{\Phi}$, is unique, and 
(ii)~there exists a $\gamma>0$ such that
$\sigma(H_{\omega_{\Phi},\Phi})\setminus\{0\}\subset [\gamma,\infty)$, where  $\sigma(H_{\omega_{\Phi},\Phi})$ is the spectrum of $H_{\omega_{\Phi},\Phi}$.
We denote by $\caP_{UG}
$ the set of all uniformly bounded finite range even
interactions, with unique gapped ground state.
We denote by $\caP_{UG\beta}$ the set of all uniformly bounded finite range 
{\it $\beta$-invariant} even interactions, with unique gapped ground state.
\end{defn}
Set $\Phi_p :\mathfrak S_{\bbZ^2}\to \caA$ by
\begin{align}\label{ppdef}
\Phi_p(\{(x,y)\}):=\sum_{j=1}^{\frac d2 -1}B(g_{j,(x,y)}) B(g_{j,(x,y)})^*
-B(g_{j,(x,y)})^* B(g_{j,(x,y)}),\quad
(x,y)\in \bbZ^2
\end{align}
and $\Phi_p(X)=0$ otherwise.
Then we see that $\Phi_p$ is a uniformly bounded on-site
interaction with a unique gapped ground state
$\omega^{(0)}:=\omega_p$ \cite{sdc}.

In this paper we consider a classification problem of a 
subset of $\caP_{UG\beta}$.
To describe this subset we need 
to explain the classification problem of unique gapped ground state phases, without symmetry.
For $\Gamma\subset\bbZ^{2}$, we denote by  $\Pi_{\Gamma}:\caA\to \caA_{\Gamma}$ the conditional expectation with respect to the trace state (see Theorem 4.7 \cite{aramori}).
Note from the proof of Theorem 4.7 \cite{aramori}
that $\Pi_\Gamma(a\hat\otimes b)=0$
for any $a\in \caA_\Gamma$ and $b\in\caA_{\Gamma^c}^{(1)}$.
Let $f:(0,\infty)\to (0,\infty)$ be a continuous decreasing function 
with $\lim_{t\to\infty}f(t)=0$.
For each $A\in\caA$, let
\begin{align}\label{dzeta}
\lV A\rV_f:=\lV A\rV
+ \sup_{N\in \nan}\lmk\frac{\lV
A-\Pi_{\Lambda_N}
(A)
\rV}
{f(N)}
\rmk.
\end{align}
We denote by $\caD_f$ the set of all $A\in\caA$ such that
$\lV A\rV_f<\infty$.
Here $\Lambda_N:=[-N,N]^{\times 2}\cap \bbZ^2$.

The classification of unique gapped ground state phases $\caP_{UG}
$ without symmetry is the following.
\begin{defn}\label{classsym}
Two interactions $\Phi_0,\Phi_1\in\caP_{UG}
$ are equivalent if there is
a path of even  interactions
 $\Phi : [0,1]\to \caP_{UG}
$
satisfying the following:
\begin{enumerate}
\item $\Phi(0)=\Phi_0$ and $\Phi(1)=\Phi_1$.
\item
For each $X\in{\mathfrak S}_{\bbZ^2}$, the map
$[0,1]\ni s\to \Phi(X;s)\in\caA_{X}^{(0)}$ is  $C^1$.
We denote by $\dot{\Phi}(X;s)$ 
the corresponding derivatives.
The interaction obtained by differentiation is denoted by $\dot\Phi(s)$, for each $s\in[0,1]$.
\item
There is a number $R\in\nan$
such that $X \in {\mathfrak S}_{\bbZ^2}$ and $\diam{X}\ge R$ imply $\Phi(X;s)=0$, for all $s\in[0,1]$.
\item
Interactions are bounded as follows
\begin{align}
C_b^{\Phi}:=\sup_{s\in[0,1]}\sup_{X\in {\mathfrak S}_{\bbZ^2}}
\lmk
\lV
\Phi\lmk X;s\rmk
\rV+|X|\lV
\dot{\Phi} \lmk X;s\rmk
\rV
\rmk<\infty.
\end{align}
\item
Setting
\begin{align}
b(\varepsilon):=\sup_{Z\in{\mathfrak S}_{\bbZ^2}}
\sup_{s,s_0 \in[0,1],0<| s-s_0|<\varepsilon}
\lV
\frac{\Phi(Z;s)-\Phi(Z;s_0)}{s-s_0}-\dot{\Phi}(Z;s_0)
\rV
\end{align}
for each $\varepsilon>0$, we have
$\lim_{\varepsilon\to 0} b(\varepsilon)=0$. 
\item
There exists a $\gamma>0$ such that
$\sigma(H_{\omega_{\Phi(s)},\Phi(s)})\setminus\{0\}\subset [\gamma,\infty)$ for
all $s\in[0,1]$, where  $\sigma(H_{\omega_{\Phi(s)},\Phi(s)})$ is the spectrum of $H_{\omega_{\Phi(s)},\Phi(s)}$.
\item
There exists an $0<\eta<1$ satisfying the following:
Set $\zeta(t):=e^{-t^{ \eta}}$.
Then for each $A\in D_\zeta$, 
$\omega_{\Phi(s)}(A)$ is differentiable with respect to $s$, and there is a constant
$C_\zeta$ such that:
\begin{align}\label{dcon}
\lv
\frac{d}{ds}\omega_{\Phi(s)}(A)
\rv
\le C_\zeta\lV A\rV_\zeta,
\end{align}
for any $A\in D_\zeta$.(Recall (\ref{dzeta})).
\end{enumerate}
We write $\Phi_0\sim\Phi_1$ if
$\Phi_0$ and $\Phi_1$ are equivalent. 
If $\Phi_0,\Phi_1\in\caP_{UG\beta}$ and
if we can take the path in $\caP_{UG\beta}$, i.e.,
 so that $\beta_g\lmk \Phi(X;s)\rmk=\Phi(X;s)$, $g\in G$ for all $s\in[0,1]$,
then we say $\Phi_0$ and $\Phi_1$ are $\beta$-equivalent
and write  $\Phi_0\sim_\beta\Phi_1$.
\end{defn}
The object we classify in this paper is the following:
\begin{defn}
We denote by $\caP_{SL\beta}$ the set of all 
$\Phi\in\caP_{UG\beta}$
such that $\Phi\sim\Phi_p$
with $\Phi_p\in \caP_{UG}$ defined in (\ref{ppdef}).
Connected components of 
 $\caP_{SL\beta}$ with respect to
$\sim_\beta$ are the
symmetry protected topological (SPT)-phases.
\end{defn}
In this paper, we introduce an invariant of 
this classification.

\section{Main result}\label{mainsec}
Our invariant is
given by
\begin{align}
\lmk C^3(G,\Uo\oplus\Uo)\rmk
\times \lmk H^2(G, \bbZ_2\oplus\bbZ_2)\rmk
\times \lmk H^1(G, \bbZ_2)\rmk,
\end{align}
devided by some equivalence relation.
First let us specify it.

For $A:=\bbZ_2, \Uo$, we define a $\bbZ_2$-action on $A\oplus A$
by 
\begin{align}
\bbZ_2\times \lmk A\oplus A\rmk\ni (a,x)
\mapsto 
 x^{a}
:=\begin{pmatrix}
0&1\\1&0
\end{pmatrix}^a x\in A\oplus A.
\end{align}
We associate $A\oplus A$ the point-wise multiplication , i.e.,
for $x=(x_+,x_-), y=(y_+,y_-)\in A\oplus A$, we set $x\cdot y:=(x_+y_+, x_-y_-)$.
For $x=(x_+,x_-)\in \bbZ_2\oplus \bbZ_2$, we also
set $(-1)^x:=((-1)^{x_+}, (-1)^{x_-})\in \Uo\oplus \Uo$.
For $x\in C^1(G,A\oplus A)$, $y\in C^2(G,A\oplus A)$, $z\in C^3(G,A\oplus A)$ and $a\in H^1(G,\bbZ_2)$,
we set
\begin{align}
\begin{split}
&d_a^1 x(g,h):=\frac{\lmk x^{a(g)}(h)\rmk \cdot x(g)}{x(gh)},\\
&d_a^2 y(g,h,k):=\frac{\lmk y^{a(g)}(h,k)\rmk\cdot y(g,hk)} {y(gh,k)\cdot y(g,h)},\\
&d_a^3 z(g,h,k,f):=\frac{\lmk  \lmk z^{a(g)}(h,k,f)\rmk\rmk \cdot z(g, hk, f)\cdot z(g,h,k)}{z(gh,k, f)\cdot z(g,h,kf)}.
\end{split}
\end{align}
We use the following Lemma below. The proof is straightforward from the definition.
\begin{lem}\label{lem50}
For any $m,x\in C^1(G,\bbZ_2\oplus \bbZ_2)$, $a\in H^1(G,\bbZ)$,
setting $\tilde \sigma(g,h):=(-1)^{x(g)\cdot m^{a(g)}(h)}$,
we have
\begin{align}
(-1)^{d^1_a x(g,h)\cdot m^{a(gh)}(k) +x(g)\cdot d_a^1m^{a(g)(h,k)}}
=d^2_a\tilde \sigma(g,h,k).
\end{align}
\end{lem}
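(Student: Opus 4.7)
The plan is to compare both sides by passing to exponents in $\bbZ_2\oplus\bbZ_2$. Since every value of $\tilde\sigma$ lies in $\{\pm 1\}\oplus\{\pm 1\}\subset \Uo\oplus\Uo$, the map $y\mapsto (-1)^y$ identifies multiplication in $\Uo\oplus\Uo$ (restricted to these values) with addition in $\bbZ_2\oplus\bbZ_2$, and all inverses appearing in the definition of $d_a^2$ become plus signs mod $2$. The claim thereby reduces to an identity in $\bbZ_2\oplus\bbZ_2$.

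First I would expand $d_a^2\tilde\sigma(g,h,k)$ using its definition. The only nontrivial input is the twisted term $\tilde\sigma^{a(g)}(h,k)$: using that the swap on $\bbZ_2\oplus\bbZ_2$ is a group automorphism for componentwise multiplication and commutes with $(-1)^{(\cdot)}$, and using the homomorphism property $a(gh)=a(g)+a(h)$ (since $a\in H^1(G,\bbZ_2)$ with trivial action is a homomorphism), one finds $\tilde\sigma^{a(g)}(h,k)=(-1)^{x^{a(g)}(h)\cdot m^{a(gh)}(k)}$. Gathering all four factors, the total exponent takes the form
\begin{align*}
E=\bigl(x^{a(g)}(h)+x(gh)\bigr)\cdot m^{a(gh)}(k)+x(g)\cdot \bigl(m^{a(g)}(h)+m^{a(g)}(hk)\bigr)
\end{align*}
after collecting terms that share the common factor $m^{a(gh)}(k)$ or $x(g)$.

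Next I would recognize the two parenthesized quantities via the defining relations of $d_a^1$. Directly, $x^{a(g)}(h)+x(gh)=d_a^1 x(g,h)+x(g)$ mod $2$. For the second, I would apply the swap $(\cdot)^{a(g)}$ to the identity $d_a^1 m(h,k)=m^{a(h)}(k)+m(h)+m(hk)$; this yields $(d_a^1 m)^{a(g)}(h,k)=m^{a(gh)}(k)+m^{a(g)}(h)+m^{a(g)}(hk)$, hence $m^{a(g)}(h)+m^{a(g)}(hk)=(d_a^1 m)^{a(g)}(h,k)+m^{a(gh)}(k)$. Substituting both expressions into $E$, the two resulting copies of $x(g)\cdot m^{a(gh)}(k)$ cancel mod $2$, leaving exactly the target exponent $d_a^1 x(g,h)\cdot m^{a(gh)}(k)+x(g)\cdot (d_a^1 m)^{a(g)}(h,k)$. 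There is no serious obstacle; the only item demanding attention is the interplay of the swap action with $d_a^1$ and with componentwise multiplication, which is guaranteed by the action being a group automorphism of $\bbZ_2\oplus\bbZ_2$ together with the homomorphism property of $a$.
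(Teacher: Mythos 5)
Your proposal is correct, and since the paper only says the proof is ``straightforward from the definition,'' yours is exactly the computation being alluded to. The key points you identify are indeed the only things one needs to check: that the $\bbZ_2$-swap commutes with $(-1)^{(\cdot)}$, is a (ring) automorphism of $\bbZ_2\oplus\bbZ_2$, and composes additively via the homomorphism $a$, so that $\tilde\sigma^{a(g)}(h,k)=(-1)^{x^{a(g)}(h)\cdot m^{a(gh)}(k)}$; then grouping the four terms of $d_a^2\tilde\sigma$ by common factors $m^{a(gh)}(k)$ and $x(g)$, rewriting each bracket via the $d_a^1$ relation (for the second one after applying the swap to the identity $d_a^1 m(h,k)=m^{a(h)}(k)+m(h)+m(hk)$), and observing that the two copies of $x(g)\cdot m^{a(gh)}(k)$ cancel mod $2$. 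Your reading of the (typographically garbled) second term in the exponent as $x(g)\cdot(d_a^1 m)^{a(g)}(h,k)$ is the one that makes the identity hold, consistent with the lemma's use later in the transitivity proof. No gap.
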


We denote by $\widetilde{\PD}$ the pentad $(c, \kappa_R,\kappa_L,b,a)$
of 
\begin{align}
c\in C^3\lmk G, \Uo\oplus \Uo\rmk, \;
\kappa_R\in C^2\lmk G, \bbZ_2\oplus \bbZ_2\rmk,\;
\kappa_L\in C^2\lmk G, \bbZ_2\oplus \bbZ_2\rmk,\;
b\in C^1\lmk G, \bbZ_2\oplus \bbZ_2\rmk,\;
a\in H^1(G,\bbZ_2)
\end{align}
such that
\begin{align}
d_a^1 b(g,h)
=\kappa_L(g,h)+\kappa_R(g,h), \label{b49}\\
d_a^2 \kappa_R(g,h,k)
=0,\label{kr49}\\
d_a^2 \kappa_L(g,h,k)
=0,\label{kl49}\\
d_a^3 c(g,h,k,f)
=(-1)^{\kappa_L(g,h)\cdot \lmk \kappa_R^{a(gh)}(k,f)\rmk}.\label{c49}
\end{align}
\begin{lem}
On $\widetilde\PD$,
set 
\[
(c^{(1)}, \kappa_R^{(1)},\kappa_L^{(1)},b^{(1)},a^{(1)})\sim_{\PD}(c^{(2)}, \kappa_R^{(2)},\kappa_L^{(2)},b^{(2)},a^{(2)})
\]
 for $(c^{(1)}, \kappa_R^{(1)},\kappa_L^{(1)},b^{(1)},a^{(1)}), (c^{(2)}, \kappa_R^{(2)},\kappa_L^{(2)},b^{(2)},a^{(2)})\in \widetilde\PD$
if
the following hold.
\begin{description}
\item[(i)]$a^{(1)}(g)=a^{(2)}(g)=: a(g)$ for any $g\in G$, and
\item[(ii)] there exist an $m\in C^1(G, \bbZ_2\oplus \bbZ_2)$
and a $\sigma\in C^2(G, \Uo\oplus \Uo)$
such that
\begin{align}
&\kss R {(2)}(g,h)=d_a^1 m(g,h)+\kss R {(1)}(g,h),\\
&\kss L {(2)}(g,h)=d_a^1 b^{(2)}(g,h)-d_a^1 b^{(1)}(g,h)-d_a^1 m(g,h)+\kss L{(1)}(g,h),\\
&c^{(2)}(g,h,k)
=\lmk-1\rmk^{\kss L {(1)}(g,h) \cdot  m^{a(gh)}(k)}\lmk -1\rmk^{\lmk b^{(2)}(g)- b^{(1)}(g)-m(g)\rmk
\cdot \lmk \kss R {(2)}\rmk^{a(g)} (h,k)} d_a^2\sigma(g,h,k) c^{(1)}(g,h,k).
\end{align}
\end{description}
Then this $\sim_{\PD}$ is an equivalence relation.
\end{lem}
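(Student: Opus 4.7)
The plan is to verify the three axioms of an equivalence relation in turn, with the bulk of the work falling on transitivity.

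\emph{Reflexivity} is immediate: take $m\equiv 0$ and $\sigma\equiv 1$. Then $d_a^1m=0$ in $\bbZ_2\oplus\bbZ_2$ and $d_a^2\sigma=1$ in $\Uo\oplus\Uo$, so condition (ii) collapses tautologically to $\kss R{(1)}=\kss R{(1)}$ and analogous identities.

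\emph{Transitivity} is the heart of the matter. Given $(m_1,\sigma_1)$ realizing $(c^{(1)},\ldots)\sim_{\PD}(c^{(2)},\ldots)$ and $(m_2,\sigma_2)$ realizing $(c^{(2)},\ldots)\sim_{\PD}(c^{(3)},\ldots)$, I would set $m_3:=m_1+m_2$. Additivity of $d_a^1$ in its argument yields the target $\kss R$ and $\kss L$ equations immediately. For the $c$-equation, multiplying the two given formulas gives $c^{(3)}=(-1)^{E_1+E_2}\,d_a^2(\sigma_1\sigma_2)\,c^{(1)}$ for an explicit exponent; writing $T$ for the target exponent with $m_3$ and $b^{(3)}-b^{(1)}$ in place of $m$ and $b^{(2)}-b^{(1)}$, a direct computation (using only $2x=0$ in $\bbZ_2$) yields
\[
E_1+E_2-T \;=\; \bigl(\kss L{(2)}-\kss L{(1)}\bigr)\,m_2^{a(gh)}(k) \;+\; \bigl(b^{(2)}-b^{(1)}-m_1\bigr)\bigl(d_a^1m_2\bigr)^{a(g)}(h,k).
\]
Substituting $\kss L{(2)}-\kss L{(1)}=d_a^1(b^{(2)}-b^{(1)})-d_a^1m_1$ (which is the $\kss L$ hypothesis on $(m_1,\sigma_1)$) and distributing in $\bbZ_2$ produces four cross-terms that split as the pair $d_a^1(b^{(2)}-b^{(1)})\cdot m_2^a+(b^{(2)}-b^{(1)})\cdot(d_a^1m_2)^a$ together with the pair $d_a^1m_1\cdot m_2^a+m_1\cdot(d_a^1m_2)^a$. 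Each pair is exactly the LHS of Lemma~\ref{lem50} applied with $(x,m)=(b^{(2)}-b^{(1)},m_2)$ and $(x,m)=(m_1,m_2)$ respectively, so each collapses to a $d_a^2$-coboundary. Absorbing both coboundaries into $\sigma_3:=\sigma_1\sigma_2\,\tilde\sigma_1\,\tilde\sigma_2$, with $\tilde\sigma_i$ the cochains supplied by Lemma~\ref{lem50}, closes the check.

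\emph{Symmetry} proceeds analogously. Given $(m,\sigma)$, I take $m':=m$ (valid because $\bbZ_2$ is $2$-torsion); the $\kss R$ and $\kss L$ equations then flip automatically. Inverting the $c$-equation and substituting $\kss R{(1)}=\kss R{(2)}+d_a^1m$ and $\kss L{(2)}-\kss L{(1)}=d_a^1(b^{(2)}-b^{(1)})-d_a^1m$ produces, as in the transitivity step, a residual exponent of Lemma~\ref{lem50} type—applied with $(x,m)=(b^{(1)}+b^{(2)},m)$ and $(x,m)=(m,m)$—which I absorb into $\sigma':=\sigma^{-1}\,\tilde\sigma'_1\,\tilde\sigma'_2$.

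The principal obstacle is the bookkeeping in the transitivity step: recognizing that after substituting the defining relations for $\kss L{(2)}-\kss L{(1)}$ and $\kss R{(3)}-\kss R{(2)}$, the residual exponent decomposes precisely into pairs of the Leibniz-like form $d_a^1x\cdot m^a+x\cdot(d_a^1m)^a$ to which Lemma~\ref{lem50} applies verbatim. Once this grouping is identified, the symmetry case reduces to a minor variant of the same computation, and reflexivity is a tautology.
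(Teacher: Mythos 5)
Your proof is correct and follows essentially the same strategy as the paper's: reduce the residual exponent $E_1+E_2-T$ to a Leibniz-type expression and absorb it as a $d_a^2$-coboundary via Lemma~\ref{lem50}. The only (cosmetic) difference is that you expand $b^{(2)}-b^{(1)}-m_1$ and invoke Lemma~\ref{lem50} twice, whereas the paper keeps $x=b^{(2)}-b^{(1)}-m$ intact so that the entire residue is a single instance $d_a^1x\cdot l^a+x\cdot(d_a^1l)^a$, collected into one $\sigma''(g,h)=(-1)^{(b^{(2)}-b^{(1)}-m)(g)\cdot l^{a(g)}(h)}$.
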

\begin{defn}
We denote the equivalence classes by $\PD$.
We also denote by $[(c, \kappa_R,\kappa_L,b,a)]_{\PD}$
the equivalence class containing $(c, \kappa_R,\kappa_L,b,a)\in\widetilde \PD$.
\end{defn}
\begin{proof}We show only the transitivity.
The proof for the symmetry is analogous.
Suppose that
\begin{align*}
(c^{(1)}, \kappa_R^{(1)},\kappa_L^{(1)},b^{(1)},a^{(1)})\sim_{\PD}(c^{(2)}, \kappa_R^{(2)},\kappa_L^{(2)},b^{(2)},a^{(2)})
\end{align*}
with $m\in C^1(G, \bbZ_2\oplus \bbZ_2)$
and a $\sigma\in C^2(G, \Uo\oplus \Uo)$, and
\begin{align*}
(c^{(2)}, \kappa_R^{(2)},\kappa_L^{(2)},b^{(2)},a^{(2)})\sim_{\PD}(c^{(3)}, \kappa_R^{(3)},\kappa_L^{(3)},b^{(3)},a^{(3)})
\end{align*}
with
$l\in C^1(G, \bbZ_2\oplus \bbZ_2)$
and a $\sigma'\in C^2(G, \Uo\oplus \Uo)$.
Then we have $a^{(1)}(g)=a^{(2)}(g)=a^{(3)}(g)=:a(g)$
and
\begin{align}
&\kss R {(3)}(g,h)=d_a^1 (l+m)(g,h)+\kss R {(1)}(g,h),\\
&\kss L {(3)} (g,h)
=d^1_a\lmk b^{(3)}-b^{(1)}-(l+m)\rmk(g,h)
+\kss L {(1)} (g,h).
\end{align}
Note that 
\begin{align}
\begin{split}
&\lmk-1\rmk^{\kss L {(1)}(g,h) \cdot  m^{a(gh)}(k)}\lmk -1\rmk^{\lmk b^{(2)}(g)- b^{(1)}(g)-m(g)\rmk
\cdot \lmk \kss R {(2)}\rmk^{a(g)} (h,k)}\\
&\lmk-1\rmk^{\kss L {(2)}(g,h) \cdot  l^{a(gh)}(k)}\lmk -1\rmk^{\lmk b^{(3)}(g)- b^{(2)}(g)-l(g)\rmk
\cdot \lmk \kss R {(3)}\rmk^{a(g)} (h,k)}\\
&=
\lmk-1\rmk^{\kss L {(1)}(g,h) \cdot  (l+m)^{a(gh)}(k)}\lmk -1\rmk^{\lmk b^{(3)}(g)- b^{(1)}(g)-(l+m)(g)\rmk
\cdot \lmk \kss R {(3)}\rmk^{a(g)} (h,k)}\\
&(-1)^{d^1_a(b^{(2)}-b^{(1)}-m)(g,h)\cdot  l^{a(gh)}(k)+ (b^{(2)}-b^{(1)}-m)(g)\cdot d_a^1 l^{a(g)}(h,k)}\\
&=\lmk-1\rmk^{\kss L {(1)}(g,h) \cdot (l+m)^{a(gh)}(k)}\lmk -1\rmk^{\lmk b^{(3)}(g)- b^{(1)}(g)-(l+m)(g)\rmk
\cdot \lmk \kss R {(3)} \rmk^{a(g)}(h,k)}\\
 &d^2_a\sigma''(g,h,k)
\end{split}
\end{align}
with $\sigma''(g,h):= (-1)^{(b^{(2)}-b^{(1)}-m)(g)\cdot l^{a(g)} (h)}$.
Here we used Lemma \ref{lem50} in the last equation.
Setting $\tilde\sigma := \sigma\sigma' \sigma''$,
we get 
\begin{align*}
(c^{(1)}, \kappa_R^{(1)},\kappa_L^{(1)},b^{(1)},a^{(1)})\sim_{\PD}(c^{(3)}, \kappa_R^{(3)},\kappa_L^{(3)},b^{(3)},a^{(3)})
\end{align*}
with $l+m$ and $\tilde\sigma$.
\end{proof}
\begin{lem}\label{lem52}
For any $(c,\kappa_L,\kappa_R, b,a)\in \widetilde \PD$,
set 
\begin{align}
\tilde c(g,h,k):=(-1)^{b(g)\cdot \kappa_R^{a(g)}(h,k)} c(g,h,k).
\end{align}
Then we have $(\tilde c,\kappa_R,\kappa_R, 0,a)\in \widetilde\PD$
and $(c,\kappa_L,\kappa_R, b,a)\sim_{\PD} (\tilde c,\kappa_R,\kappa_R, 0,a)$.
\end{lem}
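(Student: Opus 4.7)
The plan is to establish the two assertions separately: first, that the pentad $(\tilde c,\kappa_R,\kappa_R,0,a)$ satisfies the four defining identities (\ref{b49})--(\ref{c49}); and second, that the equivalence $(c,\kappa_L,\kappa_R,b,a)\sim_{\PD}(\tilde c,\kappa_R,\kappa_R,0,a)$ is witnessed by the choice $m=0$ and $\sigma=1$.

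For the first assertion, the conditions (\ref{b49})--(\ref{kl49}) are nearly immediate: (\ref{b49}) reduces to $d_a^1 0=\kappa_R+\kappa_R=0$ in $\bbZ_2\oplus\bbZ_2$, while (\ref{kr49}) and (\ref{kl49}) both coincide with the already-assumed $d_a^2\kappa_R=0$. The real work is (\ref{c49}). I would factor $\tilde c = c\cdot(-1)^{\alpha}$ with $\alpha(g,h,k):=b(g)\cdot\kappa_R^{a(g)}(h,k)$ and use (\ref{c49}) for $c$ to write
\[
d_a^3\tilde c(g,h,k,f)=(-1)^{\kappa_L(g,h)\cdot \kappa_R^{a(gh)}(k,f)}\cdot (-1)^{d_a^3\alpha(g,h,k,f)}.
\]
Expanding $d_a^3\alpha$ as a $\bbZ_2$-sum of its five summands and regrouping gives
\[
d_a^3\alpha=\bigl(b^{a(g)}(h)+b(gh)\bigr)\cdot\kappa_R^{a(gh)}(k,f)+b(g)\cdot\bigl(\kappa_R^{a(g)}(hk,f)+\kappa_R^{a(g)}(h,k)+\kappa_R^{a(g)}(h,kf)\bigr).
\]
The second bracket simplifies to $\kappa_R^{a(gh)}(k,f)$ by applying the cocycle identity $d_a^2\kappa_R=0$ at $(h,k,f)$ and pushing forward by the $a(g)$-action, using that $a$ is a homomorphism into $\bbZ_2$ so that $a(g)a(h)=a(gh)$. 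The first bracket equals $\kappa_L(g,h)+\kappa_R(g,h)+b(g)$ by (\ref{b49}). Substituting and using $2b(g)\equiv 0$ gives $d_a^3\alpha=(\kappa_L+\kappa_R)\cdot\kappa_R^{a(gh)}(k,f)$, so that the $\kappa_L$-pieces cancel against the leading sign factor and the desired identity $d_a^3\tilde c=(-1)^{\kappa_R(g,h)\cdot\kappa_R^{a(gh)}(k,f)}$ drops out.

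For the equivalence, with $m=0$ and $\sigma=1$, the first defining condition of $\sim_{\PD}$ on $\kappa_R$ is trivial, and the second, $\kappa_L^{(2)}=d_a^1 b^{(2)}-d_a^1 b^{(1)}-d_a^1 m+\kappa_L^{(1)}$, becomes $\kappa_R=\kappa_L+d_a^1 b$ in $\bbZ_2\oplus\bbZ_2$, which is exactly (\ref{b49}). For the third condition on $c^{(2)}$, the factor $(-1)^{\kappa_L^{(1)}(g,h)\cdot m^{a(gh)}(k)}$ and $d_a^2\sigma$ are trivial, while $b^{(2)}(g)-b^{(1)}(g)-m(g)\equiv b(g)$ modulo $2$, so the formula collapses to $\tilde c=(-1)^{b(g)\cdot\kappa_R^{a(g)}(h,k)}c$, which is the definition of $\tilde c$. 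The main obstacle is the bookkeeping in the $d_a^3\alpha$ expansion: keeping track of which variable each $a$-twist acts on, applying the homomorphism property of $a$ in tandem with the $\bbZ_2\oplus\bbZ_2$-swap action, and cleanly exploiting the $\bbZ_2$ arithmetic; once the expansion is carried out, the remaining steps are direct substitutions.
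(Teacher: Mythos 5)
Your proof is correct and follows essentially the same route as the paper's: set $x(g,h,k) = (-1)^{b(g)\cdot\kappa_R^{a(g)}(h,k)}$, compute $d_a^3 x = (-1)^{d_a^1 b(g,h)\cdot\kappa_R^{a(gh)}(k,f)}$ from the cocycle identity $d_a^2\kappa_R=0$, and combine with (\ref{c49}) and (\ref{b49}); you simply spell out the five-term expansion of $d_a^3$ and the cancellation $2b(g)\equiv 0$ that the paper leaves implicit, and you likewise make explicit that the equivalence is witnessed by $m=0$, $\sigma=1$, which the paper declares immediate.
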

\begin{proof}
Setting
\begin{align}
x(g,h,k):=(-1)^{b(g)\cdot \kappa_R^{a(g)}(h,k)},
\end{align}
one can check 
\begin{align}
d_a^3 x(g,h,k,f)=
(-1)^{d^1_a b(g,h)\cdot \kappa_R^{a(gh)} (k,f)},
\end{align}
using the fact that $d^2_a\kappa_R=0$.
From this, we have
\begin{align}
d^3_a\tilde c(g,h,k,f)=d_a^3 x(g,h,k,f)d_a^3 c(g,h,k,f)
=(-1)^{\lmk d^1_a b(g,h)+\kappa_L(g,h)\rmk\cdot \kappa_R^{a(gh)} (k,f)}
=(-1)^{\kappa_R(g,h)\cdot \kappa_R^{a(gh)}(k,f)}.
\end{align}
Combining this and
\begin{align}
d^1_a 0(g,h)=0=\kappa_R(g,h)+\kappa_R(g,h),\quad
d^2_a\kappa_R(g,h,k)=0
\end{align}
in $\bbZ_2$, we obtain $(\tilde c,\kappa_R,\kappa_R, 0,a)\in\widetilde \PD$.
Checking $(c,\kappa_L,\kappa_R, b,a)\sim_{\PD} (\tilde c,\kappa_R,\kappa_R, 0,a)$. is immediate.
\end{proof}
Hence to consider $\PD$, it suffices to think of the following.
\begin{defn}
We set
\begin{align}
\widetilde \PDz=
\left\{
(c,\kappa,a)
\mid (c,\kappa,\kappa,0,a)\in\widetilde \PD
\right\}.
\end{align}
We introduce the equivalence relation
$\sim_{\PDz}$ on $\widetilde \PDz$
by
\begin{align}
(c^{(1)},\kappa^{(1)},a^{(1)})\sim_{\PDz}(c^{(2)},\kappa^{(2)},a^{(2)})\quad\text{if}\quad
(c^{(1)},\kappa^{(1)},\kappa^{(1)},0,a^{(1)})\sim_{\PD}(c^{(2)},\kappa^{(2)},\kappa^{(2)},0,a^{(2)}).
\end{align}
We denote the equivalence classes by $\PDz$.
We also denote by $[(c, \kappa ,a)]_{\PDz}$
the equivalence class containing $(c, \kappa, a)\in\widetilde \PDz$.
\end{defn}
The main theorem of this paper is the following.
\begin{thm}\label{mainthm}
There is a $\PDz$-valued index on $\caP_{SL\beta}$, which is an invariant of the classification
 $\sim_\beta$ of $\caP_{SL\beta}$.
\end{thm}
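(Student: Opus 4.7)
The plan is to build, for each $\Phi\in \caP_{SL\beta}$, a pentad $(c,\kappa_R,\kappa_L,b,a)\in\widetilde\PD$ by two successive half-plane cuts, pass to its class in $\PDz$ by Lemma \ref{lem52}, and verify independence of the choices involved as well as $\sim_\beta$-invariance. The picture is the two-dimensional analog of \cite{1dFermi}: one horizontal cut produces a one-dimensional boundary system carrying symmetry data, and a further vertical cut of this boundary extracts the cohomological invariants.

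I start from the hypothesis $\Phi\sim\Phi_p$. Using the path of interactions, the uniform gap, and the control (\ref{dcon}) on the derivative of ground-state expectations, standard automorphic equivalence (quasi-local spectral flow) constructs a graded automorphism $\alpha\in\auz(\caA)$ with good decay such that $\omega_\Phi=\omega^{(0)}\circ\alpha^{-1}$. By $\beta$-invariance of both ends, the cocycle $\alpha\circ\beta_g\circ\alpha^{-1}\circ\beta_g^{-1}$ preserves $\omega^{(0)}$. I now perform the horizontal cut: by the analog of (\ref{zlrten}) for the upper/lower decomposition, $\omega^{(0)}=\omega_U\hat\otimes\omega_D$ for Fock states built from $p_{H_U^0}$ and $p_{H_D^{-1}}$. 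The split property of $\omega^{(0)}$ lets me write $\alpha=\Ad(V)\circ(\alpha_U\hat\otimes\alpha_D)$ for some unitary $V$ localized near the horizontal axis, with $\alpha_{U/D}$ essentially supported in the corresponding half-plane algebras. Transporting $\beta_g^U$ through $\alpha$ produces a graded automorphism $\eta_g$ localized near the axis, and the homomorphism property of $g\mapsto \beta_g$ forces a twisted cocycle relation $\eta_{gh}=\Ad(W_{g,h})\circ \eta_g\circ\beta_g^U(\eta_h)$ with boundary unitaries $W_{g,h}$.

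The pair $(\eta_g,W_{g,h})$ now behaves as a one-dimensional on-site-symmetric boundary system, to which I apply the framework of \cite{1dFermi} via a vertical cut. Splitting $\eta_g=\eta_g^L\hat\otimes \eta_g^R$ up to inner and $W_{g,h}$ into factors $W^L_{g,h}\hat\otimes W^R_{g,h}$, the parity $a(g):=\partial\eta_g^R\in\bbZ_2$ is a 1-cocycle; the failure of the two split cocycle relations to match precisely produces the even/odd pairs $\kappa_L,\kappa_R\in C^2(G,\bbZ_2\oplus\bbZ_2)$; the relative phase discrepancy between the left and right factorizations is the 1-cochain $b\in C^1(G,\bbZ_2\oplus\bbZ_2)$ with $d_a^1 b=\kappa_L+\kappa_R$; and the 3-cocycle associator comparing two ways of reducing $W_{g,h,k}$ across the vertical cut is $c\in C^3(G,\Uo\oplus\Uo)$ satisfying $d_a^3 c=(-1)^{\kappa_L\cdot\kappa_R^a}$. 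The doubling $A\oplus A$ with the $\bbZ_2$-action is forced by graded tensor products: each splitting has two parity-conjugate realizations that swap under the action of $a(g)$.

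The main obstacle, occupying the bulk of the work, is well-definedness. Any alternate choice of $\alpha,V,\alpha_{U/D}$, of $\eta_g^{L/R}, W_{g,h}^{L/R}$, or of the gauges for the Fermion-parity splittings must modify $(c,\kappa_R,\kappa_L,b,a)$ only by the $(m,\sigma)$-shift in the definition of $\sim_\PD$. Tracking these modifications is a long graded-cohomological bookkeeping whose typical algebraic step is Lemma \ref{lem50}. For the invariance under $\sim_\beta$: a $\beta$-equivariant path $\Phi_s$ yields, by the same spectral flow, a $\beta$-equivariant intertwining automorphism between $\omega_{\Phi_0}$ and $\omega_{\Phi_1}$; composing with it lets me use the same choices at both endpoints, producing pentads that are equal on the nose and hence $\sim_\PD$-equivalent. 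Finally Lemma \ref{lem52} shows the resulting class in $\PD$ descends to a single class in $\PDz$, delivering the claimed invariant.
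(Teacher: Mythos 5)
Your plan reproduces the paper's large-scale architecture faithfully: automorphic equivalence to get $\omega_\Phi=\omega^{(0)}\circ\alpha$ for $\alpha\in\QAut(\caA)$; a horizontal cut localizing the $\beta_g^U$-excitation near the $x$-axis; a vertical cut of that boundary producing the cohomological data; well-definedness; stability along $\beta$-equivariant paths; and descent to $\PDz$ via Lemma~\ref{lem52}. However, two specific mechanisms at the heart of the construction are misidentified, and both errors would derail the derivation of the pentad rather than merely reorganize it.

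First, the $\bbZ_2$-index $a(g)$. You define it as $\partial\eta_g^R$, the ``parity'' of the right half of the boundary automorphism. But the $\eta_{g,\sigma}$ in the paper are required to lie in $\Aut^{(0)}$, i.e.\ to commute with the grading automorphism; a graded automorphism has no $\bbZ_2$-degree, so $\partial\eta_g^R$ is not a defined object in this framework. The actual $a_\omega(g)$ is the translation power in the relation $\omega\circ\beta_g^U\simeq \omega\circ\tau^{a_\omega(g)}\circ(\eta_{g,L}\hat\otimes\eta_{g,R})$; it records the dichotomy of Lemma~\ref{lem16} for split states of self-dual CAR algebras (whether the restriction of $\omega^{(0)}\alpha\beta_g^U\alpha^{-1}$ to $\caA_{H_R}$ is a factor state or not), established via Lemma~\ref{lem22} and Proposition~\ref{lem24}. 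This is a genuinely Fermionic phenomenon with no analog in the parity of a unitary or automorphism; without it the cochain relations~(\ref{b49})--(\ref{c49}) with the correct twisting $d_a^i$ will not hold.

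Second, the origin of the doubling $\Uo\oplus\Uo$ (resp.\ $\bbZ_2\oplus\bbZ_2$). You attribute it to ``two parity-conjugate realizations of each splitting.'' In the paper the two components are indexed by $\epsilon=\pm 1$, the direction of the translation $\tau^{a_\omega(g)\epsilon}$ used in the decomposition, and the doubling is forced precisely because $\omega\circ\tau$ and $\omega\circ\tau^{-1}$ need not be equivalent (Remark~\ref{lem24'} and the discussion following Theorem~\ref{mainthm}). The $\bbZ_2$-action on $A\oplus A$ is the swap induced by $a_\omega(g)=1$, i.e.\ when conjugating by $\beta_g^U$ one passes from the $\epsilon$-branch to the $(-1)^{a_\omega(g)}\epsilon$-branch; this is visible in the composition rule~(\ref{u39}) via the subscript shifts $(\cdot)_{-a_\omega(g)\epsilon}$ of Definition~\ref{lem30}. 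Since your account of the doubling does not involve $\tau$ at all, you would not recover the $\eg{g}{\epsilon}$ twisting in the cocycle relations, and the bookkeeping of Lemmas~\ref{lem39}--\ref{lem44} would not close. Repairing the proposal therefore requires importing the dichotomy analysis of Section~\ref{splitsec} and the translation-automorphism machinery of Definition~\ref{lem30} and Lemma~\ref{lem38} before the vertical cut can be carried out.
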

The proof follows the strategy of \cite{2dSPT} for the quantum spin system case.(See reviews and videos in \cite{IAMP}\cite{CDM}\cite{ICM}.)
We consider the action of $\beta_g^U$
on our ground state $\omega_\Phi$ of $\Phi$ in SPT-phase i.e.,
$\omega_\Phi\circ\beta_g^U$.
Due to the fact that $\Phi$ is in the SPT-phase,
$\omega_{\Phi}$ can be written as $\omega_{\Phi}=\omega^{(0)}\alpha$
with some quasi-local automorphism $\alpha$, which does not create
the long range entanglement.
From this and that $\omega_{\Phi}$ is $\beta_g$-invariant,
we see that the effective excitation caused by $\beta_g^U$
on $\omega_{\Phi}$ is localized around the $x$-axis.
In particular, $\omega^{(0)}\circ\alpha\beta_g^U\alpha^{-1}$
satisfies the split property with respect to the cut $H_L-H_R$.
Namely, it is quasi-equivalent to a state of the form $\varphi_L\hat\otimes \varphi_R$,
with homogeneous states $\varphi_L,\varphi_R$ on $\caA_{H_L}$
$\caA_{H_R}$.
The difference from the quantum spin case \cite{2dSPT}
and our Fermionic case is that in Fermionic case when a state satisfies the split property,
there are two possibilities.
If the restriction of $\omega^{(0)}\circ\alpha\beta_g^U\alpha^{-1}$ to $\caA_{H_R}$ is a factor state,
then it is equivalent to the state of the form $\omega^{(0)}\circ\lmk \eta_{gL}\hat\otimes\eta_{gR}\rmk$,
with some graded automorphisms $\eta_{gL}$, $\eta_{gR}$
 on $\caA_{H_L}$, $\caA_{H_R}$ localized around $x$-axis.
If the restriction is not a factor state, then it is equivalent to the state of the form
$\omega^{(1)}\circ\lmk \eta_{gL}\hat\otimes\eta_{gR}\rmk$, with some other
reference state $\omega^{(1)}$.
This state $\omega^{(1)}$ is of the form $\omega^{(0)}\circ\tau$,
with some automorphism $\tau$ on the $x$-axis representing the space translation on the $x$-axis.
This dichotomy gives us the index taking value in $H^1(G,\bbZ_2)$.
Some combination of automorphisms $\eta_{gL}$ and $\eta_{gR}$ is implementable by a unitary in the GNS representation of $\omega_{\Phi}$
and it
give us some indices in $C^3(G,\Uo\oplus \Uo)$(see Lemma \ref{lem39})
, just like in the quantum spin case. The $C^2(G,\Uo\oplus \Uo)$ part represents 
if this unitary is even or odd.
Of course they are not independent to each other that
result in the complication of our index in $\PDz$.
Without the doubled structure like $\Uo\oplus \Uo$,
result coincides with the one predicted in \cite{BM} \cite{WG}.
For the moment, we do not know how to remove it. It comes from the fact that 
$\omega\circ\tau$ and $\omega\circ\tau^{-1}$
are not equivalent in general.
It might be possible to remove it under some symmetry.

The rest of the paper is organized as follows.
 In the analysis, the automorphic equivalence gets important.
 It is explained in section \ref{autosec}.
 The $\eta_{g,L}, \eta_{g,R}$ above are given from the classification of
 pure states satisfying the split property, analogous to that of \cite{GS}.
 Its Fermionic version is given in section \ref{splitsec}.
The $\PDz$-valued index is derived in section \ref{indexsec}.
In section \ref{stabilitysec}, we show it is actually the invariant of our classification.
Section \ref{sdautosec} gives some convenient property of $\sdckc{}$ which we use in the analysis.
Notations from Appendix \ref{notasec} are used freely.
Some facts about graded von Neumann algebras are collected/ proven in Appendix \ref{gravn}.

\section{Automorphic Equivalence}\label{autosec}
A very important fact about gapped ground state phases
is the automorphic equivalence \cite{bmns}~\cite{NSY},
~\cite{MO},
which started as Hastings adiabatic Lemma \cite{HW}.

First we introduce a class of paths of interactions.
A norm-continuous interaction on $\caA$ defined on an interval $[0,1]$
is a map
$\Phi:{\mathfrak S}_{\bbZ^2}\times [0,1]\to \caA$ such that
\begin{description}
\item[(i)]
for any $t\in[0,1]$, $\Phi(\cdot, t):{\mathfrak S}_{\bbZ^2}\to \caA$
is an even interaction, and
\item[(ii)]
for any $Z\in{\mathfrak S}_{\bbZ^2}$, the map
$\Phi(Z,\cdot ):[0,1]\to \caA_{Z}$
is piecewise norm-continuous.
\end{description}
Let $F$ be an $F$-function on $({\bbZ^2},\dist)$.
(See \cite{NSY} or \cite{2dSPT} Appendix C for the definition of $F$-functions.)
We denote by $\hat \caB_{F}([0,1])$ the set of all
norm continuous interactions $\Phi$ on $\caA$ defined on an interval $[0,1]$ such that
\begin{align}
\lV\lv \Phi\rV\rv_F:=
\sup_{x,y\in{\bbZ^2}}\frac{1}{F\lmk {\dist}(x,y)\rmk}\sum_{Z\in{\mathfrak S}_{\bbZ^2}, Z\ni x,y}
\sup_{t\in [0,1]}\lmk \lV\Phi(Z;t)\rV\rmk<\infty.
\end{align}
From $\Psi\in \hat\caB_F([0,1])$, we can construct
a path of automorphisms $\tau^\Psi_{t,s}$ 
 following
the same argument as quantum spin case
~\cite{BSP, NSY17,1dFermi}.
We can consider analogous path of interactions on
$\Gamma\subset \bbZ^2$, which we denote by $\hat\caB_{F,\Gamma}([0,1])$

For a subset $\Gamma\subset \bbZ^2$, we set
\begin{align}
\begin{split}
&\QAut(\caA_\Gamma):=
\left\{
\alpha\mid
\alpha=\tau_{s,t}^\Psi,\;
\text{for some}\;
\Psi\in \hat\caB_{F,\Gamma}([0,1]),\; F: F\text{-function}\; s,t\in[0,1]
\right\},\\
&\QAut_\beta(\caA_\Gamma):=
\left\{
\alpha\mid
\alpha=\tau_{s,t}^\Psi,\;
\text{for some}\;\beta\text{-invariant}
\Psi\in \hat\caB_{F,\Gamma}([0,1]),\; \; F: F\text{-function}\; s,t\in[0,1]
\right\}.
\end{split}
\end{align}
They form subgroups of $\auz(\caA_\Gamma)$.
Note from the proof of Theorem 4.7 \cite{aramori}
that for the conditional expectation $\Pi_\Gamma$, we have
$\Pi_\Gamma(a\hat\otimes b)=0$
for any $a\in \caA_\Gamma$ and $b\in\caA_{\Gamma^c}^{(1)}$.
Using this fact, just by following the argument in \cite{MO},
we can show the following Fermionic version of automorphic equivalence.
\begin{thm}\label{mo}
Let $\Phi_0,\Phi_1\in\caP_{UG}$ and $\omega_{\Phi_0}$, $\omega_{\Phi_1}$
be their unique gapped ground states.
 Suppose that
$\Phi_0\sim\Phi_1$ holds, via a path $\Phi : [0,1]\to \caP_{UG}$.
Then there exists some 
$\alpha\in\QAut(\caA)$ such that
$\omega_{\Phi_{1}}=\omega_{\Phi_0}\circ\alpha$.
If 
$\Phi_0,\Phi_1\in \caP_{UG\beta}$ and 
$\Phi\sim_\beta\Phi_0$,
we may take $\alpha$ from
$\QAut_\beta(\caA)$.
\end{thm}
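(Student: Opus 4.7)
The plan is to transcribe the Bachmann--Michalakis--Nachtergaele--Sims / Hastings--Wen spectral flow construction (as carried out in \cite{MO} for the quantum spin case) into the self-dual CAR setting, using the Lieb--Robinson bounds available for the Fermionic dynamics $\tau^{\Phi(s)}$. The ultimate goal is to produce a single $F$-function $F$ and a norm-continuous even interaction $\Psi\in\hat\caB_F([0,1])$ whose induced cocycle $\tau_{0,1}^\Psi$ intertwines the ground states of $\Phi_0$ and $\Phi_1$.

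\textbf{Step 1: spectral-flow generator.} For each $s\in[0,1]$, use the uniform spectral gap (condition 6) together with the derivative interaction $\dot\Phi(s)$ (condition 2 and the bound in condition 4) to form the Hastings quasi-adiabatic generator
\begin{equation*}
D(s)\;=\;\int_{-\infty}^{\infty} W_\gamma(t)\,\tau_t^{\Phi(s)}\!\bigl(\dot\Phi(s)\bigr)\,dt,
\end{equation*}
where $W_\gamma$ is the standard odd weight function with sub-exponential tails. Because $\dot\Phi(s)$ is even and the dynamics $\tau_t^{\Phi(s)}$ preserves $\caA^{(0)}$, $D(s)$ is an even element (formally, a sum of such). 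If $\Phi(s)$ is $\beta$-invariant, so are $\dot\Phi(s)$ and $\tau_t^{\Phi(s)}$, hence also $D(s)$.

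\textbf{Step 2: localization to an interaction.} Using the Fermionic Lieb--Robinson bound for $\tau_t^{\Phi(s)}$ (valid for finite-range uniformly bounded even interactions on $\caA$) and the rapid decay of $W_\gamma$, one localizes $\tau_t^{\Phi(s)}(\dot\Phi(s)(X))$ at a stretched-exponential rate. To split $D(s)$ into a genuine interaction $\Psi(Z;s)$ one applies the iterated-difference construction based on the conditional expectations $\Pi_{\Lambda_N}$: for each finite $Z$ define $\Psi(Z;s)$ via differences $\Pi_{\Lambda}-\Pi_{\Lambda'}$ of suitably chosen boxes. The fact, stressed in the excerpt and proven in \cite{aramori}, that $\Pi_\Gamma(a\hat\otimes b)=0$ for $a\in\caA_\Gamma$, $b\in\caA_{\Gamma^c}^{(1)}$ is what replaces the multiplicativity $\Pi_\Gamma(ab)=a\,\Pi_\Gamma(b)$ of the spin case: it ensures that the localized pieces remain even, add up telescopically to $D(s)$, and obey the $F$-norm bound required for $\Psi\in\hat\caB_F([0,1])$ with $F$ of stretched-exponential decay. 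Continuity in $s$ follows from condition 5 (uniform differentiability of $\Phi(s)$) and norm-continuity of $t\mapsto \tau_t^{\Phi(s)}$.

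\textbf{Step 3: intertwining and conclusion.} By Step 2, $\Psi\in\hat\caB_F([0,1])$, so $\alpha:=\tau^{\Psi}_{0,1}\in\QAut(\caA)$. The standard Hastings argument, which requires only the spectral gap (condition 6) and the differentiability bound (\ref{dcon}) of condition 7 applied to local observables in $\caD_\zeta$, shows that $s\mapsto \omega_{\Phi(s)}\circ\tau^{\Psi}_{0,s}$ is constant on $\caD_\zeta$; by density this extends to all of $\caA$, giving $\omega_{\Phi_1}=\omega_{\Phi_0}\circ\alpha$. In the $\beta$-invariant case, Step~1 gives $\beta_g$-invariance of $D(s)$; since the conditional expectations $\Pi_\Lambda$ commute with $\beta_g$ (as $\beta_g$ preserves the trace and is grading-preserving), $\Psi(Z;s)$ is $\beta_g$-invariant, and $\alpha\in\QAut_\beta(\caA)$.

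\textbf{Expected main obstacle.} The only delicate point compared with the quantum-spin proof of \cite{MO} is verifying that the difference-of-conditional-expectations localization produces \emph{even} local terms whose telescoping sum reproduces $D(s)$ exactly. In a tensor-product algebra this is immediate from $\Pi_\Gamma(ab)=a\Pi_\Gamma(b)$; in the graded setting one has to track odd components and rely precisely on the annihilation property $\Pi_\Gamma(a\hat\otimes b)=0$ for $b$ odd to kill the potentially problematic terms. Once this bookkeeping is done, all remaining estimates — $F$-norm bounds, Lieb--Robinson decay, and intertwining — go through verbatim as in the bosonic case.
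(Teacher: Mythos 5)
Your proposal is correct and follows essentially the same route the paper itself takes: the paper does not give a detailed proof of this theorem but simply observes that the argument of [MO] carries over once one has the Fermionic conditional-expectation property $\Pi_\Gamma(a\hat\otimes b)=0$ for $a\in\caA_\Gamma$, $b\in\caA_{\Gamma^c}^{(1)}$, which is exactly the point you isolate as the "expected main obstacle" in your Step 2. Your more detailed walkthrough of the quasi-adiabatic generator, its localization into an interaction $\Psi\in\hat\caB_F([0,1])$, and the intertwining argument is just the elaboration of what "following the argument in [MO]" means, and your identification of where the graded structure actually matters matches the paper's own emphasis.
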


Because of the theorem, $\QAut(\caA)$ plays an important role for us, and the ground states
we consider belong to the following set.
\begin{align}
\SPT:=
\left\{
\omega^{(0)}\circ\alpha\mid 
\alpha\in \QAut(\caA),\quad \omega^{(0)}\circ\alpha\circ\beta_g
=\omega^{(0)}\circ\alpha.
\right\}.
\end{align}
For each $\omega\in \SPT$, by definition, the set
\begin{align}
\EAut(\omega)
=
\left\{
\alpha\in \QAut(\caA)\mid
\omega=\omega^{(0)}\circ\alpha
\right\}
\end{align}
is non-empty.

From the fact they are given by local interactions, automorphisms in $\QAut(\caA)$
satisfy nice properties. We list up such properties for the rest of this section.
Most of them can be proven using the same argument as that
of quantum spin case \cite{2dSPT}
combined with the property of conditional expectations $\Pi_\Gamma$ 
mentioned above, and we omit the proof.

For $0<\theta<\frac{\pi}{2}$, we set
\begin{align}
C_{\theta}:=\left\{
(x,y)\mid |y|\le \tan\theta\cdot |x|
\right\}.
\end{align}
For $0<\theta_1<\theta_2\le \frac \pi 2$, we use a notation
$\caC_{(\theta_1,\theta_2]}:=C_{\theta_2}\setminus C_{\theta_1}$ and
$\caC_{[0,\theta_1]}:=C_{\theta_1}$.

We also set
\begin{align}
c_L:=-5,\quad c_R:=5.
\end{align}

An automorphism $\alpha\in \QAut(\caA)$ satisfies a factorization property.
It basically says that we can split $\alpha$ into two along any cut of the system
modulo some error terms localized around the boundary.
For example, if we cut the system along the $y$-axis, we obtain the following:
for any $0<\theta<\frac \pi 2$,
there are 
\begin{align}
\alpha_L\in \QAut\lmk \caA_{H_L}\rmk,\quad \alpha_R\in \QAut\lmk \caA_{H_R}\rmk,\quad
\Upsilon\in\QAut\lmk\caA_{\lmk C_\theta\rmk^c}\rmk
\end{align}
decomposing $\alpha$ as
\begin{align}\label{afactoc}
\alpha=\inn\circ\lmk\alpha_L\hat \otimes\alpha_R\rmk\circ\Upsilon.
\end{align}
If we cut the system along the $x$-axis, we have the following:
there are 
\begin{align}\label{xfac}
\alpha_U\in \QAut(\caA_{H_U^1}),
\quad \alpha_D\in \QAut(\caA_{H_D^{-1}}),\quad
\Xi_L\in \QAut\lmk\caA_{C_\theta\cap H_L}\rmk,\quad
\Xi_R\in \QAut\lmk\caA_{C_\theta\cap H_R}\rmk
\end{align}
such that 
\begin{align}
\alpha
=\lmk \alpha_U\hat\otimes \alpha_D\rmk
\circ\lmk \Xi_{L}\hat\otimes \Xi_R\rmk\inn.
\end{align}
Hence for $\alpha\in \QAut(\caA)$,
and $0<\theta<\frac\pi 2$, the following sets are non-empty.
\begin{align}
\begin{split}
&\caD^V(\alpha,\theta)
:=
\left\{
(\alpha_L,\alpha_R,\Upsilon)\middle|
\begin{gathered}
\alpha_L\in \QAut(\caA_{H_L}),\quad \alpha_R\in \QAut(\caA_{H_R}),\quad
\Upsilon\in \QAut\lmk\caA_{C_\theta^c}\rmk\\
\text{such that} \quad \alpha
=\lmk \alpha_L\hat\otimes \alpha_R\rmk\Upsilon\circ\inn
\end{gathered}
\right\},\\
&\caD^H(\alpha,\theta)
:=
\left\{
(\alpha_U,\alpha_D,\Xi_L,\Xi_R)\middle|
\begin{gathered}
\alpha_U\in \QAut(\caA_{H_U^1}),
\quad \alpha_D\in \QAut(\caA_{H_D^{-1}}),\\
\Xi_L\in \QAut\lmk\caA_{C_\theta\cap H_L}\rmk
\quad
\Xi_R\in \QAut\lmk\caA_{C_\theta\cap H_R}\rmk\\
\text{such that} \quad \alpha
=\lmk \alpha_U\hat\otimes \alpha_D\rmk
\circ\lmk \Xi_{L}\hat\otimes \Xi_R\rmk\inn
\end{gathered}
\right\}.
\end{split}
\end{align}

We can consider finer factorization:
for each\begin{align}\label{thetas1}
0<\theta_{0.8}<\theta_1<\theta_{1.2}<\theta_{1.8}<\theta_2<\theta_{2.2}<
\theta_{2.8}<\theta_3<\theta_{3.2}<\frac\pi 2,
\end{align}
 $\alpha\in \QAut(\caA)$ can be decomposed as \begin{align}\label{sqaut}
&\alpha=\inn\circ\lmk
\alpha_{[0,\theta_1]}\otimes\alpha_{(\theta_1,\theta_2]}
\otimes \alpha_{(\theta_2,\theta_3]}\otimes
\alpha_{(\theta_3,\frac\pi 2]}
\rmk
\circ
\lmk
\alpha_{(\theta_{0.8}, \theta_{1.2}]}\otimes
\alpha_{(\theta_{1.8},\theta_{2.2}]}
\otimes \alpha_{(\theta_{2.8},\theta_{3.2}]}
\rmk
\end{align}
with  \begin{align}\label{sqaut2}
  \begin{split}
&  \alpha_X:=\widehat\bigotimes_{\sigma=L,R,\zeta=D,U} \alpha_{X,\sigma,\zeta},\quad
 \alpha_{[0,\theta_1]}:=\widehat \bigotimes_{\sigma=L,R}\alpha_{[0,\theta_{1}],\sigma},\quad
 \alpha_{(\theta_3,\frac\pi 2]}:=\widehat
 \bigotimes_{\zeta=D,U}  \alpha_{(\theta_3,\frac\pi 2],\zeta}\\
 &\alpha_{X,\sigma,\zeta}\in 
 \QAut\lmk\caA_{C_{X}\cap H_\sigma^{c_\sigma}\cap H_\zeta}\rmk,\quad
 \alpha_{X,\sigma}:=\widehat \bigotimes_{\zeta=U,D}\alpha_{X,\sigma,\zeta},\quad
\alpha_{X,\zeta}:=\widehat \bigotimes_{\sigma=L,R}\alpha_{X,\sigma,\zeta}\\
&\alpha_{[0,\theta_{1}],\sigma}\in 
\QAut\lmk\caA_{{C_{[0,\theta_{1}]}}\cap H_\sigma^{c_\sigma}}\rmk,\quad
 \alpha_{(\theta_3,\frac\pi 2],\zeta}\in \QAut\lmk\caA_{C_{(\theta_3,\frac\pi 2]}\cap H_\zeta}\rmk, 
  \end{split} 
  \end{align}
 for
 \begin{align}\label{sqaut3}
 X=(\theta_1,\theta_2], (\theta_2,\theta_3],
 (\theta_{0.8},\theta_{1.2}],
 (\theta_{1.8},\theta_{2.2}], 
(\theta_{2.8},\theta_{3.2}],\quad \sigma=L,R,\quad \zeta=D,U.
 \end{align}
In order to define our index, we introduce an automorphism localized along $x$-axis.
Let $v_\tau$ be a unitary on $\mkh=l^2(\bbZ^2)$
such that 
\begin{align}
v_\tau\delta_{(x,y)}
:=\left\{
\begin{gathered}
\delta_{(x,y)},\quad y\neq 0\\
\delta_{(x+1,0)},\quad y=0
\end{gathered}.\right.
\end{align}
Note that
$v_\tau$ commutes with the complex conjugation
$\mkC$.
Therefore, it defines an automorphism $\tau:=\Xi_{v_\tau}$
on $\al$
such that
\begin{align}
\tau\lmk B(f)\rmk:=B\lmk v_\tau f\rmk,
\quad f\in \mkh.
\end{align}

Decompositions like above allow us to derive the following Lemma immediately.
\begin{lem}\label{lem29}
For any $\alpha\in \QAut\lmk \caA_{\bbZ^2}\rmk$, the following hold.
\begin{description}
\item[(i)]
For any $0<\varphi<\frac\pi 2$ and $a\in \bbZ$, 
there are automorphisms $\zeta_{\sigma}\in \auz\lmk\caA_{C_\varphi\cap H_\sigma^{c_\sigma}}\rmk$,
$\sigma=L,R$
such that
\begin{align}
\alpha\tau^a\alpha^{-1}=\tau^a\lmk \zeta_L\hat\otimes \zeta_R\rmk\circ\inn.
\end{align}
\item[(ii)]
For any $0<\varphi'<\varphi''<\frac\pi 2$
and $X_\sigma\in\auz\lmk\caA_{C_{\varphi'}\cap H_\sigma^{c_\sigma}}\rmk$, $\sigma=L,R$
there are automorphisms
$\tilde X_\sigma\in\auz\lmk\caA_{C_{\varphi''}\cap H_\sigma^{c_\sigma}}\rmk$, $\sigma=L,R$
 such that
\begin{align}
\alpha\lmk X_L\hat\otimes X_R\rmk\alpha^{-1}
=\lmk \tilde X_L\hat\otimes \tilde X_R\rmk\circ\inn.
\end{align}
\end{description}
\end{lem}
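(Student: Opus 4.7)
The plan is to apply the fine factorization (\ref{sqaut}) of $\alpha$ with angles chosen so that exactly one factor of $\alpha$ has support overlapping the object being conjugated; every other factor then commutes past and is absorbed into the inner automorphism.

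For (i), I would choose $0<\theta_{0.8}<\theta_1<\cdots<\theta_{3.2}<\varphi$. The $x$-axis $\{y=0\}$ is contained in $C_{[0,\theta_{0.8}]}$, so it is disjoint from every cone $C_{(\theta_i,\theta_j]}$ appearing in (\ref{sqaut}) other than the first slab. Consequently, every factor of $\alpha$ other than $\alpha_{[0,\theta_1]}=\alpha_{[0,\theta_1],L}\hat\otimes\alpha_{[0,\theta_1],R}$ is a graded automorphism whose support misses $\{y=0\}$, and since $\tau^a$ is itself graded and acts trivially off $\{y=0\}$, all those factors commute with $\tau^a$. The conjugation collapses to
\[
\alpha\tau^a\alpha^{-1}=\inn\circ\alpha_{[0,\theta_1]}\tau^a\alpha_{[0,\theta_1]}^{-1}\circ\inn^{-1}.
\]
Since $\alpha_{[0,\theta_1],L}$ and $\alpha_{[0,\theta_1],R}$ are even with disjoint supports, they commute, and I would define
\[
\zeta_\sigma:=\alpha_{[0,\theta_1],\sigma}\,\tau^a\,\alpha_{[0,\theta_1],\sigma}^{-1}\,\tau^{-a},
\]
so that $\alpha_{[0,\theta_1]}\tau^a\alpha_{[0,\theta_1]}^{-1}=\tau^a\,\zeta_L\,\zeta_R$. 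The bulk support of $\zeta_\sigma$ is $C_{\theta_1}\cap H_\sigma^{c_\sigma}\subset C_\varphi\cap H_\sigma^{c_\sigma}$, as required.

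For (ii), the same strategy works with angles $\varphi'<\theta_{0.8}<\theta_1<\cdots<\theta_{3.2}<\varphi''$. Every factor of $\alpha$ other than $\alpha_{[0,\theta_1]}$ has support contained in $C_{(\theta_{0.8},\theta_{3.2}]}$, which is disjoint from $C_{\varphi'}$ and therefore disjoint from $\supp(X_L)\cup\supp(X_R)$; such factors commute with $X_L\hat\otimes X_R$. I would then set $\tilde X_\sigma:=\alpha_{[0,\theta_1],\sigma}\,X_\sigma\,\alpha_{[0,\theta_1],\sigma}^{-1}$, which lies in $\auz(\caA_{C_{\theta_1}\cap H_\sigma^{c_\sigma}})\subset\auz(\caA_{C_{\varphi''}\cap H_\sigma^{c_\sigma}})$, and the leftover inner pieces from (\ref{sqaut}) collect into a single trailing $\inn$.

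The main obstacle, present only in (i), is a finite boundary leak: $\tau^a$ shifts $\{y=0\}$ sites across the cut $x=c_\sigma$, so the candidate $\zeta_\sigma$ defined above can overshoot $H_\sigma^{c_\sigma}$ by up to $|a|$ sites on the $x$-axis. Because this leak lives in a finite region, it can be split off as a local unitary and absorbed into the trailing $\inn$, leaving a genuine $\zeta_\sigma\in\auz(\caA_{C_\varphi\cap H_\sigma^{c_\sigma}})$. No such leak appears in (ii), since $X_\sigma$ is already supported inside $H_\sigma^{c_\sigma}$ and the conjugating factor $\alpha_{[0,\theta_1],\sigma}$ is a $\QAut$ on a domain containing that support.
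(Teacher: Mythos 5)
Your plan matches the route the paper intends: the statement is presented as an immediate consequence of the fine factorization (\ref{sqaut}), and your argument supplies exactly the missing geometry — since $(x,0)\in C_{\theta'}$ for every $\theta'>0$, every annular cone $C_{(\theta_i,\theta_j]}$ misses the $x$-axis, so with all angles placed inside $\varphi$ (resp.\ inside $(\varphi',\varphi'')$) only $\alpha_{[0,\theta_1]}$ can fail to commute with the conjugated object, and the finite overshoot on the axis is trimmed off into the inner part by Proposition~\ref{lem32}. One algebraic slip in (i): with your $\zeta_\sigma:=\alpha_{[0,\theta_1],\sigma}\,\tau^a\,\alpha_{[0,\theta_1],\sigma}^{-1}\,\tau^{-a}$ the computation actually gives $\alpha_{[0,\theta_1]}\tau^a\alpha_{[0,\theta_1]}^{-1}=\lmk\zeta_L\hat\otimes\zeta_R\rmk\circ\tau^a$, with $\tau^a$ on the wrong side (one sees this already when $\alpha_{[0,\theta_1],R}=\mathrm{id}$, where the asserted identity would force $\tau^{-a}\beta_L\tau^a\beta_L^{-1}=\beta_L\tau^a\beta_L^{-1}\tau^{-a}$, which is false in general). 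To match the statement use $\zeta_\sigma:=\tau^{-a}\alpha_{[0,\theta_1],\sigma}\tau^a\,\alpha_{[0,\theta_1],\sigma}^{-1}$, which is just the $\tau^{-a}$-conjugate of your candidate and exhibits the same finite leak, handled the same way. Part (ii) is fine as written: since $\alpha_{[0,\theta_1],\sigma}$ is an automorphism of $\caA_{C_{\theta_1}\cap H_\sigma^{c_\sigma}}$ and $X_\sigma$ is already supported in a subregion of that algebra, no leak occurs and $\tilde X_\sigma\in\auz\lmk\caA_{C_{\theta_1}\cap H_\sigma^{c_\sigma}}\rmk\subset\auz\lmk\caA_{C_{\varphi''}\cap H_\sigma^{c_\sigma}}\rmk$ directly.
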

\begin{defn}\label{lem30}
For any $\Lambda\subset\bbZ^2$ and $X\in\auz\lmk\tilde \caA_{\Lambda}\rmk$ and
$a\in \bbZ$,
we denote by
$X_a\in \auz\lmk
\tilde \caA_{(\Lambda_0+a\bm e_0)\cup \Lambda_1}
\rmk$
the automorphism
such that 
$\tau_a X\tau_a^{-1}= X_a$.
Here we decomposed $\Lambda$ as
\begin{align}
\Lambda=\Lambda_0\dot\cup \Lambda_1
:=\lmk \Lambda\cap \lmk \bbZ\times \{0\}\rmk\rmk
\dot\cup \lmk \Lambda\cap \lmk\bbZ^2\setminus \lmk \bbZ\times\{0\}\rmk\rmk\rmk,
\end{align}
and set $\bm e_0:=(1,0)\in\bbZ^2$.
\end{defn}
From Lemma \ref{lem29} (i)
and $\omega^{(0)}\circ \tau^{2a}=\omega^{(0)}$,
we obtain the following:
\begin{lem}\label{lem38}
For any $\alpha\in\QAut(\caA_{\bbZ^2})$ and $0<\varphi<\frac\pi 2$, $a\in\bbZ$
there are automorphisms $\xi_{\sigma}\in \auz\lmk\caA_{C_\varphi\cap H_\sigma^{c_\sigma}}\rmk$, $\sigma=L,R$
such that
\begin{align}
\omega^{(0)}\alpha\circ\tau^{2a}\simeq \omega^{(0)}\alpha\lmk \xi_L\hat \otimes \xi_R\rmk.
\end{align}
\end{lem}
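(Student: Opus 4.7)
The plan is to apply Lemma \ref{lem29}(i) to $\tau^{2a}$ in order to factor $\alpha\tau^{2a}\alpha^{-1}$ across the vertical cut, use the $\tau^{2a}$-invariance of the reference state $\omega^{(0)}$ to kill the $\tau^{2a}$ factor, and then use Lemma \ref{lem29}(ii) to push the resulting cone-localized pieces through $\alpha$. Quasi-equivalence is preserved under right-composition with inner automorphisms, so all the inner error terms generated along the way may be absorbed at the end.

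In detail, I would fix some $0<\varphi'<\varphi$. Lemma \ref{lem29}(i) applied with angle $\varphi'$ and exponent $2a$ supplies $\zeta_\sigma\in\auz(\caA_{C_{\varphi'}\cap H_\sigma^{c_\sigma}})$ ($\sigma=L,R$) and a unitary $u$ satisfying
\begin{align*}
\alpha\tau^{2a}\alpha^{-1}=\tau^{2a}\lmk\zeta_L\hat\otimes\zeta_R\rmk\circ\Ad(u),
\end{align*}
so that $\alpha\tau^{2a}=\tau^{2a}(\zeta_L\hat\otimes\zeta_R)\circ\Ad(u)\circ\alpha$. Precomposing with $\omega^{(0)}$ and using $\omega^{(0)}\circ\tau^{2a}=\omega^{(0)}$, I obtain
\begin{align*}
\omega^{(0)}\circ\alpha\circ\tau^{2a}=\omega^{(0)}\circ(\zeta_L\hat\otimes\zeta_R)\circ\Ad(u)\circ\alpha.
\end{align*}
Since $\QAut(\caA_{\bbZ^2})$ is a group, $\alpha^{-1}\in\QAut(\caA_{\bbZ^2})$. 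Applying Lemma \ref{lem29}(ii) to $\alpha^{-1}$ with angles $\varphi'<\varphi$ and $X_\sigma:=\zeta_\sigma$ then produces $\xi_\sigma\in\auz(\caA_{C_\varphi\cap H_\sigma^{c_\sigma}})$ and a unitary $w$ with
\begin{align*}
\alpha^{-1}\lmk\zeta_L\hat\otimes\zeta_R\rmk\alpha=\lmk\xi_L\hat\otimes\xi_R\rmk\circ\Ad(w).
\end{align*}
Inserting $\alpha\circ\alpha^{-1}$ immediately after $\omega^{(0)}$ in the previous display and substituting yields
\begin{align*}
\omega^{(0)}\circ\alpha\circ\tau^{2a}=\omega^{(0)}\circ\alpha\circ(\xi_L\hat\otimes\xi_R)\circ\Ad(w)\circ\Ad(\alpha^{-1}(u))\simeq\omega^{(0)}\circ\alpha\circ(\xi_L\hat\otimes\xi_R),
\end{align*}
which is the claim.

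The only ingredient beyond Lemma \ref{lem29} is the invariance $\omega^{(0)}\circ\tau^{2a}=\omega^{(0)}$, and this is immediate from the explicit construction of $\omega^{(0)}$: $v_\tau$ acts trivially on every row $y\ne 0$, while on $\bbZ\times\{0\}$ the shift by the even integer $2a$ sends the one-dimensional range of $p_{(x,0)}^{(j)}$, spanned by $\delta_{(dx+2j,0)}+i\delta_{(dx+2j+1,0)}$, to another such $p_{(x',0)}^{(j')}$, because $d\in 2\bbN$ and so an even-length shift preserves the even/odd index parity within each $d$-block. Hence $v_\tau^{2a}pv_\tau^{-2a}=p$ and $\omega^{(0)}\circ\tau^{2a}=\omega^{(0)}$. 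I do not anticipate any serious obstacle; the whole proof is algebraic bookkeeping around this single invariance together with the two decompositions provided by Lemma \ref{lem29}.
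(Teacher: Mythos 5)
Your proof is correct and follows the paper's intended route: apply Lemma \ref{lem29} to decompose the conjugate of $\tau^{2a}$ across the vertical cut, use the $\tau^{2a}$-invariance $\omega^{(0)}\circ\tau^{2a}=\omega^{(0)}$ (which you verify correctly — since $d$ is even the $y=0$ block of $p$ has range $\overline{\mathrm{span}}\{\delta_{(2m,0)}+i\delta_{(2m+1,0)}\}_{m\in\bbZ}$, manifestly invariant under the shift by $2a$), and absorb the inner terms at the end. The paper's citation of Lemma \ref{lem29}(i) alone hints at a marginally shorter variant — apply (i) to $\alpha^{-1}$, write $\omega^{(0)}\alpha\tau^{2a}=\omega^{(0)}\tau^{-2a}\alpha\tau^{2a}=\omega^{(0)}\alpha\,\Ad(u^*)\lmk\zeta_L^{-1}\hat\otimes\zeta_R^{-1}\rmk$, and slide the inner term past the cone-localized pieces — but your additional use of Lemma \ref{lem29}(ii) to conjugate $\zeta_L\hat\otimes\zeta_R$ through $\alpha$ is equally valid and arguably cleaner.
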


\section{Split property of pure states on Self-dual-CAR-algebras}\label{splitsec}
Having the automorphic equivalence
$\omega_{\Phi}=\omega^{(0)}\alpha$ with $\alpha\in \QAut(\caA)$ and
the $\beta_g$-invariance of $\omega_{\Phi}$,
we expect that the effective excitation caused by $\beta_g^U$
on $\omega_{\Phi}$ is localized around the $x$-axis.
It can be shown so, by observing that 
 $\omega^{(0)}\circ\alpha\beta_g^U\alpha^{-1}$
satisfies the split property (Definition \ref{splitdef}) with respect to the $H_L-H_R$ cut.
In this section, as a preparation of our analysis,
we investigate the split property of pure states on self-dual CAR-algebras.
%
%
%
%
For Fermionic systems, the split property gives some dichotomy (Lemma \ref{lem16}).
The main proposition of this section is Proposition \ref{mainprop},
which states that two pure split states belonging to the same category of the dichotomy
can be connected by automorphisms of graded tensor product form.
We use definitions and facts from Appendix \ref{notasec} and Appendix \ref{gravn} freely.

We start by some basic fact we repeatedly use.
\begin{lem}\label{lem21}
Let $\mkh_1$, $\mkh_2$ be Hilbert spaces
with complex conjugation $\mkC_1$, $\mkC_2$, respectively.
Let $\varphi,\omega$  be
homogeneous pure states on
$\SDC \lmk \mkh_1\oplus \mkh_2, \mkC_1\oplus \mkC_2\rmk$.
Suppose that $\varphi$ and $\omega$ are equivalent.
Then their restrictions $\varphi\vert_{\SDC\lmk \mkh_1,\mkC_1\rmk}$,
$\omega\vert_{\SDC\lmk \mkh_1,\mkC_1\rmk}$
onto $\SDC\lmk \mkh_1,\mkC_1\rmk$
are quasi-equivalent.
\end{lem}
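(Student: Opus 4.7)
The plan is to use the graded GNS picture together with a Schur-lemma analysis of the grading. Since $\varphi$ and $\omega$ are equivalent pure states on $C:=\SDC(\mkh_1\oplus\mkh_2,\mkC_1\oplus\mkC_2)$, I can pick a unitary intertwiner $U:\caH_\varphi\to\caH_\omega$ with $U\pi_\varphi(x)=\pi_\omega(x)U$ for all $x\in C$. Let $\Gamma_\varphi,\Gamma_\omega$ be the self-adjoint unitaries implementing the gradings on $\caH_\varphi,\caH_\omega$; they exist because $\varphi,\omega$ are homogeneous, and satisfy $\Gamma_\varphi\Omega_\varphi=\Omega_\varphi$, $\Gamma_\omega\Omega_\omega=\Omega_\omega$. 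Then $\tilde U:=\Gamma_\omega U\Gamma_\varphi$ is another unitary intertwiner of the irreducible GNS representations, so by Schur $\tilde U=cU$ with $c\in\bbC$; applying the identity twice and using $\Gamma_\varphi^2=\Gamma_\omega^2=\unit$ forces $c^2=1$, hence $c=\pm 1$. In particular $\xi:=U\Omega_\varphi$ satisfies $\Gamma_\omega\xi=c\xi$, so $\xi$ is a homogeneous nonzero vector in $\caH_\omega$. Since $\varphi(x)=\la\xi,\pi_\omega(x)\xi\ra$ for every $x\in C$, both restrictions $\varphi\vert_A$ and $\omega\vert_A$, with $A:=\SDC(\mkh_1,\mkC_1)$, appear as vector states of the representation $\pi_\omega\vert_A$ via the homogeneous vectors $\xi$ and $\Omega_\omega$.

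The heart of the proof is then an analysis of $M:=\pi_\omega(A)''$. I would introduce the ``twisted'' copy of $\SDC(\mkh_2,\mkC_2)$,
\[
N_*:=\lmk\ltm\pi_\omega(b)\Gamma_\omega^{\partial b}\mid b\in\SDC(\mkh_2,\mkC_2)\text{ homogeneous}\rtm\rmk''.
\]
A direct computation with the graded tensor product rule shows that each generator $\pi_\omega(b)\Gamma_\omega^{\partial b}$ ordinarily commutes with $\pi_\omega(a)$ for every $a\in A$, so $N_*\subset\pi_\omega(A)'$. If $z\in Z(M)$ is even, commuting $\Gamma_\omega^{\partial b}$ past $z$ in the identity $z\,\pi_\omega(b)\Gamma_\omega^{\partial b}=\pi_\omega(b)\Gamma_\omega^{\partial b}\,z$ gives $z\pi_\omega(b)=\pi_\omega(b)z$ for all homogeneous $b$, hence $z\in\pi_\omega(\SDC(\mkh_2,\mkC_2))'$; combined with $z\in\pi_\omega(A)'$ and purity of $\omega$ this forces $z\in\pi_\omega(C)'=\bbC\unit$. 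For self-adjoint $z\in Z(M)^{(1)}$ one has $z^2\in Z(M)^{(0)}=\bbC\unit$, so $Z(M)^{(1)}$ is at most one-dimensional, spanned (when nonzero) by a single odd self-adjoint unitary $u$. Consequently the only projections in $Z(M)$ are $0$, $\unit$, and $\tfrac12(\unit\pm u)$.

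For a homogeneous nonzero $\eta\in\caH_\omega$ the vector $u\eta$ has parity opposite to $\eta$, so $u\eta=\mp\eta$ is impossible, whence $\tfrac12(\unit\pm u)\eta\neq 0$; consequently $\eta$ has full central support in $M$. Applying this to $\eta=\Omega_\omega$ and $\eta=\xi$, the natural compressions $M\to\pi_{\omega\vert_A}(A)''$ and $M\to\pi_{\varphi\vert_A}(A)''$ are both $*$-isomorphisms sending $\pi_\omega(a)$ to the corresponding generator in each GNS representation, and composing them yields a generator-preserving $*$-isomorphism $\pi_{\omega\vert_A}(A)''\to\pi_{\varphi\vert_A}(A)''$, which is exactly the quasi-equivalence of $\omega\vert_A$ and $\varphi\vert_A$. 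The main obstacle is the graded bookkeeping around the center of $M$: unlike in the ordinary-tensor-product case, $\pi_\omega(A)''$ need not be a factor, and one must exploit the homogeneity of the cyclic vectors (not merely the purity of $\varphi,\omega$) to rule out the odd-central-projection obstruction.
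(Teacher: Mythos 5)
Your proof is correct and takes essentially the same route as the paper: quasi-equivalence is obtained by compressing $\pi(\SDC(\mkh_1,\mkC_1))''$ to the cyclic subspace of a homogeneous vector and arguing that this compression is a $*$-isomorphism, the crucial ingredient being that the even part of its center commutes with $\pi(\SDC(\mkh_2,\mkC_2))$ and is therefore scalar by purity. The paper reaches injectivity a bit more economically by observing that the cyclic projection commutes with $\Gamma_\varphi$, so the central projection carrying $\ker\tau_\varphi$ is automatically even, thereby bypassing both your explicit classification of the projections in $Z(M)$ and your Schur-lemma argument for the homogeneity of $\xi$ (the paper simply runs the compression argument separately in the two GNS representations and composes via the given unitary equivalence).
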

\begin{proof}
We write $\mkA_i:=\SDC\lmk \mkh_i,\mkC_i\rmk$, for $i=1,2$.
Let $(\caH_\varphi,\pi_\varphi,\Omega_\varphi)$
be a GNS representation of $\varphi$.
Because $\varphi$ is homogeneous,
there is a self-adjoint unitary $\Gamma_\varphi$
on $\caH_\varphi$ such that
$\Gamma_\varphi\pi_\varphi(A)\Omega_\varphi
=\pi_\varphi\circ\Theta(A)\Omega_\varphi$,
$A\in\SDC\lmk \mkh_1\oplus \mkh_2, \mkC_1\oplus \mkC_2\rmk$,
for the grading operator $\Theta$ on $\SDC\lmk \mkh_1\oplus \mkh_2, \mkC_1\oplus \mkC_2\rmk$.
Let $p_\varphi$ be an orthogonal projection onto the subspace
$\caK_\varphi:=\overline{\pi_\varphi\lmk\mkA_1\rmk\Omega_\varphi}$.
Because $p_\varphi\in \pi_\varphi(\mkA_1)'$,
$\rho_\varphi(A):=\pi_\varphi\lmk A\rmk p_\varphi$, $A\in\mkA_1$
defines a representation of $\mkA_1$
on $\caK_\varphi$.
Then $(\caK_\varphi, \rho_\varphi,\Omega_\varphi)$
is a GNS representation of $\varphi\vert_{\mkA_1}$.
By the Kaplansky density theorem,
we have $\rho_\varphi(\mkA_1)''=\pi_\varphi\lmk \mkA_1\rmk'' p_\varphi$.
Let us consider the map 
$\tau_\varphi : \pi_\varphi\lmk\mkA_1\rmk''\to \rho_\varphi(\mkA_1)''
$,
defined by $\tau_\varphi(x):=x p_\varphi$, for $x\in \pi_\varphi\lmk\mkA_1\rmk''$.
It is a $*$-homomorphism onto $\rho_\varphi(\mkA_1)''$.
We claim that $\tau_\varphi$ is injective.
To see this, we note $p_\varphi$ and $\Gamma_\varphi$ 
commute because $\caK_\varphi$ is $\Gamma_\varphi$-invariant.
As a result, we see that 
$\ker\tau_\varphi=\Ad\lmk\Gamma_\varphi\rmk\lmk \ker\tau_\varphi\rmk$.
Let $E$ be the central projection
of $\pi_\varphi\lmk\mkA_1\rmk''$ such that 
$\ker\tau_\varphi=\pi_\varphi\lmk\mkA_1\rmk'' E$.
From the above observation, we see that
\begin{align}
\pi_\varphi\lmk\mkA_1\rmk''\Ad\lmk\Gamma_\varphi\rmk(E)
=\Ad\lmk\Gamma_\varphi\rmk \lmk \pi_\varphi\lmk\mkA_1\rmk''\rmk
\Ad\lmk\Gamma_\varphi\rmk(E)
=\Ad\lmk\Gamma_\varphi\rmk\lmk\ker\tau_\varphi\rmk
=\ker\tau_\varphi=\pi_\varphi\lmk\mkA_1\rmk'' E.
\end{align}
From this, we have
$\Ad\lmk\Gamma_\varphi\rmk(E)
=E$.
Hence we have 
\begin{align}
E\in Z\lmk \pi_\varphi\lmk\mkA_1\rmk''\rmk
\cap \lmk \pi_\varphi\lmk\mkA_1^{(0)}\rmk''\rmk
\subset \pi_\varphi\lmk \mkA_1\rmk'\cap \pi_\varphi\lmk\mkA_2\rmk'
=\pi_\varphi\lmk\mkA\lmk \mkh_1\oplus \mkh_2, \mkC_1\oplus \mkC_2\rmk\rmk'=\bbC\unit,
\end{align}
because $\varphi$ is pure.
Then we conclude $E=0$ and
this proves the claim, and our $\tau_\varphi$ is a $*$-isomorphism.

Hence for GNS representations $\pi_\varphi$,
$\rho_\varphi$ of $\varphi$, $\varphi\vert_{\mkA_1}$ respectively,
threre is a $*$-isomorphism $\tau_\varphi: \pi_\varphi\lmk\mkA_1\rmk''
\to \rho_\varphi\lmk\mkA_1\rmk''$
such that $\tau_\varphi\circ\pi_\varphi(A)=\rho_\varphi(A)$,
for each $A\in\mkA_1$.
Similarly,  for GNS representations $\pi_\omega$,
$\rho_\omega$ of $\omega$, $\omega\vert_{\mkA_1}$ respectively,
threre is a $*$-isomorphism $\tau_\omega: \pi_\omega\lmk\mkA_1\rmk''
\to \rho_\omega\lmk\mkA_1\rmk''$
such that $\tau_\omega\circ\pi_\omega(A)=\rho_\omega(A)$,
for each $A\in\mkA_1$.
Because $\varphi$ and $\omega$ are equivalent,
there is a $*$-isomorphism $\tau : \pi_\varphi\lmk \mkA\lmk \mkh_1\oplus \mkh_2, \mkC_1\oplus \mkC_2\rmk\rmk''
\to\pi_\omega\lmk \mkA\lmk \mkh_1\oplus \mkh_2, \mkC_1\oplus \mkC_2\rmk\rmk''$
such that $\tau\circ\pi_\varphi(A)=\pi_\omega(A)$ for
$A\in \mkA\lmk \mkh_1\oplus \mkh_2, \mkC_1\oplus \mkC_2\rmk$.
Restricting this $\tau$ to $\pi_\varphi\lmk\mkA_1\rmk''$,
we obtain
a $*$-isomorphism $\tau_1: \pi_\varphi\lmk\mkA_1\rmk''\to \pi_\omega\lmk\mkA_1\rmk''$
such that $\tau_1\circ\pi_\varphi(A)=\pi_\omega(A)$
for $A\in \mkA_1$.
Then we see that $\tau_\omega\circ\tau_1\circ\tau_{\varphi}^{-1}$
defines a $*$-isomorphism from $\rho_\varphi(\mkA_1)''$
onto $\rho_{\omega}(\mkA_1)''$
such that $\tau_\omega\circ\tau_1\circ\tau_{\varphi}^{-1}\circ
\rho_\varphi(A)=\rho_\omega(A)$, $A\in \mkA_1$.

\end{proof}
We encounter the following situation as well.
\begin{lem}\label{lem26}
Let $\mkk_1$, $\mkk_2$ be Hilbert spaces
with complex conjugation $\mkC_1$, $\mkC_2$, respectively.
Let $\omega$  be
a homogeneous pure states on
$\SDC\lmk \mkk_1\oplus \mkk_2, \mkC_1\oplus \mkC_2\rmk$,
and $\varphi_1,\varphi_2$ homogeneous states on
$\sdckc{1}$, $\sdckc{2}$ respectively.
If $\omega\qe \varphi_1\hat\otimes \varphi_2$,
then we have $\left. \omega\right\vert_{\sdckc{1}}\qe \varphi_1$
and $\left. \omega\right\vert_{\sdckc{2}}\qe \varphi_2$.
\end{lem}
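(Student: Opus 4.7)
The plan is to chain three normal $*$-isomorphisms of von Neumann algebras, each intertwining the relevant representations of $\sdckc{1}$, and deduce quasi-equivalence of $\omega\vert_{\sdckc{1}}$ with $\varphi_1$; the $\sdckc{2}$ case is entirely symmetric. Throughout, write $\mkA_i:=\sdckc{i}$, $\mkA:=\sdc(\mkk_1\oplus\mkk_2,\mkC_1\oplus\mkC_2)$, and $\psi:=\varphi_1\hat\otimes\varphi_2$.

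First I would apply the central construction from the proof of Lemma~\ref{lem21} to $\omega$ alone. That argument --- establishing injectivity of the compression $x\mapsto xp_\varphi$ from $\pi_\varphi(\mkA_1)''$ onto the GNS algebra of $\varphi\vert_{\mkA_1}$ --- uses only the purity and homogeneity of the ambient state, not the presence of a second state. Applied to $\omega$ it yields a normal $*$-isomorphism $\tau_\omega:\pi_\omega(\mkA_1)''\to\rho_\omega(\mkA_1)''$ with $\tau_\omega(\pi_\omega(A))=\rho_\omega(A)$ for all $A\in\mkA_1$, where $\rho_\omega$ is the GNS representation of $\omega\vert_{\mkA_1}$ obtained by compressing $\pi_\omega$ to $\overline{\pi_\omega(\mkA_1)\Omega_\omega}$.

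Next I would handle the tensor product side directly. Since $\psi\vert_{\mkA_1}(A)=\varphi_1(A)\varphi_2(\unit)=\varphi_1(A)$, one has $\psi\vert_{\mkA_1}=\varphi_1$. In the graded-tensor-product GNS picture, $\pi_\psi(A)$ for $A\in\mkA_1$ acts as $\pi_{\varphi_1}(A)\otimes\unit$ on $\caH_{\varphi_1}\otimes\caH_{\varphi_2}$ with cyclic vector $\Omega_{\varphi_1}\otimes\Omega_{\varphi_2}$, so the cyclic subspace $\overline{\pi_\psi(\mkA_1)\Omega_\psi}$ is naturally identified with $\caH_{\varphi_1}\otimes\bbC\Omega_{\varphi_2}\cong\caH_{\varphi_1}$, and the resulting compressed representation is unitarily equivalent to $\pi_{\varphi_1}$. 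This produces a $*$-isomorphism $\sigma_\psi:\pi_\psi(\mkA_1)''\to\pi_{\varphi_1}(\mkA_1)''$ with $\sigma_\psi(\pi_\psi(A))=\pi_{\varphi_1}(A)$.

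Finally, the hypothesis $\omega\qe\psi$ gives a normal $*$-isomorphism $\Phi:\pi_\omega(\mkA)''\to\pi_\psi(\mkA)''$ intertwining $\pi_\omega$ and $\pi_\psi$; by normality its restriction to $\pi_\omega(\mkA_1)''$ is a $*$-isomorphism onto $\pi_\psi(\mkA_1)''$ intertwining the restricted representations. Composing $\sigma_\psi\circ\Phi\circ\tau_\omega^{-1}$ yields the desired intertwining $*$-isomorphism $\rho_\omega(\mkA_1)''\to\pi_{\varphi_1}(\mkA_1)''$, that is $\omega\vert_{\mkA_1}\qe\varphi_1$. The main obstacle I anticipate is not conceptual but organisational: one has to reread the proof of Lemma~\ref{lem21} carefully to isolate the one-sided statement used in Step~1 (the original lemma combined two copies of it applied to two pure equivalent states), and to write down the identification in Step~2 with enough care about the graded tensor product to be sure that $\sigma_\psi$ is well defined and respects the grading. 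Both tasks are routine once unpacked.
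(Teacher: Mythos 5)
Your proposal is correct and follows essentially the same route as the paper: the paper's proof also applies the compression-injectivity argument from Lemma~\ref{lem21} to $\omega$ alone, combines it with the quasi-equivalence $\omega\qe\varphi_1\hat\otimes\varphi_2$, and implicitly uses the graded-tensor-product GNS structure to identify $\pi_\psi|_{\sdckc{i}}$ with (a multiple of) $\pi_{\varphi_i}$, exactly as you spell out in Steps 2--3. The only point worth tightening is that the two restrictions are not quite ``entirely symmetric'': in the graded tensor product $\pi_\psi(B)$ for $B\in\sdckc{2}$ carries a $\Gamma_{\varphi_1}^{\partial B}$ factor, so the compression to $\bbC\Omega_{\varphi_1}\otimes\caH_{\varphi_2}$ gives $\pi_{\varphi_2}$ only up to composition with $\Theta$, which is nonetheless unitarily equivalent to $\pi_{\varphi_2}$ since $\Ad(\Gamma_{\varphi_2})\pi_{\varphi_2}=\pi_{\varphi_2}\Theta$.
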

\begin{proof}
By the proof of Lemma \ref{lem21}, 
we have $\left. \pi_\omega\right\vert_{\sdckc{2}}\qe \pi_{\left.\omega\right\vert_{\sdckc{2}}}$.
(Note that $\left. \pi_\omega\right\vert_{\sdckc{2}}$ is the restriction of the GNS representation of $\omega$
while $\pi_{\left.\omega\right\vert_{\sdckc{2}}}$ is the GNS representation of the restriction of $\omega$.)
Because of $\omega\qe \varphi_1\hat\otimes \varphi_2$,
we have $\left. \pi_\omega\right\vert_{\sdckc{2}}\qe \pi_{\varphi_2}$.
Hence we have $ \pi_{\left.\omega\right\vert_{\sdckc{2}}}\qe \pi_{\varphi_2}$.
The same for $\sdckc{1}$.
This proves the Lemma.
\end{proof}

The situation in the previous Lemma has a name.
\begin{defn}\label{splitdef}
Let $\mkk_i$ be an infinite dimensional Hilbert space
with a complex conjugation $\mkC_i$, for $i=1,2$.
We say a homogeneous pure state $\omega$ on 
$\sdc\lmk\mkk_1\oplus \mkk_2,\mkC_1\oplus \mkC_2\rmk$
satisfies the split property if
there are homogeneous states $\varphi_i$
on $\sdc\lmk \mkk_i, \mkC_i\rmk$, $i=1,2$
such that
$\omega\qe \varphi_1\hat\otimes \varphi_2$.
\end{defn}
The following is a refinement of the dichotomy introduced in \cite{Matsui20}.
\begin{lem}\label{lem16}
Let $\mkk_{\sigman}$ be an infinite dimensional Hilbert space
with a complex conjugation $\mkC_{\sigman}$, for $\sigman={\Ln},{\Rn}$.
Let $\omega$ be a homogeneous pure state on 
$\sdc\lmk\mkk_{\Ln}\oplus \mkk_{\Rn},\mkC_{\Ln}\oplus \mkC_{\Rn}\rmk$,
satisfying the split property.
Let $\Theta_{\Ln},\Theta_{\Rn}$ be 
automorphisms on $\sdc^{(0)}\lmk\mkk_{\Ln}\oplus \mkk_{\Rn},\mkC_{\Ln}\oplus \mkC_{\Rn}\rmk$
such that $\Theta_{\Ln}\lmk B(f_{\Ln}\oplus f_{\Rn})\rmk=B\lmk -f_{\Ln}\oplus f_{\Rn}\rmk$,
$\Theta_{\Rn}\lmk B(f_{\Ln}\oplus f_{\Rn})\rmk=B\lmk f_{\Ln}\oplus (-f_{\Rn})\rmk$,
for $f_\sigman\in \mkk_{\sigman}$, $i=1,2$.
Then one of the following occurs.
\begin{description}
\item[(i)]
We have $\left.\omega\right\vert_{\sdc^{(0)}\lmk\mkk_{\Ln}\oplus \mkk_{\Rn},\mkC_{\Ln}\oplus \mkC_{\Rn}\rmk}\Theta_{\Rn}
\qe \left.\omega\right\vert_{\sdc^{(0)}\lmk\mkk_{\Ln}\oplus \mkk_{\Rn},\mkC_{\Ln}\oplus \mkC_{\Rn}\rmk}$.
The state $\omega$ has a GNS representation of the form
$(\caH_{\Ln}\otimes \caH_{\Rn},\pi_{\Ln}\hat \otimes \pi_{\Rn},\Omega)$
with $(\caH_{\sigman},\pi_{\sigman})$ an irreducible representation of $\sdc(\mkk_{\sigman},\mkC_{\sigman})$ for $\sigman={\Ln},{\Rn}$.
There is a self-adjoint unitary $\Gamma_\sigman$ on $\caH_{\sigman}$ implementing $\Theta_{\sigman}$, i.e.,
$\Ad(\Gamma_{\sigman})\pi_\sigman=\pi_{\sigman}\Theta_{\sigman}$ for $\sigman={\Ln},{\Rn}$.
Decomposing $\caH_{\sigman}$ with $\caH_{\sigman\pm}:=\frac{\unit\pm \Gamma_{\sigman}}{{\Rn}}\caH_{\sigman}$, 
as $\caH_\sigman=\caH_{\sigman+}\oplus \caH_{\sigman-}$,
we have $\left.\pi\right\vert_{\sdc^{(0)}\lmk\mkk_{i},\mkC_{i}\rmk}
=\pi_{\sigman+}\oplus \pi_{\sigman-}$
with $\pi_{\sigman\pm}$ mutually singular irreducible representations on $\caH_{i\pm}$
 of $\sdc^{(0)}\lmk\mkk_i,\mkC_i\rmk$.

\item[(ii)]The states
$\left.\omega\right\vert_{\sdc^{(0)}\lmk\mkk_{\Ln}\oplus \mkk_{\Rn},\mkC_{\Ln}\oplus \mkC_{\Rn}\rmk}\Theta_{\Rn}$
and $\left.\omega\right\vert_{\sdc^{(0)}\lmk\mkk_{\Ln}\oplus \mkk_{\Rn},\mkC_{\Ln}\oplus \mkC_{\Rn}\rmk}$
are mutually singular.
The state $\omega$ has a GNS representation of the form
$(\caH_{\Ln}\otimes \caH_{\Rn}\otimes \bbC^{\Rn}, \pi,\Omega)$.
 There are irreducible representations  $\pi_{\sigman}$
 of $\sdc^{(0)}\lmk \mkk_{\sigman},\mkC_{\sigman}\rmk$, on $\caH_{\sigman}$,
 $\sigman={\Ln},{\Rn}$ such that
 \begin{align}
 \pi\lmk a\hat\otimes b\rmk
 =\pi_{\Ln}(a)\otimes\pi_{\Rn}(b)\otimes \unit_{\bbC^{\Rn}},\quad
 a\in \sdc^{(0)}\lmk \mkk_{\Ln},\mkC_{\Ln}\rmk, \quad b\in \sdc^{(0)}\lmk \mkk_{\Rn},\mkC_{\Rn}\rmk.
 \end{align}  
 We have
 \begin{align}
 \pi\lmk\sdc\lmk \mkk_{\Ln},\mkC_{\Ln}\rmk\rmk''
 =\caB(\caH_{\Ln})\otimes\bbC\unit_{\caH_{\Rn}}\otimes\lmk \bbC\sigma_z+\bbC\unit\rmk,\quad
 \pi\lmk\sdc\lmk \mkk_{\Rn},\mkC_{\Rn}\rmk\rmk''
 =\bbC\unit_{\caH_{\Ln}}\otimes \caB(\caH_{\Rn})\otimes\lmk \bbC\sigma_x+\bbC\unit\rmk.
 \end{align}
We also have $\Ad\lmk\unit_{\caH_{\Ln}}\otimes\unit_{\caH_{\Rn}}\otimes
\sigma_y\rmk\circ\pi=\pi\Theta$.
\end{description}
If $\omega$ satisfies (i) (resp. (ii)), then $\omega\circ\lmk\eta_{\Ln}\hat\otimes\eta_{\Rn}\rmk$
also satisfies (i) (resp. (ii)),
for any $\eta_{\Ln}\in \Aut^{(0)}\lmk\sdckc{{\Ln}}\rmk$, 
$\eta_{\Rn}\in \Aut^{(0)}\lmk\sdckc{{\Rn}}\rmk$.
\end{lem}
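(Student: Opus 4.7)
The strategy is to exploit the split property to reduce, on each side, to the standard dichotomy for pure homogeneous states on a self-dual CAR algebra, and then assemble the GNS representation of $\omega$ either as an ordinary graded tensor product (case (i)) or as a twisted one with an auxiliary $\bbC^{\Rn}$-factor (case (ii)).

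First, by Definition \ref{splitdef} there exist homogeneous states $\varphi_\sigman$ on $\sdckc{\sigman}$ ($\sigman=\Ln,\Rn$) with $\omega\qe\varphi_\Ln\hat\otimes\varphi_\Rn$; Lemma \ref{lem26} then gives $\omega|_{\sdckc{\sigman}}\qe\varphi_\sigman$. Purity of $\omega$ forces each $\varphi_\sigman$ to be a pure homogeneous state (any proper convex decomposition of $\varphi_\sigman$ would induce one of $\omega$). I pick irreducible GNS triples $(\caH_\sigman,\pi_\sigman,\Omega_\sigman)$ for the $\varphi_\sigman$.

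Second, for each $\sigman$ I invoke the standard dichotomy for a pure homogeneous state on $\sdckc{\sigman}$: either $\Theta_\sigman$ is implemented in $\pi_\sigman$ by a self-adjoint unitary $\Gamma_\sigman$ (``graded case''), in which case the restriction $\pi_\sigman|_{\sdckcz{\sigman}}$ decomposes as $\pi_{\sigman+}\oplus\pi_{\sigman-}$ on the $\pm 1$-eigenspaces of $\Gamma_\sigman$ with mutually singular irreducible summands; or $\Theta_\sigman$ is not implemented and $\varphi_\sigman|_{\sdckcz{\sigman}}$ is itself pure (``ungraded case''). The key claim is that purity of $\omega$ on the graded tensor product forces both sides to be of the same type: both graded yields (i), both ungraded yields (ii), and mixed combinations are excluded because they would produce a non-factor state on $\sdc^{(0)}\lmk\mkk_\Ln\oplus\mkk_\Rn,\mkC_\Ln\oplus\mkC_\Rn\rmk$, contradicting purity of $\omega$.

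Third, in the both-graded scenario, $\pi_\Ln\hat\otimes\pi_\Rn$ on $\caH_\Ln\otimes\caH_\Rn$ with grading $\Gamma_\Ln\otimes\Gamma_\Rn$ is an irreducible representation quasi-equivalent to $\omega$ and has precisely the structure in (i); implementability of $\Theta_\Rn$ on the even part by $\unit\otimes\Gamma_\Rn$ then gives the quasi-equivalence $\omega|_{\sdc^{(0)}}\Theta_\Rn\qe\omega|_{\sdc^{(0)}}$. In the both-ungraded scenario I construct the GNS of $\omega$ on $\caH_\Ln\otimes\caH_\Rn\otimes\bbC^{\Rn}$: the even subalgebras act via $\pi_\Ln(a)\otimes\pi_\Rn(b)\otimes\unit$; odd generators of the $\Ln$-side are dressed by $\sigma_z$ and those of the $\Rn$-side by $\sigma_x$ on the $\bbC^{\Rn}$ factor, enforcing their anticommutation across the cut through $\{\sigma_z,\sigma_x\}=0$; the total grading is implemented by $\unit\otimes\unit\otimes\sigma_y$. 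This reproduces the von Neumann algebras in (ii), and the mutual singularity of $\omega|_{\sdc^{(0)}}\Theta_\Rn$ from $\omega|_{\sdc^{(0)}}$ follows because the $\bbZ_2$-center $\bbC\sigma_z$ of $\pi(\sdckc{\Ln})''$ is flipped by the grading $\sigma_y$. Stability under $\eta_\Ln\hat\otimes\eta_\Rn$ is then immediate: $\Theta_\Rn$ commutes with $\eta_\Ln\hat\otimes\eta_\Rn$ (since $\eta_\Rn$ is graded and $\eta_\Ln$ acts trivially with respect to $\Theta_\Rn$), so composition preserves the quasi-equivalence or singularity of $\omega|_{\sdc^{(0)}}$ against $\omega|_{\sdc^{(0)}}\Theta_\Rn$.

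The principal obstacle is the explicit realization in case (ii): verifying that the Pauli-dressed prescription gives a well-defined $*$-representation of the graded CAR algebra (the CAR across the cut being the nontrivial content), identifying its quasi-equivalence class with that of $\omega$, and seeing that the choice $\sigma_y=i\sigma_x\sigma_z$ of grading unitary is forced by simultaneous anticommutation with $\sigma_z$ and $\sigma_x$. Equally delicate is ruling out the mixed graded/ungraded scenarios via purity of $\omega$, which rests on a graded-tensor-product analogue of the fact that a pure state on $A\otimes B$ has minimal central supports on each tensor factor.
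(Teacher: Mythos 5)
Your plan founders at the second step, and the failure is conceptual, not cosmetic. You assert that purity of $\omega$ forces each $\varphi_\sigman$ (equivalently, each restriction $\omega|_{\sdckc{\sigman}}$) to be pure. This is false, and in fact it is false precisely in the case (ii) that the lemma is designed to capture. In the paper's explicit example $\omega^{(1)}=\omega^{(0)}\circ\tau$ at the end of Section~\ref{splitsec}, the restricted von Neumann algebra $\pi_{\omega^{(1)}}(\caA_{H_R})''$ has nontrivial center $\bbC\unit+\bbC\sigma_x$ (this is exactly what makes (ii) occur), so $\omega^{(1)}|_{\caA_{H_R}}$ is not even a factor state, let alone pure. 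What purity of $\omega$ actually gives you is much weaker: by Lemma~\ref{apichi} (together with the balancedness supplied by the odd self-adjoint unitary of the self-dual CAR algebra), $\pi_\omega(\sdckc{\sigman})''$ is a \emph{central balanced} graded von Neumann algebra and hence is either a factor or a direct sum of two factors of the same type. Since you have invoked a ``standard dichotomy for a pure homogeneous state'' that is not applicable, your branching into the graded/ungraded cases has no support.

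The correct route is the von Neumann algebraic one the paper follows: from the split property and Lemma~\ref{lem26}, $\pi_{\omega_1}(\sdckc{1})''\hat\otimes\pi_{\omega_2}(\sdckc{2})''$ is $*$-isomorphic to the type I factor $\pi_\omega(\sdc(\mkk_1\oplus\mkk_2,\mkC_1\oplus\mkC_2))''$; Lemma~\ref{apni} then forces both sides to be type I, and Lemma~\ref{apichi} reduces each side to ``factor or direct sum of two factors''. Your heuristic for excluding the mixed combination (``a graded-tensor-product analogue of minimal central supports'') gestures at the right obstruction, but the paper's concrete argument is needed: if one side is a factor it contains an even self-adjoint unitary implementing its grading, while the center of the non-factor side contains an odd self-adjoint unitary (Lemma A.2 of \cite{1dFermi}); their graded tensor product is a nontrivial central element of the whole algebra, contradicting factoriality. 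Once the dichotomy is cast at the von Neumann algebra level, your description of case (i) (both factors, GNS as $\pi_\Ln\hat\otimes\pi_\Rn$, implementability of $\Theta_\Rn$ by $\unit\otimes\Gamma_\Rn$) is correct in spirit, and the Pauli-dressed realization in case (ii) is indeed what Lemma~\ref{sanap} produces; but the construction is not a by-hand verification of the CAR on $\caH_\Ln\otimes\caH_\Rn\otimes\bbC^2$ --- it is a structure theorem for a pair of mutually graded-commuting type I subalgebras each with a two-point graded center. Finally, your stability argument reduces the last claim to commutation of $\Theta_\Rn$ with $\eta_\Ln\hat\otimes\eta_\Rn$; the cleaner observation, which the paper uses, is simply that $\pi_\omega\circ(\eta_\Ln\hat\otimes\eta_\Rn)(\sdckc{\Rn})''=\pi_\omega(\sdckc{\Rn})''$, so factoriality is preserved.
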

\begin{proof} 

Let $\varphi_i$
be
 homogeneous states on $\sdc\lmk \mkk_i, \mkC_i\rmk$, $i=1,2$
 such that $\omega\qe \varphi_1\hat\otimes \varphi_2$. 
 Let $\omega_i:=\omega\vert_{\SDC(\mkK_i,\mkC_i)}$, $i=1,2$,
on $\SDC(\mkK_i,\mkC_i)$ be the restriction of $\omega$, which is homogeneous.
 By Lemma \ref{lem26}, we have $\varphi_i\qe \omega_i$ and obtain
%
$\omega\qe \omega_1\hat\otimes \omega_2$. 

Because $\omega$ is a pure state, $\pi_\omega(\SDC(\mkK_i,\mkC_i))''$ $i=1,2$ are central.
(See Appendix \ref{notasec} for the notation.)
They are also balanced because self-dual algebra has a self-adjoint odd unitary.
Hence the von Neumann algebras $\pi_{\omega_i}(\SDC(\mkK_i,\mkC_i))''$ $i=1,2$
which are $*$-isomorphic to them (from the proof of Lemma \ref{lem21}) are also central and balanced.
By $\omega\qe \omega_1\hat\otimes \omega_2$,
$\pi_{\omega_1}(\SDC(\mkK_1,\mkC_1))''\hat\otimes \pi_{\omega_2}(\SDC(\mkK_2,\mkC_2))''$
is $*$-isomorphic to $\pi_\omega(\sdc^{(0)}\lmk\mkk_{\Ln}\oplus \mkk_{\Rn},\mkC_{\Ln}\oplus \mkC_{\Rn}\rmk)''$
hence is a type I factor.
From Lemma \ref{apni},
we conclude that both of $\pi_{\omega_1}(\SDC(\mkK_1,\mkC_1))''$
and $\pi_{\omega_2}(\SDC(\mkK_2,\mkC_2))''$ are type I.
From Lemma \ref{apichi}, 
they are either a type I factor or a direct sum of two type I factors.
If $\pi_{\omega_1}(\SDC(\mkK_1,\mkC_1))''$ is a factor
but not $\pi_{\omega_2}(\SDC(\mkK_2,\mkC_2))''$,
then by Lemma A.2 of \cite{1dFermi} $Z\lmk \pi_{\omega_2}(\SDC(\mkK_2,\mkC_2))''\rmk$ has a self-adjoint odd unitary $b$
while $\pi_{\omega_1}(\SDC(\mkK_1,\mkC_1))''$ includes a even self-adjoint unitary $\theta_1$
implementing the grading on  $\pi_{\omega_1}(\SDC(\mkK_1,\mkC_1))''$.
$\theta_1\hat\otimes b$ belongs to the center of a factor
$\pi_{\omega_1}(\SDC(\mkK_1,\mkC_1))''\hat\otimes \pi_{\omega_2}(\SDC(\mkK_2,\mkC_2))''$,
which contradicts to the fact that the latter algebra is a factor.
Hence if $\pi_{\omega_1}(\SDC(\mkK_1,\mkC_1))''$ is a factor, then 
$\pi_{\omega_2}(\SDC(\mkK_2,\mkC_2))''$ is a factor as well.
Similarly, if $\pi_{\omega_2}(\SDC(\mkK_2,\mkC_2))''$ is a factor
then $\pi_{\omega_1}(\SDC(\mkK_1,\mkC_1))''$ is a factor as well.
As a result, either (i) both of  $\pi_\omega(\SDC(\mkK_i,\mkC_i))''$, $i=1,2$ are type I factors
or (ii) both of  $\pi_\omega(\SDC(\mkK_i,\mkC_i))''$, $i=1,2$ are direct sum of
two type I factors.
Note from Lemma 6.23 of \cite{EK} that 
$\omega\vert_{\sdc^{(0)}\lmk\mkk_{\Ln}\oplus \mkk_{\Rn},\mkC_{\Ln}\oplus \mkC_{\Rn}\rmk}$
is pure.
\\
(i)If both of $\pi_\omega(\SDC(\mkK_i,\mkC_i))''$, $i=1,2$ are type I factors, then from Lemma 5.5 of \cite{1dFermi} and its proof, 
the state $\omega$ has a GNS representation of the form
$(\caH_{\Ln}\otimes \caH_{\Rn},\pi_{\Ln}\hat \otimes \pi_{\Rn},\Omega)$
with $(\caH_{\sigman},\pi_{\sigman})$ an irreducible representation of $\sdc(\mkk_{\sigman},\mkC_{\sigman})$ for $\sigman={\Ln},{\Rn}$.
Because $\pi_\omega(\SDC(\mkK_i,\mkC_i))''$ is a type I factor, there
 is a self-adjoint unitary $\Gamma_\sigman$ on $\caH_{\sigman}$ implementing $\Theta_{\sigman}$, i.e.,
$\Ad(\Gamma_{\sigman})\pi_\sigman=\pi_{\sigman}\Theta_{\sigman}$ for $\sigman={\Ln},{\Rn}$.
From this,  $\Gamma_{\Ln}\otimes\Gamma_{\Rn}$ implements $\Theta$ in $\pi_{\Ln}\hat\otimes \pi_{\Rn}$.
Decomposing $\caH_{\sigman}$ with $\caH_{\sigman\pm}:=\frac{\unit\pm \Gamma_{\sigman}}{{\Rn}}\caH_{\sigman}$, 
as $\caH_\sigman=\caH_{\sigman+}\oplus \caH_{\sigman-}$,
we have $\left.\pi\right\vert_{\sdc^{(0)}\lmk\mkk_{i},\mkC_{i}\rmk}
=\pi_{\sigman+}\oplus \pi_{\sigman-}$
with $\pi_{\sigman\pm}$ mutually singular irreducible representation of $\sdc^{(0)}\lmk\mkk_i,\mkC_i\rmk$.
This last property follows from Lemma 6.24 of \cite{EK}.

Because $\unih {\Ln}\otimes
\Gamma_{\Rn}\in \caB(\caH_{\Ln}\otimes\caH_{\Rn})=
\lmk  \lmk \pi_{\Ln}\hat\otimes \pi_{\Rn}\rmk
\lmk \sdc\lmk\mkk_{\Ln}\oplus \mkk_{\Rn},\mkC_{\Ln}\oplus \mkC_{\Rn}\rmk\rmk\rmk''$ is even with respect to 
$\Ad\lmk\Gamma_{\Ln}\otimes\Gamma_{\Rn}\rmk$,
we have $\unih {\Ln}\otimes \Gamma_{\Rn}\in\lmk  \lmk \pi_{\Ln}\hat\otimes \pi_{\Rn}\rmk
\lmk \sdc\lmk\mkk_{\Ln}\oplus \mkk_{\Rn},\mkC_{\Ln}\oplus \mkC_{\Rn}\rmk\rmk^{(0)}\rmk''$.
Therefore, 
\begin{align}
\omega\circ\Theta_2 \vert_{ \sdc\lmk\mkk_{\Ln}\oplus \mkk_{\Rn},\mkC_{\Ln}\oplus \mkC_{\Rn}\rmk^{(0)}}(\cdot)
=\braket{\lmk \unih {\Ln}\otimes
\Gamma_{\Rn}\rmk\Omega}{
\lmk \pi_{\Ln}\hat\otimes\pi_{\Rn}\rmk\vert_{ \sdc\lmk\mkk_{\Ln}\oplus \mkk_{\Rn},\mkC_{\Ln}\oplus \mkC_{\Rn}\rmk^{(0)}}(\cdot)
\lmk\unih {\Ln}\otimes
\Gamma_{\Rn}\rmk\Omega}
\end{align}
is quasi-equivalent to $\omega\vert_{ \sdc\lmk\mkk_{\Ln}\oplus \mkk_{\Rn},\mkC_{\Ln}\oplus \mkC_{\Rn}\rmk^{(0)}}$.
\\

(ii)If both of  $\pi_\omega(\SDC(\mkK_i,\mkC_i))''$, $i=1,2$ are summation of
two type I factors, then by Lemma \ref{sanap}, 
the state $\omega$ has a GNS representation of the form
$(\caH_{\Ln}\otimes \caH_{\Rn}\otimes \bbC^2, \pi,\Omega)$.
 There are irreducible representations  $\pi_{\sigman}$
 of $\sdc^{(0)}\lmk \mkk_{\sigman},\mkC_{\sigman}\rmk$, on $\caH_{\sigman}$,
 $\sigman={\Ln},{\Rn}$ such that
 \begin{align}
 \pi\lmk a\hat\otimes b\rmk
 =\pi_{\Ln}(a)\otimes\pi_{\Rn}(b)\otimes \unit_{\bbC^{\Rn}},\quad
 a\in \sdc^{(0)}\lmk \mkk_{\Ln},\mkC_{\Ln}\rmk, \quad b\in \sdc^{(0)}\lmk \mkk_{\Rn},\mkC_{\Rn}\rmk.
 \end{align}  
 We have
 \begin{align}
 \pi\lmk\sdc\lmk \mkk_{\Ln},\mkC_{\Ln}\rmk\rmk''
 =\caB(\caH_{\Ln})\otimes\bbC\unit_{\caH_{\Rn}}\otimes\lmk \bbC\sigma_z+\bbC\unit\rmk,\quad
 \pi\lmk\sdc\lmk \mkk_{\Rn},\mkC_{\Rn}\rmk\rmk''
 =\bbC\unit_{\caH_{\Ln}}\otimes \caB(\caH_{\Rn})\otimes\lmk \bbC\sigma_x+\bbC\unit\rmk.
 \end{align}
We also have $\Ad\lmk \unit_{\caH_{\Ln}}\otimes\unit_{\caH_{\Rn}}\otimes
\sigma_y
\rmk\circ\pi=\pi\Theta$.
Set $\Gamma_2:=\unit_{\caH_{\Ln}}\otimes\unit_{\caH_{\Rn}}\otimes
\sigma_z$.
Note that $\Ad\lmk\Gamma_2\rmk\circ\pi=\pi\Theta_2$
and 
\begin{align}
\pi\lmk\lmk \sdc\lmk\mkk_{\Ln}\oplus \mkk_{\Rn},\mkC_{\Ln}\oplus \mkC_{\Rn}\rmk\rmk^{(0)}\rmk''
=\caB(\caH_{\Ln})\otimes \caB(\caH_{\Rn})\otimes \lmk\bbC\unit+\bbC\sigma_y\rmk.
\end{align}
Hence the center of $\pi\lmk\lmk \sdc\lmk\mkk_{\Ln}\oplus \mkk_{\Rn},\mkC_{\Ln}\oplus \mkC_{\Rn}\rmk\rmk^{(0)}\rmk''$
is $\bbC\unih {\Ln} \otimes \bbC \unih {\Rn}\otimes \lmk\bbC r_++\bbC r_-\rmk$
with $r_\pm:=\unih {\Ln} \otimes \unih {\Rn}\otimes \frac{\unit\pm \sigma_y}{2}$,
and $\Ad\lmk\Gamma_2\rmk$ flips $r_+$ and $r_-$.
From this, $\pi_{0,\pm}:=\left.\pi\right\vert_{\lmk \sdc\lmk\mkk_{\Ln}\oplus \mkk_{\Rn},\mkC_{\Ln}\oplus \mkC_{\Rn}\rmk\rmk^{(0)}}(\cdot) r_{\pm} $
defines mutually singular irreducible representations of $\lmk\sdc\lmk\mkk_{\Ln}\oplus \mkk_{\Rn},\mkC_{\Ln}\oplus \mkC_{\Rn}\rmk\rmk^{(0)}$.
Because $\omega\vert_{\lmk \sdc\lmk\mkk_{\Ln}\oplus \mkk_{\Rn},\mkC_{\Ln}\oplus \mkC_{\Rn}\rmk\rmk^{(0)}}$
is pure, there is some $\zeta=\pm$ such that $\Omega= r_\zeta\Omega$.
Note that
\begin{align}
\Gamma_2\Omega
=\Gamma_2r_\zeta\Omega
=\Gamma_2r_\zeta\Gamma_2 \Gamma_2\Omega
=r_{-\zeta} \Gamma_2\Omega.
\end{align}
From these, we have
\begin{align}
\begin{split}
&\omega\vert_{\lmk \sdc\lmk\mkk_{\Ln}\oplus \mkk_{\Rn},\mkC_{\Ln}\oplus \mkC_{\Rn}\rmk\rmk^{(0)}}
=\braket{\Omega}{\pi_{0,\zeta}\lmk\cdot\rmk \Omega},\\
&\omega\vert_{\lmk \sdc\lmk\mkk_{\Ln}\oplus \mkk_{\Rn},\mkC_{\Ln}\oplus \mkC_{\Rn}\rmk\rmk^{(0)}}\circ\Theta_2
=\braket{\Omega}{\left.\pi\right\vert_{\lmk \sdc\lmk\mkk_{\Ln}\oplus \mkk_{\Rn},\mkC_{\Ln}\oplus \mkC_{\Rn}\rmk\rmk^{(0)}}\circ\Theta_2\lmk\cdot \rmk\Omega}\\
&=\braket{\Gamma_2\Omega}{\left.\pi\right\vert_{\lmk \sdc\lmk\mkk_{\Ln}\oplus \mkk_{\Rn},\mkC_{\Ln}\oplus \mkC_{\Rn}\rmk\rmk^{(0)}}\lmk\cdot \rmk\Gamma_2\Omega}
=\braket{\Gamma_2\Omega}{\pi_{0,-\zeta}\lmk\cdot \rmk\Gamma_2\Omega},
\end{split}
\end{align}
Because $\pi_{0,\pm}$ are mutually singular,
$\omega\vert_{\lmk \sdc\lmk\mkk_{\Ln}\oplus \mkk_{\Rn},\mkC_{\Ln}\oplus \mkC_{\Rn}\rmk\rmk^{(0)}}$
 and $\omega\vert_{\lmk \sdc\lmk\mkk_{\Ln}\oplus \mkk_{\Rn},\mkC_{\Ln}\oplus \mkC_{\Rn}\rmk\rmk^{(0)}}\circ\Theta_2$
are disjoint.

The last statement comes from the fact
that $\pi_\omega\circ \lmk\eta_{\Ln}\hat\otimes\eta_{\Rn}\rmk$ 
is a GNS representation of $\omega\circ\lmk\eta_{\Ln}\hat\otimes\eta_{\Rn}\rmk$
and 
\[
\pi_\omega\circ \lmk\eta_{\Ln}\hat\otimes\eta_{\Rn}\rmk\lmk\sdckc{{\Rn}}\rmk''
=\pi_\omega\lmk \sdckc{{\Rn}}\rmk'',
\]
where the right hand side is a factor if and only if
the left hand side is.
\end{proof}
Here is the main Proposition of this section.
\begin{prop}\label{mainprop}
Let $\Lambda_\sigma\subset \bbZ^2$, $\sigma=L,R$ be mutually disjoint
infinite subsets of $\bbZ^2$. 
Set
$\mkk_{\sigma}:=\mkh_{\Lambda_\sigma}$ 
with the complex conjugation $\mkC_{\sigma}:=\mkC_{\Lambda_\sigma}$, for $\sigma=L,R$.
Let $\omega_0,\omega_1$ be homogeneous pure states on 
$\sdc\lmk\mkk_L\oplus \mkk_R,\mkC_L\oplus \mkC_R\rmk$,
satisfying the split property.
Note from Lemma \ref{lem16}, either (i) or (ii) of Lemma \ref{lem16} occurs.
If
\begin{description}
\item[(a)]
(i) occurs for both of $\omega_0$, $\omega_1$, or
\item[(b)]
(ii) occurs for both of $\omega_0$, $\omega_1$,
\end{description}
then there are automorphisms $\eta_\sigma\in \Aut^{(0)}\lmk\sdc\lmk \mkk_\sigma,\mkC_\sigma\rmk\rmk$
$\sigma=L,R$
satisfying
\begin{align}
\omega_1\simeq \omega_0\lmk \eta_L\hat\otimes\eta_R\rmk.
\end{align}
Furthermore, if none of (a), (b) occurs,
then $\omega_1$ and $\omega_0$ are not quasi-equivalent.
\end{prop}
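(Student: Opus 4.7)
The plan is to use the structural results of Lemma \ref{lem16} to reduce the two-sided equivalence problem to two independent one-sided classification problems on $\sdckc L$ and $\sdckc R$, and then to invoke a classification-type statement for pure homogeneous states on the self-dual CAR algebra. Set $\varphi_i^\sigma := \omega_i|_{\sdckc{\sigma}}$ for $i=0,1$ and $\sigma=L,R$. By the split property and Lemma \ref{lem26}, $\omega_i \qe \varphi_i^L \hat\otimes \varphi_i^R$. In case (a), Lemma \ref{lem16}(i) ensures that $\pi_{\varphi_i^\sigma}(\sdckc{\sigma})''$ is a type I factor and that the GNS representation of $\omega_i$ is of tensor product form $\pi_L^{(i)}\hat\otimes\pi_R^{(i)}$; in case (b), Lemma \ref{lem16}(ii) gives that $\pi_{\varphi_i^\sigma}(\sdckc{\sigma})''$ is a direct sum of two type I factors interchanged by the grading, with the explicit direct-sum structure described there.

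Next I would construct, in each case, a graded automorphism $\eta_\sigma \in \Aut^{(0)}(\sdckc{\sigma})$ satisfying $\varphi_1^\sigma \simeq \varphi_0^\sigma \circ \eta_\sigma$. In case (a), both $\varphi_i^\sigma$ have type I factor GNS representations, and one can realise each as a Fock state $\omega_{p_i^\sigma}$ for some basis projection $p_i^\sigma$ of $(\mkk_\sigma, \mkC_\sigma)$; since any two basis projections on an infinite-dimensional $(\mkk_\sigma, \mkC_\sigma)$ are conjugated by a unitary $u$ with $u\mkC_\sigma = \mkC_\sigma u$, the corresponding Bogoliubov automorphism $\Xi_u$ provides a graded $\eta_\sigma$. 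In case (b), one uses the structure from Lemma \ref{lem16}(ii) together with Lemma \ref{sanap} to realise $\varphi_i^\sigma$ as a pure state on the even subalgebra together with the $\bbZ_2$-datum coming from the two direct summands, and then constructs $\eta_\sigma$ by combining a Bogoliubov part matching the even-part data with an adjustment on the extra $\bbZ_2$-degree of freedom. Given $\eta_L, \eta_R$, a tensor computation yields
\begin{align*}
\omega_1 \simeq \varphi_1^L \hat\otimes \varphi_1^R \simeq (\varphi_0^L \circ \eta_L)\hat\otimes(\varphi_0^R \circ \eta_R) = (\varphi_0^L \hat\otimes \varphi_0^R) \circ (\eta_L \hat\otimes \eta_R) \simeq \omega_0 \circ (\eta_L \hat\otimes \eta_R),
\end{align*}
which is exactly the conclusion.

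For the final assertion, the dichotomy (i) versus (ii) is characterised in Lemma \ref{lem16} by whether $\omega|_{\caA^{(0)}} \circ \Theta_R$ is quasi-equivalent to, or mutually singular from, $\omega|_{\caA^{(0)}}$, where $\caA := \sdc(\mkk_L \oplus \mkk_R, \mkC_L \oplus \mkC_R)$; this property is preserved under quasi-equivalence of $\omega$ (restrict to the even part and post-compose with $\Theta_R$), so two states belonging to different categories cannot be quasi-equivalent. I expect the main obstacle to lie in the one-sided step in case (b): because $\pi_{\varphi_i^\sigma}(\sdckc{\sigma})''$ is not a factor, matching the $\bbZ_2$-freedom coming from the two direct summands between $\varphi_0^\sigma$ and $\varphi_1^\sigma$ by a single graded automorphism of the $C^*$-algebra $\sdckc\sigma$ (rather than merely of its weak closure) requires a careful description of how odd unitaries of $\sdckc\sigma$ realise that $\bbZ_2$ in the GNS picture, and of how a Bogoliubov automorphism can be perturbed to implement the required flip.
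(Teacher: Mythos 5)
Your reduction of the two-sided problem to two one-sided problems on $\sdckc L$ and $\sdckc R$ via Lemma \ref{lem26} is correct, and your argument for the final non-quasi-equivalence statement is fine. The disjointness proof in the paper is essentially the same: the factoriality or non-factoriality of $\pi_{\omega}(\sdckc R)''$ is a quasi-equivalence invariant.

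However, there is a genuine gap in the one-sided step, in both cases (a) and (b). You propose to realise each $\varphi_i^\sigma$ (or rather the associated pure states $\psi_{\sigma,i}$ appearing in the GNS decomposition) as a Fock state $\omega_{p_i^\sigma}$ for a basis projection, and then to obtain $\eta_\sigma$ as a Bogoliubov automorphism $\Xi_u$. This does not work: the restriction of a pure split state to $\sdckc\sigma$ is a general homogeneous state with type I GNS von Neumann algebra, and the pure components $\psi_{\sigma,i}$ extracted from it are general pure homogeneous states --- not quasi-free states. Having type I GNS (which is automatic for an irreducible representation) is a far weaker condition than being Fock; for instance, any vector state given by a unit vector in the Fock space is pure with irreducible GNS but is almost never quasi-free. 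So there is no basis projection $p_i^\sigma$, and no unitary $u$ with $u\mkC_\sigma = \mkC_\sigma u$ conjugating one to the other, and the Bogoliubov automorphism route is unavailable. The same issue ruins your proposed construction in case (b), where you again describe $\eta_\sigma$ as a perturbed Bogoliubov automorphism.

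The paper bridges exactly this gap with Propositions \ref{lem6} and \ref{lem11}, whose proofs run through the Kishimoto--Ozawa--Sakai / Powers-type homogeneity argument for the pure state space of a simple separable $C^*$-algebra (Glimm's Lemma, iterated approximate transitivity, unitary paths chosen from the even subalgebra), adapted here so that the resulting automorphism is graded. That machinery is precisely what lets one connect two arbitrary pure homogeneous states that fall in the same case of the dichotomy by a \emph{graded} automorphism of the $C^*$-algebra $\sdckc\sigma$, not merely of its weak closure; nothing quasi-free is available or needed. Your proposal would have to replace the Bogoliubov step with an argument of that kind, or cite such a homogeneity result for graded pure states, to be complete.
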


The proof is a deformation of that of \cite{GS}.
As we are considering non-twisted crossed product here, it is even simpler than the case of \cite{GS}.
For the rest of this section, $\Lambda$ indicates an infinite subset of $\bbZ^2$.
We recall notations from \cite{GS} adapted to our current setting.
An irreducible covariant representation of
$\Sigma_\Lambda:=(\bbZ_2,\caA_{\Lambda}, \Theta_\Lambda)$
is a triple $(\caH,\pi, \Gamma)$
where $(\caH,\pi)$ is an irreducible representation of $\caA_\Lambda$
and $\Gamma$ a self-adjoint unitary on $\caH$
satisfying $\Ad \Gamma \circ \pi=\pi\circ \Theta_\Lambda$.

Let $C(\bbZ_2,\caA_{\Lambda})$ be 
the linear space of $\caA_{{\Lambda}}$-valued functions 
on $\bbZ_2$. 
We equip $C(\bbZ_2,\caA_{\Lambda})$ with a product and $*$-operation as follows:
\begin{align}
&f_{1}*f_{2}(h)
:=\sum_{g\in {\bbZ_2}} f_{1}(g)\cdot \Theta_\Lambda^g\lmk f_{2}(g^{-1}h)\rmk,\quad h\in {\bbZ_2},\\
&f^{*}(h):=
\Theta^g_\Lambda\lmk h\rmk \lmk
f(h^{-1})^{*}
\rmk,\quad h\in {\bbZ_2},
\end{align}
for $f_{1},f_{2},f\in C(\bbZ_2,\caA_{\Lambda})$.
The linear space $C(\bbZ_2,\caA_{\Lambda})$ which is a $*$-algebra
with these operations is denoted by $C(\Sigma_{\Lambda})$.

For a covariant representation $(\caH,\pi, \Gamma)$ of 
$\Sigma_{\Lambda}$, we may introduce a $*$-representation 
$(\caH,\pi\times \Gamma)$ of
$C(\Sigma_{\Lambda})$ by
\begin{align}
\lmk \pi\times \Gamma\rmk (f)
:=\pi\lmk f(0)\rmk+\pi\lmk f(1)\rmk \Gamma,\quad f\in C(\Sigma_{\Lambda}).
\end{align}
The full twisted crossed product of $\Sigma_{\Lambda}$, denoted $C^{*}(\Sigma_{\Lambda})$
is the completion of $C(\Sigma_{\Lambda})$ with respect to the norm
\begin{align}
\lV f\rV_{u}:=
\sup\left\{
\lV\lmk \pi\times \Gamma\rmk (f)\rV
\mid
(\pi,\Gamma) : \text{covariant representation}
\right\},\quad f\in C(\Sigma_{\Lambda}).
\end{align}
Because $\bbZ_2$ is finite, we actually have $C(\Sigma_{\Lambda})=C^{*}(\Sigma_{\Lambda})$.
It also coincides with the reduced crossed product $C_r^*(\Sigma_\Lambda)$
and simple because $\Theta$ is properly outer \cite{Elliott}.
 As $A_{\Lambda}$ is unital, we have unitaries $\lambda_{1}\in C^{*}(\Sigma_{\Lambda})$,
 such that
 \begin{align}\label{ldef}
  &\lambda_{1}a \lambda_{1}^{*}=\Theta_\Lambda\lmk a\rmk,\quad a\in \caA_{\Lambda}.
   \end{align}
%

For the rest of this section, we use the following notations.
\begin{nota}
For each finite subset $\Lambda_0$ of $\Lambda$,
let $\left\{e_{IJ}^{(\Lambda_0)}\right\}_{I,J}$ be a system of matrix units spanning
$\caA_{\Lambda_0}$ with 
$\Theta(e_{IJ}^{(\Lambda_0)})=(-1)^{|I|+|J|}e_{IJ}^{(\Lambda_0)}$. 
Because $d$ is even, $\mkh_{\Lambda_0}$ is even-dimensional and
our $\caA_{\Lambda_0}=\SDC(\mathfrak h_{\Lambda_0})$ is 
isomorphic to a finite dimensional CAR-algebra.
Therefore, such system of matrix units exists.
Fix some $I_0$.
 Set 
 $\caG_{\Lambda_0}:=\left\{
 \frac{1}{\sqrt 2} e_{II_0}^{(\Lambda_0)},
 \frac{1}{\sqrt 2} \lambda_1e_{II_0}^{(\Lambda_0)}
 \right\}_{I}$.\\
For each $\varepsilon>0$, 
we fix $\delta_{\rk}(\varepsilon)>0$ from Lemma B.4 of \cite{GS}.\\
For any finite subset $\caF$ of $\lmk \caA_\Lambda\rmk_1$ and 
$\varepsilon>0$,
there exists a finite subset $\Lambda(\varepsilon, \caF)$ of $\Lambda$
such that
\begin{align}
\inf\left\{
\lV a-b\rV\mid b\in \caA_{\Lambda(\varepsilon, \caF)}
\right\}<\frac \varepsilon{16},
\end{align}
for all $a\in\caF$.
For each finite subset $\caF$ of $\lmk \caA_\Lambda\rmk_1$ and 
$\varepsilon>0$, we fix such $\Lambda(\varepsilon, \caF)$.
\end{nota}

First we prepare a proposition needed for case (a) of Proposition \ref{mainprop}.
For  an irreducible covariant representation $(\caH,\pi, \Gamma)$ of
$(\bbZ_2,\caA_{\Lambda}, \Theta_\Lambda)$,
we set
$\caH_{\pm}:=\frac{\unit  \pm \Gamma}{2}\caH$.
Then from Lemma 6.24 of \cite{EK},
there are mutually disjoint irreducible representations $\pi_{\pm}$ of $\caA_{\Gamma}^{(0)}$,
acting on $\caH_{\pm}$ respectively
which decompose 
the restriction $\left.\pi\right\vert_{\caA_{\Gamma}^{(0)}}$
of $\pi$ as
\begin{align}
\left.\pi\right\vert_{\caA_{\Gamma}^{(0)}}
=\pi_{+}\oplus \pi_{-}.
\end{align} 
With a bit abuse of notation, we call $(\caH_\pm,\pi_\pm)$
the decomposition associated to $(\caH,\pi, \Gamma)$.
We show the following Proposition.
\begin{prop}\label{lem6}
Let $(\caH_i,\pi_i, \Gamma_i)$, $i=0,1$ be
irreducible covariant representations of
$\Sigma_\Lambda$.
Let $(\caH_{i\pm},\pi_{i\pm})$
be the decomposition associated to $(\caH_i,\pi_i, \Gamma_i)$,
and $\xi_i\in \caH_{i+}$ a unit vector.
Define a state $\omega_i$ on $\caA_{\Lambda}$
by
\begin{align}
\omega_i:= \braket{\tilde \xi_i}{\pi_i(\cdot )\tilde \xi_i}
\end{align}
with
$\tilde \xi_i:=\xi_i\oplus 0\in \caH_i$.
Then there is an automorphism $\eta\in \Aut^{(0)}\lmk \caA_\Lambda\rmk$
such that
$\omega_1=\omega_0\circ\eta$.
\end{prop}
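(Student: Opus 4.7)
The plan is to construct $\eta$ as a pointwise norm limit of inner automorphisms $\Ad(v_n)$ where each $v_n$ is an \emph{even} unitary in a local subalgebra of $\caA_\Lambda^{(0)}$, chosen inductively so that $\omega_0\circ\Ad(v_n^*)$ is forced to agree with $\omega_1$ on an exhausting sequence of finite-dimensional subalgebras. Because each $v_n$ commutes with $\Theta_\Lambda$, every $\Ad(v_n)$ is graded, and therefore so is the limit; $\eta\in\auz(\caA_\Lambda)$ will be automatic.

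First I would check that $\omega_0,\omega_1$ are pure and homogeneous on $\caA_\Lambda$. Purity follows from the irreducibility of $\pi_i$, and homogeneity follows because $\tilde\xi_i=\xi_i\oplus 0\in\caH_{i+}$: for any odd element $a\in\caA_\Lambda^{(1)}$ we have $\pi_i(a)\caH_{i+}\subset\caH_{i-}$, so $\omega_i(a)=0$. Since a graded automorphism preserves the odd part and both states annihilate it, the identity $\omega_1=\omega_0\circ\eta$ reduces to verification on $\caA_\Lambda^{(0)}$, where the restrictions $\omega_i\vert_{\caA_\Lambda^{(0)}}=\braket{\xi_i}{\pi_{i+}(\cdot)\xi_i}$ are pure states realized in the irreducible representations $\pi_{i+}$ of $\caA_\Lambda^{(0)}$ (Lemma 6.24 of \cite{EK}).

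Next, fixing a summable sequence $\varepsilon_n\downarrow 0$ and an exhausting family of finite subsets $\Lambda_n\uparrow\Lambda$, I would inductively build $v_n=u_n v_{n-1}$ with $u_n$ an even unitary essentially supported in a collar between $\Lambda_{n-1}$ and $\Lambda(\varepsilon_n,\caG_{\Lambda_n})$, so that $\lv\omega_0(\Ad(v_n^*)(a))-\omega_1(a)\rv<\varepsilon_{n+1}$ for every $a\in\caG_{\Lambda_n}$. The inductive step is powered by Lemma B.4 of \cite{GS}: when two pure homogeneous states already agree within $\delta_{\rk}(\varepsilon)$ on a local graded matrix algebra, an even unitary in a slightly enlarged local even algebra can upgrade this agreement to accuracy $\varepsilon$. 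The hypothesis of this lemma is available here because the local algebras $\caA_{\Lambda_n}^{(0)}$ are full matrix algebras and the restrictions $\omega_{i+}$ are pure, so the graded transitivity can genuinely be executed with even unitaries.

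The collar localization ensures that $\Ad(v_n)(a)$ stabilizes in norm on each $a\in\bigcup_n\caA_{\Lambda_n}$, so $\eta:=\lim_n\Ad(v_n^*)$ is well-defined as an isometric $*$-endomorphism on this dense $*$-subalgebra; running the same construction with the roles of $\omega_0$ and $\omega_1$ swapped yields an inverse, so $\eta$ extends to an automorphism of $\caA_\Lambda$. Evenness of each $v_n$ gives $\eta\in\auz(\caA_\Lambda)$, and the summable bounds give $\omega_1=\omega_0\circ\eta$ on the dense subalgebra and hence everywhere. The hard part will be the inductive transitivity step itself: one must simultaneously match the two states to the prescribed precision on growing systems of matrix units, use only \emph{even} unitaries so that the limit is graded, and localize each $u_n$ in a narrow collar so that the infinite product converges. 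All three constraints are packaged into Lemma B.4 of \cite{GS}, which is the graded analogue of Kadison transitivity tailored to the Fermionic setting.
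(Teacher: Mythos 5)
Your overall strategy --- an Elliott-type intertwiner built from a sequence of even, locally supported unitaries, with the reverse intertwiner furnishing the inverse --- is indeed the one the paper follows, and your remark that $\omega_1=\omega_0\circ\eta$ reduces to $\caA_\Lambda^{(0)}$ once $\eta$ is known to be graded is correct. But your account of what Lemma B.4 of \cite{GS} delivers is not accurate, and this is exactly where the argument stalls. That lemma is a statement about an ordinary simple unital $C^*$-algebra and produces a positive contraction $h$ in that algebra; it does not by itself hand you an even unitary supported in a collar of $\caA_\Lambda^{(0)}$. The real obstruction in the graded setting is that the even local subalgebra does not act irreducibly in the covariant tensor factorization: writing $\caH\cong\caH_1\otimes\caH_2$ with $(\caH_2,\pi_2,\Gamma_2)$ an irreducible covariant representation of $\caA_{\Lambda\setminus\Lambda_0}$, the weak closure of $\pi_2\lmk\caA_{\Lambda\setminus\Lambda_0}^{(0)}\rmk$ is $\caB(\caH_{2+})\oplus\caB(\caH_{2-})$, a direct sum of two factors (Lemma 6.24 of \cite{EK}), while the vectors $\tilde\xi_i$ have components in both $\caH_1\otimes\caH_{2+}$ and $\caH_1\otimes\caH_{2-}$. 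So a Kadison-transitivity step carried out entirely inside $\caA_\Lambda^{(0)}$ cannot be localized in a collar in the way your inductive step requires.

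What closes this gap in the paper is the twisted crossed product $C^*(\Sigma_\Lambda)$ with its distinguished unitary $\lambda_1$ implementing $\Theta_\Lambda$. Lemma B.4 of \cite{GS} is applied to $C^*(\Sigma_\Lambda)$ --- not to $\caA_\Lambda$ or its even part --- to produce $h\in\lmk C^*(\Sigma_\Lambda)\rmk_{+,1}$; the averaged element $\bar h$ in the proof of Lemma \ref{lem4} is formed by averaging simultaneously over the matrix units of $\caA_{\Lambda_0}$ and over $\lambda_1$, so that $\bar h$ commutes with both $\caA_{\Lambda_0}$ and $\lambda_1$. In the covariant factorization of Lemma 5.4 of \cite{1dFermi}, $(\pi\times\Gamma)(\bar h)$ then becomes $\unih{1}\otimes x$ with $x$ a $\Gamma_2$-even positive contraction; this does lie in $\pi_2\lmk\caA_{\Lambda\setminus\Lambda_0}^{(0)}\rmk''$ and can be approximated by $\pi_2(\hat h)$ for a genuinely even local $\hat h\in\lmk\caA_{\Lambda\setminus\Lambda_0}^{(0)}\rmk_{+,1}$. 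Correspondingly, the states are compared on the sampling set $\caG_{\Lambda_0}$, which contains the elements $\lambda_1 e_{I I_0}^{(\Lambda_0)}$, so the whole intertwining is run for the lifted states $\hat\omega_i$ on $C^*(\Sigma_\Lambda)$ rather than for $\omega_i\vert_{\caA_\Lambda^{(0)}}$. Your proposal contains none of this crossed-product machinery, and without it there is no mechanism in the inductive step to produce a $u_n$ that is simultaneously even and supported in a collar.
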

The point of the Proposition is that we can take $\eta$ graded.
The proof is the same as that of Proposition 4.2 \cite{GS} which relies on the techniques developed
in $C^*$-algebra theory \cite{kos} \cite{Farah}.

We just sketch the proof below.

\begin{lem}\label{lem2}
Let $(\caH_i,\pi_i, \Gamma_i)$, $i=0,1$ be
irreducible covariant representations of
$\Sigma_\Lambda:=\Sigma_\Lambda$.
Let $(\caH_{i\pm},\pi_{i\pm})$
be the decomposition associated to $(\caH_i,\pi_i, \Gamma_i)$,
and $\xi_i\in \caH_{i+}$ a unit vector.
With $\tilde \xi_i:=\xi_i\oplus 0\in \caH_i$,
let $\hat \omega_i$ be pure states on $C^*(\Sigma)$ defined by
\begin{align}
\hat\omega_i:=\braket{\tilde \xi_i}{\lmk \pi_i\times \Gamma_i\rmk(\cdot)\tilde \xi_i}.
\end{align}
Then for any finite set $\caF\subset C^*(\Sigma)$, $\varepsilon>0$,
there is a self-adjoint element $h\in \caA^{(0)}_{\Gamma}$
such that
\begin{align}
\lv
\hat\omega_0(a)-\hat\omega_1\circ\Ad(e^{ih})(a)
\rv<\varepsilon,\quad \text{for all}\quad
a\in \caF.
\end{align}
\end{lem}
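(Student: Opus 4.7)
The plan is to exploit the hypothesis $\xi_i \in \caH_{i+}$, equivalently $\Gamma_i\tilde\xi_i = \tilde\xi_i$, in order to reduce the statement to an approximate transitivity result for pure states on the even subalgebra $\caA_\Lambda^{(0)}$. Writing any $f \in C^*(\Sigma_\Lambda)$ in the form $f = f(0) + f(1)\lambda_1$ with $f(0), f(1) \in \caA_\Lambda$, this hypothesis gives
\begin{align*}
\hat\omega_i(f) = \braket{\tilde\xi_i}{\pi_i(f(0))\tilde\xi_i} + \braket{\tilde\xi_i}{\pi_i(f(1))\Gamma_i\tilde\xi_i} = \omega_i\lmk f(0)+f(1)\rmk.
\end{align*}
For any self-adjoint $h \in \caA_\Lambda^{(0)}$, the unitary $e^{ih}$ commutes with $\lambda_1$ inside $C^*(\Sigma_\Lambda)$ since $h$ is even, so $\Ad(e^{ih})$ preserves the $\caA_\Lambda + \caA_\Lambda\lambda_1$ decomposition, and the same computation yields $\hat\omega_1 \circ \Ad(e^{ih})(f) = \omega_1 \circ \Ad(e^{ih})\lmk f(0)+f(1)\rmk$.

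Hence it suffices to find $h \in \caA_\Lambda^{(0)}$ with $|\omega_0(b) - \omega_1\circ\Ad(e^{ih})(b)| < \varepsilon$ on the finite set $\ltm f(0)+f(1) : f \in \caF \rtm \subset \caA_\Lambda$. Decomposing each such $b$ into its homogeneous parts $b = b^{(0)}+b^{(1)}$, the odd contributions drop out of both sides: $\pi_i(b^{(1)})$ maps $\caH_{i+}$ into $\caH_{i-}$, so $\omega_i(b^{(1)}) = 0$, and since $\Ad(e^{ih})$ preserves the grading, $\omega_1 \circ \Ad(e^{ih})(b^{(1)}) = 0$ as well. Only the even components therefore matter, and the restrictions $\omega_i|_{\caA_\Lambda^{(0)}}$ are the vector states of $\tilde\xi_i$ in the irreducible representations $\pi_{i+}$ of the simple AF-algebra $\caA_\Lambda^{(0)}$, hence are pure states on $\caA_\Lambda^{(0)}$.

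For this last step I would invoke the $C^*$-algebraic transitivity machinery recorded in Lemma B.4 of \cite{GS}, whose input data $\caG_{\Lambda_0}$ and $\Lambda(\varepsilon,\caF)$, and whose output constants $\delta_{2,a}(\varepsilon)$, are precisely what is fixed in the Notation block just above. Applied to these two pure states of $\caA_\Lambda^{(0)}$, it produces the required self-adjoint $h \in \caA_\Lambda^{(0)}$ implementing approximate unitary equivalence on the even components of $\caF$. The main obstacle is really this last step: ensuring that the approximating unitary has the form $e^{ih}$ with $h$ in the \emph{same} grading-even subalgebra (rather than merely a unitary in the bicommutant) is precisely what the Kishimoto--Ozawa--Sakai and Futamura--Kataoka--Kishimoto techniques of \cite{kos} and \cite{Farah} supply, and this is what Lemma B.4 of \cite{GS} packages for direct use here.
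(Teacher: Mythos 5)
Your proof is correct, and it takes a genuinely more concrete route than the paper, which simply defers to Lemma 4.5 of \cite{GS} as a special case. The heart of your argument — that the covariance hypothesis $\Gamma_i\tilde\xi_i=\tilde\xi_i$ collapses $\hat\omega_i(f)$ to $\omega_i(f(0)+f(1))$, that conjugation by an even unitary respects this collapse, and that the odd components of $f(0)+f(1)$ annihilate both sides — is exactly the right reduction, and it brings the problem down to approximate unitary equivalence of two pure states of the simple unital separable $C^*$-algebra $\caA_\Lambda^{(0)}$, where the Glimm/Kishimoto--Ozawa--Sakai machinery from \cite{kos} and \cite{Farah} gives the required self-adjoint $h\in\caA_\Lambda^{(0)}$. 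What the paper's route buys is uniformity: Lemma 4.5 of \cite{GS} is formulated for the covariant representation of $\Sigma_\Lambda$ directly and hence is available verbatim for the analogous crossed-product lemmas that follow (Lemma \ref{lem4}, \ref{lem5}, \ref{lem7}), whereas your reduction is special to the $\bbZ_2$ case where the fixed-point algebra is still simple and directly accessible. Two small points of bookkeeping: the correct reference for the final transitivity step is the homogeneity theorem itself rather than Lemma B.4 of \cite{GS} (that lemma only supplies the quantitative constant $\delta_{2,a}$ used in Lemma \ref{lem4}), though you do ultimately point to \cite{kos} and \cite{Farah}; and you should record why $\caA_\Lambda^{(0)}$ is simple — for infinite $\Lambda$ the grading $\Theta_\Lambda$ is properly outer on the UHF algebra $\caA_\Lambda$, so the fixed-point algebra is a simple AF-algebra, which is what makes the KOS step apply.
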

\begin{proof}
This is a special case of Lemma 4.5 \cite{GS}.

\end{proof}
\begin{lem}\label{lem4}
For any finite subset $\Lambda_0\subset \Lambda$ and 
$\varepsilon>0$,
there exists
$\tdnh(\epsilon,\Lambda_0)>0$
satisfying the followings:
For any irreducible covariant representation $(\caH,\pi,\Gamma)$ of $\Sigma_\Lambda$
and unit vectors $\xi,\eta\in \caH$
satisfying
\begin{align}\label{go4}
\begin{split}
&\lmk \pi\times \Gamma\rmk(x^*) \xi
\perp
\lmk \pi\times \Gamma\rmk(y^*) \eta,\\
&\lv
\braket{\xi}{\lmk \pi\times \Gamma\rmk (xy^*)\xi}
-\braket{\eta}{\lmk \pi\times \Gamma\rmk (xy^*)\eta}
\rv<\tdnh(\epsilon,\Lambda_0),\\
&\text{for any}\quad x,y\in \caG_{\Lambda_0},
\end{split}
\end{align}
there exists an even positive element $\hat h \in \caA_{\Lambda\setminus{\Lambda_0},1}$
such that 
\begin{align}\label{ichi4}
\lV
e^{i\pi \pi(\hat h)}\xi-\eta
\rV<\frac{\dro(\frac\epsilon 8)}{4\sqrt 2}.
\end{align}
\end{lem}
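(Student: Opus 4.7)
The plan is to adapt the proof of the corresponding lemma in the quantum spin case treated in \cite{GS}, which itself follows the strategy of Kishimoto--Ozawa--Sakai and Glimm for transporting one pure state to a nearby one by an inner automorphism. The novelty for us is purely bookkeeping: we must ensure that the implementing exponent $\hat h$ is (i) \emph{even} and (ii) supported in $\Lambda\setminus\Lambda_0$. The crucial feature of the hypothesis that makes this possible is that $\caG_{\Lambda_0}$ contains both $\tfrac{1}{\sqrt 2} e_{II_0}^{(\Lambda_0)}$ and $\tfrac{1}{\sqrt 2}\lambda_1 e_{II_0}^{(\Lambda_0)}$, so the orthogonality and approximate-equality conditions are imposed on the full twisted matrix system, not just on $\caA_{\Lambda_0}$ itself.

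First I would read the hypotheses as assertions about the state $\hat\omega_\xi := \braket{\xi}{(\pi\times\Gamma)(\cdot)\xi}$ and $\hat\omega_\eta := \braket{\eta}{(\pi\times\Gamma)(\cdot)\eta}$ on $C^*(\Sigma_\Lambda)$: the perpendicularity says that the vectors $\{(\pi\times\Gamma)(x^*)\xi, (\pi\times\Gamma)(y^*)\eta\}_{x,y\in\caG_{\Lambda_0}}$ form a mutually orthogonal family, and the quantitative closeness says $\hat\omega_\xi$ and $\hat\omega_\eta$ agree up to $\tilde\delta_{28}$ on the $\tfrac12(\dim\caA_{\Lambda_0})^2$-dimensional matrix-unit system generated by $\caG_{\Lambda_0}\caG_{\Lambda_0}^*$. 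Using these, I build a partial isometry $v$ in $\pi\bigl(\caA_\Lambda\bigr)''$ of the form
\[
v \;=\; \sum_{I,\,s\in\{0,1\}}
\frac{1}{2}\,(\pi\times\Gamma)(e_{II_0}^{(\Lambda_0)}\lambda_1^{s})\,
\ket{\eta}\bra{\xi}\,(\pi\times\Gamma)(e_{II_0}^{(\Lambda_0)}\lambda_1^{s})^*,
\]
which by the orthogonality is an isometry on the subspace spanned by the $(\pi\times\Gamma)(x^*)\xi$'s, maps it onto the corresponding $\eta$-subspace, and by construction intertwines both $\pi(\caA_{\Lambda_0})$ and $\Gamma$ approximately (to order $\tilde\delta_{28}$).

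Second, I would promote $v$ to a unitary $U \in \pi(\caA_\Lambda)''$ by choosing any extension, use the approximate commutation with $\pi(\caA_{\Lambda_0})$ and with $\Gamma$ together with Kaplansky density to replace $U$ by an element of $\pi(\caA_{\Lambda\setminus\Lambda_0}^{(0)})''$ (losing only a controlled $O(\tilde\delta_{28})$ error), and then approximate that element, in the strong topology on a finite set of relevant vectors, by $e^{i\pi\pi(\hat h)}$ for some self-adjoint $\hat h \in \caA_{\Lambda\setminus\Lambda_0}^{(0)}$ of norm at most $1$; shifting $\hat h$ by a scalar multiple of the identity (which only changes the exponential by a phase) renders it positive. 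Finally, I would invoke Lemma B.4 of \cite{GS}, which turns operator-norm closeness on the finite test set $\caF$ determined by $\caG_{\Lambda_0}$ into the vector-norm estimate $\|e^{i\pi\pi(\hat h)}\xi-\eta\|<\dro(\varepsilon/8)/(4\sqrt 2)$; tracing through that lemma fixes the required choice of $\tilde\delta_{28}(\varepsilon,\Lambda_0)$.

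The main obstacle I expect is the localization step: producing $\hat h$ inside $\caA_{\Lambda\setminus\Lambda_0}^{(0)}$ rather than merely in $\caA_\Lambda^{(0)}$. This is where the precise form of $\caG_{\Lambda_0}$ matters — the inclusion of $\lambda_1 e_{II_0}^{(\Lambda_0)}$ forces the implementer to commute with the grading on the $\Lambda_0$-factor, and approximate commutation with $\pi(\caA_{\Lambda_0})$ together with the fact that $\caA_\Lambda \simeq \caA_{\Lambda_0}\hat\otimes\caA_{\Lambda\setminus\Lambda_0}$ (graded tensor product) then places $U$ approximately in $\pi(\caA_{\Lambda\setminus\Lambda_0}^{(0)})''$. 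Tracking the dependence of the various error tolerances on $|\Lambda_0|$ and on $\varepsilon$ is the only tedious part, but is entirely parallel to the corresponding estimates in \cite{GS}.
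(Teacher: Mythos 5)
Your approach differs from the paper's in a way that opens genuine gaps. The paper follows the Powers--Kishimoto--Ozawa--Sakai scheme and produces a \emph{positive contraction} from the outset: Lemma B.4 of \cite{GS} yields $h\in C^*(\Sigma_\Lambda)_{+,1}$, and the averaged element $\bar h=\frac12\sum_{\sigma\in\{0,1\}}\sum_I\lambda_1^\sigma e_{II_0}^{(\Lambda_0)}\,h\,e_{I_0I}^{(\Lambda_0)}\lambda_1^{-\sigma}$ commutes with both $\caA_{\Lambda_0}$ and $\lambda_1$. The graded tensor factorization $(\caH,\pi,\Gamma)\cong(\caH_1\otimes\caH_2,\pi_1\hat\otimes\pi_2,\Gamma_1\otimes\Gamma_2)$ from Lemma 5.4 of \cite{1dFermi} then forces $\pi(\bar h)=\unit_{\caH_1}\otimes x$ with $x$ a positive even contraction on $\caH_2$, and Kadison transitivity for the covariant system on $\caH_2$ produces the required $\hat h\in(\caA^{(0)}_{\Lambda\setminus\Lambda_0})_{+,1}$, after which the Powers estimate (as in Lemma 4.9 of \cite{GS}) gives the displayed bound.

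Your plan instead builds a unitary $U$ and writes it as $e^{i\pi\pi(\hat h)}$, and here two steps break. First, a unitary can always be written as $e^{i\pi k}$ with self-adjoint $\|k\|\le 1$, but such $k$ need not be positive; shifting $k\mapsto k+\unit$ restores positivity at the cost of $\|k+\unit\|\le 2$, which violates $\hat h\in\caA_{\Lambda\setminus\Lambda_0,1}$, and there is no continuous functional calculus that both normalizes the spectrum into $[0,1]$ and preserves $e^{i\pi(\cdot)}$. Positivity together with $\|\hat h\|\le 1$ is exactly what the $e^{i\pi\hat h}$ argument needs (so that the exponential is close to $\pm 1$ on the two spectral blocks), so this is not cosmetic. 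Second, the localization step is unsupported: approximate commutation with $\pi(\caA_{\Lambda_0})$ places $U$ near the commutant $\pi(\caA_{\Lambda_0})'\subset\caB(\caH)$, which is typically far larger than $\pi(\caA^{(0)}_{\Lambda\setminus\Lambda_0})''$; identifying the correct piece of the commutant is precisely the content of the graded tensor factorization the paper invokes (compare also Lemma 4.15 of \cite{aramori} in the proof of Lemma \ref{lem9}), and cannot be waved through by ``Kaplansky density.'' As a minor point, the partial isometry you wrote down does not send $\xi$ to $\eta$ --- the transfer operator dictated by the orthogonality hypothesis would be $v=\sum_{x\in\caG_{\Lambda_0}}\ket{(\pi\times\Gamma)(x^*)\eta}\bra{(\pi\times\Gamma)(x^*)\xi}$ --- and Lemma B.4 of \cite{GS} is the source of the positive element $h$, not a conversion device from operator-norm to vector-norm estimates.
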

\begin{proof}
With $\dnh(\varepsilon,n)$ given in Lemma B.6 \cite{GS},
we set $\tdnh(\epsilon,\Lambda_0):=\dnh(\epsilon, 2^{d|\Lambda_0|+1})>0$.
Suppose that
$(\caH,\pi,\Gamma)$, $\xi,\eta$
satisfy the condition (\ref{go4}) for this $\tdnh(\epsilon,\Lambda_0)$.
Then by Lemma B.4 of \cite{GS}, 
there exists $h\in \lmk C^*(\Sigma_\Lambda)\rmk_{+,1}$
such that 
\begin{align}
\lV
\lmk \pi\times \Gamma\rmk(\bar h)\lmk
\xi+\eta
\rmk
\rV,\;\;
\lV
\lmk \unit-\lmk \pi\times \Gamma\rmk(\bar h)\rmk\lmk
\xi-\eta
\rmk
\rV
<\frac{\dro(\frac{\epsilon} 8)e^{-\pi}}{4\sqrt 2},
\end{align}
for
\begin{align}
\bar h:=\frac 12 
\sum_{\sigma=0,1} \sum_I\lambda_1^\sigma \eijz{I}{I_0} h \eijz{I_0}{I}
\lmk \lambda_1^\sigma\rmk^*
\in C^*(\Sigma).
\end{align}
By the definition, we see that $\bar h$
commute with $\lambda_1$ and elements in $\caA_{\Lambda_0}$.
From Lemma 5.4 of \cite{1dFermi},
there are irreducible covariant representations $(\caH_1,\pi_1,\Gamma_1)$, $(\caH_2,\pi_2,\Gamma_2)$
of $\caA_{\Lambda_0}$, $\caA_{\Lambda\setminus \Lambda_0}$
and a unitary $W:\caH\to \caH_1\otimes \caH_2$
such that 
$\Ad W\circ\pi=\pi_1\hat\otimes \pi_2$
and $\Ad\lmk W\rmk (\Gamma)=\Gamma_1\otimes \Gamma_2$.
Because $\Ad W\lmk \pi(\bar h)\rmk$
commutes with $\Ad W\lmk \pi\lmk \caA_{\Lambda_0}\rmk\rmk=\caB(\caH_1)\otimes \unit$
and $\Ad\lmk W\rmk (\Gamma)=\Gamma_1\otimes \Gamma_2$,
$\Ad W\lmk \pi(\bar h)\rmk$ is of the form
$\unih 1\otimes x$
with $x\in \caB(\caH_2)_{+,1}$ such that $\Ad \Gamma_2(x)=x$.
Because $(\caH_2,\pi_2,\Gamma_2)$ is an  irreducible covariant representation,
as in the proof of Lemma 4.9 \cite{GS}, 
there is some $\hat h\in \lmk \caA^{(0)}_{\Lambda\setminus\Lambda_0}\rmk_{1,+}$
such that 
$\lV
\lmk \unih  1\otimes \pi_2 (\hat h)-\unih 1\otimes x
\rmk
W\lmk \xi+\eta\rmk
\rV$ and
$\lV
\lmk \unih  1\otimes \pi_2 (\hat h)-\unih 1\otimes x
\rmk
W\lmk \xi-\eta\rmk
\rV$
are small enough so that 
\begin{align}
\begin{split}
&\lV
\pi(\hat h)(\xi+\eta)
\rV
\le
\lV
\lmk \pi(\hat h)-\lmk\pi\times \Gamma\rmk(\bar h)\rmk
\lmk \xi+\eta\rmk
\rV
+
\lV
\lmk\pi\times \Gamma\rmk(\bar h)(\xi+\eta)\rV\\
&=\lV
\lmk \unih  1\otimes \pi_2 (\hat h)-\unih 1\otimes x
\rmk
W\lmk \xi+\eta\rmk
\rV+
\lV
\lmk\pi\times \Gamma\rmk(\bar h)(\xi+\eta)\rV
<\frac{\dro(\frac{\epsilon} 8)e^{-\pi}}{4\sqrt 2},\\
&\lV
\lmk\unit-\pi(\hat h)\rmk(\xi-\eta)
\rV
\le
\lV
\lmk \pi(\hat h)-\lmk\pi\times \Gamma\rmk(\bar h)\rmk
\lmk \xi-\eta\rmk
\rV
+
\lV
\lmk \unit-\lmk\pi\times \Gamma\rmk(\bar h)\rmk (\xi-\eta)
\rV\\
&=
\lV
\lmk \unih  1\otimes \pi_2 (\hat h)-\unih 1\otimes x
\rmk
W\lmk \xi-\eta\rmk
\rV
+
\lV
\lmk \unit-\lmk\pi\times \Gamma\rmk(\bar h)\rmk (\xi-\eta)
\rV
<\frac{\dro(\frac{\epsilon} 8)e^{-\pi}}{4\sqrt 2},
\end{split}
\end{align}
hold.
Then by the argument in  Lemma 4.9 of \cite{GS} (eq. (91)),
for this $\hat h$,
we obtain (\ref{ichi4}).
\end{proof}
\begin{lem}\label{lem5}
For any $\varepsilon>0$ and a finite subset $\caF$
of $\caA_{\Lambda}$, set
$\tdsj\lmk \varepsilon,\caF\rmk:=\frac 12 \tdnh\lmk \frac\epsilon 4,\Lambda(\varepsilon,\caF)\rmk$
and $\tgsj \lmk \varepsilon,\caF\rmk:=\caG_{\Lambda(\varepsilon,\caF)}$,
where $\tdnh(\varepsilon,\Lambda_0)$ is defined in Lemma \ref{lem4}.
Then for any irreducible covariant representation $(\caH,\pi, \Gamma)$ of $\Sigma_\Lambda$
with the associated decomposition $(\caH_\pm,\pi_\pm)$
and unit vectors $\xi,\eta\in \caH_+$
such that
\begin{align}\label{ichi5}
\lv
\braket{\tilde \xi}{\lmk\pi\times \Gamma\rmk(xy^*)\tilde \xi}-
\braket{\tilde \eta}{\lmk\pi\times \Gamma\rmk(xy^*)\tilde \eta}
\rv
<\tdsj(\varepsilon,\caF),\quad x,y\in \tgsj(\varepsilon,\caF).
\end{align}
for $\tilde \xi:=\xi\oplus 0$, $\tilde \eta:=\eta\oplus 0\in \caH_+\oplus \caH_-$,
there is a continuous map $v :[0,1]\to \caU(\caA_{\Lambda}^{(0)})$
satisfying  
\begin{align}\label{san5}
\begin{split}
&\tilde \eta=\pi\lmk v(1)\rmk\tilde \xi,\quad 
v(0)=\unit,\\
&\sup_{t\in [0,1]}\lV \Ad v(t)(a)-a\rV<\varepsilon,\quad a\in \caF.
\end{split}
\end{align}
\end{lem}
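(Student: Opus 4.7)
The strategy is a Bratteli--Elliott--Evans--Kishimoto style iteration: invoke Lemma \ref{lem4} repeatedly to drive $\pi(v_n\cdots v_1)\tilde\xi$ towards $\tilde\eta$, sum the resulting small corrections into a single continuous path $v(t)$ in $\caU(\caA_\Lambda^{(0)})$, and verify directly that this path approximately fixes $\caF$ by using that every factor along the path lies in $\caA_{\Lambda\setminus\Lambda_0}^{(0)}$.

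Let $\Lambda_0:=\Lambda(\varepsilon,\caF)$, so that $\caF$ is $\varepsilon/16$-close to $\caA_{\Lambda_0}$ and $\tgsj(\varepsilon,\caF)=\caG_{\Lambda_0}$. The first move is to reduce the hypothesis of Lemma \ref{lem5}, which only provides closeness of matrix coefficients, to the hypothesis of Lemma \ref{lem4}, which also demands the orthogonality $(\pi\times\Gamma)(x^*)\tilde\xi\perp(\pi\times\Gamma)(y^*)\tilde\eta$ for $x,y\in\caG_{\Lambda_0}$. Since $\spa\{(\pi\times\Gamma)(x^*)\tilde\xi\mid x\in\caG_{\Lambda_0}\}$ is finite-dimensional, and since $\left.\pi\right\vert_{\caA_{\Lambda\setminus\Lambda_0}}$ has infinite-dimensional multiplicity space by the tensor factorisation of Lemma 5.4 of \cite{1dFermi}, I would produce an even unitary $W\in\caU(\caA_{\Lambda\setminus\Lambda_0}^{(0)})$ arbitrarily close to $\unit$ for which $\pi(W)\tilde\eta$ satisfies the required orthogonality. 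By choosing $W$ close enough to $\unit$, the matrix-coefficient differences are perturbed by less than $\tdsj(\varepsilon,\caF)$, so the composite bound remains below $\tdnh(\varepsilon/4,\Lambda_0)$; the slack factor $\tfrac12$ built into $\tdsj=\tfrac12\tdnh$ is precisely what absorbs this damage.

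Lemma \ref{lem4} then yields an even positive $\hat h\in(\caA_{\Lambda\setminus\Lambda_0})_1$ with $\|e^{i\pi\pi(\hat h)}\tilde\xi-\pi(W)\tilde\eta\|<\dro(\varepsilon/32)/(4\sqrt 2)$. Setting $u_1:=W^*e^{i\pi\hat h}\in\caU(\caA_{\Lambda\setminus\Lambda_0}^{(0)})$ and iterating with geometrically shrinking tolerances $\varepsilon_n\sim 2^{-n}\varepsilon$ (applied in turn to the pairs $(\pi(u_1)\tilde\xi,\tilde\eta)$, $(\pi(u_2u_1)\tilde\xi,\tilde\eta)$, etc.) produces a sequence of even unitaries $(u_n)$ in $\caA_{\Lambda\setminus\Lambda_0}^{(0)}$ with $\|u_n-\unit\|$ summable, whose telescoping product $u_n\cdots u_1$ converges in norm to an even unitary $u\in\caA_\Lambda^{(0)}$ satisfying $\pi(u)\tilde\xi=\tilde\eta$ exactly. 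The path $v(t)$ is assembled by concatenating, on shrinking subintervals of $[0,1]$, one-parameter paths of the form $s\mapsto W_n^{*}e^{is\pi\hat h_n}\cdot u_{n-1}\cdots u_1$, suitably reparametrised; it lies entirely in $\caU(\caA_\Lambda^{(0)})$ with $v(0)=\unit$ and $\pi(v(1))\tilde\xi=\tilde\eta$.

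The bound $\sup_t\|\Ad v(t)(a)-a\|<\varepsilon$ is then immediate: every factor appearing in $v(t)$ is an even element of $\caA_{\Lambda\setminus\Lambda_0}$, which commutes with $\caA_{\Lambda_0}$ by graded commutativity of the graded tensor product, so $\Ad v(t)$ fixes $\caA_{\Lambda_0}$ pointwise. For $a=a_0+r\in\caF$ with $a_0\in\caA_{\Lambda_0}$ and $\|r\|<\varepsilon/16$, this gives $\|\Ad v(t)(a)-a\|\le 2\|r\|<\varepsilon/8<\varepsilon$. The main obstacle is the orthogonalisation step: one must simultaneously enforce an exact orthogonality (a finite-codimension condition which requires a genuine, non-infinitesimal rotation) while keeping the matrix-coefficient tolerance tight enough for Lemma \ref{lem4} to apply, and the $\tfrac12$ in the definition of $\tdsj$ is tailored precisely to make this trade-off feasible, while the bookkeeping of tolerances in the iteration is what guarantees exact convergence to $u$.
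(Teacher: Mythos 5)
Your proposal diverges from the paper's argument at the key step, and that step has a genuine gap. You want to enforce the orthogonality hypothesis of Lemma \ref{lem4} by finding an even unitary $W\in\caU(\caA_{\Lambda\setminus\Lambda_0}^{(0)})$ \emph{arbitrarily close to $\unit$} such that $(\pi\times\Gamma)(y^*)\pi(W)\tilde\eta$ becomes orthogonal to $\spa\{(\pi\times\Gamma)(x^*)\tilde\xi\}$. But the hypothesis (\ref{ichi5}) only controls \emph{matrix coefficients}; it gives no control whatsoever on the inner products $\langle (\pi\times\Gamma)(x^*)\tilde\xi, (\pi\times\Gamma)(y^*)\tilde\eta\rangle$. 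These overlaps can be of order one (for instance, $\tilde\eta$ could lie in or near $\spa\{(\pi\times\Gamma)(x^*)\tilde\xi\}$), and removing an order-one overlap requires an order-one rotation; no $W$ with $\|W-\unit\|$ small can do it. The finite-dimensionality of the span guarantees there is \emph{room} to move, not that the move is \emph{small}. So the claim that the built-in factor $\tfrac12$ in $\tdsj$ "absorbs this damage" is not justified: the damage is not controlled by $\tdsj$ at all.

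The paper's actual proof sidesteps this entirely. Rather than perturb $\tilde\eta$, it invokes Glimm's Lemma (using simplicity of $C^*(\Sigma_\Lambda)$) to produce a \emph{third} unit vector $\zeta$ which is automatically orthogonal to the whole finite-dimensional space spanned by $\{(\pi\times\Gamma)(xy^*)\tilde\xi,\ (\pi\times\Gamma)(xy^*)\tilde\eta\}$ and simultaneously has matrix coefficients $\tdsj$-close to those of $\tilde\xi$ (and hence, by the triangle inequality and the hypothesis, $\tdnh$-close to those of $\tilde\eta$ too — this is exactly where the factor $\tfrac12$ is spent). Lemma \ref{lem4} is then applied \emph{twice}, to the pairs $(\tilde\xi,\zeta)$ and $(\tilde\eta,\zeta)$, yielding $h_1,h_2\in(\caA_{\Lambda\setminus\Lambda_0}^{(0)})_{1,+}$ with $e^{i\pi\pi(h_j)}$ carrying $\tilde\xi$, $\tilde\eta$ both close to $\zeta$. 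A Kadison-transitivity correction $e^{i\pi_+(K)}$ with $K\in\caA_\Lambda^{(0)}$ of small norm then matches them exactly, and the path $v(t)$ is read off as in Lemma 4.12 of [GS]. Note in particular that the approximate invariance of $\caF$ is not obtained by pure commutation with $\caA_{\Lambda_0}$, as you suggest: the corrector $K$ is a global element of $\caA_\Lambda^{(0)}$, and the estimate comes from $\|K\|$ being small, not from locality. Your infinite iterative product is also unnecessary once the $\zeta$-trick is in place, since a single Kadison step finishes the job.
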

\begin{proof}
Let $\caK$ be a finite dimensional subspace of $\caH$
spanned by
\[
\left\{
\lmk\pi\times \Gamma\rmk(xy^*)\tilde\xi,\;
\lmk\pi\times \Gamma\rmk(xy^*)\tilde\eta\mid
x,y\in \tgsj(\varepsilon,\caF)
\right\}.
\]
Then because $C^*(\Sigma_\Lambda)$ is simple,
by Glimm's Lemma (Lemma 5.2.5 \cite{Farah}), there exists a unit vector $\zeta\in \caK^{\perp}$
such that
\begin{align}
\lv
\braket{\tilde \xi}{\lmk\pi\times \Gamma\rmk(xy^*)\tilde \xi}-
\braket{\zeta}{\lmk\pi\times \Gamma\rmk(xy^*) \zeta}
\rv
<\tdsj(\varepsilon,\caF)< \tdnh\lmk \frac\epsilon 4,\Lambda(\varepsilon,\caF)\rmk,\quad x,y\in \tgsj(\varepsilon,\caF).
\end{align}
Combining this with (\ref{ichi5}),we also have
\begin{align}
\lv
\braket{\tilde \eta}{\lmk\pi\times \Gamma\rmk(xy^*)\tilde \eta}-
\braket{\zeta}{\lmk\pi\times \Gamma\rmk(xy^*) \zeta}
\rv
<2\tdsj(\varepsilon,\caF)= \tdnh\lmk \frac\epsilon 4,\Lambda(\varepsilon,\caF)\rmk,\quad x,y\in \tgsj(\varepsilon,\caF).
\end{align}
Because we have
\begin{align}
\lmk\pi\times \Gamma\rmk(y^*)\tilde\xi,\;
\lmk\pi\times \Gamma\rmk(y^*)\tilde\eta
\perp
\lmk\pi\times \Gamma\rmk(x^*)\zeta
\;\;
\text{for any}\;\;
x,y\in \tgsj(\varepsilon,\caF),
\end{align}
from Lemma \ref{lem4},
there are $h_1,h_2\in \lmk \caA_{\Lambda\setminus\Lambda(\varepsilon,\caF)}^{(0)}\rmk_{1,+}$
such that 
\begin{align}
\lV e^{i\pi \pi(h_1)}\tilde \xi-\zeta\rV,
\lV e^{i\pi \pi(h_2)}\tilde \eta-\zeta\rV
<\frac{1}{4\sqrt 2}\dro\lmk\frac{\varepsilon}{32}\rmk.
\end{align}
From this, recalling that $h_1$ and $h_2$ are even,
 we have
\begin{align}
\lV
e^{i\pi \pi_+(h_1)}\xi-e^{i\pi \pi_+(h_2)}\eta
\rV_{\caH_+}
=\lV
e^{i\pi \pi(h_1)}\tilde \xi- e^{i\pi \pi(h_2)}\tilde \eta
\rV
<\frac{1}{2\sqrt 2}\dro\lmk\frac{\varepsilon}{32}\rmk.
\end{align}
Here $\lV\cdot \rV_{\caH_+}$ means
the norm associated to $\caH_+$.
Because $\pi_+$ is irreducible, by the argument in the proof of Lemma 4.12 in \cite{GS}
(around equation (103)),
there is some self-adjoint $K\in \caA^{(0)}$ such that
\begin{align}
\begin{split}
e^{i\pi_+(K)}e^{i\pi \pi_+(h_1)}\xi=e^{i\pi\pi_+(h_2)}\eta,\\
\lV K\rV<\dn\lmk \frac{\varepsilon}{32}\rmk,\quad
\sup_{\theta\in [0,1]} \lV e^{iK\theta}-\unit\rV\le \frac{\varepsilon}{32}.
\end{split}
\end{align}
Here, $\delta_{\nii}(\varepsilon)$ is given in Notation B.3 of \cite{GS}. 
From this, as in the proof of Lemma 4.12 \cite{GS} (from equation (105)),
we can find $v(t)$
satisfying the condition (\ref{san5}).
\end{proof}
\begin{lem}\label{lem7}
Let $(\caH_i,\pi_i, \Gamma_i)$, $i=0,1$ be
irreducible covariant representations of
$\Sigma_\Lambda$.
Let $(\caH_{i\pm},\pi_{i\pm})$
be the decomposition associated to $(\caH_i,\pi_i, \Gamma_i)$,
and $\xi_i\in \caH_{i+}$ a unit vector.
Set $\tilde \xi_i:=\xi_i\oplus 0\in \caH_{i+}\oplus \caH_{i-}$.
Suppose that for $\varepsilon>0$ and a finite set $\caF\subset \lmk \caA_{\Lambda}\rmk_1$,
\begin{align}\label{ichi7}
\lv
\braket{\tilde \xi_0}{\lmk \pi_0\times \Gamma_0\rmk(x y^*)\tilde\xi_0}
-\braket{\tilde \xi_1}{\lmk \pi_1\times \Gamma_1\rmk(x y^*)\tilde\xi_1}
\rv
<\frac 12 \tdsj(\varepsilon, \caF),\quad
x,y\in \tgsj(\varepsilon,\caF)
\end{align}
hold.
Then for any $\varepsilon'>0$ and finite set $\caF'\subset C^*(\Sigma)$,
there exists a norm continuous path
$v: [0,1]\to \caU(\caA_\Lambda^{(0)})$
such that
\begin{align}
\begin{split}
&\lv
\braket{\tilde \xi_0}{\lmk \pi_0\times \Gamma_0\rmk(a)\tilde\xi_0}
-
\braket{\tilde \xi_1}{\lmk \pi_1\times \Gamma_1\rmk\circ \Ad\lmk v(1)\rmk(a)\tilde \xi_1}
\rv
<\varepsilon',\quad a\in \caF',\\
&
\lV
\Ad\lmk v(t)\rmk(y)-y
\rV<\varepsilon,\quad y\in\caF,\; t\in [0,1].
\end{split}
\end{align}
\end{lem}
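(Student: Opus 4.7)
The plan is a two-step procedure combining Lemma \ref{lem2} with Lemma \ref{lem5}. Lemma \ref{lem2} makes $\hat\omega_0$ and $\hat\omega_1$ approximately equal via an inner perturbation $\Ad(e^{ih_0})$, but the self-adjoint element $h_0$ it furnishes is uncontrolled on $\caF$. I will then use Lemma \ref{lem5} to \emph{re-implement} the state-space transition that $e^{ih_0}$ induces on $\hat\omega_1$ by a path of unitaries that barely disturbs $\caF$.

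First, apply Lemma \ref{lem2} to the pure states $\hat\omega_0,\hat\omega_1$ on $C^*(\Sigma_\Lambda)$ with the finite set $\caF'\cup\{xy^*\mid x,y\in\tgsj(\varepsilon,\caF)\}$ and tolerance $\min\!\bigl(\varepsilon'/2,\,\tdsj(\varepsilon,\caF)/4\bigr)$. This produces a self-adjoint $h_0\in\caA^{(0)}_\Lambda$ with
\begin{align}
&\bigl|\hat\omega_0(a)-\hat\omega_1\circ\Ad(e^{ih_0})(a)\bigr|<\varepsilon'/2,\quad a\in\caF',\\
&\bigl|\hat\omega_0(xy^*)-\hat\omega_1\circ\Ad(e^{ih_0})(xy^*)\bigr|<\tdsj(\varepsilon,\caF)/4,\quad x,y\in\tgsj(\varepsilon,\caF).
\end{align}
Next, define $\tilde\eta_1:=\pi_1(e^{-ih_0})\tilde\xi_1$. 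Since $e^{-ih_0}\in\caA^{(0)}_\Lambda$ is even and $\Ad(\Gamma_1)\circ\pi_1=\pi_1\circ\Theta_\Lambda$, the operator $\pi_1(e^{-ih_0})$ commutes with $\Gamma_1$; together with $\tilde\xi_1\in\caH_{1+}\oplus\{0\}$ and the unitarity of $e^{-ih_0}$, this gives $\tilde\eta_1=\eta_1\oplus 0$ for some unit $\eta_1\in\caH_{1+}$. A direct computation using $(\pi_1\times\Gamma_1)(e^{\pm ih_0})=\pi_1(e^{\pm ih_0})$ yields
\begin{align}
\langle\tilde\eta_1|(\pi_1\times\Gamma_1)(a)\tilde\eta_1\rangle=\hat\omega_1\circ\Ad(e^{ih_0})(a),\quad a\in C^*(\Sigma_\Lambda).
\end{align}
Combining this with the hypothesis (\ref{ichi7}) via the triangle inequality gives
\begin{align}
\bigl|\langle\tilde\xi_1|(\pi_1\times\Gamma_1)(xy^*)\tilde\xi_1\rangle-\langle\tilde\eta_1|(\pi_1\times\Gamma_1)(xy^*)\tilde\eta_1\rangle\bigr|<\tdsj(\varepsilon,\caF),\quad x,y\in\tgsj(\varepsilon,\caF).
\end{align}

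Now apply Lemma \ref{lem5} to the covariant representation $(\caH_1,\pi_1,\Gamma_1)$, with $\eta_1$ and $\xi_1$ playing the roles of $\xi$ and $\eta$, respectively. This yields a norm-continuous $v:[0,1]\to\caU(\caA^{(0)}_\Lambda)$ with $v(0)=\unit$, $\pi_1(v(1))\tilde\eta_1=\tilde\xi_1$, and $\sup_t\|\Ad(v(t))(y)-y\|<\varepsilon$ for $y\in\caF$; the latter is the second conclusion of Lemma \ref{lem7}. For the first conclusion, rearranging $\pi_1(v(1))\tilde\eta_1=\tilde\xi_1$ to $\pi_1(v(1)^*)\tilde\xi_1=\tilde\eta_1$ gives
\begin{align}
\hat\omega_1\circ\Ad(v(1))(a)=\langle\pi_1(v(1))^*\tilde\xi_1|(\pi_1\times\Gamma_1)(a)\pi_1(v(1))^*\tilde\xi_1\rangle=\langle\tilde\eta_1|(\pi_1\times\Gamma_1)(a)\tilde\eta_1\rangle=\hat\omega_1\circ\Ad(e^{ih_0})(a),
\end{align}
which by Step 1 is within $\varepsilon'/2<\varepsilon'$ of $\hat\omega_0(a)$ for $a\in\caF'$.

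The only conceptual point is the choice of the intermediate vector $\tilde\eta_1$: since $h_0$ from Lemma \ref{lem2} has no a priori locality, $\Ad(e^{ih_0})$ cannot itself serve as $\Ad(v(1))$. However, its action on $\hat\omega_1$ is completely captured by the vector $\tilde\eta_1$, and Lemma \ref{lem5} then supplies a different, $\caF$-preserving unitary path that reproduces the same state. Both inputs do the heavy lifting, and I expect no further obstacles.
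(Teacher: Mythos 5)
Your proposal is correct and follows essentially the same route as the paper's proof: apply Lemma \ref{lem2} on $\caF'\cup\tgsj(\varepsilon,\caF)\tgsj(\varepsilon,\caF)^*$ to obtain an even self-adjoint $h$, use evenness to keep the perturbed vector in $\caH_{1+}\oplus 0$, combine with the hypothesis via the triangle inequality to meet the threshold for Lemma \ref{lem5}, and apply Lemma \ref{lem5} to produce the path $v$. Your constants ($\varepsilon'/2$, $\tdsj/4$) are slightly tighter than the paper's $\min\{\varepsilon',\tfrac12\tdsj\}$ but work equally well, and the explicit naming of the intermediate vector and verification that it lies in the $+$-sector are the same observations the paper makes more tersely.
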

\begin{proof}
This corresponds to Lemma 4.14 of \cite{GS}.
By Lemma \ref{lem2}, there exists a self-adjoint $h\in \caA_\Lambda^{(0)}$
such that
\begin{align}
\lv
\braket{\tilde \xi_0}{\lmk \pi_0\times \Gamma_0\rmk(a)\tilde\xi_0}
-
\braket{\tilde \xi_1}{\lmk \pi_1\times \Gamma_1\rmk\circ \Ad\lmk e^{ih}\rmk(a)\tilde \xi_1}
\rv
<\min\{\varepsilon', \frac 12 \tdsj(\varepsilon,\caF)\},\quad
a\in \caF'\cup\tgsj(\varepsilon,\caF)\lmk \tgsj(\varepsilon,\caF)\rmk^*.
\end{align}
Combining this with (\ref{ichi7}),
we obtain
\begin{align}
\lv
\braket{\tilde \xi_1}{\lmk \pi_1\times \Gamma_1\rmk\circ \Ad\lmk e^{ih}\rmk(xy^*)\tilde \xi_1}
-\braket{\tilde \xi_1}{\lmk \pi_1\times \Gamma_1\rmk(x y^*)\tilde\xi_1}
\rv
<\tdsj(\varepsilon,\caF),\quad
x,y\in \tgsj(\varepsilon,\caF).
\end{align}
 Here, because $h$ is even,
 we have
 \begin{align}
 \lmk \pi_1\times \Gamma_1\rmk\lmk e^{-ih}\rmk\tilde \xi_1
 =e^{-i\pi_{1+}(h)}\xi_1\oplus 0.
 \end{align}
 From this, applying Lemma \ref{lem5},
 we obtain the Lemma.
\end{proof}

\begin{proofof}[Proposition \ref{lem6}]
Having Lemma \ref{lem2} and Lemma \ref{lem7}, the proof of Proposition \ref{lem6}
is the same as that of Proposition 4.2 of \cite{GS}.
\end{proofof}

Next we prepare a proposition needed for the case (b) of Proposition \ref{mainprop}.
\begin{prop}\label{lem11}
Let $(\caH_i,\pi_i)$, $i=0,1$ be irreducible representations of
$\caA_\Lambda$ such that $\pi_i\lmk\caA_{\Lambda}^{(0)}\rmk''=\caB(\caH_i)$.
Let $\xi_i\in \caH_i$, $i=0,1$ be unit vectors
and set
\begin{align}
\omega_i:=\braket{\xi_i}{\pi_i(\cdot)\xi_i}.
\end{align}
Then there is an automorphism $\eta\in \auz(\caA_{\Lambda})$
such that 
\begin{align}
\omega_1=\omega_0\eta.
\end{align}
\end{prop}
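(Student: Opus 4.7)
The proof adapts the Kishimoto--Ozawa--Sakai strategy used in the proof of Proposition \ref{lem6}, simplified to the current non-covariant setting. The crucial structural observation is that the hypothesis $\pi_i(\caA_\Lambda^{(0)})''=\caB(\caH_i)$ forces each $\omega_i\vert_{\caA_\Lambda^{(0)}}$ to be a pure state on the simple, unital, separable $C^*$-algebra $\caA_\Lambda^{(0)}$, so we are directly in the classical KOS framework with respect to the even subalgebra, and no crossed-product construction is needed.

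The plan is to construct $\eta$ as a norm-convergent infinite composition of inner automorphisms $\Ad(v_n)$ with each $v_n\in\caU(\caA_\Lambda^{(0)})$. Since $\Ad(v_n)$ commutes with $\Theta_\Lambda$ for every $n$, the resulting $\eta$ automatically lies in $\auz(\caA_\Lambda)$. To drive the iteration, I would prove three lemmas paralleling Lemmas \ref{lem2}, \ref{lem5}, and \ref{lem7}: $(a)$ an approximate-equivalence lemma stating that for every finite $\caF\subset\caA_\Lambda$ and $\varepsilon>0$ there exists a self-adjoint $h\in\caA_\Lambda^{(0)}$ with $\lv\omega_0(a)-\omega_1\circ\Ad(e^{ih})(a)\rv<\varepsilon$ for all $a\in\caF$; $(b)$ a localization lemma producing, under a small-overlap hypothesis on matrix elements against $\caG_{\Lambda_0}$, a norm-continuous path $v:[0,1]\to\caU(\caA_\Lambda^{(0)})$ that moves $\pi_1(v(1))\xi_1$ to $\xi_0$ modulo a controlled error and perturbs a prescribed finite set of observables by at most $\varepsilon$; $(c)$ the combined inductive step corresponding to Lemma \ref{lem7}. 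Granted these three ingredients, the iteration from the proof of Proposition 4.2 of \cite{GS} applies essentially verbatim, and yields the desired $\eta\in\auz(\caA_\Lambda)$ with $\omega_1=\omega_0\circ\eta$.

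The main technical obstacle is step $(a)$. The standard KOS theorem delivers the approximation directly for $a\in\caA_\Lambda^{(0)}$, using purity of $\omega_i\vert_{\caA_\Lambda^{(0)}}$ and simplicity of $\caA_\Lambda^{(0)}$. To extend the estimate to odd $a\in\caA_\Lambda^{(1)}$, I would invoke Kaplansky density: the hypothesis $\pi_i(\caA_\Lambda^{(0)})''=\caB(\caH_i)$ gives nets $(b_n^{(i)})\subset\caA_\Lambda^{(0)}$ of uniformly bounded norm with $\pi_i(b_n^{(i)})\to\pi_i(a)$ in the strong operator topology, so each vector-state value $\omega_i(a)$ is approximable by $\omega_i(b_n^{(i)})$; inserting this approximation on both sides of the KOS estimate (using that $\Ad(e^{ih})$ is norm-continuous and $\pi_1(e^{ih})$ is bounded) reduces the odd-element estimate to the even one. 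Once step $(a)$ is in place, steps $(b)$ and $(c)$ are direct analogues of Lemmas \ref{lem5} and \ref{lem7} in which the covariant-representation bookkeeping is simply dropped, since we no longer carry around the implementing unitary $\Gamma$.
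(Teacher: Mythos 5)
Your overall strategy (iterate even inner automorphisms à la Kishimoto--Ozawa--Sakai via three lemmas and a convergence argument) matches the paper's, but your treatment of step $(a)$ has a genuine gap that the paper avoids.

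You propose to apply the KOS approximation on $\caA_\Lambda^{(0)}$ and then ``extend to odd elements via Kaplansky density.'' This does not close. Kaplansky density produces, for a given odd $a$, a bounded net $(b_n^{(i)})\subset\caA_\Lambda^{(0)}$ with $\pi_i\lmk b_n^{(i)}\rmk\to\pi_i(a)$ in SOT --- but that net is manufactured inside $\pi_i\lmk\caA_\Lambda^{(0)}\rmk''$, so it genuinely depends on $i$. The KOS estimate compares $\omega_0(b)$ against $\omega_1\circ\Ad(e^{ih})(b)$ for a \emph{single} $b$ drawn from a finite set fixed in advance of choosing $h$. To transfer the estimate you would need one sequence $b_n\in\caA_\Lambda^{(0)}$ that simultaneously approximates $\pi_0(a)$ in $\pi_0$ and $\pi_1(a)$ in $\pi_1$; there is no reason such a sequence exists, and it certainly does not follow from $\pi_i\lmk\caA_\Lambda^{(0)}\rmk''=\caB(\caH_i)$ separately. (The case one actually cares about, where $\pi_0$ and $\pi_1$ end up equivalent, is precisely a case where a diagonal Kaplansky argument in $\pi_0\oplus\pi_1$ also fails, since $(\pi_0\oplus\pi_1)\lmk\caA_\Lambda^{(0)}\rmk''$ is then much smaller than $\caB(\caH_0)\oplus\caB(\caH_1)$.)

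The clean and correct route, which is what the paper does in Lemma \ref{lem8}, is to not restrict to the even subalgebra at all when invoking the approximation: $\omega_0$ and $\omega_1$ are already pure states on all of $\caA_\Lambda$ (because $\pi_0,\pi_1$ are irreducible on $\caA_\Lambda$), so the Glimm-type Lemma B.1 of \cite{GS} applies directly on $\caA_\Lambda$ and yields $f\in\lmk\caA_\Lambda\rmk_{+,1}$ and a unit vector $\xi_1'\in\caH_1$ with $\pi_1(f)\xi_1'=\xi_1'$ and $\lV f\lmk a-\omega_0(a)f\rmk\rV<\varepsilon$ for all $a$ in the prescribed finite set, including odd ones. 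The hypothesis $\pi_1\lmk\caA_\Lambda^{(0)}\rmk''=\caB(\caH_1)$ is used only once and only where it is really needed: Kadison transitivity on $\caA_\Lambda^{(0)}$ produces a self-adjoint $h\in\caA_\Lambda^{(0)}$ with $e^{i\pi_1(h)}\xi_1'=\xi_1$, and that $h$ is what lands $\Ad(e^{ih})$ in $\auz(\caA_\Lambda)$. The same pattern recurs in the localization step (Lemma \ref{lem9}): one runs the Kadison transitivity argument of \cite{kos}/\cite{GS} unchanged, noting only that $\pi\lmk\caA_\Lambda^{(0)}\rmk''=\caB(\caH)$ lets you pick the resulting positive contraction from $\lmk\caA_\Lambda^{(0)}\rmk_{+,1}$ instead of $\lmk\caA_\Lambda\rmk_{+,1}$. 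Replacing your Kaplansky detour with this observation removes the gap and in fact shortens the argument.
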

The proof follows the argument in \cite{kos},
choosing the unitary there each time from $\caA_{\Lambda}^{(0)}$,
which is possible because of $\pi_i\lmk\caA_{\Lambda}^{(0)}\rmk''=\caB(\caH_i)$.
\begin{lem}\label{lem8}
Let $(\caH_i,\pi_i)$, $i=0,1$ be irreducible representations
$\caA_\Lambda$ such that $\pi_i\lmk\caA_{\Lambda}^{(0)}\rmk''=\caB(\caH_i)$.
Let $\xi_i\in \caH_i$, $i=0,1$ be unit vectors.
Then for any finite subset $\caF$ of $\caA_{\Lambda}$ and $\varepsilon>0$,
there exists a self-adjoint element $h\in \caA^{(0)}_{\Lambda}$
such that
\begin{align}
\lv
\braket{\xi_0}{\pi_0(a)\xi_0}
-\braket{\xi_1}{\pi_1\circ\Ad(e^{ih})(a)\xi_1}
\rv<\varepsilon,\quad a\in \caF.
\end{align}
\end{lem}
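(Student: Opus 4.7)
The plan is to follow the Kishimoto--Ozawa--Sakai strategy \cite{kos}, as in Lemma \ref{lem2}, but simplified because the hypothesis $\pi_i(\caA_\Lambda^{(0)})''=\caB(\caH_i)$ already makes each $\pi_i\vert_{\caA_\Lambda^{(0)}}$ irreducible; no twisted-crossed-product apparatus is needed here. The goal is to produce an even self-adjoint $h$ for which $\omega_0$ and $\omega_1\circ \Ad(e^{ih})$ agree on the finite test set $\caF$ up to $\varepsilon$.

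First I would observe that under the hypothesis, $\omega_i\vert_{\caA_\Lambda^{(0)}}$ is a pure state on the simple $C^*$-algebra $\caA_\Lambda^{(0)}$, so Kadison's transitivity theorem is available inside each irreducible representation $\pi_i\vert_{\caA_\Lambda^{(0)}}$. Then I would distinguish two cases according to whether $\omega_0\vert_{\caA_\Lambda^{(0)}}$ and $\omega_1\vert_{\caA_\Lambda^{(0)}}$ are equivalent or disjoint. In the equivalent case, fix a unitary intertwiner $V:\caH_0\to\caH_1$ with $V\pi_0(b)V^*=\pi_1(b)$ for $b\in\caA_\Lambda^{(0)}$; a direct application of Kadison transitivity inside $\pi_1\vert_{\caA_\Lambda^{(0)}}$ then produces a self-adjoint $h\in\caA_\Lambda^{(0)}$ such that $\pi_1(e^{-ih})\xi_1$ is close to $V\xi_0$ on the finite-dimensional subspace spanned by $\pi_1(\caF)V\xi_0$. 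In the disjoint case, I would instead combine Glimm's lemma (as used in Lemma \ref{lem5}; see Lemma 5.2.5 of \cite{Farah}) with Kadison transitivity: Glimm produces a unit vector in $\caH_1$ whose vector state approximates $\omega_0$ on $\caF$, after which Kadison transitivity realises this vector as $\pi_1(e^{-ih})\xi_1$ for some even self-adjoint $h$.

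For odd elements of $\caF$, I would use that $\caA_\Lambda$ contains an odd self-adjoint unitary $u_\Lambda$, so every $a\in\caA_\Lambda^{(1)}$ factors as $u_\Lambda b$ with $b\in\caA_\Lambda^{(0)}$. Since $h$ is even, $\Ad(e^{ih})$ preserves the grading. Because $\pi_i(u_\Lambda)\in\caB(\caH_i)=\pi_i(\caA_\Lambda^{(0)})''$, the odd matrix elements can be treated by enlarging the Kadison test set inside $\caB(\caH_i)$, without leaving $\caA_\Lambda^{(0)}$ when constructing $h$.

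The main obstacle is the disjoint case, where one must produce an auxiliary vector in $\caH_1$ whose vector state matches $\omega_0$'s on the prescribed finite test set while being orthogonal to an appropriate cyclic subspace, and then realise that vector via conjugation by an even unitary. Once this auxiliary vector is in place, the remaining perturbative $C^*$-algebra estimates parallel those in \cite{kos} and Lemma \ref{lem2} closely, so no new technical difficulty arises.
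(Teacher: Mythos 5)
The strategy you have in mind is essentially the one the paper uses, but there is a genuine gap in the way you have organised it, and the gap is on the ``equivalent'' side of your case distinction, not the ``disjoint'' side.

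The paper does not distinguish cases at all. It invokes Lemma B.1 of \cite{GS} (an excision-type lemma) to produce a positive $f\in(\caA_\Lambda)_{+,1}$ and a unit vector $\xi_1'\in\caH_1$ with $\pi_1(f)\xi_1'=\xi_1'$ and $\lV f(a-\omega_0(a)f)\rV<\varepsilon$ for $a\in\caF$; this gives a vector in $\caH_1$ whose vector state approximates $\omega_0$ on $\caF$, and then Kadison transitivity inside the irreducible $\pi_1\vert_{\caA_\Lambda^{(0)}}$ supplies the even self-adjoint $h$ with $e^{i\pi_1(h)}\xi_1'=\xi_1$. Your ``disjoint case'' argument does the same thing with Glimm's lemma in place of excision, and it is valid --- but it is valid \emph{unconditionally}. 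Glimm's lemma requires only that $\pi_1(\caA_\Lambda)\cap\caK(\caH_1)=\{0\}$ (automatic here since $\caA_\Lambda$ is simple and infinite-dimensional), and then every state on $\caA_\Lambda$, in particular $\omega_0$, is a weak-$*$ limit of vector states of $\pi_1$; disjointness is never used. So the case split is unnecessary.

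Worse, the ``equivalent case'' branch as you have written it does not work. The unitary $V:\caH_0\to\caH_1$ intertwines only $\pi_0\vert_{\caA_\Lambda^{(0)}}$ and $\pi_1\vert_{\caA_\Lambda^{(0)}}$. For odd $a$ there is no reason for $V\pi_0(a)V^*$ to equal $\pi_1(a)$; the two representations $V\pi_0V^*$ and $\pi_1$ of $\caA_\Lambda$ restrict identically to $\caA_\Lambda^{(0)}$ but may differ on the odd part (e.g.\ by the grading automorphism), so $\braket{V\xi_0}{\pi_1(a)V\xi_0}$ need not equal $\braket{\xi_0}{\pi_0(a)\xi_0}$. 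Your proposed repair via $a=u_\Lambda b$ does not close this gap, because $V\pi_0(u_\Lambda)V^*$ again need not equal $\pi_1(u_\Lambda)$: the hypothesis $\pi_i(\caA_\Lambda^{(0)})''=\caB(\caH_i)$ tells you these operators live in $\caB(\caH_1)$, but it gives no control relating them. The correct move is simply to discard the intertwiner argument altogether and run the Glimm (or excision) plus Kadison-transitivity argument in all cases, which is precisely what the paper does.
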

\begin{proof}
Let $\omega_i:=\braket{\xi_i}{\pi_i(\cdot)\xi_i}$ $i=0,1$
be pure states on $\caA_{\Lambda}$.
By Lemma B.1 \cite{GS},
there exist $f\in \lmk \caA_{\Lambda}\rmk_{+1}$ and 
a unit vector $\xi_1'\in \caH_1$ 
such that
\begin{align}
\begin{split}
\pi_1(f)\xi_1'=\xi_1',\\
\lV
f\lmk a-\omega_0(a)f\rmk
\rV<\varepsilon,\quad a\in \caF.
\end{split}
\end{align}
Because we now have $\pi_1\lmk\caA_{\Lambda}^{(0)}\rmk''=\caB(\caH_1)$,
for unit vectors $\xi_1,\xi_1'\in \caH_1$ there exists
a self-adjoint element $h\in\caA^{(0)}_\Lambda$
such that $e^{i\pi_1(h)}\xi_1'=\xi_1$
by the Kadison transitivity.
For this $h$ we have
\begin{align}
\begin{split}
\lv
\braket{\xi_0}{\pi_0(a)\xi_0}-
\braket{\xi_1}{\pi_1\Ad(e^{ih})(a)\xi_1}
\rv
=\lv
\braket{\xi_1'}
{\pi_1(f) \lmk\omega_0(a)-\pi_1(a)\rmk \pi_1(f)\xi_1'}
\rv<\varepsilon,\quad a\in \caF.
\end{split}
\end{align}
\end{proof}
\begin{lem}\label{lem9}
For any $\varepsilon>0$ and a finite subset $\Lambda_0\subset \Lambda$,
there exists $\tdnhs(\varepsilon,\Lambda_0)>0$
satisfying the following:
For any $(\caH,\pi)$ an irreducible representation of $\caA_{\Lambda}$
with $\pi\lmk\caA_\Gamma^{(0)}\rmk''=\caB(\caH)$
and unit vectors $\xi,\eta\in \caH$
satisfying
\begin{align}
\begin{split}
\pi(x^*)\xi\perp \pi(y^*)\eta,\\
\lv
\braket{\xi}{\pi(xy^*)\xi}-\braket{\eta}{\pi(xy^*)\eta}
\rv<\tdnhs(\varepsilon,\Lambda_0),\\
\text{for any }\quad x,y\in \{\eijz I{I_0}\}_I,
\end{split}
\end{align}
there exist self-adjoint operators $h\in \lmk \caA_{\Lambda}^{(0)}\rmk_{+1}$
and $k\in \caA_{\Lambda}^{(0)}$
such that
\begin{align}
\begin{split}
\bar h:=\sum_I \eijz I{I_0} h \eijz {I_0} I\in \lmk \caA_{\Lambda\setminus \Lambda_0}^{(0)}\rmk_{+1},\\
\lV
e^{i\pi \pi (\bar h)}\xi-\eta
\rV<\frac{1}{\sqrt 2}\dro\lmk \frac{\varepsilon}{8}\rmk\\
\lV
e^{itk}-\unit
\rV\le \frac{\varepsilon}{8}, \quad t\in [0,1],\\
e^{i\pi(k)}e^{i\pi\pi(\bar h)}\xi=\eta.
\end{split}
\end{align}
\end{lem}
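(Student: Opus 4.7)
The plan is to mirror the proof of Lemma \ref{lem4} but within the even subalgebra, exploiting the hypothesis $\pi(\caA_\Lambda^{(0)})''=\caB(\caH)$ which replaces the covariant representation structure used there. First I would set $\tdnhs(\varepsilon,\Lambda_0):=\dnh(\varepsilon,2^{d|\Lambda_0|})$ (or the analogous dimension-matched constant from Lemma B.6 of \cite{GS}), since the relevant matrix unit system $\{e_{II_0}^{(\Lambda_0)}\}_I$ now spans $\caA_{\Lambda_0}$ rather than a doubled crossed product.

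Next, under the orthogonality and closeness hypotheses, I would apply the non-covariant version of Lemma B.4 of \cite{GS} (with the representation $\pi$ of $\caA_\Lambda$ replacing $\pi\times\Gamma$, and $\{e_{II_0}^{(\Lambda_0)}\}_I$ in place of $\caG_{\Lambda_0}$). This produces some positive contraction $h\in(\caA_\Lambda)_{+,1}$ with
\begin{align*}
\lV \pi(\bar h)(\xi+\eta)\rV,\quad \lV(\unit-\pi(\bar h))(\xi-\eta)\rV
<\frac{\dro(\varepsilon/8)\, e^{-\pi}}{\sqrt 2},
\end{align*}
where $\bar h:=\sum_I e_{II_0}^{(\Lambda_0)} h\, e_{I_0 I}^{(\Lambda_0)}$. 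Replacing $h$ by $(h+\Theta(h))/2$ if necessary, I can take $h$ even; then $\bar h$ is an even element that commutes with $\caA_{\Lambda_0}$ by its twirled structure.

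To transport $\bar h$ into $\caA_{\Lambda\setminus\Lambda_0}^{(0)}$, I would invoke the Fermionic factorization (Lemma 5.4 of \cite{1dFermi}): there exist irreducible representations $(\caH_1,\pi_1)$ of $\caA_{\Lambda_0}$ and $(\caH_2,\pi_2)$ of $\caA_{\Lambda\setminus\Lambda_0}$ and a unitary $W:\caH\to\caH_1\otimes\caH_2$ with $\Ad W\circ\pi=\pi_1\hat\otimes\pi_2$. Since $\bar h$ is even and commutes with $\caA_{\Lambda_0}$, $\Ad W(\pi(\bar h))$ has the form $\unit_{\caH_1}\otimes x$ for some $x\in\caB(\caH_2)_{+,1}$ that is even with respect to the induced grading of $\caH_2$. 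Using the hypothesis $\pi(\caA_\Lambda^{(0)})''=\caB(\caH)$, the corresponding $\pi_2(\caA_{\Lambda\setminus\Lambda_0}^{(0)})''=\caB(\caH_2)$ (this is the non-covariant analog of the condition that was used in Lemma \ref{lem4}). By Kadison transitivity applied to $\pi_2$ on the even algebra, I can find $\hat h\in(\caA_{\Lambda\setminus\Lambda_0}^{(0)})_{+,1}$ with $\pi_2(\hat h)$ acting on $W(\xi\pm\eta)$ closely enough to $x$ so that, setting this $\hat h$ to play the role of the renamed ``$h$'' in the statement, the bound $\lV e^{i\pi\pi(\bar h)}\xi-\eta\rV<\frac{1}{\sqrt 2}\dro(\varepsilon/8)$ follows by the same triangle-inequality argument as in Lemma \ref{lem4}.

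Finally, having $e^{i\pi\pi(\bar h)}\xi$ within distance $\frac{1}{\sqrt 2}\dro(\varepsilon/8)$ of $\eta$, I would invoke Kadison transitivity on $\caA_\Lambda^{(0)}$ (whose image is all of $\caB(\caH)$) together with Notation B.3 of \cite{GS} to produce a self-adjoint $k\in\caA_\Lambda^{(0)}$ with $\lV k\rV<\dn(\varepsilon/8)$, $\sup_{t\in[0,1]}\lV e^{itk}-\unit\rV\le\varepsilon/8$, and $e^{i\pi(k)}e^{i\pi\pi(\bar h)}\xi=\eta$. I expect the main obstacle to be the careful bookkeeping when pushing $\bar h$ across the graded tensor factorization: one must check that the averaging in the definition of $\bar h$ interacts correctly with the grading (so $\bar h\in\caA_{\Lambda\setminus\Lambda_0}^{(0)}$, not merely in the graded commutant of $\caA_{\Lambda_0}$), and must confirm that the absence of $\lambda_1$-averaging (unnecessary here, since there is no $\Gamma$) still yields an element of the correct localization; modulo this, all remaining estimates are imported verbatim from the argument of Lemma \ref{lem4} and the Kadison-transitivity/functional-calculus lemmas of \cite{GS}.
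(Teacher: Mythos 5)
Your overall architecture mirrors the paper's, and your final Kadison-transitivity step to produce $k$ is correct. But there is a genuine gap in how you arrange for $\bar h$ to be even, which then propagates to the rest of the argument.

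The replacement $h\mapsto(h+\Theta(h))/2$ does not preserve the approximation bounds supplied by the Lemma B.6 argument. That $h$ is constructed by Kadison transitivity precisely so that $\bar h=\sum_I e_{II_0}^{(\Lambda_0)}\,h\,e_{I_0 I}^{(\Lambda_0)}$ approximately annihilates $\xi+\eta$ and approximately fixes $\xi-\eta$. After the proposed averaging, using $\Theta(e_{IJ}^{(\Lambda_0)})=(-1)^{|I|+|J|}e_{IJ}^{(\Lambda_0)}$, one has $\Theta(\bar h)=\overline{\Theta(h)}$, which is again a positive contraction, but there is no control whatsoever on $\lV\pi(\Theta(\bar h))(\xi+\eta)\rV$ or $\lV(\unit-\pi(\Theta(\bar h)))(\xi-\eta)\rV$: the vectors $\xi,\eta$ are not homogeneous, so nothing relates the action of $\Theta(\bar h)$ to that of $\bar h$ on $\xi\pm\eta$. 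After averaging, the smallness of $\lV\pi(\bar h)(\xi+\eta)\rV$ is lost, and the triangle-inequality estimate you carry over from Lemma~\ref{lem4} no longer produces the bound $\lV e^{i\pi\pi(\bar h)}\xi-\eta\rV<\frac{1}{\sqrt 2}\dro(\varepsilon/8)$.

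The correct way to use the hypothesis is to make the evenness of $h$ structural rather than enforced post hoc. The element $h$ in the B.6 argument is obtained by Kadison transitivity for the irreducible representation $\pi$ of $\caA_\Lambda$. Since $\pi(\caA_\Lambda^{(0)})''=\caB(\caH)$, the even subalgebra already acts irreducibly, so one can run the transitivity step entirely inside $\caA_\Lambda^{(0)}$ and obtain $h\in(\caA_\Lambda^{(0)})_{+,1}$ directly, with the same bounds. Then $\bar h$ is automatically even, commutes with $\caA_{\Lambda_0}$ by the twirl, and by Lemma~4.15 of \cite{aramori} one has $(\caA_\Lambda^{(0)})\cap\caA_{\Lambda_0}'\subset\caA_{\Lambda\setminus\Lambda_0}^{(0)}$, so $\bar h\in(\caA_{\Lambda\setminus\Lambda_0}^{(0)})_{+,1}$ immediately. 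This also shows your detour through the graded tensor factorization of Lemma~5.4 of \cite{1dFermi} is unnecessary here: that machinery is needed in Lemma~\ref{lem4} only because there $\bar h$ involves the crossed-product unitary $\lambda_1$ and hence lives in $C^*(\Sigma_\Lambda)$, not in $\caA_\Lambda$; in the present case $\bar h\in\caA_\Lambda$ from the start, and the aramori relative-commutant identity suffices.
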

\begin{proof}
With $\dnh(\varepsilon,n)$ given in Lemma B.6 \cite{GS},
we set
$\tdnhs(\varepsilon,\Lambda_0):=\dnh(\varepsilon, 2^{d|\Lambda_0|})$.
Suppose that $(\caH,\pi)$ is an irreducible representation of $\caA_{\Lambda}$
satisfying the conditions above with respect to this $\tdnhs(\varepsilon,\Lambda_0)$.
Applying Lemma B.6 \cite{GS}, we obtain
$h\in \lmk\caA_{\Lambda}\rmk_{+,1}$
such that
\begin{align}
\begin{split}
\lV
\bar h\lmk\xi+\eta\rmk
\rV<\frac1{4\sqrt 2}\delta_{2,a}\lmk\frac\varepsilon 8\rmk e^{-\pi},\\
\lV
\lmk\unit-\bar h\rmk\lmk\xi-\eta\rmk
\rV<\frac1{4\sqrt 2}\delta_{2,a}\lmk\frac\varepsilon 8\rmk e^{-\pi}
\end{split}
\end{align}
holds for 
\begin{align}
\bar h:=\sum_I \eijz I {I_0} h \eijz {I_0}I. 
\end{align}
This $h$ is obtained via the Kadison transitivity, using the irreducibility of $\pi$
(see section 3 of \cite{kos}, also 5.6.1 of \cite{Farah}).
Because we have $\pi\lmk\caA_\Lambda^{(0)}\rmk''=\caB(\caH)$,
in fact we can choose this $h$ from $\lmk\caA_{\Lambda}^{(0)}\rmk_{+,1}$.
Doing so, we have
%
%
%
%
\begin{align}
\bar h:=\sum_I \eijz I {I_0} h \eijz {I_0} I
\in\lmk \caA_\Lambda^{(0)}\rmk_{+1}\cap \caA_{\Lambda_0}'\subset \lmk \caA_{\Lambda\setminus \Lambda_0}^{(0)}\rmk_{+1}.
\end{align}
The last inclusion comes from Lemma 4.15 \cite{aramori}. 
For this $\bar h$,  
by the argument in  Lemma 4.9 of \cite{GS} (eq. (91)),
we have
\begin{align}
\lV
e^{i\pi \pi(\bar h)}\xi-\eta
\rV
<\frac 1{\sqrt 2} \dro\lmk\frac \varepsilon 8\rmk.
\end{align}
Because  we have $\pi\lmk\caA_\Lambda^{(0)}\rmk''=\caB(\caH)$,
by the Kadison transitivity, (Theorem B.4 \cite{GS})
we get some self-adjoint $k\in \caA_\Lambda^{(0)}$
such that
\begin{align}
e^{i\pi(k)} e^{i\pi \pi(\bar h)}\xi =\eta,\quad
\sup_{t\in [0,1]}\lV
e^{ikt}-\unit
\rV\le \frac\varepsilon 8.
\end{align}
\end{proof}
\begin{lem}\label{lem10}For any $\varepsilon>0$ and a finite subset $\caF$
of $\lmk \caA_{\Lambda}\rmk_1$, set
$\tdsjs\lmk \varepsilon,\caF\rmk:=\frac 12 \tdnhs\lmk \epsilon,\Lambda(\varepsilon,\caF)\rmk$
and $\tgsj \lmk \varepsilon,\caF\rmk:=\caG_{\Lambda(\varepsilon,\caF)}$,
where $\tdnhs\lmk \epsilon,\Lambda_0\rmk$ is defined in Lemma \ref{lem9}.
Then for any irreducible representation $(\caH,\pi)$ of 
$\caA_\Lambda$ with $\pi(\caA_\Lambda^{(0)})''=\caB(\caH)$,
and unit vectors $\xi,\eta\in \caH$
such that
\begin{align}\label{ichi10}
\lv
\braket{\xi}{\pi(x y^*)\xi}-\braket{\eta}{\pi(x y^*)\eta}
\rv<\tdsjs(\varepsilon,\caF),\quad x,y\in \tgsj\lmk \varepsilon,\caF\rmk,
\end{align}
there exists a norm-continuous path $v:[0,1]\to \caU(\caA_\Lambda^{(0)})$
such that 
\begin{align}
\begin{split}
\eta=\pi\lmk v(1)\rmk\xi,\quad v(0)=\unit,\\
\sup_{t\in [0,1]}\lV \Ad\lmk v(t)\rmk(a)-a\rV<\varepsilon,\quad
a\in \caF.
\end{split}
\end{align}
\end{lem}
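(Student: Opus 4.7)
The plan is to mirror the proof of Lemma \ref{lem5} step by step, replacing the role of Lemma \ref{lem4} by Lemma \ref{lem9}. The structure becomes somewhat simpler because we work with an irreducible representation of $\caA_\Lambda$ itself rather than a covariant representation of $\Sigma_\Lambda$, so no crossed-product bookkeeping is required.

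First, I would set $\caK := \spn\{\pi(xy^*)\xi,\ \pi(xy^*)\eta \mid x, y \in \tgsj(\varepsilon, \caF)\} \subset \caH$, which is finite-dimensional. Since $\caA_\Lambda = \SDC(\mkh_\Lambda, \mkC_\Lambda)$ is simple (a UHF-type CAR algebra), Glimm's Lemma (Lemma 5.2.5 of \cite{Farah}) produces a unit vector $\zeta \in \caK^\perp$ satisfying
\[
\lv\braket{\xi}{\pi(xy^*)\xi} - \braket{\zeta}{\pi(xy^*)\zeta}\rv < \tdsjs(\varepsilon, \caF), \quad x, y \in \tgsj(\varepsilon, \caF).
\]
Combining this with the hypothesis (\ref{ichi10}) gives the analogous bound for $\eta$ in place of $\xi$, with the right-hand side bounded by $2\tdsjs(\varepsilon, \caF) = \tdnhs(\varepsilon, \Lambda(\varepsilon, \caF))$ after the routine translation from $\tgsj(\varepsilon,\caF)$ to $\{\eijz{I}{I_0}\}_I$. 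Moreover, because $\zeta \in \caK^\perp$, the orthogonality hypotheses $\pi(y^*)\xi \perp \pi(x^*)\zeta$ and $\pi(y^*)\eta \perp \pi(x^*)\zeta$ of Lemma \ref{lem9} are automatic for both pairs $(\xi, \zeta)$ and $(\eta, \zeta)$.

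Applying Lemma \ref{lem9} to each of these pairs produces $\bar h_1, \bar h_2 \in (\caA_{\Lambda \setminus \Lambda(\varepsilon, \caF)}^{(0)})_{+,1}$ and self-adjoint $k_1, k_2 \in \caA_\Lambda^{(0)}$ with $\sup_{t \in [0,1]} \lV e^{itk_j} - \unit\rV \le \varepsilon/8$, and with
\[
\pi\lmk e^{ik_1} e^{i\pi \bar h_1}\rmk\xi = \zeta, \qquad \pi\lmk e^{ik_2} e^{i\pi \bar h_2}\rmk\eta = \zeta.
\]
Setting $w := e^{-i\pi \bar h_2} e^{-ik_2} e^{ik_1} e^{i\pi \bar h_1} \in \caU(\caA_\Lambda^{(0)})$, we have $\pi(w)\xi = \eta$. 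I would then construct $v: [0,1] \to \caU(\caA_\Lambda^{(0)})$ by concatenating four affine interpolations on $[0,\tfrac14],[\tfrac14,\tfrac12],[\tfrac12,\tfrac34],[\tfrac34,1]$ that successively switch on $e^{i\pi s \bar h_1}$, $e^{isk_1}$, $e^{-isk_2}$, $e^{-i\pi s \bar h_2}$ as $s$ ranges over $[0,1]$, so that $v(0) = \unit$ and $v(1) = w$.

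It remains to verify $\sup_{t \in [0,1]} \lV \Ad v(t)(a) - a\rV < \varepsilon$ for $a \in \caF$. Each such $a$ lies within $\varepsilon/16$ of some element of $\caA_{\Lambda(\varepsilon,\caF)}$ by the choice of $\Lambda(\varepsilon,\caF)$, while $\bar h_j \in \caA_{\Lambda \setminus \Lambda(\varepsilon, \caF)}^{(0)}$ commutes with all of $\caA_{\Lambda(\varepsilon, \caF)}$ by Lemma 4.15 of \cite{aramori}; hence each $e^{it\pi \bar h_j}$-conjugation perturbs $a$ by at most $\varepsilon/8$, while each $e^{itk_j}$-conjugation perturbs $a$ by at most $2\lV e^{itk_j} - \unit\rV \le \varepsilon/4$. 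Summing the four contributions via the triangle inequality yields at most $3\varepsilon/4 < \varepsilon$. The main obstacle I anticipate is not structural but arithmetic: keeping track of the constants through the chain Glimm $\to$ Lemma \ref{lem9} $\to$ triangle inequality so that the final bound is strictly below $\varepsilon$ uniformly in $t$, together with a careful verification that $\bar h_j$ is genuinely localized in $\caA_{\Lambda \setminus \Lambda(\varepsilon,\caF)}^{(0)}$ (which is built into the conclusion of Lemma \ref{lem9}).
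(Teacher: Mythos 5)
Your proposal is correct and follows essentially the same route as the paper: Glimm's Lemma to produce a far-away vector $\zeta$, Lemma \ref{lem9} applied to the pairs $(\xi,\zeta)$ and $(\eta,\zeta)$, and then a piecewise concatenation of the resulting exponentials. The only difference is presentational: where the paper writes ``out of these $h_1,h_2,k_1,k_2$, as in the proof of Lemma 4.12 of [GS], we can obtain the desired $v(t)$,'' you have unpacked that reference explicitly (the four-piece affine path, the telescoping $\|\Ad(u_1\cdots u_4)(a)-a\|\le\sum_j\|\Ad(u_j)(a)-a\|$, and the $\varepsilon/8+\varepsilon/4+\varepsilon/4+\varepsilon/8=3\varepsilon/4<\varepsilon$ bookkeeping), which is a valid expansion of the same argument.
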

\begin{proof}
Because $\caA_\Lambda$ is simple, by Glimm's Lemma, there exists a unit vector $\zeta$
such that
\begin{align}
\begin{split}
\zeta\perp\pi(xy^*)\xi,\pi(xy^*)\eta,\quad x,y\in \tgsj(\varepsilon,\caF),\\
\lv
\braket{\xi}{\pi(xy^*)\xi}- \braket{\zeta}{\pi(xy^*)\zeta}
\rv
<\tdsjs(\varepsilon,\caF)<\tdnhs(\varepsilon,\Lambda(\varepsilon,\caF)),\quad 
x,y\in \tgsj(\varepsilon,\caF).
\end{split}
\end{align}
From the latter inequality combined with (\ref{ichi10}),
we have
\begin{align}
\lv
\braket{\eta}{\pi(xy^*)\eta}- \braket{\zeta}{\pi(xy^*)\zeta}
\rv
<\tdnhs(\varepsilon,\Lambda(\varepsilon,\caF)).
\end{align}
By Lemma \ref{lem9},
there exist $h_1,h_2\in \lmk \caA_{\Lambda\setminus \Lambda(\varepsilon, \caF)}^{(0)}\rmk_{+,1}$
such that
\begin{align}
\begin{split}
\lV
e^{i\pi \pi(h_1)}\xi-\zeta
\rV
<\frac1 {\sqrt 2}\dro\lmk\frac \varepsilon 8\rmk,\quad
\lV
e^{i\pi \pi(h_2)}\eta-\zeta
\rV
<\frac1 {\sqrt 2}\dro\lmk\frac \varepsilon 8\rmk,
\end{split}
\end{align}
and self-adjoint $k_1,k_2\in \caA_\Lambda^{(0)}$
such that
\begin{align}
\begin{split}
\lV e^{it \pi(k_1)}-\unit\rV\le \frac \varepsilon 8,\quad
\lV e^{it \pi(k_2)}-\unit\rV\le \frac \varepsilon 8,\quad t\in [0,1],\\
e^{i\pi(k_1)}e^{i\pi\pi(h_1)}\xi=\zeta,\quad
e^{i\pi(k_2)}e^{i\pi\pi(h_2)}\eta=\zeta.
\end{split}
\end{align}
Out of these $h_1,h_2,k_1,k_2$,
as in the proof of Lemma 4.12 of \cite{GS}, we can obtain the desired $v(t)$.
\end{proof}
\begin{lem}\label{lem12}
Let $\varepsilon>0$ and $\caF$ a finite subset 
of $\lmk \caA_{\Lambda}\rmk_1$.
Let $(\caH_i,\pi_i)$, $i=0,1$ be irreducible representations of
$\caA_\Lambda$ such that $\pi_i\lmk\caA_{\Lambda}^{(0)}\rmk''=\caB(\caH_i)$.
Let $\xi_i\in \caH_i$, $i=0,1$ be unit vectors
satisfying
\begin{align}
\lv
\braket{\xi_0}{\pi_0(xy^*)\xi_0}-\braket{\xi_1}{\pi_1(xy^*)\xi_1}
\rv
<\frac 12 \tdsjs(\varepsilon,\caF),\quad
x,y\in \tgsj(\varepsilon,\caF),
\end{align}
where $ \tdsjs(\varepsilon,\caF)$, $ \tgsj(\varepsilon,\caF)$ are defined in Lemma \ref{lem10}.
Then for any $\varepsilon'>0$, finite subset $\caF'\subset \caA_\Lambda$,
there exists a norm-continuous path $v: [0,1]\to \caU(\caA_\Lambda^{(0)})$
such that
\begin{align}
\begin{split}
v(0)=\unit,\\
\lv
\braket{\xi_0}{\pi_0(a)\xi_0}-
\braket{\xi_1}{\pi_1\circ \Ad\lmk v(1)\rmk(a)\xi_1}
\rv<\varepsilon',\quad a\in \caF',\\
\lV
\Ad\lmk v(t)\rmk(y)-y
\rV<\varepsilon,\quad y\in \caF, t\in[0,1].
\end{split}
\end{align}
\end{lem}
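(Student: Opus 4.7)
The plan is to mirror the proof of Lemma \ref{lem7}, replacing each covariant ingredient by its non-covariant counterpart developed earlier in this section. Concretely, Lemma \ref{lem2} is replaced by Lemma \ref{lem8}, and Lemma \ref{lem5} is replaced by Lemma \ref{lem10}. The quantitative set-up $\tdsjs(\varepsilon,\caF):=\tfrac12\tdnhs(\varepsilon,\Lambda(\varepsilon,\caF))$ and $\tgsj(\varepsilon,\caF):=\caG_{\Lambda(\varepsilon,\caF)}$ is designed precisely so that this replacement goes through verbatim.

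First, I would invoke Lemma \ref{lem8} with the irreducible representations $(\caH_i,\pi_i)$, vectors $\xi_i$, the finite set $\caF'\cup\tgsj(\varepsilon,\caF)\cdot\lmk\tgsj(\varepsilon,\caF)\rmk^*$, and tolerance $\min\{\varepsilon',\tfrac12\tdsjs(\varepsilon,\caF)\}$ in place of $\varepsilon$. Note that the factoriality hypothesis $\pi_i(\caA_\Lambda^{(0)})''=\caB(\caH_i)$ is available for both $i=0,1$, so Lemma \ref{lem8} applies. This produces a self-adjoint $h\in\caA_\Lambda^{(0)}$ such that
\[
\lv \braket{\xi_0}{\pi_0(a)\xi_0}-\braket{\xi_1}{\pi_1\circ\Ad(e^{ih})(a)\xi_1}\rv<\min\{\varepsilon',\tfrac12\tdsjs(\varepsilon,\caF)\}
\]
for every $a$ in the above set.

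Next, set $\eta_1:=e^{-i\pi_1(h)}\xi_1\in\caH_1$, so that $\braket{\eta_1}{\pi_1(a)\eta_1}=\braket{\xi_1}{\pi_1\circ\Ad(e^{ih})(a)\xi_1}$. Combining the estimate of Step~1 with the hypothesis of the lemma, the triangle inequality applied to $a=xy^*$ for $x,y\in\tgsj(\varepsilon,\caF)$ yields
\[
\lv\braket{\xi_1}{\pi_1(xy^*)\xi_1}-\braket{\eta_1}{\pi_1(xy^*)\eta_1}\rv<\tfrac12\tdsjs(\varepsilon,\caF)+\tfrac12\tdsjs(\varepsilon,\caF)\cdot 2=\tdnhs(\varepsilon,\Lambda(\varepsilon,\caF))\cdot\tfrac12+\cdots,
\]
and the choice of constants makes this bound $<\tdsjs(\varepsilon,\caF)\cdot 2=\tdnhs$, which is precisely the hypothesis of Lemma \ref{lem10} on the pair $(\eta_1,\xi_1)$ in the representation $(\caH_1,\pi_1)$. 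Applying Lemma \ref{lem10} with these two vectors (in the order $\eta_1,\xi_1$), the finite set $\caF$, and tolerance $\varepsilon$ furnishes a norm-continuous path $v:[0,1]\to\caU(\caA_\Lambda^{(0)})$ with $v(0)=\unit$, $\xi_1=\pi_1(v(1))\eta_1$, and $\sup_{t\in[0,1]}\lV\Ad(v(t))(y)-y\rV<\varepsilon$ for $y\in\caF$.

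Finally, I would verify the desired estimate. From $\pi_1(v(1))^*\xi_1=\eta_1$ we get $\braket{\xi_1}{\pi_1\circ\Ad(v(1))(a)\xi_1}=\braket{\eta_1}{\pi_1(a)\eta_1}=\braket{\xi_1}{\pi_1\circ\Ad(e^{ih})(a)\xi_1}$, and Step~1 then gives the required $\varepsilon'$-closeness to $\braket{\xi_0}{\pi_0(a)\xi_0}$ for $a\in\caF'$. The only genuine work lies in checking the constants in Step~2 so that the two halves of the triangle inequality add up to the threshold $\tdsjs(\varepsilon,\caF)\cdot 2=\tdnhs(\varepsilon,\Lambda(\varepsilon,\caF))$ demanded by Lemma \ref{lem10}; this is a routine bookkeeping matter, entirely analogous to the covariant case handled in Lemma \ref{lem7}.
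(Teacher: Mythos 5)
Your proposal is correct and follows exactly the approach the paper itself takes, which is to run the proof of Lemma~\ref{lem7} with Lemma~\ref{lem8} in place of Lemma~\ref{lem2} and Lemma~\ref{lem10} in place of Lemma~\ref{lem5}. The inline arithmetic in your Step~2 is garbled (the triangle inequality simply gives $\tfrac12\tdsjs(\varepsilon,\caF)+\tfrac12\tdsjs(\varepsilon,\caF)=\tdsjs(\varepsilon,\caF)$, which is precisely the threshold in (\ref{ichi10})), but the surrounding reasoning and the conclusion are right.
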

\begin{proof}
The proof of this Lemma is the same as that of
Lemma \ref{lem7}, using Lemma \ref{lem8} and Lemma \ref{lem10}.
\end{proof}
\begin{proofof}[Proposition \ref{lem11}]
Having Lemma \ref{lem8} and Lemma \ref{lem12}, the proof of Proposition \ref{lem11}
is the same as that of Proposition 4.2 of \cite{GS}.
\end{proofof}
Now we are ready to prove the main proposition of this section, Proposition \ref{mainprop}.
\begin{proofof}[Proposition \ref{mainprop}]
Let $\Theta_\sigma:=\Theta_{\Lambda_\sigma}$ be the grading operator on
$\caA_{\Lambda_\sigma}=\sdckc{\sigma}$ for $\sigma=L,R$.\\
\underline{Case (a)}\\
In this case, $\omega_i$, $i=0,1$ has a GNS representation
of the form
$(\caH_{L,i}\otimes\caH_{R,i},\pi_{L,i}\hat\otimes \pi_{R,i}, \Omega_i)$
with  irreducible covariant representations
$(\caH_{\sigma, i},\pi_{\sigma,i},\Gamma_{\sigma,i})$ of $\Sigma_{\sigma,i}:=
(\bbZ_2, \caA_{\Lambda_\sigma}, \Theta_{\Lambda_\sigma})$, $\sigma=L,R$.
Let $(\caH_{\sigma, i, \pm},\pi_{\sigma,i,\pm})$ be the decomposition associated to
$(\caH_{\sigma, i},\pi_{\sigma,i},\Gamma_{\sigma,i})$ .
Choose and fix unit vectors $\xi_{\sigma,i}\in \caH_{\sigma, i, +}$, $i=0,1$, $\sigma=L,R$,
and set $\tilde \xi_{\sigma,i}:=\xi_{\sigma,i}\oplus 0\in \caH_{\sigma, i,+}\oplus \caH_{\sigma, i,-}=
\caH_{\sigma, i}$.
Then 
$\psi_{\sigma,i}:=\braket{\tilde \xi_{\sigma,i}}{\pi_{\sigma,i}(\cdot)\tilde \xi_{\sigma,i}}$, $i=0,1$
defines a pure homogeneous state on $\caA_{\Lambda_\sigma}$, $\sigma=L,R$
such that
\begin{align}
\omega_i\simeq \lmk \psi_{L,i}\hat\otimes \psi_{R,i}\rmk,\quad i=0,1.
\end{align}
 By Proposition \ref{lem6}, 
 there exist
 $\eta_\sigma\in\Aut^{(0)}\lmk\caA_{\Lambda_\sigma}\rmk$
 such that $\psi_{\sigma 1}=\psi_{\sigma 0}\circ\eta_\sigma$, $\sigma=L,R$.
 From this we have 
$\omega_1\simeq \omega_0\circ\lmk \eta_L\hat\otimes \eta_R\rmk$.\\
\underline{Case (b)}\\
In this case 
$\omega_i$, $i=0,1$ has a GNS representation
of the form
$(\caH_{L,i}\otimes \caH_{R,i}\otimes \bbC^2, \pi_i,\Omega_i)$.
There are irreducible representations  $\pi_{\sigma, i}$  
 of $\caA_{\Lambda_\sigma}^{(0)}$, on $\caH_{\sigma,i}$, $\sigma=L,R$, $i=0,1$
 such that
 \begin{align}
 \pi_i\lmk a\hat\otimes b\rmk
 =\pi_{L,i}(a)\otimes\pi_{R,i}(b)\otimes \unit_{\bbC^{2}},\quad
 a\in \caA_{\Lambda_L}^{(0)}, \quad b\in \caA_{\Lambda_R}^{(0)}.
 \end{align}  
 We have
 \begin{align}
 \pi_i\lmk\caA_{\Lambda_L}\rmk''
 =\caB(\caH_{L,i})\otimes\bbC\unit_{\caH_{R,i}}\otimes\lmk \bbC\sigma_z+\bbC\unit\rmk,\quad
 \pi_i\lmk\caA_{\Lambda_R}\rmk''
 =\bbC\unit_{\caH_{L,i}}\otimes \caB(\caH_{R,i})\otimes\lmk \bbC\sigma_x+\bbC\unit\rmk.
 \end{align}
%
%
%
%
%
%
%
%
%
%
%
From this form, there are representations 
$\rho_{i}^\sigma$ of $\caA_{\Lambda_\sigma}$
on $\caH_{\sigma i}\otimes  \bbC^2$
such that
\begin{align}
\pi_i(a)=\rho_i^L(a)\otimes \unit_{\caH_{Ri}},\quad a\in \caA_{\Lambda_L},\quad
\pi_i(b)=\unit_{\caH_{Li}}\otimes \rho_i^R(b),\quad b\in \caA_{\Lambda_R}.
\end{align}
Note that $\unit_{\caH_{Li}}\otimes \unit_{\caH_{Ri}}\otimes r_\pm^L$, 
$\unit_{\caH_{Li}}\otimes \unit_{\caH_{Ri}}\otimes r_\pm^R$
with $r_\pm^L:=\frac{\unit\pm \sigma_z}{2}$, $r_\pm^R:=\frac{\unit\pm \sigma_x}{2}$
 commute with
$\pi_i\lmk\caA_{\Lambda_L}\rmk''$
$\pi_i\lmk\caA_{\Lambda_R}\rmk''$, respectively.
Therefore, 
\begin{align}
\rho_{i\pm}^L(a)\otimes r_\pm^L:=\rho_i^L(a)\lmk \unih{Li}\otimes  r_\pm^L \rmk,\quad a\in \caA_{\Lambda_L},\quad
\rho_{i\pm}^R(b)\otimes r_\pm^R:=\rho_i^R(b)\lmk \unih{Ri}\otimes  r_\pm^R \rmk,\quad b\in \caA_{\Lambda_R}
\end{align}
define irreducible representations $\rho_{i\pm}^\sigma$ of $\caA_{\Lambda_\sigma}$
 on $\caH_{\sigma i }$, for $i=0,1$, $\sigma=L,R$.
Note also that 
\begin{align}
\rho_{i\pm}^\sigma\lmk\caA_{\Lambda_\sigma}^{(0)}\rmk''=\caB\lmk\caH_{\sigma i }\rmk.
\end{align}
 Because $\unih {Li}\otimes \unih {Ri}\otimes \sigma_y$ flips 
 $\unit_{\caH_{Li}}\otimes \unit_{\caH_{Ri}}\otimes r_{\pm}^\sigma$,
 we have
 \begin{align}\label{ppmm}
 \begin{split}
 \rho_{i\pm}^L\circ\Theta_L(a)=
 \rho_{i\mp}^L(a),\quad
 a\in \caA_{\Lambda_L},\\
 \rho_{i\pm}^R
 \circ\Theta_R(b)=
  \rho_{i\mp}^R(b)
  ,\quad
 b\in \caA_{\Lambda_R}.
 \end{split}
 \end{align}
 Choose unit vectors $\xi_{\sigma i}\in  \caH_{\sigma i}$ and define pure states
 \begin{align}
 \psi_{\sigma i}:=\braket{\xi_{\sigma i}}{\rho_{i+}^{\sigma}\lmk \cdot\rmk\xi_{\sigma i}},\quad
 i=0,1,\quad \sigma=L,R.
 \end{align}
 Hence applying Lemma \ref{lem11}, there exist
 automorphisms $\eta_{\sigma}\in \Aut^{(0)}\lmk \caA_{\Lambda_\sigma}\rmk$
 such that $\psi_{\sigma 1}=\psi_{\sigma 0}\circ\eta_\sigma$.
Then both of $(\caH_{\sigma 1}, \rho_{1+}^\sigma,\xi_{\sigma 1})$
and  $(\caH_{\sigma 0 }, \rho_{0+}^\sigma\circ\eta_\sigma,\xi_{\sigma 0})$
 are GNS triple of $\psi_{1\sigma}$.
 Therefore, there are unitaries $W_\sigma : \caH_{\sigma 0 }\to \caH_{\sigma 1 }$
such that $\Ad(W_\sigma)\circ  \rho_{0+}^\sigma\circ\eta_\sigma
=\rho_{1+}^\sigma$, $\sigma=L,R$. 
Combining this and (\ref{ppmm}),
we have $\Ad(W_\sigma)\circ  \rho_{0-}^\sigma\circ\eta_\sigma
=\rho_{1-}^\sigma$. 
Then we have
\begin{align}
\begin{split}
&\Ad\lmk W_L\otimes \unic\rmk\lmk
\rho_0^L\circ\eta_L(a)
\rmk
=\rho_1^L(a),\quad a\in \caA_{\Lambda_L},\\
&\Ad\lmk  W_R\otimes \unic\rmk\lmk
\rho_0^R\circ\eta_R(b)
\rmk
=\rho_1^R(b),\quad b\in \caA_{\Lambda_R}.
\end{split}
\end{align}
We claim $\pi_1=\Ad\lmk W_L\otimes W_R\otimes \unic\rmk\circ\pi _0\lmk
\eta_L\hat\otimes \eta_R\rmk$.
In fact we have
\begin{align}
\begin{split}
\Ad\lmk W_L\otimes W_R\otimes \unic\rmk\circ\pi _0\lmk
\eta_L\hat\otimes \eta_R\rmk(a)
=\rho_1^L(a)\otimes \unih{R1}=\pi_1(a),\quad a\in \caA_{\Lambda_L},\\
\Ad\lmk W_L\otimes W_R\otimes \unic\rmk\circ\pi _0\lmk
\eta_L\hat\otimes \eta_R\rmk(b)
=\unih{L1}\otimes \rho_1^R(b)=\pi_1(b),\quad b\in \caA_{\Lambda_R}.
\end{split}
\end{align}
Hence $\pi_1$ and $\pi _0\lmk
\eta_L\hat\otimes \eta_R\rmk$ are unitarily equivalent, and
$\omega_1\simeq \omega_0\circ \lmk
\eta_L\hat\otimes \eta_R\rmk$.

Suppose that none of (a), (b) occurs.
In this case, one of 
$\pi_{\omega_1}\lmk\caA_{\Lambda_R}\rmk''$, 
$\pi_{\omega_0}\lmk\caA_{\Lambda_R}\rmk''$
is a factor and the other is not.
Therefore, $\omega_1$ and $\omega_0$ are not quasi-equivalent.
\end{proofof}

We conclude this section with examples of the dichotomy of Lemma \ref{lem16}.
It is clear that $\omega^{(0)}$ satisfies (i) of Lemma \ref{lem16}.

Recall the basis projection $p$ (\ref{pxx}) we considered in Section \ref{introsec}, defining $\omega^{(0)}$.
Because $v_\tau$ commutes with $\mkC$,
$q:=v_\tau^* p v_\tau$
is also a basis projection of $(\mkh,\mkC)$.
We denote by $\omega^{(1)}$ the Fock state on $\al$
given by this basis projection $q=v_\tau^* p v_\tau$.
Note that $\omega^{(1)}=\omega^{(0)}\circ\tau$,
and $\omega^{(1)}\circ\tau=\omega^{(0)}$.
To describe $v_\tau^* p v_\tau$,
set for each $x\in \bbZ$
two dimensional space $\caK_{(x,0)}$
spanned by $\delta_{(2x-1,0)}$, $\delta_{(2x,0)}$.
Let $q_x$ be an orthogonal projection on
$\caK_{(x,0)}$ onto the 
one-dimensional space spanned by
$\delta_{(2x-1, 0)}+i\delta_{(2x,0)}$.
By the definition, we have
\begin{align}\label{qdec}
q=v_\tau^* p v_\tau
=Q^{(1)}_L\oplus
Q^{(1)}_R
\oplus
q_{0},
\end{align}
with
\begin{align}
Q^{(1)}_L:=p_{\bbZ_{\le -1}\times \lmk \bbZ\setminus\{0\}\rmk}
\oplus
\bigoplus_{x\in\bbZ_{\le -1}}
q_x,\quad
Q^{(1)}_R
:=p_{\bbZ_{\ge 0}\times \lmk \bbZ\setminus\{0\}\rmk}
\oplus
\bigoplus_{x\in\bbZ_{\ge 1}}
q_x,
\end{align}
which are projections on 
$l^2\lmk \bbZ_{\le -1}\times\lmk \bbZ\setminus\{0\}\rmk\rmk
\oplus l^2\lmk \bbZ_{\le -2}\times \{0\}\rmk
\subset \mkh_{H_L}$ and
$l^2\lmk \bbZ_{\ge 0 }\times\lmk \bbZ\setminus\{0\}\rmk\rmk
\oplus l^2\lmk \bbZ_{\ge 1}\times \{0\}\rmk
\subset\mkh_{H_R}$
respectively.
Noting that $q_0$ is a projection on 
a finite dimensional Hilbert space $\caK_{(0,0)}$,
we see that
there are homogeneous states $\varphi_L$,
$\varphi_R$ on $\al_{H_L}$, $\al_{H_R}$
such that $\omega^{(1)}$
is quasi-equivalent to $\varphi_L\hat\otimes\varphi_R$.
Hence $\omega^{(1)}$ satisfies the split property for $H_L-H_R$-cut.
Now we would like to show (ii) of Lemma \ref{lem16} occurs.
With a unitary $\vartheta:=\unit_{\mkh_{H_L}}\oplus -\unit_{\mkh_{H_R}}$,
we note that $\omega^{(1)}\circ\Theta_R$
is a Fock state on $\al$ given by the basis projection
$\vartheta q\vartheta^*$.
(Note that $\vartheta q\vartheta^*$
is a basis projection because $\vartheta$ and $\mkC$
commute.)
From the definition, this projection $\vartheta q\vartheta^*$
can be decomposed as
\begin{align}
\vartheta q\vartheta^*
=Q^{(1)}_L\oplus
Q^{(1)}_R
\oplus
\lmk\unit_{\caK_{(0,0)}}-q_0
\rmk.
\end{align}
Comparing this and (\ref{qdec}),
\begin{align}
q\wedge (\unit_{\mkh}-\vartheta q\vartheta^*)=q_0.
\end{align}
Noting $q_0$ is a one-rank projection, from Theorem 6.30 of \cite{EK},
$\omega^{(1)}\vert_{\caA^{(0)}}$ and $\omega^{(1)}\vert_{\caA^{(0)}}\Theta_R$
are not equivalent.
Hence (ii) of Lemma \ref{lem16}
holds for $\omega^{(1)}$.

Combining this with Proposition \ref{mainprop}, we conclude that
any homogeneous pure state satisfying the split property with respect to
the $H_L-H_R$-cut can be obtained out of either
$\omega^{(0)}$ or $\omega^{(1)}$ via an automorphism of the form
$\eta_L\hat\otimes \eta_R$.

\section{The $\PDz$-valued index}\label{indexsec}
Now we derive the $\PDz$-valued index out of SPT-phases.
\subsection{An Overview}
A brief description of the derivation is as follows.
From Theorem \ref{mo}, the ground states in SPT-phases are of the form
$\omega=\omega^{(0)}\circ\alpha$
with $\alpha\in \QAut(\caA)$.
Using the factorization property (\ref{xfac}),
we can show that the difference between $\omega^{(0)}\alpha\circ\beta_g^U\alpha^{-1}$
and $\omega^{(0)}$ is localized around $x$-axis, and in particular, 
$\omega^{(0)}\alpha\circ\beta_g^U\alpha^{-1}$ satisfies the split property.
Recall that then there is a dichotomy given by Lemma \ref{lem16}
for this $\omega^{(0)}\alpha\circ\beta_g^U\alpha^{-1}$.
Let us set $\ao g=0$ if (i) of Lemma \ref{lem16} occurs
and $\ao g=1$ if (ii) of Lemma \ref{lem16} occurs.
Applying the result Proposition \ref{mainprop} from the previous section,
it allows us to show
$\omega\circ\beta_g^U\simeq
\omega\circ\tau^{a_\omega(g)\epsilon}\lmk\eta_{g,L}^\epsilon\hat\otimes \eta_{g,R}^\epsilon\rmk
$ for $\epsilon=\pm$, with $\eta_{g,\sigma}^\epsilon$ automorphism localized in $C_\theta\cap H_\sigma$,
$\sigma=L,R$. (Proposition \ref{lem24}, Remark \ref{lem24'}.)

Next we decompose $\alpha$ as (\ref{afactoc}) where $\alpha_{L}$, $\alpha_{R}$ are automorphisms localized to the left and right infinite planes $H_{L}$, $H_{R}$
while $\Upsilon$ is localized in $\lmk C_{\theta}\rmk^{c}$.
We then have $\omega\simeq \lmk \omega_{L}\alpha_{L}\hat
\otimes \omega_{R}\alpha_{R}\rmk\circ\Upsilon$
with pure states $\omega_{L}$, $\omega_{R}$ on the left and right infinite planes.
Hence we obtain
$\lmk \omega_{L}\alpha_{L}\hat
\otimes \omega_{R}\alpha_{R}\rmk\circ\Upsilon
\circ\beta_g^U\simeq
\lmk \omega_{L}\alpha_{L}\hat
\otimes \omega_{R}\alpha_{R}\rmk\circ\Upsilon\circ\tau^{a_\omega(g)\epsilon}
\lmk\eta_{g,L}^\epsilon\hat\otimes \eta_{g,R}^\epsilon\rmk
$.

Setting
$\gamma_g^\epsilon:=\beta_g^U\lmk\eta_{g,L}^\epsilon\hat\otimes \eta_{g,R}^\epsilon\rmk
^{-1}\tau^{-a_\omega(g)\epsilon}$
for $\epsilon=\pm$, $g\in G$,
we have  $\lmk \omega_{L}\alpha_{L}\hat
\otimes \omega_{R}\alpha_{R}\rmk\circ\Upsilon\gamma_g^\epsilon\simeq 
\lmk \omega_{L}\alpha_{L}\hat
\otimes \omega_{R}\alpha_{R}\rmk\circ\Upsilon$.
Repeated use of this gives us
\[
\lmk \omega_{L}\alpha_{L}\hat
\otimes \omega_{R}\alpha_{R}\rmk\circ\Upsilon
\gamma_g^\epsilon\gamma_h^{\eg g\epsilon}{\gamma_{gh}^{\epsilon}}^{-1}
\simeq
\lmk \omega_{L}\alpha_{L}\hat
\otimes \omega_{R}\alpha_{R}\rmk\circ\Upsilon.
\]
But one can see that
\[
\gamma_g^\epsilon\gamma_h^{\eg g\epsilon}{\gamma_{gh}^{\epsilon}}^{-1}
=\widehat\bigotimes_{\sigma=L,R} \zeta_{g,h,\sigma}^\epsilon
\]
with $\zeta_{g,h,\sigma}^\epsilon\in \auz\lmk \cac{\theta}\sigma\rmk$
 (\ref{zetadefs}),
 and it commutes with $\Upsilon$. 
 Hence we obtain $\omega_{L}\alpha_{L}\zeta_{g,h,L}^\epsilon\hat
\otimes \omega_{R}\alpha_{R}\zeta_{g,h,R}^\epsilon
\simeq \omega_{L}\alpha_{L}\hat
\otimes \omega_{R}\alpha_{R}$, which implies 
$\omega_{\sigma}\alpha_{\sigma}\zeta_{g,h,\sigma}^\epsilon\simeq \omega_{\sigma}\alpha_{\sigma}$.
This equivalence gives us a unitary $u_\sigma^\epsilon(g,h)$ in
Lemma \ref{lem39}, implementing $\zeta_{g,h,\sigma}^\epsilon$
in the GNS-representation of $\omega_{\sigma}\alpha_{\sigma}$.
Note also that $\lmk \omega_{L}\alpha_{L}\hat
\otimes \omega_{R}\alpha_{R}\rmk\circ\Upsilon\gamma_g^\epsilon\simeq 
\lmk \omega_{L}\alpha_{L}\hat
\otimes \omega_{R}\alpha_{R}\rmk\circ\Upsilon$
means there is a unitary $W_g^\epsilon$ in the GNS-representation of
$\omega_{L}\alpha_{L}\hat
\otimes \omega_{R}\alpha_{R}$
implementing $\Upsilon\gamma_g^\epsilon\Upsilon^{-1}$.
The unitaries $W_g^\epsilon$ and $u_\sigma^\epsilon(g,h)$
are homogeneous with respect to the grading
and we denote their grade by $b^\epsilon_g$, $\kappa_\sigma^\epsilon(g,h)$.
Using the associativity of automorphisms, it turns out that 
$W_g^\epsilon$ and $u_R^\epsilon(g,h)$
 satisfy some non-trivial relation (\ref{lem41eq})  Lemma \ref{lem41}, with some
 phase factor $c^\epsilon(g,h,k)\in\Uo$.
 Considering the grading of this equation, we obtain some relation between
 $b^\epsilon_g$, $\kappa_\sigma^\epsilon(g,h)$.
 Juggling the relation (\ref{lem41eq}), we obtain
 relation between $\kappa_\sigma^\epsilon(g,h)$ and $c^\epsilon(g,h,k)\in\Uo$.
This gives an element of $\PDz$.

The difference of the choice of objects we introduced to define the index
can be implemented by some unitary.
Writing such relation down brings 
us to the equivalence relation we introduced
to define $\PDz$.

As we saw in Lemma \ref{lem29}, our translation $\tau$ shifts supports of automorphism 
which is sometimes inconvenient. However, Proposition \ref{lem32}
proved in Section \ref{sdautosec} allow us to 
modify them suitably. We occasionally use this Proposition \ref{lem32}
in this section.

\subsection{Effective excitation caused by $\beta_g^U$}
In this subsection, we investigate the effective excitation caused by $\beta_g^U$
on our SPT-ground state $\omega\in \SPT$.
We denote by $\lmk\caH_\sigma,\pi_\sigma,\Omega_\sigma\rmk$
a GNS triple of $\omega_{\sigma}$
for $\sigma=L,R$.
Because $\omega_{\sigma}$ is homogeneous,
there is a self-adjoint unitary $\Gamma_\sigma$ on $\caH_\sigma$
such that $\Gamma_\sigma \pi_\sigma(A)\Omega_\sigma=\pi_\sigma
\lmk\Theta_\sigma(A)\rmk\Omega_\sigma$ for all $A\in\al_{H_\sigma}$.
From (\ref{zlrten}), $(\caH_L\otimes \caH_R, \pi_L\hat\otimes\pi_R,
\Omega_L\otimes \Omega_R)$
is a GNS representation of $\omega^{(0)}$.

\begin{lem}\label{lem22}
For $\omega\in\SPT$ $\alpha\in \QAut(\caA)$, 
$0<\theta<\theta'<\frac\pi 2$, $(\alpha_U,\alpha_D,\Xi_L,\Xi_R)\in
\caD^H(\alpha,\theta)$ and $g\in G$,
there are homogeneous states 
$\varphi_{L}$
$\varphi_{R}$
on $\caA_{H_L\cap H_U\cap C_{\theta'}}$,
$\caA_{H_R\cap H_U\cap C_{\theta'}}$
such that
\begin{align}
\omega_{p_{H_U}}\alpha_U\beta_g^{U}\alpha_U^{-1}
\qe
\varphi_L\hat\otimes\varphi_R\hat\otimes 
\omega_{p_{H_U\cap C_{\theta'}^c}},\label{ichi-22}\\
\left.\omega_{p_{H_U}}\alpha_U\beta_g^{U}\alpha_U^{-1}\right\vert_{\caA_{H_U\cap C_{\theta'}}}
\qe \varphi_L\hat\otimes\varphi_R,\label{ni-22}\\
\left.\omega_{p_{H_U}}\alpha_U\beta_g^{U}\alpha_U^{-1}\right\vert_{\caA_{H_U\cap C_{\theta'}^c}}
\qe
\omega_{p_{H_U\cap C_{\theta'}^c}}.\label{san-22}
\end{align}
\end{lem}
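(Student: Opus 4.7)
The plan is to combine the $\beta_g$-invariance of $\omega$ with the horizontal factorization of $\alpha$ to show that, modulo inner automorphisms, the effect of $\alpha_U\beta_g^U\alpha_U^{-1}$ on $\omega_{p_{H_U}}$ is quasi-equivalent to an action of a product automorphism localized in $C_{\theta'}\cap H_U$, split across the vertical cut.

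The first step is to use the $\beta_g$-invariance of $\omega=\omega^{(0)}\circ\alpha$ together with the tensor factorization $\beta_g=\beta_g^U\hat\otimes\beta_g^D$ to rewrite
\begin{align*}
\omega^{(0)}\circ\alpha\,\beta_g^U\,\alpha^{-1}
=\omega^{(0)}\circ\alpha\,(\beta_g^D)^{-1}\,\alpha^{-1}.
\end{align*}
Applying the horizontal factorization $\alpha=(\alpha_U\hat\otimes\alpha_D)(\Xi_L\hat\otimes\Xi_R)\circ\inn$ and using that $\alpha_U\in\QAut(\caA_{H_U^1})$ commutes with $\beta_g^D$ (disjoint supports), the right-hand side reduces, modulo inner automorphisms, to
\[
\omega^{(0)}\circ\alpha_D(\Xi_L\hat\otimes\Xi_R)(\beta_g^D)^{-1}(\Xi_L\hat\otimes\Xi_R)^{-1}\alpha_D^{-1}.
\]
At the same time, expanding the left-hand side directly (again modulo inners and using that $\alpha_D$ commutes with $\beta_g^U$) yields $\omega^{(0)}\circ\alpha_U\lmk(\Xi_L\hat\otimes\Xi_R)\beta_g^U(\Xi_L\hat\otimes\Xi_R)^{-1}\rmk\alpha_U^{-1}$ after pulling $(\alpha_U\hat\otimes\alpha_D)$ through $(\Xi_L\hat\otimes\Xi_R)$.

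The second step uses that $\Xi_\sigma\in\QAut(\caA_{C_\theta\cap H_\sigma})$ commutes with the part of $\beta_g^U$ supported outside $C_\theta$, while on $C_\theta\cap H_U^1$ the cocycle $\mu_\sigma:=\Xi_\sigma\beta_g^{U\sigma}\Xi_\sigma^{-1}(\beta_g^{U\sigma})^{-1}$ sits in $\QAut(\caA_{C_\theta\cap H_\sigma})$. Enlarging to $\theta'$ to absorb the quasi-local spread of $\alpha_U$ (via a refined factorization as in (\ref{sqaut})--(\ref{sqaut3})), $(\alpha_U\hat\otimes\alpha_D)(\mu_L\hat\otimes\mu_R)(\alpha_U\hat\otimes\alpha_D)^{-1}$ becomes a product $\nu_L\hat\otimes\nu_R$ localized in $C_{\theta'}\cap H_\sigma$. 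Comparing the two expressions, the $\beta_g^U$ and $(\beta_g^D)^{-1}$ parts cancel against $\omega^{(0)}$ outside $C_{\theta'}$ (both states are $\beta_g$-invariant on that sub-algebra), and we are left with
\[
\omega^{(0)}\circ\alpha_U\beta_g^U\alpha_U^{-1}\qe\omega^{(0)}\circ(\tilde\nu_L\hat\otimes\tilde\nu_R),
\]
for some graded automorphisms $\tilde\nu_\sigma\in\Aut^{(0)}\lmk\caA_{C_{\theta'}\cap H_\sigma\cap H_U}\rmk$.

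Finally, using $\omega^{(0)}=\omega_{p_{H_U}}\hat\otimes\omega_{p_{H_D}}$, and the fact that $\alpha_U\beta_g^U\alpha_U^{-1}$ acts trivially on $\caA_{H_D}$, we obtain $\omega^{(0)}\alpha_U\beta_g^U\alpha_U^{-1}=\lmk\omega_{p_{H_U}}\alpha_U\beta_g^U\alpha_U^{-1}\rmk\hat\otimes\omega_{p_{H_D}}$. Restricting both sides to $\caA_{H_U}$ and invoking the tensor split of $p_{H_U}$ across $H_L$--$H_R$, we define $\varphi_\sigma$ as $\omega_{p_{H_U\cap H_\sigma\cap C_{\theta'}}}\circ\tilde\nu_\sigma$ and read off (\ref{ichi-22}). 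Claims (\ref{ni-22}) and (\ref{san-22}) then follow from Lemma \ref{lem26} applied to the $H_U\cap C_{\theta'}$ and $H_U\cap C_{\theta'}^c$ restrictions, noting that outside $C_{\theta'}$ the state is unchanged from the reference $\omega_{p_{H_U\cap C_{\theta'}^c}}$. Homogeneity of $\varphi_L,\varphi_R$ is automatic since all the $\tilde\nu_\sigma$ are graded and the reference Fock states are homogeneous.

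The main obstacle is bookkeeping the quasi-local supports: the conjugates $(\alpha_U\hat\otimes\alpha_D)(\Xi_L\hat\otimes\Xi_R)\beta_g^U(\Xi_L\hat\otimes\Xi_R)^{-1}(\alpha_U\hat\otimes\alpha_D)^{-1}$ only approximately localize in the enlarged cone, so one must use the finer factorization (\ref{sqaut})--(\ref{sqaut3}) and the $F$-function machinery underlying $\QAut$ to justify the passage from $\theta$ to $\theta'$; this is the Fermionic analogue of the quantum-spin analysis in \cite{2dSPT}, with the additional care that all tensor products are graded.
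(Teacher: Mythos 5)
Your overall strategy matches the paper's: exploit $\omega\circ\beta_g=\omega$, factorize $\alpha$ across the $x$-axis via $\caD^H(\alpha,\theta)$ together with the finer cone decomposition, obtain a quasi-equivalence with a state that splits across the vertical cut inside $C_{\theta'}$ and agrees with the reference state outside, and then deduce (\ref{ichi-22})--(\ref{san-22}) by restriction using Lemma \ref{lem21}/\ref{lem26}. However, the central intermediate step does not close as written.

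The claim that after "comparing the two expressions" one is left with $\omega^{(0)}\circ\alpha_U\beta_g^U\alpha_U^{-1}\qe\omega^{(0)}\circ(\tilde\nu_L\hat\otimes\tilde\nu_R)$, with $\tilde\nu_\sigma\in\Aut^{(0)}\lmk\caA_{C_{\theta'}\cap H_\sigma\cap H_U}\rmk$, is not justified by the preceding manipulations. The paper's computation produces a relation of the shape
\begin{align*}
\lmk\omega_{p_{H_U}}\alpha_U\beta_g^{U}\alpha_U^{-1}\rmk\hat\otimes\lmk\omega_{p_{H_D}}\alpha_D\beta_g^{D}\alpha_D^{-1}\rmk
\simeq\lmk\text{vertically split state supported by }C_{\theta'}\rmk\hat\otimes\omega_{p_{C_{\theta'}^c}},
\end{align*}
i.e.\ the excitation on the \emph{lower} half plane, $\omega_{p_{H_D}}\alpha_D\beta_g^D\alpha_D^{-1}$, stays present and is not equal to $\omega_{p_{H_D}}$. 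There is no legitimate "cancellation against $\omega^{(0)}$ outside $C_{\theta'}$": $\alpha_D\beta_g^D\alpha_D^{-1}$ is not the identity there, and $\beta_g$-invariance of $\omega^{(0)}$ on a subalgebra is not a reason to drop it. The correct move (and what the paper does) is to keep the $H_D$ factor, view both sides as pure homogeneous states on the full algebra, and only after that restrict to $\caA_{H_U}$, $\caA_{H_U\cap C_{\theta'}}$ and $\caA_{H_U\cap C_{\theta'}^c}$ via Lemma \ref{lem21}. The asserted stronger form with $\tilde\nu_\sigma$ supported purely in $H_U$ is, of course, ultimately equivalent to the Lemma (given the tensor split of $\omega^{(0)}$ across the $H_U$--$H_D$ cut), but it is not obtained by your argument; you would be assuming a form of the conclusion in the middle of the proof.

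Two further, related, difficulties: (a) you invoke "$\alpha_D$ commutes with $\beta_g^U$" to pull $(\alpha_U\hat\otimes\alpha_D)$ through the conjugated symmetry, but $\alpha_D$ does \emph{not} commute with $\Xi_L,\Xi_R\in\QAut\lmk\caA_{C_\theta\cap H_\sigma}\rmk$, since $C_\theta\cap H_\sigma$ reaches into $H_D^{-1}$; handling this requires the finer factorization of $\alpha_\zeta^{-1}$ with cone-and-halfplane supports (equations (\ref{dea})--(\ref{supt}) in the paper's proof), not a simple commutativity argument; (b) with the paper's definitions $\beta_g^U=\beta_g^{H_U^1}$ and $\beta_g^D=\beta_g^{H_D^1}$ the identity $\beta_g=\beta_g^U\hat\otimes\beta_g^D$ fails, since $H_U^1\cap H_D^1=\bbZ\times\{1\}$, so the rewrite $\omega^{(0)}\alpha\beta_g^U\alpha^{-1}=\omega^{(0)}\alpha(\beta_g^D)^{-1}\alpha^{-1}$ needs either a disjoint decomposition such as $\beta_g=\beta_g^{H_U^1}\hat\otimes\beta_g^{H_D^0}$ or an explicit accounting of the boundary factor $\beta_g^{\bbZ\times\{1\}}$, which is not localized in a finite set and cannot be absorbed into $\inn$.
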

\begin{proof}
Set $c_U:=1$ and $c_D:=-1$.
Because $\alpha_\zeta\in \QAut(\caA_{H_\zeta^{c_\zeta}})$
for $\zeta=U,D$ 
we can decompose $\alpha_\zeta^{-1}$ as
\begin{align}\label{dea}
\alpha_\zeta^{-1}
=\lmk\alpha_{C_\theta,L,\zeta}\hat\otimes \alpha_{C_\theta,R,\zeta}
\rmk\circ
\alpha_{C_\theta^c,\zeta}\circ
\lmk \alpha_{\partial C_\theta, L,\zeta}\hat\otimes
\alpha_{\partial C_\theta, R,\zeta}
\rmk\circ\inn,
\end{align}
where
\begin{align}\label{supt}
\alpha_{C_\theta,\sigma,\zeta}\in\QAut
\lmk\caA_{C_\theta\cap H_\sigma\cap H_\zeta}\rmk,\quad
\alpha_{C_\theta^c,\zeta}
\in \QAut\lmk\caA_{C_\theta^c\cap H_\zeta}\rmk,\quad
\alpha_{\partial C_\theta, \sigma,\zeta}
\in \QAut\lmk \caA_{C_{\theta'}\cap H_\sigma\cap H_\zeta}\rmk,
\end{align}
for $\sigma=L,R$, $\zeta=U,D$.
Using the decomposition (\ref{dea})
and the support of the automorphisms there in (\ref{supt}),
we rewrite $\alpha\beta_g\alpha^{-1}$
as
\begin{align}\label{lem2228}
\alpha\beta_g\alpha^{-1}
=\lmk
\alpha_U\hat\otimes \alpha_D
\rmk\circ \beta_g
\lmk
\alpha_U\hat\otimes \alpha_D
\rmk^{-1}
\lmk
Y_{L}\hat\otimes Y_R 
\rmk\circ\inn
\end{align}
with
\begin{align}
\begin{split}
Y_\sigma:=&
\lmk \alpha_{\partial C_\theta, \sigma,U}^{-1}
\alpha_{C_\theta, \sigma,U}^{-1}\rmk
\hat\otimes 
\lmk \alpha_{\partial C_\theta, \sigma,D}^{-1}
\alpha_{C_\theta, \sigma,D}^{-1}\rmk
\lmk \beta_g^{H_\sigma}\rmk^{-1}
\Xi_\sigma \beta_g^{H_\sigma}
\Xi_\sigma^{-1}
\lmk \alpha_{C_\theta, \sigma,U}
\alpha_{\partial C_\theta, \sigma,U}
\rmk
\hat\otimes 
\lmk \alpha_{C_\theta, \sigma,D}\alpha_{\partial C_\theta, \sigma,D}
\rmk\\
&\in \QAut
\lmk\caA_{C_{\theta'}\cap H_\sigma}\rmk,\quad \sigma=L,R.
\end{split}
\end{align}
Substituting this, from the $\beta_g$-invariance of $\omega$, we have
\begin{align}
\begin{split}
\omega^{(0)}
=\omega^{(0)}\alpha\beta_g\alpha^{-1}
=\omega^{(0)} \lmk
\alpha_U\hat\otimes \alpha_D
\rmk\circ \beta_g
\lmk
\alpha_U\hat\otimes \alpha_D
\rmk^{-1}
\lmk
Y_{L}\hat \otimes Y_R 
\rmk\circ\inn.
\end{split}
\end{align}
From this, we have
\begin{align}\label{hajm}
\begin{split}
\lmk \omega_{p_{H_U}}\alpha_U\beta_g^{U}\alpha_U^{-1}\rmk
\hat\otimes 
\lmk \omega_{p_{H_D}}\alpha_D\beta_g^D\alpha_D^{-1}\rmk
\simeq
\lmk \omega_{p_{H_L \cap C_{\theta'}}}{Y_{L}'}^{-1}\rmk
\hat \otimes 
\lmk \omega_{p_{H_R\cap C_{\theta'}}}{Y_R'}^{-1} \rmk
\hat\otimes 
\omega_{p_{ C_{\theta'}^c}}
.
\end{split}
\end{align}
Here, $Y'_{\sigma}\in \Aut^{(0)}\lmk\caA_{H_\sigma\cap C_{\theta'}}\rmk$
are automorphisms 
such that
\begin{align}
{Y_{L}'}^{-1}\hat \otimes {Y_R'}^{-1} 
=\lmk
Y_{L}\hat \otimes Y_R 
\rmk^{-1}
\lmk\alpha_U\hat\otimes \alpha_D\rmk
\circ\lmk \beta_g^{\bbZ\times\{0\}}\rmk^{-1}
\lmk\alpha_U\hat\otimes \alpha_D\rmk^{-1}\inn
\end{align}
Note that
$\varphi_L:=\omega_{p_{H_L \cap C_{\theta'}}}{Y_{L}'}^{-1}\vert_{\caA_{H_U\cap H_L\cap C_{\theta'}}}$
$\varphi_R:=\omega_{p_{H_R \cap C_{\theta'}}}{Y_{R}'}^{-1}\vert_{\caA_{H_U\cap H_R\cap C_{\theta'}}}$
are homogeneous states on $\caA_{H_L \cap H_U\cap C_{\theta'}}$,
$\caA_{H_R \cap H_U\cap C_{\theta'}}$ respectively.

Note that both sides of (\ref{hajm})
are pure and homogeneous state on $\al$.
Restricting (\ref{hajm}) to $\caA_{H_U}$,
from Lemma \ref{lem21} 
we obtain (\ref{ichi-22}).
If we restrict (\ref{hajm}) to $\caA_{H_U\cap C_{\theta'}}$,
then we have
(\ref{ni-22})
from Lemma \ref{lem21}.
If we restrict (\ref{hajm}) to $\caA_{H_U\cap C_{\theta'}^c}$,
then we have
(\ref{san-22})
from Lemma \ref{lem21}.
\end{proof}

The following Proposition gives the effective excitation caused by $\beta_g^U$ on $\omega\in\SPT$.
\begin{prop}\label{lem24}
Let $\omega\in \SPT$.
Then there is a group homomorpshims
$a_\omega : G\to \{0,1\}=\bbZ_2$
satisfying the following.
\begin{description}
\item[(i)]
For any $0<\theta<\frac\pi2$, and $g\in G$
there are
$\eta_{g,L}\in\Aut^{(0)}\lmk\al_{H_L^{c_L}\cap C_\theta}\rmk$,
$\eta_{g,R}\in\Aut^{(0)}\lmk\al_{H_R^{c_R}\cap C_\theta}\rmk$
such that
\begin{align}\label{efac24}
\omega\circ\beta_g^U\simeq
\omega\circ\tau^{a_\omega(g)}\lmk\eta_{g,L}\hat\otimes \eta_{g,R}\rmk.
\end{align}
\item[(ii)]
If there is $g\in G$, $0<\theta<\frac\pi 2$,
$a'\in\{0,1\}$ and 
$\eta_{g,L}'\in\Aut^{(0)}\lmk\al_{H_L^{c_L}\cap C_\theta}\rmk$,
$\eta_{g,R}'\in\Aut^{(0)}\lmk\al_{H_R^{c_R}\cap C_\theta}\rmk$
such that 
\begin{align}
\omega\circ\beta_g^U\simeq
\omega\circ\tau^{a'}\lmk\eta_{g,L}'\hat\otimes \eta_{g,R}'\rmk,
\end{align}
then $a'=a_\omega(g)$.
\end{description}
\end{prop}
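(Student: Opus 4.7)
Write $\omega=\omega^{(0)}\circ\alpha$ with $\alpha\in\QAut(\caA)$, available by the definition of $\SPT$. The plan is to apply the dichotomy of Lemma \ref{lem16} to the effective excitation state $\omega^{(0)}\circ\alpha\beta_g^U\alpha^{-1}$ and then to invoke Proposition \ref{mainprop} with the reference pair $(\omega^{(0)},\omega^{(1)}=\omega^{(0)}\circ\tau)$ identified at the end of Section \ref{splitsec}. Combining Lemma \ref{lem22} with the $\beta_g$-invariance of $\omega$ (which lets me run the same argument on the lower half-plane by replacing $\beta_g^U$ with $(\beta_g^D)^{-1}$), I obtain that $\omega^{(0)}\circ\alpha\beta_g^U\alpha^{-1}$, viewed as a pure state on $\caA=\caA_{H_L}\hat\otimes\caA_{H_R}$, satisfies the split property along the $H_L$--$H_R$ cut, and moreover its restriction to $\caA_{C_{\theta'}^c}$ is quasi-equivalent to $\omega^{(0)}$. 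Lemma \ref{lem16} then applies; I set $a_\omega(g):=0$ if case (i) holds and $a_\omega(g):=1$ if case (ii) holds.

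Since $\omega^{(0)}$ realizes case (i) and $\omega^{(1)}$ realizes case (ii), Proposition \ref{mainprop} yields graded automorphisms $\tilde\eta_L\in\Aut^{(0)}(\caA_{H_L})$, $\tilde\eta_R\in\Aut^{(0)}(\caA_{H_R})$ with
\[
\omega^{(0)}\circ\alpha\beta_g^U\alpha^{-1}\ \simeq\ \omega^{(0)}\circ\tau^{a_\omega(g)}\circ(\tilde\eta_L\hat\otimes\tilde\eta_R).
\]
To confine the support of each $\tilde\eta_\sigma$ to the cone $H_\sigma^{c_\sigma}\cap C_\theta$, I use the fact that both sides restricted to $\caA_{H_\sigma\cap C_{\theta'}^c}$ are quasi-equivalent to the Fock state $\omega_{p_{H_\sigma\cap C_{\theta'}^c}}$, and apply Proposition \ref{mainprop} a second time along the cone-boundary cut inside each half-plane to replace $\tilde\eta_\sigma$ by a cone-supported automorphism (the complementary piece being absorbed into $\simeq$). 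Composing the localized equivalence with $\alpha$ on the right and pushing $\alpha$ back past $\tau^{a_\omega(g)}$ and past the graded tensor product via Lemma \ref{lem29} (with Proposition \ref{lem32} from Section \ref{sdautosec} compensating for the support shift caused by $\tau$), absorbing the resulting inner factors into $\simeq$, produces (\ref{efac24}).

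Part (ii) is immediate from the last statement of Lemma \ref{lem16}: the dichotomy case of a split pure state is invariant under quasi-equivalence and under composition with automorphisms of the form $(\eta_L\hat\otimes\eta_R)$, so every valid decomposition (\ref{efac24}) of the fixed state $\omega\circ\beta_g^U$ must record the same dichotomy case as $\omega^{(0)}\circ\alpha\beta_g^U\alpha^{-1}$, namely $a_\omega(g)$. For the group homomorphism property, applying (i) to $g$ and $h$ separately gives
\[
\omega\circ\beta_{gh}^U=\omega\circ\beta_g^U\circ\beta_h^U\simeq\omega\circ\tau^{a_\omega(g)}(\eta_{g,L}\hat\otimes\eta_{g,R})\circ\beta_h^U,
\]
and a second application of (i) to the $\beta_h^U$ factor (after absorbing the cone-localized automorphisms through $\beta_h^U$ using Lemma \ref{lem29}) yields a decomposition of $\omega\circ\beta_{gh}^U$ with translation factor $\tau^{a_\omega(g)+a_\omega(h)}$; uniqueness (ii) then forces $a_\omega(gh)=a_\omega(g)+a_\omega(h)$.

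The main obstacle I anticipate is the localization step: Proposition \ref{mainprop} delivers only coarse half-plane automorphisms, and confining their supports to cones while simultaneously transporting the whole equivalence from $\omega^{(0)}\circ\alpha\beta_g^U\alpha^{-1}$ back to $\omega\circ\beta_g^U$ requires careful bookkeeping of the inner factors arising from Lemma \ref{lem29}. The machinery of Section \ref{sdautosec}, and Proposition \ref{lem32} in particular, is what is expected to make this bookkeeping manageable.
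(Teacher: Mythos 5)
Your proposal follows a genuinely different route from the paper. You first cut the full plane vertically along $H_L$--$H_R$, apply the dichotomy of Lemma~\ref{lem16} and Proposition~\ref{mainprop} once at that cut (getting half-plane automorphisms $\tilde\eta_\sigma$), and then try to localize each $\tilde\eta_\sigma$ into a cone via a second application of Proposition~\ref{mainprop} along the cone boundary inside each half-plane. The paper instead reduces to the \emph{upper} half-plane state $\omega_{p_{H_U}}\alpha_U\beta_g^U\alpha_U^{-1}$ via $\caD^H(\alpha,\theta')$, applies Lemma~\ref{lem22} to get a split decomposition over (upper cone) vs.\ (upper complement) vs.\ ($L$--$R$ inside the cone), first trivializes the complement piece, and only then applies the dichotomy and Proposition~\ref{mainprop} to the state $\omega_{p_{H_U\cap C_{\theta''}}}S$ at the $L$--$R$ cut of the cone. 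That ordering is not incidental: it ensures that every invocation of Lemma~\ref{lem16}/Proposition~\ref{mainprop} happens where the split property has already been handed over by Lemma~\ref{lem22} (eqs.~(\ref{ichi-22})--(\ref{san-22})), and where the reference state is the explicit Fock state.

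The genuine gap in your version is in the localization step. After the first application of Proposition~\ref{mainprop} you have $\omega^{(0)}\alpha\beta_g^U\alpha^{-1}\simeq\omega^{(0)}\tau^{a_\omega(g)}(\tilde\eta_L\hat\otimes\tilde\eta_R)$ with $\tilde\eta_\sigma\in\Aut^{(0)}(\caA_{H_\sigma})$ completely uncontrolled. To run Proposition~\ref{mainprop} a second time inside $H_\sigma$ at the cone cut, you need the half-plane states to satisfy the split property there, and to match dichotomy cases. You only argue that the restrictions to $\caA_{H_\sigma\cap C_{\theta'}^c}$ are quasi-equivalent to Fock states; that gives one of the two restrictions and the correct dichotomy case, but the split property itself (a quasi-equivalence to a full tensor-product state across the cone cut) still needs an argument, which in the paper comes from (\ref{ni-22}). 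A second, related difficulty you do not address: when $a_\omega(g)=1$, the half-plane reference states coming from $\omega^{(1)}\qe\varphi_L\hat\otimes\varphi_R$ are \emph{not} the Fock states $\omega_{p_{H_\sigma}}$, so the second Proposition~\ref{mainprop} application is between $\varphi_\sigma\tilde\eta_\sigma$ and $\varphi_\sigma$, and it must be checked that $\varphi_\sigma$ satisfies the split property at the cone cut in the same dichotomy case. These checks are all doable, but they are precisely the places where the paper's chosen order of cuts makes the hypotheses come for free and yours does not.

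Two smaller points. For the group-homomorphism step, your composition gives translation factor $\tau^{a_\omega(g)+a_\omega(h)}$, and when $a_\omega(g)+a_\omega(h)=2$ uniqueness (ii) alone does not apply, since the exponent is not in $\{0,1\}$; one has to invoke Lemma~\ref{lem38} (that $\omega^{(0)}\circ\tau^{2a}$ differs from $\omega^{(0)}$ only by cone automorphisms) to reduce $\tau^2$ before concluding $a_\omega(gh)=0$. And your identification of the split property of $\omega^{(0)}\alpha\beta_g^U\alpha^{-1}$ across $H_L$--$H_R$ from "Lemma~\ref{lem22} plus $\beta_g$-invariance" needs to be spelled out: Lemma~\ref{lem22} is a statement about the upper half-plane state, and assembling it together with the lower half-plane piece and the cone-supported $Z_\sigma$'s from (\ref{nnn24}) into a $\varphi_L\hat\otimes\varphi_R$ decomposition for the full state requires explicit bookkeeping.
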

\begin{defn}\label{aodef}
Proposition \ref{lem24}
defines
a $*$-homomorphism
$a_\omega : G\to \{0,1\}=\bbZ_2$
for each $\omega\in \SPT$.
We will use this symbol henceforth.
\end{defn}
\begin{proof}
\underline{\it Existence}:  
 Fix any $0<\theta<\frac\pi2$. We show that for each
 $g\in G$,
 there are $a_\omega(g)=0,1$,
$\eta_{g,L}\in\Aut^{(0)}\lmk\al_{H_L^{c_L}\cap C_\theta}\rmk$,
$\eta_{g,R}\in\Aut^{(0)}\lmk\al_{H_R^{c_R}\cap C_\theta}\rmk$
 satisfying (\ref{efac24}).
Choose some $0<\theta'<\theta''<\theta$,
 $\alpha\in \EAut(\omega)$,
and $(\alpha_U,\alpha_D,\Xi_L,\Xi_R)\in \caD^H(\alpha,\theta')$.
Then by Lemma \ref{lem22},
there are homogeneous states $\varphi_L$, $\varphi_R$
on
$\al_{H_U\cap H_L\cap C_{{\theta''}}}$, 
$\al_{H_U\cap H_R\cap C_{{\theta''}}}$
such that 
\begin{align}
\omega_{p_{H_U}}\alpha_U\beta_g^{U}\alpha_U^{-1}
\qe
\varphi_L\hat\otimes\varphi_R\hat\otimes 
\omega_{p_{H_U\cap C_{{\theta''}}^c}},\label{ichi-24}\\
\left.\omega_{p_{H_U}}\alpha_U\beta_g^{U}\alpha_U^{-1}\right\vert_{\caA_{H_U\cap C_{{\theta''}}}}
\qe \varphi_L\hat\otimes\varphi_R,\label{ni-24}\\
\left.\omega_{p_{H_U}}\alpha_U\beta_g^{U}\alpha_U^{-1}\right\vert_{\caA_{H_U\cap C_{{\theta''}}^c}}
\qe
\omega_{p_{H_U\cap C_{{\theta''}}^c}}.\label{san-24}
\end{align}
Note that 
$\mkh_{H_U\cap C_{{\theta''}}}$,
$\mkh_{H_U\cap C_{{\theta''}}^c}$,
$\omega_{p_{H_U}}\alpha_U\beta_g^{U}\alpha_U^{-1}$
satisfy the conditions of
$\mkk_1$, $\mkk_2$, $\omega$
in Lemma \ref{lem16}
respectively.
Applying Lemma \ref{lem16},
either (i) or (ii) of Lemma \ref{lem16} occurs.
From the quasi-equivalence (\ref{ichi-24}),
the von Neumann algebras
$\pi_{\omega_{p_{H_U}}\alpha_U\beta_g^{U}\alpha_U^{-1}}
\lmk\al_{H_U\cap C_{{\theta''}}^c}\rmk''$
and $\pi_{\omega_{p_{H_U\cap C_{{\theta''}}^c}}}
\lmk \al_{H_U\cap C_{{\theta''}}^c}\rmk''$
are $*$-isomorphic.
As $\pi_{\omega_{p_{H_U\cap C_{{\theta''}}^c}}}
\lmk \al_{H_U\cap C_{{\theta''}}^c}\rmk''$ is a type $I$ factor,
(because $\omega_{p_{H_U\cap C_{{\theta''}}^c}}$ is a pure state),
it means 
 $\pi_{\omega_{p_{H_U}}\alpha_U\beta_g^{U}\alpha_U^{-1}}
\lmk\al_{H_U\cap C_{{\theta''}}^c}\rmk''$
is also a type $I$ factor.
It means from Lemma \ref{lem16} that Lemma \ref{lem16} (i)
occurs.
Hence for both of  pure homogeneous states
$\omega_{p_{H_U}}\alpha_U\beta_g^{U}\alpha_U^{-1}$
and $\omega^{(0)}$,
(i) of Lemma \ref{lem16} occurs, and we may apply
Proposition \ref{mainprop}.
Applying Lemma \ref{mainprop},
there are automorphisms $S\in \Aut^{(0)}\lmk\al_{H_U\cap C_{{\theta''}}}\rmk$, $T\in \Aut^{(0)}\lmk\al_{H_U\cap C_{{\theta''}}^c}\rmk$
such that
\begin{align}\label{st24}
\omega_{p_{H_U}}\alpha_U\beta_g^{U}\alpha_U^{-1}
\simeq
\lmk \omega_{p_{H_U\cap C_{\theta''}}}S\rmk\hat\otimes
\lmk \omega_{p_{H_U}\cap C_{{\theta''}}^c}T\rmk.
\end{align}
Note that both sides are pure homogeneous states.
Therefore, applying Lemma \ref{lem26}, and (\ref{san-24}),
we obtain
\begin{align}
\omega_{p_{H_U\cap C_{{\theta''}}^c}}
\qe
\left.\omega_{p_{H_U}}\alpha_U\beta_g^{U}\alpha_U^{-1}\right\vert_{\caA_{H_U\cap C_{{\theta''}}^c}}
\qe
\omega_{p_{H_U}\cap C_{{\theta''}}^c}T.
\end{align}
Because both of $\omega_{p_{H_U\cap C_{{\theta''}}^c}}
$ and $\omega_{p_{H_U}\cap C_{{\theta''}}^c}T$
are pure, we conclude that they are equvalent.
Substituting this to (\ref{st24}),
we obtain
\begin{align}\label{atherome}
\omega_{p_{H_U}}\alpha_U\beta_g^{U}\alpha_U^{-1}
\simeq
\lmk \omega_{p_{H_U\cap C_{\theta''}}}S\rmk\hat\otimes
\lmk \omega_{p_{H_U}\cap C_{{\theta''}}^c}T\rmk
\simeq 
\lmk \omega_{p_{H_U\cap C_{\theta''}}}S\rmk\hat\otimes
\lmk \omega_{p_{H_U}\cap C_{{\theta''}}^c}\rmk.
\end{align}
Because both sides  of this equation are homogeneous and
pure, applying Lemma \ref{lem21}, and combining it with
(\ref{ni-24})
we obtain
\begin{align}
\varphi_L\hat\otimes\varphi_R\qe
\left.\omega_{p_{H_U}}\alpha_U\beta_g^{U}\alpha_U^{-1}\right\vert_{\caA_{H_U\cap C_{{\theta''}}}}
\qe
\omega_{p_{H_U\cap C_{\theta''}}}S.
\end{align}
Hence 
$\mkh_{H_U\cap H_L\cap C_{{\theta''}}}$, 
$\mkh_{H_U\cap H_R\cap C_{{\theta''}}}$,
 $\omega_{p_{H_U\cap C_{\theta''}}}S$ satisfies
the condition of $\mkK_1$,$\mkK_2$,
$\omega$ in Lemma \ref{lem16} holds and
(i) or (ii) of Lemma \ref{lem16} holds for $\omega_{p_{H_U\cap C_{\theta''}}}S$.

If 
$\omega_{p_{H_U\cap C_{\theta''}}}S$ satisfies (i) of Lemma \ref{lem16},
 we set $a_\omega(g):=0$, and
 if $\omega_{p_{H_U\cap C_{\theta''}}}S$ satisfies (ii) of Lemma \ref{lem16},
 we set $a_\omega(g):=1$.
 
 If $a_\omega(g)=0$, because $\omega_{p_{H_U\cap C_{\theta''}}}$ also satisfies
(i) of Lemma \ref{lem16},
from Proposition \ref{mainprop}, there are
$\tilde \eta_{g,L}\in\Aut^{(0)}\lmk\al_{H_U\cap H_L^{c_L}\cap C_{\theta''}}\rmk$,
$\tilde \eta_{g,R}\in\Aut^{(0)}\lmk\al_{H_U\cap H_R^{c_R}\cap C_{\theta''}}\rmk$
such that
\begin{align}
\omega_{p_{H_U\cap C_{\theta''}}}S\simeq 
\omega_{p_{H_U\cap C_{\theta''}}}\circ\lmk\tilde\eta_{g,L}\hat\otimes \tilde\eta_{g,R}\rmk.
\end{align}
Here, we used Proposition \ref{lem32} in section \ref{sdautosec}
to take the support of $\tilde \eta_{g,\sigma}$ from 
$H_U\cap H_\sigma^{c_\sigma}\cap C_{\theta''}$, not just
$H_U\cap H_\sigma\cap C_{\theta''}$.

If $a_\omega(g)=1$, because $\omega_{p_{H_U\cap C_{\theta''}}}\tau$ also satisfies
(ii) of Lemma \ref{lem16},
from the same argument as the last paragraph of Section \ref{splitsec}, there are
$\tilde \eta_{g,L}\in\Aut^{(0)}\lmk\al_{H_U\cap H_L^{c_L}\cap C_{\theta''}}\rmk$,
$\tilde \eta_{g,R}\in\Aut^{(0)}\lmk\al_{H_U\cap H_R^{c_R}\cap C_{\theta''}}\rmk$
such that
\begin{align}
\omega_{p_{H_U\cap C_{\theta''}}}S\simeq 
\omega_{p_{H_U\cap C_{\theta''}}}\circ\tau\circ\lmk\tilde\eta_{g,L}\hat\otimes \tilde\eta_{g,R}\rmk.
\end{align}
(Again we used Proposition \ref{mainprop} and 
Proposition \ref{lem32}.)

Combining these with (\ref{atherome}), we obtain
\begin{align}\label{atherome1}
\omega_{p_{H_U}}\alpha_U\beta_g^{U}\alpha_U^{-1}
\simeq 
\omega_{p_{H_U\cap C_{\theta''}}}\circ\tau^{a_\omega(g)}
\circ\lmk\tilde\eta_{g,L}\hat\otimes \tilde\eta_{g,R}\rmk
\hat\otimes
\lmk \omega_{p_{H_U}\cap C_{{{\theta''}}}^c}\rmk
=\omega_{p_{H_U}}\circ\tau^{a_\omega(g)}
\circ\lmk\tilde \eta_{g,L}\hat\otimes \tilde\eta_{g,R}\rmk.
\end{align}

As in the proof of Lemma \ref{lem22} (\ref{lem2228}),
there are $Z_{\sigma}\in \Aut^{(0)}\lmk\caA_{H_\sigma\cap C_{\theta''}}\rmk$,
$\sigma=L,R$
such that 
\begin{align}
\alpha\beta_g^U\alpha^{-1}
=
\alpha_U
 \beta_g^U
\alpha_U^{-1}
\circ \lmk
Z_{L}\hat \otimes Z_R 
\rmk\circ\inn.
\end{align}
From this and (\ref{atherome1}), we have
\begin{align}\label{nnn24}
\begin{split}
&\omega^{(0)}\alpha\beta_g^U\alpha^{-1}
\simeq 
\lmk\omega_{p_{H_D}}\hat\otimes \omega_{p_{H_U}}
\alpha_U
 \beta_g^U
\alpha_U^{-1}\rmk
\circ \lmk
Z_{L}\hat \otimes Z_R 
\rmk\simeq
\omega^{(0)}
\tau^{a_\omega(g)}
\lmk\tilde\eta_{g,L}Z_{L}\hat\otimes \tilde\eta_{g,R} Z_R \rmk\\
&=
\omega^{(0)}\alpha\circ\alpha^{-1}
\tau^{a_\omega(g)}
\lmk\tilde\eta_{g,L}Z_{L}\hat\otimes \tilde\eta_{g,R} Z_R \rmk
\alpha\circ\alpha^{-1}.
\end{split}
\end{align}
Note from Lemma \ref{lem29}
that there are automorphisms
$\eta_{g,L}\in\Aut^{(0)}\lmk\caA_{C_{\theta}\cap H_L}\rmk$,
$\eta_{g,R}\in\Aut^{(0)}\lmk\caA_{C_{\theta}\cap H_R}\rmk$
such that
\begin{align}
\alpha^{-1}
\tau^{a_\omega(g)}
\lmk\tilde\eta_{g,L}Z_{L}\hat\otimes \tilde\eta_{g,R} Z_R \rmk
\alpha
=\tau^{a_\omega(g)} \lmk\eta_{g,L}\hat\otimes \eta_{g,R}\rmk\circ\inn.
\end{align}
From Lemma \ref{lem32}, we may assume that
$\eta_{g,L}\in\Aut^{(0)}\lmk\al_{H_L^{c_L}\cap C_\theta}\rmk$,
$\eta_{g,R}\in\Aut^{(0)}\lmk\al_{H_R^{c_R}\cap C_\theta}\rmk$.
Substituting this to (\ref{nnn24}),
we obtain
\begin{align}
\begin{split}
\omega\beta_g^U=
\omega^{(0)}\alpha\beta_g^U
\simeq
\omega^{(0)}\alpha
\tau^{a_\omega(g)} \lmk\eta_{g,L}\hat\otimes \eta_{g,R}\rmk
=\omega\tau^{a_\omega(g)} \lmk\eta_{g,L}\hat\otimes \eta_{g,R}\rmk.
\end{split}
\end{align}
\underline{\it Uniqueness :} 
Suppose for $i=1,2$, there are
$0<\theta_i<\frac\pi2$, $a_1(g), a_2(g)=0,1$,
$\eta_{g,L}^{(i)}\in\Aut^{(0)}\lmk\al_{H_L^{c_L}\cap C_{\theta_i}}\rmk$,
$\eta_{g,R}^{(i)}\in\Aut^{(0)}\lmk\al_{H_R^{c_R}\cap C_{\theta_i}}\rmk$
 satisfying 
 \begin{align}\label{Helsinki}
 \omega\beta_g^U\simeq 
 \omega\tau^{a_1}\lmk \eta_{g,L}^{(1)}\hat\otimes \eta_{g,R}^{(1)}\rmk
 \simeq
 \omega\tau^{a_2}\lmk \eta_{g,L}^{(2)}\hat\otimes \eta_{g,R}^{(2)}\rmk.
  \end{align}
  We set $\theta:=\max\{\theta_1,\theta_2\}$.
%
 Let $\alpha\in \EAut(\omega)$ and $\theta<\theta'<\frac \pi 2$.
 Then from (\ref{Helsinki}), Lemma \ref{lem29}
 there are some automorphisms $\xi_{\sigma}\in \Aut^{(0)}\lmk\al_{C_{\theta'}\cap H_\sigma}\rmk$
 such that
  \begin{align}\label{bocha}
 \begin{split}
\omega^{(0)}\simeq
 \omega^{(0)}\circ\alpha \tau^{a_1}\lmk
 \eta_{g,L}^{(1)}\lmk \eta_{g,L}^{(2)}\rmk^{-1}
 \hat\otimes \eta_{g,R}^{(1)}\lmk \eta_{g,R}^{(2)}\rmk^{-1}
\rmk\tau^{-a_2}\alpha^{-1}
  \simeq \omega^{(0)}\tau^{a_1-a_2}\lmk\xi_L\otimes \xi_R\rmk.
   \end{split}
 \end{align}
 By Lemma \ref{lem16}, if $a_1-a_2=-1,1$, then
 $\omega^{(0)}\tau^{a_1-a_2}\lmk\xi_L\otimes \xi_R\rmk$
 satisfies Lemma \ref{lem16} (ii) and from Proposition \ref{mainprop} it
 cannot be equivalent to $\omega^{(0)}$.
  Therefore, from (\ref{bocha}), 
  $a_1=a_2$.\\
  \underline{Group homomorphism:}\\
 Let  $g,h\in G$ and $\eta_{g,\sigma}, \eta_{h,\sigma}\in \Aut^{(0)}\lmk \caA_{C_\theta\cap H_\sigma}\rmk$
satisfying (\ref{efac24})
for $g,h$.
Then we have
\begin{align}
\begin{split}
&\omega\beta_{gh}^U
=\omega\beta_g^U\beta_h^U
\simeq \omega\tau^{\ao g} \lmk\eta_{g,L}\hat\otimes \eta_{g,R}\rmk\beta_h^U\\
&=\omega\beta_h^U\lmk\beta_h^U\rmk^{-1}
\tau^{\ao g}\lmk\eta_{g,L}\hat\otimes \eta_{g,R}\rmk \beta_h^U\\
&\simeq 
\omega\tau^{\ao h}\lmk\eta_{h,L}\hat\otimes \eta_{h,R}\rmk
\lmk\beta_h^U\rmk^{-1}
\tau^{\ao g}\lmk\eta_{g,L}\hat\otimes \eta_{g,R}\rmk \beta_h^U\\
&\simeq\omega\tau^{\ao h+\ao g}
\lmk \eta_{gh,L}\hat\otimes \eta_{gh,R}\rmk,
\end{split}
\end{align}
with some
$\eta_{gh,\sigma}\in\Aut^{(0)}(\al_{H_\sigma^{c_\sigma}\cap C_\theta})$,
using Proposition \ref{lem32}.
If $\ao h+\ao g=0,1$, then
from the uniqueness, we have $\ao {gh}=\ao g+\ao h$.
If $\ao h+\ao g=2$, then combining with Lemma \ref{lem38},
we have
$\omega\beta_{gh}^U\simeq \omega
\lmk \tilde \eta_{gh,L}\hat\otimes \tilde \eta_{gh,R}\rmk,$
with some
$\tilde \eta_{gh,\sigma}\in\Aut^{(0)}(\al_{H_\sigma^{c_\sigma}\cap C_{\theta}})$
and we get $\ao {gh}=0$.
Hence identifying $\{0,1\}=\bbZ_2$, we see that
$a_\omega: G\to \{0,1\}=\bbZ_2$ is a group homomorphism.
\end{proof}
\begin{rem}\label{lem24'}
If $a_\omega(g)=1$, then 
from Lemma \ref{lem38},
for any $0<\theta<\frac{\pi}{2}$
we have
\begin{align}
\omega\beta_g^U
\simeq 
\omega\tau^{-1}\lmk \eta_{g,L}^{(-1)}\hat\otimes \eta_{g,R}^{(-1)}\rmk,
\end{align}
for some $\eta_{g,\sigma}^{(-1)}\in \Aut^{(0)}(\al_{H_\sigma^{c_\sigma}\cap C_{\theta}})$.
\end{rem}
Hence for each $\omega\in \SPT$, the following set is not empty.
\begin{defn}
For each $\omega\in \SPT$ and $0<\theta<\frac\pi 2$,
we set
\begin{align}
I(\omega,\theta)
:=
\left\{
\lmk\eta_{g\sigma}^\epsilon\rmk_{g\in G,\sigma=L,R,\epsilon=\pm}
\middle|
\begin{gathered}
\eta_{g,L}^\epsilon\in\Aut^{(0)}\lmk\al_{H_L^{c_L}\cap C_\theta}\rmk,\quad
\eta_{g,R}^\epsilon \in\Aut^{(0)}\lmk\al_{H_R^{c_R}\cap C_\theta}\rmk\\
\omega\beta_g^U\simeq 
\omega\tau^{a_\omega(g)\epsilon}\lmk
\eta_{g,L}^\epsilon\hat\otimes \eta_{g,R}^\epsilon
\rmk
\end{gathered}
\right\}.
\end{align}
For $\lmk\eta_{g\sigma}^\epsilon\rmk\in I(\omega,\theta)$,
we set $\eta_g^\epsilon:=\eta_{gL}^\epsilon\hat\otimes \eta_{gR}^\epsilon$.

\end{defn}
\subsection{Derivation of an element in $\PD$}
Using the elements in $I(\omega,\theta)$, we derive some element
 in $\widetilde \PD$ out of SPT-ground states.
 We set
 \begin{align}
 \begin{split}
 {c_L^{(1)}}:=-4=c_L+1,\quad{c_R^{(1)}}:=4=c_R-1,\quad
  {c_L^{(2)}}:=-3=c_L+2,\quad{c_R^{(2)}}:=3=c_R-2.
 \end{split}
  \end{align} 
 The conjugation by $\tau$ or $\tau^{-1}$
 maps an automorphism supported at $C_\theta\cap H_\sigma^{c_\sigma}$
 to an automorphism supported at $C_\theta\cap H_\sigma^{{c_\sigma^{(1)}}}$.
  The conjugation by $\tau^2$ or $\tau^{-2}$
 maps an automorphism supported at $C_\theta\cap H_\sigma^{c_\sigma}$
 to an automorphism supported at $C_\theta\cap H_\sigma^{{c_\sigma^{(2)}}}$
 (Lemma \ref{lem29}).
\begin{lem}\label{lem39}
For any $\omega\in\SPT$, $\alpha\in \EAut(\omega)$, 
$0<\theta<\frac\pi 2$, $(\eta_{g\sigma}^\epsilon)\in I(\omega,\theta)$,
$(\alpha_L,\alpha_R, \Upsilon)\in \caD^V(\alpha, \theta)$,
there are unitaries
$u_\sigma^\epsilon(g,h)\in \caU(\caH_\sigma)$,
$g,h\in G$, $\sigma=L,R$, $\epsilon=\pm 1$
such that
\begin{align}\label{u39}
\Ad\lmk u_\sigma^\epsilon (g,h)\rmk\pi_\sigma\alpha_\sigma
=\pi_\sigma\alpha_\sigma
\bgu{g}\sigma\lmk\eta_{g\sigma}^\epsilon\rmk^{-1}
\bgu{h}\sigma\lmk \lmk \eta_{h\sigma}^{\eg{g}\epsilon}\rmk_{-\ao{g}\epsilon}\rmk^{-1}
\eta_{gh,\sigma}^\epsilon \lmk \bgu{gh}\sigma\rmk^{-1}.
\end{align}
\end{lem}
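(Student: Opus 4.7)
The strategy is to combine the $H_L$--$H_R$ factorization of $\omega$ from $(\alpha_L, \alpha_R, \Upsilon) \in \caD^V(\alpha, \theta)$ with the excitation description from $(\eta_{g\sigma}^\epsilon) \in I(\omega, \theta)$. Writing $\eta_g^\epsilon := \eta_{g,L}^\epsilon \hat\otimes \eta_{g,R}^\epsilon$, I would first define $\gamma_g^\epsilon := \beta_g^U (\eta_g^\epsilon)^{-1} \tau^{-a_\omega(g)\epsilon}$, so that $\omega \circ \gamma_g^\epsilon \simeq \omega$ by the definition of $I(\omega,\theta)$, and then consider the cocycle-like product $\zeta^\epsilon_{g,h} := \gamma_g^\epsilon \gamma_h^{\eg g \epsilon} (\gamma_{gh}^\epsilon)^{-1}$, which trivially satisfies $\omega \circ \zeta^\epsilon_{g,h} \simeq \omega$.

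The key computation is that the $\tau$-factors in $\zeta^\epsilon_{g,h}$ cancel completely. Since $\tau$ is supported on $\bbZ \times \{0\}$ while $\beta_h^U$ is supported on $\bbZ \times \bbZ_{\ge 1}$, the two commute; moving $\tau^{-a_\omega(g)\epsilon}$ through $(\eta_h^{\eg g \epsilon})^{-1}$ via Definition \ref{lem30} then leaves the total $\tau$-exponent $-a_\omega(g)\epsilon - a_\omega(h)\eg g \epsilon + a_\omega(gh)\epsilon$, which vanishes in $\bbZ$ by a short case check over $a_\omega(g), a_\omega(h) \in \{0,1\}$ and $\epsilon \in \{\pm 1\}$. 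What remains is $\zeta^\epsilon_{g,h} = \beta_g^U (\eta_g^\epsilon)^{-1} \beta_h^U ((\eta_h^{\eg g \epsilon})_{-a_\omega(g)\epsilon})^{-1} \eta_{gh}^\epsilon (\beta_{gh}^U)^{-1}$. Each factor is a graded tensor product over $\sigma = L, R$, and since the one-step $\tau$-shift preserves the partition $H_L^{c_L^{(1)}}$ vs.\ $H_R^{c_R^{(1)}}$, we may write $\zeta^\epsilon_{g,h} = \zeta^\epsilon_{g,h,L} \hat\otimes \zeta^\epsilon_{g,h,R}$ with $\zeta^\epsilon_{g,h,\sigma}$ exactly the automorphism appearing on the right-hand side of (\ref{u39}).

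To extract the unitaries, I would observe that $\zeta^\epsilon_{g,h,\sigma}$ fixes $\al_{C_\theta^c}$ pointwise: the $\eta$'s are supported in $C_\theta$, and the surviving $\beta$-product $\beta_g^{U\sigma}\beta_h^{U\sigma}(\beta_{gh}^{U\sigma})^{-1}$ vanishes because $\beta$ is a group action. Hence $\zeta^\epsilon_{g,h,L} \hat\otimes \zeta^\epsilon_{g,h,R}$ commutes with $\Upsilon$, and using (\ref{zlrten}) together with the factorization from $\caD^V(\alpha,\theta)$ we obtain $(\omega_L \alpha_L \zeta^\epsilon_{g,h,L} \hat\otimes \omega_R \alpha_R \zeta^\epsilon_{g,h,R}) \circ \Upsilon \simeq \omega \simeq (\omega_L \alpha_L \hat\otimes \omega_R \alpha_R) \circ \Upsilon$. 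Canceling $\Upsilon$ and applying Lemma \ref{lem26} yields $\omega_\sigma \alpha_\sigma \zeta^\epsilon_{g,h,\sigma} \simeq \omega_\sigma \alpha_\sigma$ on each side. Since $\omega_\sigma$ is a pure Fock state and $\alpha_\sigma, \zeta^\epsilon_{g,h,\sigma}$ are automorphisms, both states are pure, so quasi-equivalence is unitary equivalence and the intertwining unitary on $\caH_\sigma$ is the desired $u_\sigma^\epsilon(g,h)$. The main obstacle I expect is the clean $L$/$R$ factorization of $\zeta^\epsilon_{g,h}$: one must verify that the one-step $\tau$-shifts do not push any support across the $L$--$R$ boundary (which is where the margin built into the definitions of $c_L^{(1)}$ and $c_R^{(1)}$ becomes essential) and that all factors are even so that the graded tensor product incurs no unexpected sign---a bookkeeping task, but the only place where the argument could go wrong.
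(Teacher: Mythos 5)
Your proposal is correct and follows essentially the same route as the paper's proof: define the automorphism $\zeta_{g,h,\sigma}^{\epsilon}$ (equation (\ref{zetadefs})) as the factor of $\gamma_g^\epsilon\gamma_h^{(-1)^{a_\omega(g)}\epsilon}(\gamma_{gh}^\epsilon)^{-1}$ after the $\tau$-powers cancel via the integer identity $a_\omega(g)\epsilon + a_\omega(h)(-1)^{a_\omega(g)}\epsilon = a_\omega(gh)\epsilon$, use that $\zeta_{g,h,\sigma}^\epsilon$ is localized in $C_\theta\cap H_\sigma^{c_\sigma^{(1)}}$ and hence commutes with $\Upsilon$, split the resulting quasi-equivalence of product states into its $L$ and $R$ factors by the restriction lemma (the paper cites Lemma \ref{lem21}; your invocation of Lemma \ref{lem26} works equally well), and note that purity upgrades quasi-equivalence to unitary equivalence to produce $u_\sigma^\epsilon(g,h)$. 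Your write-up spells out the $\tau$-cancellation and the support bookkeeping more explicitly than the paper does, which is not a defect; the argument is sound.
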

\begin{proof}
Set
\begin{align}\label{zetadefs}
\zeta_{g,h,\sigma}^\epsilon:
=\bgu{g}\sigma\lmk\eta_{g\sigma}^\epsilon\rmk^{-1}
\bgu{h}\sigma\lmk \lmk \eta_{h\sigma}^{\eg{g}\epsilon}\rmk_{-\ao{g}\epsilon}\rmk^{-1}
\eta_{gh,\sigma}^\epsilon \lmk \bgu{gh}\sigma\rmk^{-1}
\in \auz
\lmk \caci{\theta}\sigma\rmk
\end{align}
for $g,h\in G$, $\sigma=L,R$, $\epsilon=\pm 1$.
Note from $(\eta_{g\sigma}^\epsilon)\in I(\omega,\theta)$
and
\begin{align}\label{aaae}
\ao{g}\epsilon+\ao{h}\eg{g}{\epsilon}=\ao{gh}\epsilon
\end{align}
that 
\begin{align}
\omega\circ\lmk \zeta_{g,h,L}^\epsilon\hat\otimes \zeta_{g,h,R}^\epsilon\rmk
\simeq \omega.
\end{align}
Substituting $\omega\simeq \omega^{(0)}\lmk\alpha_L\hat\otimes \alpha_R\rmk \Upsilon$,
from the commutativity of $\Upsilon$ and $\zeta_{g,h,\sigma}^\epsilon$,
we obtain
\begin{align}
\widehat\bigotimes_{\sigma=L,R} \omega_{p_{H_\sigma}} \alpha_\sigma\zeta_{g,h,\sigma}^\epsilon
\simeq \widehat\bigotimes_{\sigma=L,R} \omega_{p_{H_\sigma}} \alpha_\sigma.
\end{align}
Because both sides are pure homogeneous states,
 Lemma \ref{lem21} implies
\begin{align}
\omega_{p_{H_\sigma}} \alpha_\sigma\zeta_{g,h,\sigma}^\epsilon
\simeq \omega_{p_{H_\sigma}} \alpha_\sigma.
\end{align}
Hence we complete the proof.
\end{proof}

\begin{lem}\label{lem40}
For any $\omega\in\SPT$, $\alpha\in \EAut(\omega)$, 
$0<\theta<\frac\pi 2$, $(\eta_{g\sigma}^\epsilon)\in I(\omega,\theta)$,
$(\alpha_L,\alpha_R, \Upsilon)\in \caD^V(\alpha, \theta)$,
there are unitaries
$W_g^\epsilon\in \caU(\caH_L\otimes \caH_R)$
such that 
\begin{align}\label{w40}
\Ad\lmk W_g^\epsilon\rmk \circ\lmk\pi_L\alpha_L\hat \otimes \pi_R\alpha_R\rmk
=\lmk\pi_L\alpha_L\hat \otimes \pi_R\alpha_R\rmk \circ\Upsilon
\beta_g^U\lmk \eta_{gL}^\epsilon \hat\otimes \eta_{gR}^\epsilon \rmk^{-1}
\tau^{-\ao{g}\epsilon} \Upsilon^{-1}.
\end{align}
\end{lem}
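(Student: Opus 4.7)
The plan is to reduce (\ref{w40}) to a quasi-equivalence of two pure states on $\caA$ whose GNS representations are both naturally realized on $\caH_L\otimes\caH_R$, and then invoke the standard fact that quasi-equivalence of pure states implies unitary equivalence of their GNS representations.

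Set $\gamma_g^\epsilon:=\beta_g^U\lmk\eta_{gL}^\epsilon\hat\otimes\eta_{gR}^\epsilon\rmk^{-1}\tau^{-\ao g\epsilon}$, so the automorphism on the right hand side of (\ref{w40}) is $\Upsilon\gamma_g^\epsilon\Upsilon^{-1}$. From $(\eta_{g\sigma}^\epsilon)\in I(\omega,\theta)$ we have $\omega\beta_g^U\simeq\omega\tau^{\ao g\epsilon}\lmk\eta_{gL}^\epsilon\hat\otimes\eta_{gR}^\epsilon\rmk$, and since quasi-equivalence is preserved by post-composition with any automorphism, composing both sides on the right with $\lmk\eta_{gL}^\epsilon\hat\otimes\eta_{gR}^\epsilon\rmk^{-1}\tau^{-\ao g\epsilon}$ gives $\omega\circ\gamma_g^\epsilon\simeq\omega$. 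From $(\alpha_L,\alpha_R,\Upsilon)\in\caD^V(\alpha,\theta)$ we have $\alpha=(\alpha_L\hat\otimes\alpha_R)\Upsilon\circ\inn$, and since inner automorphisms preserve quasi-equivalence, $\omega=\omega^{(0)}\circ\alpha\simeq\omega^{(0)}\circ(\alpha_L\hat\otimes\alpha_R)\circ\Upsilon$. Combining these two facts and post-composing with $\Upsilon^{-1}$, we arrive at
\begin{align*}
\omega^{(0)}\circ(\alpha_L\hat\otimes\alpha_R)\circ\Upsilon\gamma_g^\epsilon\Upsilon^{-1}\simeq\omega^{(0)}\circ(\alpha_L\hat\otimes\alpha_R).
\end{align*}

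Both states are pure, since $\omega^{(0)}$ is pure and automorphisms preserve purity. By (\ref{zlrten}), $(\caH_L\otimes\caH_R,\pi_L\alpha_L\hat\otimes\pi_R\alpha_R,\Omega_L\otimes\Omega_R)$ is a GNS representation of the right hand side $\omega^{(0)}\circ(\alpha_L\hat\otimes\alpha_R)=(\omega_L\alpha_L)\hat\otimes(\omega_R\alpha_R)$. Precomposing this representation with the automorphism $\Upsilon\gamma_g^\epsilon\Upsilon^{-1}$ preserves irreducibility and the cyclicity of $\Omega_L\otimes\Omega_R$, hence yields a GNS representation of the left hand side on the same Hilbert space. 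Unitary equivalence of irreducible GNS representations of quasi-equivalent pure states then supplies a unitary $W_g^\epsilon\in\caU(\caH_L\otimes\caH_R)$ satisfying (\ref{w40}).

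The argument is conceptually routine and I expect no real obstacle; the only things requiring care are the composition order of the various automorphisms and the absorption of the inner factor implicit in the $\caD^V(\alpha,\theta)$ decomposition into the $\simeq$ relation.
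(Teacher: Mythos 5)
Your proof is correct and is simply a fully unwound version of what the paper leaves implicit: the paper's entire proof of this lemma is the one-line remark that it follows from $\omega\simeq\omega^{(0)}(\alpha_L\hat\otimes\alpha_R)\Upsilon$ together with $(\eta_{g\sigma}^\epsilon)\in I(\omega,\theta)$, which is exactly the chain of quasi-equivalences you spell out, followed by the standard fact that unitarily equivalent pure states on the same GNS space give the implementing unitary $W_g^\epsilon$. No gap, and no meaningful deviation from the paper's intended argument.
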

\begin{proof}
This is immediate from $\omega\simeq \omega^{(0)}\lmk\alpha_L\hat\otimes \alpha_R\rmk \Upsilon$,
and  $(\eta_{g\sigma}^\epsilon)\in I(\omega,\theta)$.
\end{proof}
\begin{defn}\label{bkrdef}
For any $\omega\in\SPT$, $\alpha\in \EAut(\omega)$, 
$0<\theta<\frac\pi 2$, $(\eta_{g\sigma}^\epsilon)\in I(\omega,\theta)$,
$(\alpha_L,\alpha_R, \Upsilon)\in \caD^V(\alpha, \theta)$,
we denote by $\IP\lmk \omega,\alpha,\theta, (\eta_{g\sigma}^\epsilon), (\alpha_L,\alpha_R,\Upsilon)\rmk$
the set of $\lmk(W_g^\epsilon), (u_\sigma^\epsilon(g,h))\rmk$
satisfying (\ref{w40}) and (\ref{u39}).
From Lemma \ref{lem42}, there are $b_g^\epsilon, \kappa_\sigma^\epsilon(g,h)\in\{0,1\}$ such that 
\begin{align}
\Ad\lmk \Gamma_L\otimes \Gamma_R\rmk\lmk W_g^\epsilon\rmk=(-1)^{b_g^\epsilon}W_g^\epsilon,\quad
\Ad \Gamma_\sigma \lmk u_\sigma^\epsilon(g,h)\rmk=(-1)^{\kappa_\sigma^\epsilon(g,h)}u_\sigma^\epsilon(g,h).
\end{align}
Because $\pi_L,\pi_R$ are irreducible,  $b_g^\epsilon, \kappa_\sigma^\epsilon(g,h)\in\{0,1\}$  are independent of the choice of $\lmk(W_g^\epsilon), (u_\sigma^\epsilon(g,h))\rmk$.
(The independence of them from the choice of $\lmk(W_g^\epsilon), (u_\sigma^\epsilon(g,h))\rmk$
is because the ambiguity of $\lmk(W_g^\epsilon), (u_\sigma^\epsilon(g,h))\rmk$ are just $\Uo$-phases.
)
Define 
$b\in C^1(G, \bbZ_2\oplus \bbZ_2)$, $\kappa_\sigma\in C^2(G, \bbZ_2\oplus \bbZ_2)$ by
\begin{align}\label{bkdef}
b(g):=\lmk b_g^{+1}, b_g^{-1}\rmk,\quad
\kappa_\sigma(g,h):=\lmk \kappa_\sigma^{+1}(g,h), \kappa_\sigma^{-1}(g,h)\rmk\in \bbZ_2\oplus \bbZ_2,\quad g,h\in G.
\end{align}
We denote $b\in C^1(G, \bbZ_2\oplus \bbZ_2)$, $\kappa_\sigma\in C^2(G, \bbZ_2\oplus \bbZ_2)$
defined in (\ref{bkdef})
 by
\begin{align}
\begin{split}
b\lmk \omega, \alpha,\theta, (\eta_{g\sigma}^\epsilon),
(\alpha_L,\alpha_R, \Upsilon)
\rmk,
\kappa_\sigma\lmk \omega, \alpha,\theta, (\eta_{g\sigma}^\epsilon),
(\alpha_L,\alpha_R, \Upsilon)
\rmk,
\end{split}
\end{align}
respectively.

\end{defn}
In the following theorem, we used notation from Definition \ref{bkrdef}.
\begin{lem}\label{lem43}
For any $\omega\in\SPT$, $\alpha\in \EAut(\omega)$, 
$0<\theta<\frac\pi 2$, $(\eta_{g\sigma}^\epsilon)\in I(\omega,\theta)$,
$(\alpha_L,\alpha_R, \Upsilon)\in \caD^V(\alpha, \theta)$,
$\lmk(W_g^\epsilon), (u_\sigma^\epsilon(g,h))\rmk\in \IP\lmk \omega,\alpha,\theta, (\eta_{g\sigma}^\epsilon), (\alpha_L,\alpha_R,\Upsilon)\rmk$ the following holds.
\begin{description}
\item[(i)]
$u_\sigma^\epsilon(g,h)\in \lmk \pi_\sigma\alpha_\sigma\lmk \cacc{\theta}{\sigma}{1}\rmk\rmk'$.
\item[(ii)] For any $x\in \caB(\caH_L\otimes \caH_R)$,
\begin{align}
\Ad\lmk W_g^\epsilon W_h^{\eg{g}\epsilon} {W_{gh}^\epsilon}^*\rmk(x)
=\Ad\lmk u_L^\epsilon(g,h)\otimes u_R^\epsilon (g,h)\Gamma_R^{\kappa_L^\epsilon (g,h)}\rmk(x).
\end{align}
\item[(iii)]
\begin{align}
\Ad\lmk W_g^\epsilon\rmk
\lmk
 \bbC\unih{L}\otimes 
\lmk 
\pa{R}\lmk \cacc{\theta}R2\rmk\rmk'\rmk
\subset  \bbC\unih{L}\otimes \caB(\caH_R)
\end{align}
\item[(iv)]
\begin{align}
\Ad\lmk W_g^\epsilon W_h^{\eg g \epsilon}\rmk\lmk \unih L\otimes u_R^{\epsilon'}(k,f)\rmk
=\Ad\lmk \lmk \unih L \otimes u^\epsilon_R(g,h)\Gamma_R^{\kappa_L^\epsilon (g,h)}\rmk W_{gh}^\epsilon\rmk
\lmk \unih L\otimes u_R^{\epsilon'}(k,f)\rmk
\end{align}

\end{description}
\end{lem}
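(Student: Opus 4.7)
The plan is to treat the four parts using (\ref{u39}) and (\ref{w40}) together with the structure of $\zeta_{g,h,\sigma}^\epsilon$ from (\ref{zetadefs}) and the automorphism $\phi_g^\epsilon := \Upsilon\,\beta_g^U\,(\eta_g^\epsilon)^{-1}\,\tau^{-\ao g\epsilon}\,\Upsilon^{-1}$ that $W_g^\epsilon$ implements in $\Pi := \pi_L\alpha_L\hat\otimes\pi_R\alpha_R$ via (\ref{w40}). Part (i) is immediate: since $\zeta_{g,h,\sigma}^\epsilon\in\auz(\caA_{\caci{\theta}{\sigma}})$ with $\caci{\theta}{\sigma} = C_\theta\cap H_\sigma^{c_\sigma^{(1)}}$ disjoint from the region defining $\cacc{\theta}{\sigma}{1}$, we have $\zeta_{g,h,\sigma}^\epsilon(a) = a$ for every $a\in\cacc{\theta}{\sigma}{1}$, and (\ref{u39}) then gives $\Ad(u_\sigma^\epsilon(g,h))(\pi_\sigma\alpha_\sigma(a)) = \pi_\sigma\alpha_\sigma(a)$.

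For (ii), I compute the triple product $\phi_g^\epsilon\,\phi_h^{\eg g\epsilon}\,(\phi_{gh}^\epsilon)^{-1}$. Using that $\tau$ and $\beta_h^U$ commute (their supports $\bbZ\times\{0\}$ and $H_U^1$ are disjoint), the translation identity $\tau^{-\ao g\epsilon}(\eta_h^{\eg g\epsilon})^{-1}\tau^{\ao g\epsilon} = ((\eta_h^{\eg g\epsilon})_{-\ao g\epsilon})^{-1}$ from Definition \ref{lem30}, the cocycle relation (\ref{aaae}), and the $L\hat\otimes R$ factorizations of $\beta_g^U$ and $\eta_g^\epsilon$, the triple product collapses to $\Upsilon\,(\zeta_{g,h,L}^\epsilon \hat\otimes \zeta_{g,h,R}^\epsilon)\,\Upsilon^{-1}$. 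Since each $\zeta_{g,h,\sigma}^\epsilon$ is supported in $C_\theta$ while $\Upsilon\in\QAut(\caA_{C_\theta^c})$, the outer $\Upsilon$-conjugation drops out, leaving $\zeta_{g,h,L}^\epsilon \hat\otimes \zeta_{g,h,R}^\epsilon$. Each $u_\sigma^\epsilon(g,h)$ implements $\zeta_{g,h,\sigma}^\epsilon$ in $\pi_\sigma\alpha_\sigma$ and has grading $\kappa_\sigma^\epsilon(g,h)$; the graded-tensor convention then gives that $u_L^\epsilon(g,h)\otimes u_R^\epsilon(g,h)\Gamma_R^{\kappa_L^\epsilon(g,h)}$ implements $\zeta_{g,h,L}^\epsilon \hat\otimes \zeta_{g,h,R}^\epsilon$ in $\Pi$, the extra $\Gamma_R^{\kappa_L^\epsilon(g,h)}$ compensating for the parity of the left implementer when commuted past the grading factor appearing in the formula for $\Pi(a\hat\otimes b)$.

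The main obstacle is (iii). The strategy is to show that for $a\in\caA_{H_L}$ the element $(\phi_g^\epsilon)^{-1}(a)$ has support disjoint from $C_\theta\cap H_R^{c_R^{(2)}}$, i.e., sits (in a quasi-local sense) in the algebra generated by $\caA_{H_L}$ and $\cacc{\theta}{R}{2}$. I plan to track supports through $(\phi_g^\epsilon)^{-1} = \Upsilon\,\tau^{\ao g\epsilon}\,\eta_g^\epsilon\,(\beta_g^U)^{-1}\,\Upsilon^{-1}$: both $\Upsilon^{\pm 1}$ respect $C_\theta^c$, which does not meet $y=0$ since the line $y=0$ lies in $C_\theta$; $(\beta_g^U)^{-1}$ is on-site; $\eta_{gR}^\epsilon$ acts trivially because its support $C_\theta\cap H_R^{c_R}$ is never populated by the intermediate operator; and $\tau^{\ao g\epsilon}$ shifts the $y=0$ content of $a$ by at most one site, keeping it inside $\bbZ_{\le 0}\times\{0\}$, safely outside $C_\theta\cap H_R^{c_R^{(2)}}$. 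The choice $c_R^{(2)} = c_R - 2 = 3$ provides the buffer needed to absorb quasi-local tails, via Proposition \ref{lem32} if necessary. With this inclusion in hand, for any $x\in\pa R(\cacc{\theta}{R}{2})'$ the operator $\unih L\otimes x$ commutes with $\Pi\lmk (\phi_g^\epsilon)^{-1}(\caA_{H_L})\rmk = \Ad(W_g^{\epsilon*})\lmk\pa L(\caA_{H_L})\otimes\unih R\rmk$, so $\Ad(W_g^\epsilon)(\unih L\otimes x)$ commutes with $\pa L(\caA_{H_L})\otimes\unih R$, and the irreducibility of $\pi_L\alpha_L$ forces this conjugate into $\unih L\otimes \caB(\caH_R)$.

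Part (iv) then combines (i), (ii), and (iii). By (ii) the unitaries $W_g^\epsilon W_h^{\eg g\epsilon}(W_{gh}^\epsilon)^*$ and $u_L^\epsilon(g,h)\otimes u_R^\epsilon(g,h)\Gamma_R^{\kappa_L^\epsilon(g,h)}$ differ only by a scalar, so $\Ad(W_g^\epsilon W_h^{\eg g\epsilon}) = \Ad\lmk(u_L^\epsilon(g,h)\otimes u_R^\epsilon(g,h)\Gamma_R^{\kappa_L^\epsilon(g,h)})\,W_{gh}^\epsilon\rmk$. From $\cacc{\theta}{R}{2}\subset\cacc{\theta}{R}{1}$, part (i) yields $u_R^{\epsilon'}(k,f)\in\pa R(\cacc{\theta}{R}{2})'$, and (iii) applied to $W_{gh}^\epsilon$ shows $\Ad(W_{gh}^\epsilon)(\unih L\otimes u_R^{\epsilon'}(k,f))\in\unih L\otimes\caB(\caH_R)$. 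Operators of this latter form commute with $u_L^\epsilon(g,h)\otimes\unih R$, so the left factor drops out of the outer $\Ad$, yielding exactly the identity in (iv).
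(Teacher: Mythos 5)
Your argument is correct and follows the same scheme as the paper's (which only sketches: it refers to Lemma 2.3 of \cite{2dSPT} and flags the extra bookkeeping for the $\tau$-shift via Lemma \ref{lem30}). Parts (i) and (ii) are exactly the standard computation: $u_\sigma^\epsilon(g,h)$ implements $\zeta_{g,h,\sigma}^\epsilon$, which is supported in $\caci{\theta}{\sigma}$ and hence fixes $\cacc{\theta}{\sigma}{1}$ pointwise; and the triple product of the $\phi_g^\epsilon$ collapses to $\zeta_{g,h,L}^\epsilon\hat\otimes\zeta_{g,h,R}^\epsilon$ after commuting $\tau^{-\ao g\epsilon}$ past $\beta_h^U$, converting $\tau^{-\ao g\epsilon}(\eta_h^{\eg g\epsilon})^{-1}\tau^{\ao g\epsilon}$ into $\lmk(\eta_h^{\eg g\epsilon})_{-\ao g\epsilon}\rmk^{-1}$, using (\ref{aaae}) to combine the $\tau$-powers, and dropping the $\Upsilon$-conjugation by disjointness of supports. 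Lemma \ref{lem42} then correctly produces the $\Gamma_R^{\kappa_L^\epsilon(g,h)}$ factor, and irreducibility of $\Pi$ gives that the two implementers differ only by a phase. For (iii), your support-tracking of $(\phi_g^\epsilon)^{-1}(\caA_{H_L})$ is correct: $\Upsilon^{\pm1}$ can only pull support into $C_\theta^c$; $\beta_g^U$ is on-site; $\eta_{gR}^\epsilon$ never touches the intermediate support because $H_R^{c_R}\cap C_\theta$ is disjoint from $(H_L\cap C_\theta)\cup C_\theta^c$; and $\tau^{\ao g\epsilon}$ moves only the $y=0$ row of $H_L$, landing inside $\bbZ_{\le 0}\times\{0\}$, still disjoint from $C_\theta\cap H_R^{c_R^{(2)}}$. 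So $\Pi((\phi_g^\epsilon)^{-1}(a))\in\pa L(\caA_{H_L})\hat\otimes\pa R(\cacc{\theta}{R}{2})$, and the commutant argument with irreducibility of $\pi_L\alpha_L$ delivers (iii). Part (iv) then follows precisely as you say from (i)+(ii)+(iii) and the inclusion $\cacc{\theta}{R}{2}\subset\cacc{\theta}{R}{1}$. One minor inaccuracy: your appeal to Proposition \ref{lem32} and "quasi-local tails" in (iii) is not actually needed — the inclusion $\Upsilon^{-1}(\caA_{H_L})\subset\caA_{(H_L\cap C_\theta)\cup C_\theta^c}$ is exact, not approximate, since $\Upsilon\in\QAut(\caA_{C_\theta^c})$ is an honest automorphism of $\caA_{C_\theta^c}$ extended by the identity on $\caA_{C_\theta}$. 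This does not affect the validity of the argument.
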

\begin{proof}
The proof  is the same as that of Lemma 2.3 of \cite{2dSPT},
using Lemma \ref{lem42}.
We just need to care about the fact that $\tau$ can shift the support of automorphism
according to Lemma \ref{lem30}.
\end{proof}

\begin{lem}\label{lem41}
Let $\omega\in\SPT$, $\alpha\in \EAut(\omega)$, 
$0<\theta<\frac\pi 2$, $(\eta_{g\sigma}^\epsilon)\in I(\omega,\theta)$,
$(\alpha_L,\alpha_R, \Upsilon)\in \caD^V(\alpha, \theta)$,
$\lmk(W_g^\epsilon), (u_\sigma^\epsilon(g,h))\rmk
\in \IP\lmk \omega,\alpha,\theta, (\eta_{g\sigma}^\epsilon), (\alpha_L,\alpha_R,\Upsilon)\rmk$.
Then there is some $\css{\sigma}{\epsilon}(g,h,k)\in \Uo$ for each $g,h,k\in G$ and $\epsilon=\pm 1$, $\sigma=L,R$
such that 
\begin{align}\label{lem41eq}
\begin{split}
&W_g^\epsilon\lmk\unih L\otimes \uss R {\eg g \epsilon}(h,k)\rmk{W_g^\epsilon}^*
\lmk \unih L\otimes \uss R \epsilon (g,hk)\rmk
=\css R\epsilon (g,h,k) \lmk\unih L \otimes \uss R\epsilon (g,h) \uss R\epsilon (gh,k)\rmk\\
&W_g^\epsilon \lmk \uss L {\eg{g}\epsilon}(h,k)\otimes \Gamma_R^{\kss L {\eg g \epsilon}(h,k)}\rmk{W_g^\epsilon}^*
\lmk \uss  L\epsilon (g,hk)\otimes \Gamma_R^{\kss L \epsilon (g,hk)}\rmk
=\css L \epsilon (g,h,k) \uss L \epsilon (g, h)\uss L \epsilon (gh, k)\otimes \Gamma_R^{\kss L \epsilon (g,h)+\kss L \epsilon (gh,k)}.
\end{split}
\end{align}
\end{lem}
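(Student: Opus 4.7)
The plan is to mirror the quantum-spin argument of~\cite{2dSPT}: I will establish each identity in~(\ref{lem41eq}) by showing the two sides implement the same automorphism of a type I factor, so they differ by a scalar in $\Uo$.

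For the $c_R^\epsilon$ equation I would first localize the LHS into $\bbC\unih L\otimes \caB(\caH_R)$. Since Lemma~\ref{lem43}(i) gives $u_R^{\eg g \epsilon}(h,k)\in\lmk \pi_R\alpha_R\lmk \cacc{\theta}{R}{1}\rmk\rmk'\subset\lmk \pi_R\alpha_R\lmk \cacc{\theta}{R}{2}\rmk\rmk'$, Lemma~\ref{lem43}(iii) forces $W_g^\epsilon\lmk \unih L\otimes u_R^{\eg g\epsilon}(h,k)\rmk{W_g^\epsilon}^*\in\bbC\unih L\otimes\caB(\caH_R)$; together with the manifestly localized $\unih L\otimes u_R^\epsilon(g,hk)$ the whole LHS lies in this subalgebra, and the RHS trivially does. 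Writing both sides as $\unih L\otimes (\cdot)$, I would compare the automorphisms they induce on $\pi_R\alpha_R(\caA_{H_R})$: by~(\ref{u39}) the RHS-component acts as $\zeta_{g,h,R}^\epsilon\circ\zeta_{gh,k,R}^\epsilon$, while by~(\ref{w40}) and~(\ref{u39}) the LHS-component acts as $\phi_g^\epsilon\circ\lmk\id\hat\otimes\zeta_{h,k,R}^{\eg g \epsilon}\rmk\circ\lmk\phi_g^\epsilon\rmk^{-1}\circ\lmk\id\hat\otimes\zeta_{g,hk,R}^\epsilon\rmk$, where $\phi_g^\epsilon:=\Upsilon\beta_g^U\lmk\eta_g^\epsilon\rmk^{-1}\tau^{-a_\omega(g)\epsilon}\Upsilon^{-1}$. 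I would then collapse this using three facts: $\Upsilon$ commutes with the $\zeta$'s (disjoint supports $C_\theta$ vs.\ $C_\theta^c$); $\tau$-conjugation shifts supports per Definition~\ref{lem30}; and expanding the $\zeta$'s by~(\ref{zetadefs}) together with $\beta_g^U\beta_h^U\beta_k^U=\beta_{ghk}^U$ telescopes the composition to $\zeta_{g,h,R}^\epsilon\circ\zeta_{gh,k,R}^\epsilon$. Since $\omega_R\alpha_R$ is pure, $\pi_R\alpha_R(\caA_{H_R})''=\caB(\caH_R)$ is a factor, so the two unitaries differ by a scalar $c_R^\epsilon(g,h,k)\in\Uo$.

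For the $c_L^\epsilon$ equation I would use the same template with $L$ and $R$ swapped, but the grading twists $\Gamma_R^{\kappa_L^\epsilon(\cdot,\cdot)}$ are not localized in any commutant, so Lemma~\ref{lem43}(iii) cannot be applied directly. The remedy is to use Lemma~\ref{lem43}(ii) as a global packaging tool: $W_g^\epsilon W_h^{\eg g\epsilon}{W_{gh}^\epsilon}^*$ is $\Ad$-equivalent to $u_L^\epsilon(g,h)\otimes u_R^\epsilon(g,h)\Gamma_R^{\kappa_L^\epsilon(g,h)}$, so computing $W_g^\epsilon W_h^{\eg g\epsilon} W_k^{\eg{gh}\epsilon}$ under the two parenthesizations yields, after commuting $\Gamma_R^{\kappa_L^\epsilon(g,h)}$ past $u_R^\epsilon(gh,k)$ (producing the sign $(-1)^{\kappa_L^\epsilon(g,h)\kappa_R^\epsilon(gh,k)}$), an equality of two elements of $\caB(\caH_L\otimes\caH_R)$ up to a single $\Uo$-phase. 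Using the $L$-analogue of Lemma~\ref{lem43}(iii) this equation separates into the $R$-part (which recovers the $c_R^\epsilon$ identity) and the $L$-part, and factoriality of $\pi_L\alpha_L(\caA_{H_L})''=\caB(\caH_L)$ extracts the scalar $c_L^\epsilon(g,h,k)\in\Uo$.

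\textbf{Main obstacle.} The technically delicate step will be the support tracking in the $c_R^\epsilon$ computation: the $\tau^{\pm a_\omega(g)\epsilon}$-shifts displace the supports of the $\zeta$'s off their original cones, so I would have to invoke Proposition~\ref{lem32} together with the hierarchy $c_\sigma=\pm 5$, $c_\sigma^{(1)}=\pm 4$, $c_\sigma^{(2)}=\pm 3$ to guarantee that the shifted supports stay inside the allowed regions before the telescoping can be performed. The sign $(-1)^{\kappa_L^\epsilon(g,h)\kappa_R^\epsilon(gh,k)}$ that appears in the $L$-equation analysis is not a defect of the proof but precisely the sign that will reappear later in the $3$-cocycle relation~(\ref{c49}).
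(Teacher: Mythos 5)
Your core strategy — show that both sides of each identity in (\ref{lem41eq}) intertwine the same pair of representations and then invoke irreducibility to extract a $\Uo$-scalar — is exactly the paper's argument. The telescoping you describe for the $c_R^\epsilon$ equation (expanding the $\zeta$'s via (\ref{zetadefs}), using that $\Upsilon$ commutes with automorphisms localized in $C_\theta$, using (\ref{aaae}) to collapse the $\tau$-shifts, and using commutation of $\tau$ with $\beta^{UR}_\cdot$) is precisely the computation hidden behind the paper's three-line display, which compares $\Ad(\cdot)$ of the two unitaries on the auxiliary representation $\pat\circ\beta_{ghk}^{UR}(\eta_{ghk,R}^\epsilon)^{-1}\tau^{-\ao{ghk}\epsilon}$.

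However, your ``localize the LHS into $\bbC\unih L\otimes\caB(\caH_R)$ first'' step (via Lemma \ref{lem43}(i),(iii)) is a detour that the paper does not take and that you do not actually need: once you have $\Ad(\mathrm{LHS})\circ\pat=\Ad(\mathrm{RHS})\circ\pat$, the irreducibility of $\pat$ immediately gives $\mathrm{RHS}^*\,\mathrm{LHS}\in\pat(\caA)'=\bbC\unit$, with no need to first pin down the subalgebra either side lives in. This matters because dropping that step also dissolves the ``obstacle'' you raise for the $c_L^\epsilon$ equation. Your worry that $\Gamma_R^{\kappa_L^\epsilon(\cdot,\cdot)}$ ``is not localized in any commutant'' is only a problem for the localization step, not for the irreducibility argument: by Lemma \ref{lem42}(i), the unitary $u_L^\epsilon(g,h)\otimes\Gamma_R^{\kappa_L^\epsilon(g,h)}$ implements $\zeta_{g,h,L}^\epsilon\hat\otimes\id_R$ on $\pat$, so the $c_L^\epsilon$ equation follows by the mirror image of the $c_R^\epsilon$ telescoping plus irreducibility. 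This is what the paper tacitly intends when it proves only the first line of (\ref{lem41eq}) and declares the existence of both $c_L^\epsilon$ and $c_R^\epsilon$. Your alternative route through the triple product $W_g^\epsilon W_h^{\eg g\epsilon}W_k^{\eg{gh}\epsilon}$ and Lemma \ref{lem43}(ii) is viable in principle, but it produces a single phase relation that conflates $c_L^\epsilon$ and $c_R^\epsilon$; extracting the $c_L^\epsilon$ equation from it requires first knowing the $c_R^\epsilon$ identity and then carefully separating the $L$- and $R$-factors, which is more work than the direct argument you already used for $c_R^\epsilon$. The support-tracking concern you flag is genuine but is handled exactly by the tools you cite (Definition \ref{lem30}, the $c_\sigma,c_\sigma^{(1)},c_\sigma^{(2)}$ hierarchy), and the sign $(-1)^{\kappa_L^\epsilon(g,h)\kappa_R^\epsilon(gh,k)}$ is indeed the one that resurfaces in (\ref{c49}) — that observation is correct and will be used in Lemma \ref{lem44}.
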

\begin{defn}\label{crdef}
We define
$c\lmk \omega,\alpha,\theta, (\eta_{g\sigma}^\epsilon), (\alpha_L,\alpha_R,\Upsilon), (W_g^\epsilon), (u_\sigma^\epsilon(g,h))\rmk
\in C^3(G,\Uo\oplus \Uo)$
by a map $c: G^{\times 3}\to \Uo\oplus \Uo$ such that
\begin{align}
c(g,h,k)
:=\lmk \css R {+1} (g,h,k), \css R {-1} (g,h,k)\rmk\in \Uo\oplus \Uo,\quad g,h,k\in G
\end{align}
with $\css R \epsilon (g,h,k)$ in Lemma \ref{lem41}.
\end{defn}
\begin{proof}
The existence of $\css R\epsilon (g,h,k)\in \Uo$ satisfying  first equation of  (\ref{lem41eq}) follows from
\begin{align}
\begin{split}
&\Ad\lmk\unih L \otimes \uss R\epsilon (g,h) \uss R\epsilon (gh,k)\rmk\lmk  \pa L\hat\otimes \pa R\rmk 
\beta_{ghk}^{UR}\lmk \eta_{ghk, R}^\epsilon\rmk^{-1}\tau^{-\ao {ghk}\epsilon}\\
&=
\lmk  \pa L\hat\otimes \pa R\rmk \beta_g^{UR}\lmk \eta_{gR}^\epsilon\rmk^{-1} \tau^{-\ao g\epsilon}
 \beta_h^{UR}\lmk \eta_{hR}^{\eg g \epsilon}\rmk^{-1} \tau^{- \ao h\eg g \epsilon}
 \beta_k^{UR}\lmk \eta_{kR}^{\eg {gh} \epsilon}\rmk^{-1} \tau^{-\ao k\eg {gh} {\epsilon}}\\
 &=\Ad \lmk W_g^\epsilon\lmk\unih L\otimes \uss R {\eg g \epsilon}(h,k)\rmk{W_g^\epsilon}^*
\lmk \unih L\otimes \uss R \epsilon (g,hk)\rmk\rmk
\lmk  \pa L\hat\otimes \pa R\rmk 
\beta_{ghk}^{UR}\lmk \eta_{ghk, R}^\epsilon\rmk^{-1}\tau^{-\ao {ghk}\epsilon}.
\end{split}
\end{align}
%
\end{proof}

\begin{lem}\label{lem44}
Let  $\omega\in\SPT$, $\alpha\in \EAut(\omega)$, 
$0<\theta<\frac\pi 2$, $(\eta_{g\sigma}^\epsilon)\in I(\omega,\theta)$,
$(\alpha_L,\alpha_R, \Upsilon)\in \caD^V(\alpha, \theta)$,
and $\lmk(W_g^\epsilon), (u_\sigma^\epsilon(g,h))\rmk
\in \IP\lmk \omega,\alpha,\theta, (\eta_{g\sigma}^\epsilon), (\alpha_L,\alpha_R,\Upsilon)\rmk$.
Recall $a_\omega\in H^1(G,\bbZ_2)$ and 
\begin{align}
\begin{split}
&c\lmk \omega,\alpha,\theta, (\eta_{g\sigma}^\epsilon), (\alpha_L,\alpha_R,\Upsilon), (W_g^\epsilon), (u_\sigma^\epsilon(g,h))\rmk=: c\in
C^3(G, \Uo\oplus \Uo)\\
&b\lmk \omega, \alpha,\theta, (\eta_{g\sigma}^\epsilon),
(\alpha_L,\alpha_R, \Upsilon)
\rmk=:b\in  C^1(G, \bbZ_2\oplus \bbZ_2),\\
&\kappa_\sigma\lmk \omega, \alpha,\theta, (\eta_{g\sigma}^\epsilon),
(\alpha_L,\alpha_R, \Upsilon)
\rmk=:\kappa_\sigma \in C^2(G, \bbZ_2\oplus \bbZ_2),\quad\sigma=L,R
\end{split}
\end{align}
defined in Definition \ref{aodef} and Definition \ref{crdef},
Definition \ref{bkrdef} respectively.
Then we have
\begin{align}
\lmk
c,
\kappa_R,\kappa_L, b, a_\omega 
\rmk\in \widetilde \PD.
\end{align}
If $\lmk({W'}_g^\epsilon), ({u'}_\sigma^\epsilon(g,h))\rmk$ is another choice
from $ \IP\lmk \omega,\alpha,\theta, (\eta_{g\sigma}^\epsilon), (\alpha_L,\alpha_R,\Upsilon)\rmk$\\
and $c\lmk \omega,\alpha,\theta, (\eta_{g\sigma}^\epsilon), (\alpha_L,\alpha_R,\Upsilon), ({W'}_g^\epsilon), ({u'}_\sigma^\epsilon(g,h))\rmk=: c'$,
then we have
\begin{align}\label{cc44}
\lmk
c,
\kappa_R,\kappa_L, b, a_\omega 
\rmk
\sim_{\PD}
\lmk
c',
\kappa_R,\kappa_L, b, a_\omega 
\rmk.
\end{align}
\end{lem}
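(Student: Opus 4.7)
The plan is to prove the lemma in two stages: first, verify that $(c,\kappa_R,\kappa_L,b,a_\omega)$ satisfies the four defining relations (\ref{b49})--(\ref{c49}) of $\widetilde{\PD}$; second, verify that different choices of $\lmk(W_g^\epsilon),(u_\sigma^\epsilon(g,h))\rmk$ within $\IP\lmk\omega,\alpha,\theta,(\eta_{g\sigma}^\epsilon),(\alpha_L,\alpha_R,\Upsilon)\rmk$ change $c$ only by an equivalence in $\sim_{\PD}$.

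For (\ref{kr49}) and (\ref{kl49}), I will take the grade (with respect to the grading on $\caB(\caH_L\otimes\caH_R)$ given by $\Ad(\Gamma_L\otimes\Gamma_R)$) of each side of the two equations in (\ref{lem41eq}). Since $\Ad(W_g^\epsilon)$ preserves the grading and the scalars $\css{\sigma}{\epsilon}(g,h,k)\in\Uo$ are even, reading off the $R$-grade of the first equation immediately gives $\kss{R}{\eg{g}\epsilon}(h,k)+\kss{R}{\epsilon}(g,hk)=\kss{R}{\epsilon}(g,h)+\kss{R}{\epsilon}(gh,k)$ in $\bbZ_2$, that is, $d_a^2\kss{R}{\epsilon}=0$; bundling $\epsilon=\pm 1$ yields (\ref{kr49}). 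The analogous grade-computation for the second equation, combined with the fact that $\Gamma_R$ is of grade zero under its own adjoint action on $\caB(\caH_R)$ so that the factors $\Gamma_R^{\kss{L}{\epsilon}(\cdot,\cdot)}$ contribute nothing, gives (\ref{kl49}). For (\ref{b49}), since $\omega$ is pure, $\lmk\pi_L\hat\otimes\pi_R\rmk(\caA)''=\caB(\caH_L\otimes\caH_R)$ is a factor; Lemma \ref{lem43}(ii) then asserts that $W_g^\epsilon W_h^{\eg{g}\epsilon}{W_{gh}^\epsilon}^*$ and $\uss{L}{\epsilon}(g,h)\otimes\uss{R}{\epsilon}(g,h)\Gamma_R^{\kss{L}{\epsilon}(g,h)}$ implement the same inner automorphism, hence coincide up to a $\Uo$-phase, so comparing grades yields $d_a^1 b^\epsilon(g,h)=\kss{L}{\epsilon}(g,h)+\kss{R}{\epsilon}(g,h)$, which is (\ref{b49}).

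For the pentagon (\ref{c49}), I will compute $\Ad\lmk W_g^\epsilon W_h^{\eg{g}\epsilon}\rmk(\unih{L}\otimes\uss{R}{\eg{gh}\epsilon}(k,f))$ in two ways. On one hand, apply the first equation of (\ref{lem41eq}) in the equivalent form $W_g^\epsilon(\unih{L}\otimes\uss{R}{\eg{g}\epsilon}(h,k))W_g^{\epsilon*}=\css{R}{\epsilon}(g,h,k)\lmk\unih{L}\otimes\uss{R}{\epsilon}(g,h)\uss{R}{\epsilon}(gh,k)\uss{R}{\epsilon}(g,hk)^{-1}\rmk$ to each factor produced by a first use of (\ref{lem41eq}) at level $\eg{g}\epsilon$ on the triple $(h,k,f)$. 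On the other hand, use Lemma \ref{lem43}(ii) to replace $W_g^\epsilon W_h^{\eg{g}\epsilon}$ by $\lmk\uss{L}{\epsilon}(g,h)\otimes\uss{R}{\epsilon}(g,h)\Gamma_R^{\kss{L}{\epsilon}(g,h)}\rmk W_{gh}^\epsilon$ (up to a $\Uo$-phase that disappears under $\Ad$), and then apply (\ref{lem41eq}) with triple $(gh,k,f)$. The operator parts of the two expressions coincide; the phase discrepancy is precisely $d_a^3\css{R}{\epsilon}(g,h,k,f)$, and equals the sign produced by $\Ad(\Gamma_R^{\kss{L}{\epsilon}(g,h)})$ on the product $\uss{R}{\epsilon}(gh,k)\uss{R}{\epsilon}(ghk,f)\uss{R}{\epsilon}(gh,kf)^{-1}$. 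By the cocycle identity $d_a^2\kss{R}{\epsilon}=0$ established above, the total grade of this product is $\kss{R}{\eg{gh}\epsilon}(k,f)$, so the sign is $(-1)^{\kss{L}{\epsilon}(g,h)\cdot\kss{R}{\eg{gh}\epsilon}(k,f)}$. Bundling $\epsilon=\pm 1$ recovers (\ref{c49}).

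For (\ref{cc44}), irreducibility of $\pi_L\hat\otimes\pi_R$ (as $\omega$ is pure) and of each $\pi_\sigma\alpha_\sigma$ on $\caH_\sigma$ forces ${W'}_g^\epsilon=z_g^\epsilon W_g^\epsilon$ and ${u'}_\sigma^\epsilon(g,h)=w_\sigma^\epsilon(g,h)\uss{\sigma}{\epsilon}(g,h)$ for phases $z_g^\epsilon,w_\sigma^\epsilon(g,h)\in\Uo$; these do not affect the grades $b$ and $\kappa_\sigma$. Substituting into the first equation of (\ref{lem41eq}) which defines ${c'}^{R}_{\epsilon}$, the $z_g^\epsilon$-factors cancel in the conjugation ${W'}_g^\epsilon(\cdot){W'}_g^{\epsilon*}$ and one obtains ${c'}^{R}_{\epsilon}(g,h,k)=\lmk d_a^2 w_R\rmk_\epsilon(g,h,k)\cdot\css{R}{\epsilon}(g,h,k)$. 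Setting $\sigma(g,h):=\lmk w_R^{+1}(g,h),w_R^{-1}(g,h)\rmk\in C^2(G,\Uo\oplus\Uo)$ yields $c'=d_a^2\sigma\cdot c$, which is exactly the $m=0$, $b^{(2)}=b^{(1)}$ case of $\sim_{\PD}$, giving (\ref{cc44}). The main obstacle will be the sign-bookkeeping in the pentagon: one must pass $W_{gh}^{\epsilon*}$ through Lemma \ref{lem43}(ii) absorbing a $\Uo$-phase (which fortunately cancels under $\Ad$), carefully tally every commutation of $\Gamma_R^{\kss{L}{\epsilon}(g,h)}$ past an odd-graded $\uss{R}{\epsilon}$, and invoke the cocycle identity for $\kappa_R$ to collapse the resulting accumulated grade into the single $\kappa_R^{a_\omega(gh)}$-term appearing in (\ref{c49}).
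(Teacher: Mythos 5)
Your proposal is correct and follows essentially the same route as the paper: the cocycle conditions (\ref{b49})–(\ref{kl49}) are obtained by comparing grades via Lemma \ref{lem43}(ii) and (\ref{lem41eq}), the pentagon (\ref{c49}) by applying (\ref{lem41eq}) twice versus once after substituting $W_g^\epsilon W_h^{\eg{g}\epsilon}\sim\lmk u_L^\epsilon(g,h)\otimes u_R^\epsilon(g,h)\Gamma_R^{\kappa_L^\epsilon(g,h)}\rmk W_{gh}^\epsilon$, and (\ref{cc44}) by noting the $\Uo$-phase ambiguity in the $u$'s produces exactly a $d_a^2\sigma$-factor. The paper's own proof is terser, deferring the pentagon to the corresponding computation for quantum spin systems (Lemma 2.4 of \cite{2dSPT}); you have filled in precisely that computation inline, including the use of $d_a^2\kappa_R=0$ to collapse the $\Gamma_R$-commutation sign to $(-1)^{\kappa_L^\epsilon(g,h)\cdot\kappa_R^{\eg{gh}\epsilon}(k,f)}$, which is the main bookkeeping point.
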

\begin{defn}
From Lemma \ref{lem44},
for each  $\omega\in\SPT$, $\alpha\in \EAut(\omega)$, 
$0<\theta<\frac\pi 2$, $(\eta_{g\sigma}^\epsilon)\in I(\omega,\theta)$,
$(\alpha_L,\alpha_R, \Upsilon)\in \caD^V(\alpha, \theta)$,
we may define
\begin{align}
h^{(1)}\lmk\omega,\alpha, \theta,(\eta_{g\sigma}^\epsilon), (\alpha_L,\alpha_R, \Upsilon)\rmk:=
\left[
\lmk c,\kappa_R,\kappa_L,b,a_\omega\rmk
\right]_{\PD}\in \PD,
\end{align}
with 
$\lmk c,\kappa_R,\kappa_L,b,a_\omega\rmk$
in
Lemma \ref{lem44}, independent of the choice of $\lmk(W_g^\epsilon), (u_\sigma^\epsilon(g,h))\rmk
\in \IP\lmk \omega,\alpha,\theta, (\eta_{g\sigma}^\epsilon), (\alpha_L,\alpha_R,\Upsilon)\rmk$.

\end{defn}
\begin{proof}

(ii) of Lemma \ref{lem43} means that $W_g^\epsilon W_h^{\eg{g}\epsilon} {W_{gh}^\epsilon}^*$ and
$u_L^\epsilon(g,h)\otimes u_R^\epsilon (g,h)\Gamma_R^{\kappa_L^\epsilon (g,h)}$ are proportional.
Taking $\Ad\lmk \Gamma_L\otimes \Gamma_R\rmk$ of them, we obtain (\ref{b49}).
Taking $\Ad\lmk \Gamma_L\otimes \Gamma_R\rmk$ of (\ref{lem41eq}), we obtain (\ref{kr49}) and (\ref{kl49}).
The condition (\ref{c49}) can be checked \che{} by the repeated use of (\ref{lem41eq})
as in the proof of Lemma 2.4 in \cite{2dSPT}. 
Hence we get $\lmk
c,
\kappa_R,\kappa_L, b, a_\omega 
\rmk\in \widetilde \PD$.

To see the last claim (\ref{cc44}), recall that
the ambiguity of $\lmk(W_g^\epsilon), (u_\sigma^\epsilon(g,h))\rmk$ are just $\Uo$-phases.
Therefore, the difference of $c$ and $c'$
is of the form $d^2_a\sigma$, with some $\sigma\in C^2(G, \Uo\oplus \Uo)$.
This proves 
 (\ref{cc44}).
\end{proof}

\subsection{Well-defined-ness of the index}
In this subsection, we show that the index we derived in the previous subsection does not depend
on the choice of the objects we used.
\begin{lem}\label{lem45}
Let  $\omega\in\SPT$, $\alpha^{(1)},\alpha^{(2)}\in \EAut(\omega)$, 
$0<\theta<\frac\pi 2$, $(\eta_{g\sigma}^\epsilon)\in I(\omega,\theta)$,
$(\alpha_L^{(i)},\alpha_R^{(i)}, \Upsilon^{(i)})\in \caD^V(\alpha^{(i)}, \theta)$, $i=1,2$.
Then we have
\begin{align}
h^{(1)}\lmk\omega,\alpha^{(1)}, \theta,(\eta_{g\sigma}^\epsilon), (\alpha_L^{(1)},\alpha_R^{(1)}, \Upsilon^{(1)})\rmk
=h^{(1)}\lmk\omega,\alpha^{(2)}, \theta,(\eta_{g\sigma}^\epsilon), (\alpha_L^{(2)},\alpha_R^{(2)}, \Upsilon^{(2)})\rmk
\in\PD.
\end{align}
\begin{defn}
From \ref{lem45},  for each 
$\omega\in\SPT$, 
$0<\theta<\frac\pi 2$, $(\eta_{g\sigma}^\epsilon)\in I(\omega,\theta)$
we may define
\begin{align}
h^{(2)}\lmk\omega, \theta,(\eta_{g\sigma}^\epsilon)\rmk
:=h^{(1)}\lmk\omega,\alpha, \theta,(\eta_{g\sigma}^\epsilon), (\alpha_L,\alpha_R, \Upsilon)\rmk\in \PD,
\end{align}
independent of the choice of $\alpha$, $(\alpha_L,\alpha_R, \Upsilon)$.
\end{defn}
\end{lem}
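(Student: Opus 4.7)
The plan is to construct auxiliary data $m \in C^1\lmk G, \bbZ_2 \oplus \bbZ_2\rmk$ and $\sigma \in C^2\lmk G, \Uo \oplus \Uo\rmk$ realizing the equivalence $\sim_\PD$ between the two pentads. Combining $\omega \simeq \omega^{(0)} \circ \lmk\alpha_L^{(i)} \hat\otimes \alpha_R^{(i)}\rmk \circ \Upsilon^{(i)}$ for $i = 1, 2$ yields
\begin{align*}
\lmk \omega_L \alpha_L^{(1)} \rmk \hat\otimes \lmk \omega_R \alpha_R^{(1)} \rmk \simeq \lmk \lmk \omega_L \alpha_L^{(2)} \rmk \hat\otimes \lmk \omega_R \alpha_R^{(2)} \rmk \rmk \circ \Gamma,
\end{align*}
where $\Gamma := \Upsilon^{(2)} \lmk \Upsilon^{(1)}\rmk^{-1} \in \QAut\lmk\caA_{C_\theta^c}\rmk$. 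Factorizing $\Gamma$ along the $y$-axis as in (\ref{afactoc}) produces $\Gamma_\sigma \in \QAut\lmk\caA_{C_\theta^c \cap H_\sigma}\rmk$ with $\Gamma = \lmk \Gamma_L \hat\otimes \Gamma_R\rmk \circ \inn$. Restricting the displayed equivalence to $\caA_{H_\sigma}$ via Lemma \ref{lem21} and invoking purity of the two pure homogeneous states yields $\omega_\sigma \alpha_\sigma^{(1)} \simeq \omega_\sigma \alpha_\sigma^{(2)} \Gamma_\sigma$, which by uniqueness of GNS representations is implemented by unitaries $V_\sigma \in \caU(\caH_\sigma)$ satisfying $\Ad V_\sigma \circ \pi_\sigma \alpha_\sigma^{(2)} \Gamma_\sigma = \pi_\sigma \alpha_\sigma^{(1)}$.

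Next, I would use the $V_\sigma$ to compare the two sets of implementers. Because $\Gamma_\sigma$ is supported in $C_\theta^c$ while $\zeta_{g,h,\sigma}^\epsilon$ is supported in $C_\theta$, these automorphisms commute. Hence $V_\sigma u_\sigma^{\epsilon,(2)}(g,h) V_\sigma^*$ implements $\zeta_{g,h,\sigma}^\epsilon$ in the irreducible representation $\pi_\sigma \alpha_\sigma^{(1)}$, and so does $u_\sigma^{\epsilon,(1)}(g,h)$. By irreducibility there exists $\lambda_\sigma^\epsilon(g,h) \in \Uo$ with $V_\sigma u_\sigma^{\epsilon,(2)}(g,h) V_\sigma^* = \lambda_\sigma^\epsilon(g,h) \, u_\sigma^{\epsilon,(1)}(g,h)$. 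Similarly, conjugating (\ref{w40}) for $i=2$ by $V_L \otimes V_R$ and using $\Gamma = \Upsilon^{(2)} \lmk\Upsilon^{(1)}\rmk^{-1}$ to collapse the $\Upsilon$ factors, one obtains $\mu_g^\epsilon \in \Uo$ with $\lmk V_L \otimes V_R\rmk W_g^{\epsilon,(2)} \lmk V_L \otimes V_R\rmk^* = \mu_g^\epsilon \, W_g^{\epsilon,(1)}$, modulo implementers of the inner correction in the factorization of $\Gamma$.

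Finally, I would read off the $\bbZ_2$-degrees. Since $V_\sigma$ is not a priori graded, its grading defect defines an element $m(g) \in \bbZ_2 \oplus \bbZ_2$ via the discrepancy between the grading of $V_\sigma u_\sigma^{\epsilon,(1)}(g,h) V_\sigma^*$ and of $u_\sigma^{\epsilon,(2)}(g,h)$, producing $\kss R {(2)}(g,h) - \kss R {(1)}(g,h) = d_a^1 m(g,h)$ on the right, and the parallel relation $\kss L {(2)} - \kss L {(1)} = d_a^1 \lmk b^{(2)} - b^{(1)} - m\rmk$ on the left (the asymmetry arising because the left implementers involve $\Gamma_L$ through $W_g^\epsilon$). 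Substituting these into the cocycle identity (\ref{lem41eq}) for both pentads, the $\Uo$-scalars $\lambda_\sigma^\epsilon$ and $\mu_g^\epsilon$ combine into a $\sigma \in C^2\lmk G, \Uo \oplus \Uo\rmk$, and the ratio $c^{(2)}/c^{(1)}$ reduces to $(-1)^{\kss L {(1)}(g,h) \cdot m^{a(gh)}(k)}(-1)^{\lmk b^{(2)}(g)- b^{(1)}(g)-m(g)\rmk \cdot \lmk \kss R {(2)}\rmk^{a(g)}(h,k)} d_a^2 \sigma(g,h,k)$, matching $\sim_\PD$ exactly. The principal obstacle will be the careful bookkeeping of gradings through the ungraded $V_\sigma$ and through the implementation of the inner correction in $\Gamma = \lmk\Gamma_L \hat\otimes \Gamma_R\rmk \circ \inn$ by homogeneous unitaries in $\pi_\sigma \alpha_\sigma^{(1)}\lmk\caA_{H_\sigma}\rmk''$; this is precisely the source of the doubled $\Uo \oplus \Uo$ structure and of the intricate twisting factors in the definition of the equivalence.
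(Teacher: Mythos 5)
The paper's proof is considerably more direct: it works with a single unitary $V$ on $\caH_L\otimes\caH_R$ intertwining $\ptp\circ(\alpha_L^{(1)}\hat\otimes\alpha_R^{(1)})\Upsilon^{(1)}$ and $\ptp\circ(\alpha_L^{(2)}\hat\otimes\alpha_R^{(2)})\Upsilon^{(2)}$, shows via Lemma~\ref{lem42} that $V$ is \emph{homogeneous}, and then observes that the conjugates $VW_g^\epsilon V^*$ and $V(\unit_L\otimes u_R^\epsilon)V^*$ are (after a further application of Lemma~\ref{lem42}~(ii) to the latter) a legitimate element of $\IP$ for the second choice of data, with the \emph{same} gradings and the \emph{same} $c$. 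Thus $b^{(1)}=b^{(2)}$, $\kappa^{(1)}=\kappa^{(2)}$, $c^{(1)}=c^{(2)}$, and the conclusion follows from Lemma~\ref{lem44}. No nontrivial $m$ or $\sigma$ is needed at this stage of the argument.

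Your proposal takes a different and, as written, flawed route. First, the factorization $\Gamma=\Upsilon^{(2)}(\Upsilon^{(1)})^{-1}=(\Gamma_L\hat\otimes\Gamma_R)\circ\inn$ along the $y$-axis does not follow from (\ref{afactoc}): that decomposition applies to automorphisms of the full algebra and yields a residual factor again localized in a cone $C_\theta^c$ around the $y$-axis, precisely where $\Gamma$ lives, so nothing is decoupled. Second, and more importantly, the claim that $V_\sigma$ is ``not a priori graded'' is incorrect: since $\omega_\sigma\alpha_\sigma^{(1)}$ and $\omega_\sigma\alpha_\sigma^{(2)}\Gamma_\sigma$ are both pure and homogeneous and the intertwined automorphism is graded, Lemma~\ref{lem42} makes the implementing unitary homogeneous. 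But then $\Ad V_\sigma$ preserves degree, so $V_\sigma u_\sigma^{(2)}(g,h)V_\sigma^*$ has the \emph{same} $\bbZ_2$-degree as $u_\sigma^{(2)}(g,h)$, and the ``discrepancy'' you use to define a $g$-dependent $m(g)$ vanishes identically. Your construction therefore does not actually produce the $m,\sigma$ that you wish to read off; what really holds (and what the paper proves cleanly with the single global $V$) is equality of $\kappa$, $b$, and $c$ outright. The genuinely nontrivial appearance of $m$ and $\sigma$ occurs not in Lemma~\ref{lem45} but in Lemma~\ref{lem46}, where one changes the choice of $(\eta^{\epsilon}_{g\sigma})$ rather than of $\alpha$.
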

\begin{proof}
Because
\begin{align}
\omega^{(0)}\lmk\alpha_L^{(1)}\hat\otimes \alpha_R^{(1)}\rmk\Upsilon^{(1)}
\simeq \omega\simeq \omega^{(0)}\lmk\alpha_L^{(2)}\hat\otimes \alpha_R^{(2)}\rmk\Upsilon^{(2)},
\end{align}
there is a unitary $V\in \caU\lmk \caH_L\otimes \caH_R\rmk$
such that
\begin{align}
\Ad V\circ\ptp\circ \lmk\alpha_L^{(1)}\hat\otimes \alpha_R^{(1)}\rmk\Upsilon^{(1)}
=\ptp\circ \lmk\alpha_L^{(2)}\hat\otimes \alpha_R^{(2)}\rmk\Upsilon^{(2)}.
\end{align}
As all the automorphisms in the equations are graded, from Lemma \ref{lem42}, $V$
is graded with respect to $\Gamma_L\otimes\Gamma_R$.
Let
$\lmk(W_g^\epsilon), (u_\sigma^\epsilon(g,h))\rmk
\in \IP\lmk \omega,\alpha^{(1)},\theta, (\eta_{g\sigma}^\epsilon), (\alpha_L^{(1)},\alpha_R^{(1)},\Upsilon^{(1)})\rmk$.
As in Lemma 2.11 of \cite{2dSPT},
\begin{align}\label{w45}
\Ad\lmk V W_g^\epsilon V^*\rmk\circ\lmk\pi_L\alpha_L^{(2)}\hat\otimes \pi_R\alpha_R^{(2)}\rmk
=\lmk\pi_L\alpha_L^{(2)}\hat\otimes \pi_R\alpha_R^{(2)}\rmk
\Upsilon^{(2)} \beta_g^U {\eta_g^\epsilon}^{-1}\tau^{-a_\omega(g)\epsilon}{\Upsilon^{(2)}}^{-1}
\end{align}
We also can check
\begin{align}\label{ur45}
\begin{split}
&\Ad\lmk V\lmk\unih {L} \otimes \uss {R} {\epsilon} (g,h)\rmk V^*\rmk
\ptp\circ \lmk\alpha_L^{(2)}\hat\otimes \alpha_R^{(2)}\rmk\\
&=\pi_L\alpha_L^{(2)}\hat\otimes \pi_R\alpha_R^{(2)}\bebe{R} g h.
\end{split}
\end{align}
From Lemma \ref{lem42}, this means that there is some
$\uss R{(2) \epsilon} (g,h)\in \caU(\caH_R)$ such that
$ V\lmk\unih {L} \otimes \uss {R} {\epsilon} (g,h)\rmk V^*=\unih L\otimes \uss R{(2) \epsilon} (g,h)$.
Similarly, 
there is some
$\uss L{(2) \epsilon} (g,h)\in \caU(\caH_L)$ such that
$ V\lmk\uss {L} {\epsilon} (g,h)\otimes \Gamma_R^{\kss L\epsilon (g,h)}\rmk V^*
=\uss L{(2) \epsilon} (g,h)\otimes \Gamma_R^{\kss L\epsilon (g,h)}$.
From (\ref{w45}), (\ref{ur45}) and its analog for $\uss {L} {\epsilon} (g,h)$,
we can see that $(( V W_g^\epsilon V^*), (\uss \sigma {(2)\epsilon}))$
belongs to $\IP\lmk \omega,\alpha^{(2)},\theta, (\eta_{g\sigma}^\epsilon), (\alpha_L^{(2)},\alpha_R^{(2)},\Upsilon^{(2)})\rmk$.
As in Lemma 2.11 (2.90) of \cite{2dSPT}, one can check {\che{}} that
\begin{align}
\begin{split}
c\lmk \omega,\alpha^{(1)},\theta, (\eta_{g\sigma}^\epsilon), (\alpha_L^{(1)},\alpha_R^{(1)},\Upsilon^{(1)}), (W_g^\epsilon), (u_\sigma^\epsilon(g,h))\rmk
=
c\lmk \omega,\alpha^{(2)},\theta, (\eta_{g\sigma}^\epsilon), (\alpha_L^{(2)},\alpha_R^{(2)},\Upsilon^{(2)}), 
(( V W_g^\epsilon V^*), (\uss \sigma {(2)\epsilon}))
\rmk.
\end{split}
\end{align}
It is also clear that the grading of $V W_g^\epsilon V^*$, $\uss \sigma {(2)\epsilon}$
are equal to that of $W_g^\epsilon$,  $u_\sigma^\epsilon(g,h)$.
Hence we obtain the claim.
\end{proof}
\begin{lem}\label{lem46}
For any  $\omega\in\SPT$, 
$0<\theta<\frac\pi 2$, $(\etas g\sigma {\epsilon(1)}),( \etas g\sigma {\epsilon(2)}) \in I(\omega,\theta)$,
we have
\begin{align}
h^{(2)}\lmk\omega, \theta, (\etas g\sigma {\epsilon(1)})\rmk
=h^{(2)}\lmk\omega, \theta, (\etas g\sigma {\epsilon(2)})\rmk.
\end{align}
\end{lem}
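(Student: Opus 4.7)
The plan is, by Lemma \ref{lem45}, to fix a common $\alpha\in\EAut(\omega)$ and decomposition $(\alpha_L,\alpha_R,\Upsilon)\in\caD^V(\alpha,\theta)$ for both choices $(\eta_{g\sigma}^{\epsilon(i)})$, and to construct explicit cochains $m\in C^1(G,\bbZ_2\oplus\bbZ_2)$ and $\sigma\in C^2(G,\Uo\oplus\Uo)$ witnessing
\[
(c^{(1)},\kappa_R^{(1)},\kappa_L^{(1)},b^{(1)},a_\omega)\sim_{\PD} (c^{(2)},\kappa_R^{(2)},\kappa_L^{(2)},b^{(2)},a_\omega)
\]
in $\widetilde\PD$. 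Here $(W_g^{\epsilon(i)},u_\sigma^{\epsilon(i)}(g,h))\in \IP\lmk\omega,\alpha,\theta,(\eta_{g\sigma}^{\epsilon(i)}),(\alpha_L,\alpha_R,\Upsilon)\rmk$ denotes implementation data and $(c^{(i)},\kappa_R^{(i)},\kappa_L^{(i)},b^{(i)},a_\omega)$ the associated pentad for each $i=1,2$.

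First, the two relations $\omega\beta_g^U\simeq \omega\tau^{a_\omega(g)\epsilon}\eta_g^{\epsilon(i)}$, $i=1,2$, combine to give $\omega\simeq \omega\circ \mu_g^\epsilon$ with $\mu_g^\epsilon:=\eta_{g,L}^{\epsilon(1)}(\eta_{g,L}^{\epsilon(2)})^{-1}\hat\otimes \eta_{g,R}^{\epsilon(1)}(\eta_{g,R}^{\epsilon(2)})^{-1}$, of product form and supported in $C_\theta\cap H_\sigma^{c_\sigma}$. Running the argument that derived (\ref{u39}) from an analogous product-form quasi-invariance yields homogeneous unitaries $s_{g,\sigma}^\epsilon\in \caU(\caH_\sigma)$ satisfying
\[
\Ad(s_{g,\sigma}^\epsilon)\pi_\sigma\alpha_\sigma=\pi_\sigma\alpha_\sigma\circ\lmk \eta_{g,\sigma}^{\epsilon(1)}(\eta_{g,\sigma}^{\epsilon(2)})^{-1}\rmk.
\]
Let $m^\epsilon(g)\in\bbZ_2$ be the grading of $s_{g,R}^\epsilon$ and set $m(g):=(m^{+1}(g),m^{-1}(g))\in \bbZ_2\oplus\bbZ_2$; this is the witness cochain.

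Next, substituting $\eta_{g,\sigma}^{\epsilon(1)}=\eta_{g,\sigma}^{\epsilon(2)}\circ\lmk(\eta_{g,\sigma}^{\epsilon(2)})^{-1}\eta_{g,\sigma}^{\epsilon(1)}\rmk$ into (\ref{zetadefs}) for $\zeta_{g,h,\sigma}^{\epsilon(1)}$ and using the intertwining of $s_{g,\sigma}^\epsilon$ together with the defining equation (\ref{u39}), I would express $u_\sigma^{\epsilon(1)}(g,h)$ as a product of three $s$-unitaries (one each for $g$, $h$ carrying an appropriate $\beta_h^{U\sigma}$- and $\tau^{-a_\omega(g)\epsilon}$-transform of $s_{h,\sigma}^{\eg g \epsilon}$, and $gh$) times $u_\sigma^{\epsilon(2)}(g,h)$, up to a $\Uo$-phase. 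An analogous substitution in (\ref{w40}) gives $W_g^{\epsilon(1)}=(s_{g,L}^\epsilon\otimes s_{g,R}^\epsilon)W_g^{\epsilon(2)}$ up to a $\Uo$-phase. Taking $\Ad(\Gamma_\sigma)$ of the $u_\sigma$-identity and $\Ad(\Gamma_L\otimes \Gamma_R)$ of the $W$-identity forces the grading of $s_{g,L}^\epsilon$ to equal $b^{(1)}(g)-b^{(2)}(g)-m^\epsilon(g)$ in $\bbZ_2$ and delivers the $\bbZ_2\oplus\bbZ_2$-valued relations
\[
\kappa_R^{(1)}=d_a^1 m+\kappa_R^{(2)},\qquad \kappa_L^{(1)}=d_a^1(b^{(1)}-b^{(2)}-m)+\kappa_L^{(2)},
\]
matching the definition of $\sim_\PD$.

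Finally, inserting these substitutions into the cocycle identity (\ref{lem41eq}) for $i=1$ and matching with $i=2$ produces a formula $c^{(1)}(g,h,k)=\Phi(g,h,k)\cdot d_a^2\sigma(g,h,k)\cdot c^{(2)}(g,h,k)$, where $\Phi$ collects graded-commutation signs arising from sliding the $s$-unitaries past the $\Gamma_R^{\kappa_L^\epsilon(g,h)}$ factors that accompany $u_L^\epsilon(g,h)$ in (\ref{lem41eq}), while $\sigma\in C^2(G,\Uo\oplus\Uo)$ collects the $\Uo$-phase ambiguities from the rewrites. The main obstacle is to organize $\Phi$ into the precise twist
\[
(-1)^{\kappa_L^{(2)}(g,h)\cdot m^{a(gh)}(k)}\cdot(-1)^{(b^{(1)}-b^{(2)}-m)(g)\cdot \lmk\kappa_R^{(1)}\rmk^{a(g)}(h,k)}
\]
demanded by $\sim_\PD$: this rearrangement is exactly what Lemma \ref{lem50} was introduced for, and once it is carried out the pair $(m,\sigma)$ witnesses the desired equivalence, so $h^{(2)}(\omega,\theta,(\eta_{g\sigma}^{\epsilon(1)}))=h^{(2)}(\omega,\theta,(\eta_{g\sigma}^{\epsilon(2)}))$ in $\PD$.
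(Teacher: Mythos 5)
Your proposal follows the same route as the paper's proof: fix a common $\alpha\in\EAut(\omega)$ and decomposition $(\alpha_L,\alpha_R,\Upsilon)$, construct connecting unitaries on $\caH_\sigma$ implementing the discrepancy between the two choices of $(\eta_{g\sigma}^\epsilon)$, read off the cochain $m$ from the $R$-side grading and the grading of the $L$-side unitary from $b^{(1)}-b^{(2)}-m$, and then propagate these through (\ref{u39}), (\ref{w40}) and (\ref{lem41eq}) to obtain the $\sim_\PD$ relations. The final grading relations you state (including the fact that the $\kappa_R$, $\kappa_L$ shifts are $d_a^1 m$ and $d_a^1(b^{(1)}-b^{(2)}-m)$ respectively) match (\ref{jyu46}) in the paper, so the skeleton is right.

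There is however a genuine gap at the very first step. From $\omega\beta_g^U\simeq\omega\tau^{a_\omega(g)\epsilon}\eta_g^{\epsilon(1)}\simeq\omega\tau^{a_\omega(g)\epsilon}\eta_g^{\epsilon(2)}$ one can only conclude
\[
\omega\simeq\omega\circ\tau^{a_\omega(g)\epsilon}\lmk\eta_g^{\epsilon(1)}(\eta_g^{\epsilon(2)})^{-1}\rmk\tau^{-a_\omega(g)\epsilon}
=\omega\circ\lmk\mu_g^\epsilon\rmk_{a_\omega(g)\epsilon},
\]
with the $\tau$-shift of Definition \ref{lem30}, not $\omega\simeq\omega\circ\mu_g^\epsilon$. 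Correspondingly, the connecting unitary implements $\alpha_\sigma(\mu_{g\sigma}^\epsilon)_{a_\omega(g)\epsilon}\alpha_\sigma^{-1}$ and not $\alpha_\sigma\mu_{g\sigma}^\epsilon\alpha_\sigma^{-1}$; this is precisely why the paper's $V_{g\sigma}^\epsilon$ satisfies $\Ad(V_{g\sigma}^\epsilon)\pi_\sigma=\pi_\sigma\alpha_\sigma(K_{g\sigma}^\epsilon)_{a_\omega(g)\epsilon}\alpha_\sigma^{-1}$. Dropping the shift is harmless for the grading values $m,b,\kappa_\sigma$ (since $\tau$-conjugation preserves grading), but it breaks the support bookkeeping needed for the later steps: the conjugation calculations in (\ref{ichi46})–(\ref{san46}) involve shifted automorphisms $(K_{h\sigma}^{\epsilon'})_{a(h)\epsilon'-a(g)\epsilon}$ living in $C_\theta\cap H_\sigma^{c_\sigma^{(2)}}$, and the commutant containments of Lemma \ref{lem43} only apply to these shifted supports. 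Without the shift your claimed product expression for $u_\sigma^{\epsilon(1)}(g,h)$ and $W_g^{\epsilon(1)}$ is not forced. A second, more minor point: the organization of the extra sign $\Phi$ into the $\sim_\PD$ twist is done in the paper by a direct computation of $d_a^2\tilde\sigma$ for $\tilde\sigma^\epsilon(g,h)=(-1)^{\kappa_L^{\epsilon(1)}(g,h)m_{hR}^{(-1)^{a_\omega(g)}\epsilon}}$ rather than by invoking Lemma \ref{lem50}, whose role in the paper is only in proving transitivity of $\sim_\PD$; since $\kappa_L^{(1)}$ is a 2-cochain and not a 1-cochain, Lemma \ref{lem50} does not literally apply here.
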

\begin{defn}
From Lemma \ref{lem46}, for each $\omega\in\SPT$, 
$0<\theta<\frac\pi 2$,
we can define
\begin{align}
h^{(3)}(\omega,\theta):=h^{(2)}\lmk\omega, \theta, (\etas g\sigma {\epsilon})\rmk,
\end{align}
independent of the choice of $(\etas g\sigma {\epsilon})$.
\end{defn}
\begin{proof}
Fix some 
 $\alpha\in \EAut(\omega)$ and 
$(\alpha_L,\alpha_R, \Upsilon)\in \caD^V(\alpha, \theta)$.
Note that $K_{g,\sigma}^\epsilon:=\etas g \sigma {\epsilon (2)}\circ{\etas g \sigma {\epsilon (1)}}^{-1} 
\in \auz\lmk\cac   \theta {\sigma}\rmk$.
Set $K_{g}^\epsilon=K_{g,L}^\epsilon\hat\otimes K_{g,R}^\epsilon$.
Because
$\omega\beta_g^U\simeq \omega\tau^{\ao g \epsilon}\etas g{}{\epsilon (1)}\simeq \omega\tau^{\ao g \epsilon}\etas g{}{\epsilon (2)}$,
we have $\omega \lmk K_{g}^\epsilon\rmk_{\ao g\epsilon}\simeq \omega$.
On the other hand, we have
$\omega\simeq \omega^{(0)}\lmk\alpha_L\hat\otimes \alpha_R\rmk\Upsilon$.
Combining these, we obtain 
$\omega^{(0)}\lmk\alpha_L\hat\otimes \alpha_R\rmk\Upsilon\lmk K_{g}^\epsilon\rmk_{\ao g\epsilon}
\simeq \omega^{(0)}\lmk\alpha_L\hat\otimes \alpha_R\rmk\Upsilon$.
Because $K_{g}^\epsilon\in \auz\lmk\caA_{C_\theta}\rmk$ and $\Upsilon$ commute,
we then obtain 
\[
\omega^{(0)}\lmk\alpha_L\hat\otimes \alpha_R\rmk\lmk K_{g}^\epsilon\rmk_{\ao g\epsilon}
\simeq \omega^{(0)}\lmk\alpha_L\hat\otimes \alpha_R\rmk.
\]
Because both sides of this equation are pure and homogeneous,
from Lemma \ref{lem21}, we have
$
\os \sigma \alpha_\sigma  \lmk K_{g\sigma}^\epsilon\rmk_{\ao g\epsilon}\simeq \os \sigma \alpha_\sigma .
$
Hence there is a unitary $V_{g\sigma}^\epsilon$ on $\caH_\sigma$
such that 
\begin{align}
\Ad\lmk V_{g\sigma}^\epsilon\rmk\pi_\sigma
=\pi_\sigma \alpha_\sigma  \lmk K_{g\sigma}^\epsilon\rmk_{\ao g\epsilon} \alpha_\sigma^{-1}.
\end{align}
It is homogeneous because of Lemma \ref{lem42} and
we denote by $\mss g \sigma \epsilon$ the grading of $V_{g\sigma}^\epsilon$,
i.e., $\Ad\Gamma_\sigma\lmk V_{g\sigma}^\epsilon\rmk=(-1)^{\mss g \sigma \epsilon}V_{g\sigma}^\epsilon$.
Fix some $\lmk(W_g^\epsilon), (u_\sigma^\epsilon(g,h))\rmk$
in $\IP\lmk \omega,\alpha,\theta, (\eta_{g\sigma}^{\epsilon (1)}), (\alpha_L,\alpha_R,\Upsilon)\rmk$.
We denote by $\kss \sigma \epsilon(g,h)$ the grading of $ u_\sigma^\epsilon(g,h)$.
Because $\lmk K_{hR}^{\epsilon'}\rmk_{a(h)\epsilon'-a(g)\epsilon}\in \auz\lmk \caA_{C_\theta\cap H_R}\rmk$
and $\Upsilon$ commute, one can show that
\begin{align}\label{ichi46}
\begin{split}
\Ad\lmk W_g^\epsilon \lmk\unih L\otimes V_{hR}^{\epsilon'}\rmk {W_g^\epsilon}^*\rmk\pat 
=\pa L\hat\otimes\pa R \beta_g^{UR} {\etas g R {\epsilon (1)}}^{-1} \lmk K_{hR}^{\epsilon'}\rmk_{a(h)\epsilon'-a(g)\epsilon}
\etas g R{\epsilon (1)}{\beta_g^{UR}}^{-1}.
\end{split}
\end{align}
From this, by Lemma \ref{lem42}, there are unitaries $x_{g,h,R}^{\epsilon,\epsilon'}\in \caU(\caH_R)$
such that
\begin{align}\label{ni46}
W_g^\epsilon \lmk\unih L\otimes V_{hR}^{\epsilon'}\rmk {W_g^\epsilon}^*
=\unih L\otimes x_{g,h,R}^{\epsilon,\epsilon'}.
\end{align}
We also set $v_{gR}^{\epsilon}:=x_{g,g,R}^{\epsilon,\epsilon}$.
Note that this unitary $v_{gR}^{\epsilon}$ have the grading $\mss g R \epsilon$
as $V_{gR}^\epsilon$
with respect to $\Ad \Gamma_R$.
Because $ \beta_g^{UR} {\etas g R {\epsilon (1)}}^{-1} \lmk K_{hR}^{\epsilon'}\rmk_{a(h)\epsilon'-a(g)\epsilon}
\etas g R{\epsilon (1)}{\beta_g^{UR}}^{-1}$ belongs to
$\auz\lmk\caA_{C_\theta\cap H_R^{c_R^{(2)}}}\rmk$,
we have
\begin{align}
 x_{g,h,R}^{\epsilon,\epsilon'}\in 
 \pa R\lmk \cacc \theta R {2}\rmk'.
\end{align}
From Lemma \ref{lem43} (iii), 
we have 
\begin{align}
\Ad\lmk W_k\rmk\lmk \unih L\otimes x_{g,h,R}^{\epsilon,\epsilon'}\rmk
\in \unih L\otimes \caB(\caH_R).
\end{align}
Therefore, from Lemma \ref{lem43} (ii), we have
\begin{align}\label{san46}
\Ad\lmk W_k^\epsilon W_f^{\eg k \epsilon}\rmk\lmk
\unih L\otimes  x_{g,h,R}^{\epsilon,\epsilon'}
\rmk
=
\Ad\lmk
\unih L\otimes  \uss R \epsilon (k,f)\Gamma_R^{\kss L \epsilon (k.f))}
\rmk\Ad\lmk W_{kf}^\epsilon\rmk\lmk
\unih L\otimes  x_{g,h,R}^{\epsilon,\epsilon'}
\rmk.
\end{align}

Similarly, recalling Lemma \ref{lem42}, we have
\begin{align}
\begin{split}
\Ad\lmk W_g^\epsilon \lmk V_{hL}^{\epsilon'}\otimes \Gamma_R^{\mss h L {\epsilon'}}\rmk {W_g^\epsilon}^*\rmk\pat 
=\pa L
\beta_g^{UL} {\etas g L {\epsilon (1)}}^{-1} \lmk K_{hL}^{\epsilon'}\rmk_{a(h)\epsilon'-a(g)\epsilon}
\etas g L{\epsilon (1)}{\beta_g^{UL}}^{-1}
\hat\otimes\pa R .
\end{split}
\end{align}
Furthermore, by the same argument as (\ref{ichi46}), one can also check 
\begin{align}
\begin{split}
\Ad\lmk W_g^\epsilon \lmk V_{gL}^{\epsilon}\otimes V_{gR}^{\epsilon} \Gamma_R^{\mss g L {\epsilon}}\rmk^*\rmk\pat
=\pat \Upsilon \beta_g^U {\etas g{}{\epsilon (2)}}^{-1}\tau^{-\ao g\epsilon}\Upsilon^{-1}
\end{split}
\end{align}
Using (\ref{aaae}), we have
\begin{align}
\begin{split}
&\bebey\sigma g h{(2)}\\
&=\bgu g \sigma { \etas g\sigma{\epsilon (1)}}^{-1}
{K_{g\sigma}^{\epsilon}}^{-1} \etas g \sigma {\epsilon(1)}{\bgu g \sigma}^{-1}\\
&\bebey\sigma g h{(1)}\\
&\bgu {gh}\sigma  {\etas{gh}\sigma {\epsilon (1)}}^{-1}
\lmk \lmk K_{h\sigma}^{\eg  g \epsilon}\rmk_{-\ao g \epsilon}\rmk^{-1}
K_{gh,\sigma}^\epsilon \etas {gh} \sigma {\epsilon(1)}
{\bgu {gh}\sigma}^{-1}.
\end{split}
\end{align}
Note from (\ref{ichi46}) that each lines on the right hand side can be implemented by some unitary
in $\pat$,
and we get
\newcommand{\wvw}[4]{W_{#1}^{#4} \lmk\unih L\otimes V_{#2 R}^{#3}\rmk {{W_{#1}}^{#4}}^*}
\newcommand{\wvws}[4]{W_{#1}^{#4} \lmk\unih L\otimes {V_{#2 R}^{#3}}^*\rmk {{W_{#1}}^{#4}}^*}
\begin{align}
\begin{split}
&\pa L\hat\otimes \pa R\bebey R g h{(2)}\\
&=
\Ad\lmk
\begin{gathered}
\wvws g g \epsilon  \epsilon \lmk\unih L\otimes \uss R\epsilon(g,h)\rmk\\
\wvws {gh} h {\eg g \epsilon} \epsilon
\wvw {gh}{gh} \epsilon \epsilon
\end{gathered}
\rmk
\pat.
\end{split}
\end{align}
By Lemma \ref{lem42}, this means that there is a homogeneous unitary
$\uss R {\epsilon(2)}(g,h)$ on $\caH_R$
such that
\begin{align}
\Ad\lmk\uss R {\epsilon(2)}(g,h) \rmk\pa R=\pa R\bebey R g h{(2)},
\end{align}
and 
\begin{align}\label{go46}
\begin{split}
&\wvws g g \epsilon  \epsilon \lmk\unih L\otimes \uss R\epsilon(g,h)\rmk
\wvws {gh} h {\eg g \epsilon} \epsilon
\wvw {gh}{gh} \epsilon \epsilon\\
&=\unih L\otimes \uss R {\epsilon(2)}(g,h).
\end{split}
\end{align}
\newcommand{\vs}[3]{v_{#1#2}^{#3}}
Using $v_{gR}^{\epsilon}:=x_{g,g,R}^{\epsilon,\epsilon}$ from (\ref{ni46}),
and (\ref{san46}),
the latter equation can be written
\begin{align}\label{yon46}
\begin{split}
&\unih L\otimes \uss R {\epsilon(2)}(g,h)\\
&=(-1)^{\kss L \epsilon (g,h)\kss R \epsilon (g,h)}
\lmk\unih L\otimes {\vs g R \epsilon}^*\Gamma_R^{\kss L \epsilon (g,h))}\rmk\Ad\lmk
\lmk \unih L\otimes  \uss R \epsilon (g,h)\Gamma_R^{\kss L \epsilon (g,h))}\rmk
W_{gh}^\epsilon
\rmk
\lmk
\unih L\otimes {V_{hR}^{\eg g \epsilon}}^*
\rmk\\
&\quad
\lmk \unih L\otimes  \uss R \epsilon (g,h)\Gamma_R^{\kss L \epsilon (g,h))}\rmk
\lmk
\unih L\otimes \vs {gh} R \epsilon
\rmk\\
&=(-1)^{\kss L \epsilon (g,h)\kss R \epsilon (g,h)}
\lmk\unih L\otimes {\vs g R \epsilon}^*\Gamma_R^{\kss L \epsilon (g,h))}\rmk
\Ad\lmk
W_g^\epsilon W_{h}^{\eg g\epsilon}
\rmk
\lmk
\unih L\otimes {V_{hR}^{\eg g \epsilon}}^*
\rmk\\
&\quad
\lmk \unih L\otimes  \uss R \epsilon (g,h)\Gamma_R^{\kss L \epsilon (g,h))}\rmk
\lmk
\unih L\otimes \vs {gh} R \epsilon
\rmk\\
&=
(-1)^{\kss L \epsilon (g,h) \mss h R {\eg g \epsilon}}
\lmk\unih L\otimes {\vs g R \epsilon}^*
\rmk
\Ad\lmk
W_g^\epsilon
\rmk
\lmk
\unih L\otimes {v_{hR}^{\eg g \epsilon}}^*
\rmk\\
&\quad
\lmk \unih L\otimes  \uss R \epsilon (g,h)
\rmk
\lmk
\unih L\otimes \vs {gh} R \epsilon
\rmk.
\end{split}
\end{align}

Similarly we get unitaries $\uss L\epsilon (g,h)$ on $\caH_L$, 
satisfying
\begin{align}
\Ad\lmk\uss L {\epsilon(2)}(g,h) \rmk\pa L=\pa L\bebey L g h{(2)},
\end{align}
and 
\newcommand{\wvwl}[4]{W_{#1}^{#4} \lmk V_{#2 L}^{#3} \otimes \Gamma_R^{\mss{ #2} L {#3}}\rmk {{W_{#1}}^{#4}}^*}
\newcommand{\wvwsl}[4]{W_{#1}^{#4} \lmk {V_{#2 L}^{#3}}^* \otimes \Gamma_R^{\mss {#2} L {#3}}\rmk {{W_{#1}}^{#4}}^*}
\begin{align}\label{roku46}
\begin{split}
&\wvwsl g g \epsilon  \epsilon \lmk\uss L\epsilon(g,h)\otimes \Gamma_R^{\kss L\epsilon (g,h)}\rmk
\wvwsl {gh} h {\eg g \epsilon} \epsilon
\wvwl {gh}{gh} \epsilon \epsilon\\
&=\unih L\otimes \uss L {\epsilon(2)}(g,h).
\end{split}
\end{align}
Hence we obtain
\begin{align}\label{kyu46}
\lmk
\lmk
W_g^{\epsilon (2)}:=W_g^{\epsilon} \lmk V_{gL}^\epsilon\otimes V_{gR}^\epsilon\Gamma_R^{\mss g L \epsilon}\rmk^*
\rmk,
\lmk \uss \sigma {\epsilon (2)}(g,h)\rmk
\rmk
\in \IP \lmk\omega,\alpha,\theta,(\etas g\sigma {\epsilon(2)}),
(\alpha_L,\alpha_R,\Upsilon)\rmk.
\end{align}

\newcommand{\bgn}[3]{b^{#1(#2)}(#3)}
\newcommand{\kssn}[3]{\kss #1 {#2(#3)}}
We set 
\begin{align}
\begin{split}
b^{(i)}:=
b\lmk \omega, \alpha,\theta, (\eta_{g\sigma}^{\epsilon(i)}),
(\alpha_L,\alpha_R, \Upsilon)
\rmk\in C^1(G, \bbZ_2\oplus\bbZ_2),\\
\kappa_\sigma^{(i)}
:=
\kappa_\sigma\lmk \omega, \alpha,\theta, (\eta_{g\sigma}^{\epsilon(i)}),
(\alpha_L,\alpha_R, \Upsilon)
\rmk\in C^2(G, \bbZ_2\oplus\bbZ_2)
\end{split}
\end{align}
for $i=1,2$. (Recall  Definition \ref{crdef} and 
Definition \ref{bkrdef}.)
Note that $\kappa_\sigma^{(1)}(g,h)=\kappa_\sigma(g,h)$.
We also set
\begin{align}
\begin{split}
c^{(i)}(g,h,k):=
c\lmk \omega,\alpha,\theta, (\eta_{g\sigma}^{\epsilon(i)}), (\alpha_L,\alpha_R,\Upsilon), (W_g^{\epsilon(i)}), (u_\sigma^{\epsilon(i)}(g,h))\rmk
\end{split}
\end{align}
for $i=1,2$.

We also define $m\in C^1(G, \bbZ_2\oplus\bbZ_2)$ by
\begin{align}\label{mdef}
m(g):=\lmk \mss g R{+1}, \mss g R{-1}\rmk,\quad g\in G.
\end{align}

%

Note from the definition and (\ref{yon46}), (\ref{go46})
that
\begin{align}\label{jyu46}
\begin{split}
&\bgn \epsilon 2 g
=\bgn \epsilon 1 g+\mss g L\epsilon +\mss g R\epsilon,\\
&\kssn R \epsilon 2(g,h)=\mss g R \epsilon+\mss h R{\eg g \epsilon}+\mss {gh} R\epsilon
+\kssn R \epsilon 1(g,h),\\
&\kssn L\epsilon 2 (g,h)
=m_{gL}^\epsilon+m_{hL}^{\eg g \epsilon}+m_{gh,L}^\epsilon+\kssn L \epsilon 1 (g,h).
\end{split}
\end{align}
in $\bbZ_2$.

Next we would like to derive
$c\lmk \omega,\alpha,\theta, (\eta_{g\sigma}^{\epsilon(2)}), (\alpha_L,\alpha_R,\Upsilon), (W_g^{\epsilon(2)}), (u_\sigma^{\epsilon(2)}(g,h))\rmk
\in C^3(G,\Uo\oplus \Uo)$.
To do so, we first calculate $u_{R}^{\epsilon(2)}(g,h) u_{R}^{\epsilon(2)}(gh,k)$
as follows.
From (\ref{yon46})
\begin{align}\label{nana46}
\begin{split}
&\unit_{L}\otimes u_{R}^{\epsilon(2)}(g,h) u_{R}^{\epsilon(2)}(gh,k)\\
&=\lmk -1\rmk^{\kappa_{L}^{\epsilon(1)}(g,h)\cdot \lmk m_{hR}^{(-1)^{a_{\omega}(g)}\epsilon}\rmk}
\lmk -1\rmk^{\kappa_{L}^{\epsilon(1)}(gh,k)\cdot \lmk m_{kR}^{(-1)^{a_{\omega}(gh)}\epsilon}\rmk}
\lmk 
\unit\otimes \lmk v_{gR}^{\epsilon}\rmk^{*}
\rmk\\
&\Ad\lmk W_{g}^{\epsilon}\rmk\lmk \unit\otimes \lmk v_{hR}^{(-1)^{a_{\omega}(g)}\epsilon}\rmk^{*}\rmk
\lmk \unit\otimes u_{R}^{\epsilon}(g,h)
\rmk\\&
\Ad\lmk W_{gh}^{\epsilon}\rmk\lmk \unit\otimes \lmk v_{kR}^{(-1)^{a_{\omega}(gh)}\epsilon}\rmk^{*}\rmk
\lmk \unit\otimes u_{R}^{\epsilon}(gh,k)
v_{ghk,R}^{\epsilon}\rmk\\
&=\lmk -1\rmk^{\kappa_{L}^{\epsilon(1)}(g,h)\cdot \lmk m_{hR}^{(-1)^{a_{\omega}(g)}\epsilon} +m_{kR}^{(-1)^{a_{\omega}(gh)}\epsilon}
\rmk}
\lmk -1\rmk^{\kappa_{L}^{\epsilon(1)}(gh,k)\cdot \lmk m_{kR}^{(-1)^{a_{\omega}(gh)}\epsilon}\rmk}
\lmk 
\unit\otimes \lmk v_{gR}^{\epsilon}\rmk^{*}
\rmk
\Ad\lmk W_{g}^{\epsilon}\rmk\lmk \unit\otimes \lmk v_{hR}^{(-1)^{a_{\omega}(g)}\epsilon}\rmk^{*}\rmk
\\&
\Ad\lmk \lmk \unit\otimes u_{R}^{\epsilon}(g,h)
\Gamma_{R}^{\kappa_{L}^{\epsilon(1)}(g,h)}\rmk W_{gh}^{\epsilon}\rmk\lmk \unit\otimes \lmk v_{kR}^{(-1)^{a_{\omega}(gh)}\epsilon}\rmk^{*}\rmk\cdot
\left[\lmk \unit\otimes u_{R}^{\epsilon}(g,h)u_{R}^{\epsilon}(gh,k)
\rmk
\right]
\lmk
\unih L\otimes v_{ghk,R}^{\epsilon}
\rmk.
\end{split}
\end{align}
Here in the last equality,
we got an extra sign $\lmk -1\rmk^{\kappa_{L}^{\epsilon(1)}(g,h)\cdot \lmk m_{kR}^{(-1)^{a_{\omega}(gh)}\epsilon}\rmk}$
because we inserted $\Gamma_{R}^{\kappa_{L}^{\epsilon(1)}(g,h)}$ in the middle.
Applyin Lemma \ref{lem41} to $[\cdot]$ part,
\begin{align}\label{hachi46}
\begin{split}
(\ref{nana46})
%
&
=\lmk -1\rmk^{\kappa_{L}^{\epsilon(1)}(g,h)\cdot \lmk m_{hR}^{(-1)^{a_{\omega}(g)}\epsilon} +m_{kR}^{(-1)^{a_{\omega}(gh)}\epsilon}
\rmk}
\lmk -1\rmk^{\kappa_{L}^{\epsilon(1)}(gh,k)\cdot \lmk m_{kR}^{(-1)^{a_{\omega}(gh)}\epsilon}\rmk}
\overline{c_{R}^{\epsilon(1)}(g,h,k)}
\lmk 
\unit\otimes \lmk v_{gR}^{\epsilon}\rmk^{*}
\rmk\\
&\Ad\lmk W_{g}^{\epsilon}\rmk\lmk \unit\otimes \lmk v_{hR}^{(-1)^{a_{\omega}(g)}\epsilon}\rmk^{*}\rmk
\left[
\Ad\lmk \lmk \unit\otimes u_{R}^{\epsilon}(g,h)
\Gamma_{R}^{\kappa_{L}^{\epsilon(1)}(g,h)}\rmk W_{gh}^{\epsilon}\rmk\lmk \unit\otimes \lmk v_{kR}^{(-1)^{a_{\omega}(gh)}\epsilon}\rmk^{*}\rmk\right]\\
&\cdot
\Ad\lmk W_{g}^{\epsilon}\rmk \lmk \unit_{L}\otimes u_{R}^{\epsilon(-1)^{a_{\omega}(g)}}(h,k)\rmk
\cdot \lmk \unit_{L}\otimes u_{R}^{\epsilon}(g,hk)\rmk
\lmk \unit\otimes
v_{ghk,R}^{\epsilon}\rmk
\\&
=\lmk -1\rmk^{\kappa_{L}^{\epsilon(1)}(g,h)\cdot \lmk m_{hR}^{(-1)^{a_{\omega}(g)}\epsilon} +m_{kR}^{(-1)^{a_{\omega}(gh)}\epsilon}
\rmk}
\lmk -1\rmk^{\kappa_{L}^{\epsilon(1)}(gh,k)\cdot \lmk m_{kR}^{(-1)^{a_{\omega}(gh)}\epsilon}\rmk}
\overline{c_{R}^{\epsilon(1)}(g,h,k)}
\lmk 
\unit\otimes \lmk v_{gR}^{\epsilon}\rmk^{*}
\rmk\\
&\Ad\lmk W_{g}^{\epsilon}\rmk\lmk \unit\otimes \lmk v_{hR}^{(-1)^{a_{\omega}(g)}\epsilon}\rmk^{*}\rmk
\Ad\lmk 
W_{g}^{\epsilon} W_{h}^{(-1)^{a_{\omega}(g)}\epsilon}
\rmk
\lmk \unit\otimes \lmk v_{kR}^{(-1)^{a_{\omega}(gh)}\epsilon}\rmk^{*}\rmk\\
&\Ad\lmk W_{g}^{\epsilon}\rmk \lmk \unit_{L}\otimes u_{R}^{\epsilon(-1)^{a_{\omega}(g)}}(h,k)\rmk
\cdot \lmk \unit_{L}\otimes u_{R}^{\epsilon}(g,hk)\rmk
\lmk \unit\otimes
v_{ghk,R}^{\epsilon}\rmk.
\end{split}
\end{align}
In the second equality, we used Lemma \ref{lem43} (ii),(iii)
 to $[\cdot]$ part.
 Now we recall the relation between
 $\uss R\epsilon$ and $\uss R{\epsilon (2)}$
 (\ref{yon46}) and substitute it to the equation above to obtain
\begin{align}
\begin{split}
(\ref{hachi46})
=&\lmk -1\rmk^{\kappa_{L}^{\epsilon(1)}(g,h)\cdot \lmk m_{hR}^{(-1)^{a_{\omega}(g)}\epsilon} +m_{kR}^{(-1)^{a_{\omega}(gh)}\epsilon}
\rmk}
\lmk -1\rmk^{\kappa_{L}^{\epsilon(1)}(gh,k)\cdot \lmk m_{kR}^{(-1)^{a_{\omega}(gh)}\epsilon}\rmk}\lmk -1\rmk^{\kappa_{L}^{\epsilon(1)}(g,hk)\cdot \lmk m_{hkR}^{(-1)^{a_{\omega}(g)}\epsilon}\rmk}\\&
\lmk -1\rmk^{\kappa_{L}^{(-1)^{a_{\omega}(g)}\epsilon}(h,k)\cdot \lmk m_{kR}^{(-1)^{a_{\omega}(h)}(-1)^{a_{\omega}(g)}\epsilon}\rmk}
\overline{c_{R}^{\epsilon(1)}(g,h,k)}
\\&
\Ad\lmk \lmk 
\unit\otimes \lmk v_{gR}^{\epsilon}\rmk^{*}
\rmk W_{g}^{\epsilon}\rmk
\lmk 
\begin{gathered}
\lmk \unit_{L}\otimes u_{R}^{(-1)^{a_{\omega}(g)}\epsilon(2)}(h,k)\rmk
\end{gathered}
\rmk
\\&
\lmk \unit\otimes u_{R}^{\epsilon(2)}(g,hk)\rmk
\\&
=\lmk -1\rmk^{\kappa_{L}^{\epsilon(1)}(g,h)\cdot \lmk m_{hR}^{(-1)^{a_{\omega}(g)}\epsilon} +m_{kR}^{(-1)^{a_{\omega}(gh)}\epsilon}
\rmk}
\lmk -1\rmk^{\kappa_{L}^{\epsilon(1)}(gh,k)\cdot \lmk m_{kR}^{(-1)^{a_{\omega}(gh)}\epsilon}\rmk}\lmk -1\rmk^{\kappa_{L}^{\epsilon(1)}(g,hk)\cdot \lmk m_{hkR}^{(-1)^{a_{\omega}(g)}\epsilon}\rmk}\\&
\lmk -1\rmk^{\kappa_{L}^{(-1)^{a_{\omega}(g)}\epsilon}(h,k)\cdot \lmk m_{kR}^{(-1)^{a_{\omega}(gh)}\epsilon}\rmk}
\lmk -1\rmk^{m_{gL}^{\epsilon}\cdot\kappa_{R}^{(-1)^{a_{\omega}(g)}\epsilon(2)}(h,k)}
\overline{c_{R}^{\epsilon(1)}(g,h,k)}
\\&
\Ad\lmk
 W_{g}^{\epsilon(2)}
\rmk
\lmk 
\begin{gathered}
\lmk \unit_{L}\otimes u_{R}^{(-1)^{a_{\omega}(g)}\epsilon(2)}(h,k)\rmk
\end{gathered}
\rmk
\unit\otimes\lmk u_{R}^{\epsilon(2)}(g,hk)\rmk.
\end{split}
\end{align}
In the second equality, we substituted the relation between
$W_g^{\epsilon}$ and
$W_g^{\epsilon (2)}$ (\ref{kyu46}).
Now seting $\tilde \sigma\in C^2(G, \Uo\oplus \Uo)$ as
\begin{align}\label{stdef}
\tilde\sigma^\epsilon(g,h):=(-1)^{\kss L {\epsilon(1)}(g,h) \mss h R {\eg g \epsilon}},
\quad\epsilon=\pm 1, g,h\in G
\end{align}
with (\ref{jyu46})
the phase factor in the last part of equation
can be written as
\begin{align}
\begin{split}
&\lmk -1\rmk^{\kappa_{L}^{\epsilon(1)}(g,h)\cdot \lmk m_{hR}^{(-1)^{a_{\omega}(g)}\epsilon} +m_{kR}^{(-1)^{a_{\omega}(gh)}\epsilon}
\rmk}
\lmk -1\rmk^{\kappa_{L}^{\epsilon(1)}(gh,k)\cdot \lmk m_{kR}^{(-1)^{a_{\omega}(gh)}\epsilon}\rmk}\lmk -1\rmk^{\kappa_{L}^{\epsilon(1)}(g,hk)\cdot \lmk m_{hkR}^{(-1)^{a_{\omega}(g)}\epsilon}\rmk}
\\&
\lmk -1\rmk^{\kappa_{L}^{(-1)^{a_{\omega}(g)}\epsilon(1)}(h,k)\cdot \lmk m_{kR}^{(-1)^{a_{\omega}(gh)}\epsilon}\rmk}
\lmk -1\rmk^{m_{gL}^{\epsilon}\cdot\kappa_{R}^{(-1)^{a_{\omega}(g)}\epsilon(2)}(h,k)}\\&
=
\frac{\tilde\sigma^{\eg g\epsilon} (h,k)\cdot \tilde\sigma^\epsilon (g,hk)}{\tilde\sigma^\epsilon (g,h) \cdot \tilde\sigma^\epsilon (gh,k)}
(-1)^{\kss L {\epsilon(1)} (g,h)\cdot \mss k R {\eg{gh}\epsilon}}
\lmk -1\rmk^{m_{gL}^{\epsilon}\cdot\kappa_{R}^{(-1)^{a_{\omega}(g)}\epsilon(2)}(h,k)}
\\
&=
\lmk
d^2_{a_\omega}\tilde \sigma (g,h,k)\cdot
(-1)^{\kappa_L^{(1)} (g,h)\cdot\lmk  m^{a_\omega(gh)}(k)\rmk}
(-1)^{\lmk \lmk  
{\kappa_R^{(2)}}^{\ao{g}}(h,k)
\rmk
\rmk\cdot \lmk b^{(2)}(g)- b^{(1)}(g)-m(g)\rmk
}
\rmk_\epsilon
\end{split}
\end{align}
Hence we got
\begin{align}
\begin{split}
&\unit_{L}\otimes u_{R}^{\epsilon(2)}(g,h) u_{R}^{\epsilon(2)}(gh,k)\\
&=
\lmk
d^2_{a_\omega}\tilde \sigma (g,h,k)\cdot
(-1)^{\kappa_L^{(1)} (g,h)\cdot\lmk  m^{a_\omega(gh)}(k)\rmk}
(-1)^{\lmk \lmk  
{\kappa_R^{(2)}}^{\ao{g}}(h,k)
\rmk
\rmk\cdot \lmk b^{(2)}(g)- b^{(1)}(g)-m(g)\rmk
}\overline {c_{R}^{(1)}(g,h,k)}
\rmk_\epsilon
\\
&\Ad\lmk
 W_{g}^{\epsilon(2)}
\rmk
\lmk 
\begin{gathered}
\lmk \unit_{L}\otimes u_{R}^{(-1)^{a_{\omega}(g)}\epsilon(2)}(h,k)\rmk
\end{gathered}
\rmk
\unit\otimes\lmk u_{R}^{\epsilon(2)}(g,hk)\rmk
\end{split}
\end{align}
Hence we have
\begin{align}
\begin{split}
{c_{R}^{(2)}(g,h,k)}
=
d^2_{a_\omega}\tilde \sigma (g,h,k)\cdot
(-1)^{\kappa_L^{(1)} (g,h)\cdot\lmk  m^{a_\omega(gh)}(k)\rmk}
(-1)^{\lmk \lmk  
{\kappa_R^{(2)}}^{\ao{g}}(h,k)
\rmk
\rmk\cdot \lmk b^{(2)}(g)- b^{(1)}(g)-m(g)\rmk
}{c_{R}^{(1)}(g,h,k)}.
\end{split}
\end{align}
Combining this with (\ref{jyu46}), we have
\begin{align}
(c^{(1)}, \kappa_R^{(1)},\kappa_L^{(1)},b^{(1)},a_\omega)\sim_{\PD}(c^{(2)}, \kappa_R^{(2)},\kappa_L^{(2)},b^{(2)},a_\omega)
\end{align}
with $m\in C^1(G, \bbZ_2\oplus \bbZ_2)$ in (\ref{mdef})
and $\tilde \sigma\in C^2(G, \Uo\oplus \Uo)$ in (\ref{stdef}).
This proves the Lemma.
\end{proof}
\begin{defn}
Let $\omega\in \SPT$.
By the same simple argument as Lemma 2.17 in \cite{2dSPT},
one can show that $h^{(3)}(\omega,\theta)$ is independent of the choice of
$0<\theta<\frac\pi 2$.
Hence we can define an index $h(\omega)=h^{(3)}(\omega,\theta)$,
independent of the choice of $\theta$.
\end{defn}

\section{Stability}\label{stabilitysec}
Now, in order to prove Theorem \ref{mainthm}, it suffices to prove the following.
\begin{thm}
For any $\omega\in \SPT$, $\gamma\in \QAut_\beta(\caA_{\bbZ^2})$,
we have $h(\omega)=h(\omega\circ\gamma)$.
\end{thm}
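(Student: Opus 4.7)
The plan is to show that $h(\omega') = h(\omega)$ for $\omega' := \omega \circ \gamma$ by exhibiting auxiliary data for $\omega'$ whose associated tuple in $\widetilde\PD$ is $\sim_\PD$-equivalent to the one giving $h(\omega)$. The decisive algebraic fact is that $\gamma \in \QAut_\beta$ satisfies $\gamma \beta_g = \beta_g \gamma$ for every $g \in G$; hence $\omega'\beta_g = \omega'$, so $\omega' \in \SPT$, and $\alpha\gamma \in \EAut(\omega')$ whenever $\alpha \in \EAut(\omega)$. Moreover, the fixed GNS triples $(\caH_\sigma, \pi_\sigma, \Omega_\sigma)$ of the reference states $\omega_{p_{H_\sigma}}$ underlie both computations, so all the implementing unitaries live in the same Hilbert spaces.

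Step 1 (matching $a_\omega$ and the $\eta$'s). Fix $(\eta_{g,\sigma}^\epsilon) \in I(\omega,\theta)$. Writing $\beta_g = \beta_g^U\beta_g^D$ with commuting factors and using $\gamma\beta_g = \beta_g\gamma$ gives $\gamma\beta_g^U = \beta_g^U\tilde\gamma_g$ with $\tilde\gamma_g := (\beta_g^U)^{-1}\gamma\beta_g^U \in \QAut(\caA_{\bbZ^2})$. Hence
\begin{align*}
\omega'\beta_g^U = \omega\beta_g^U\tilde\gamma_g \simeq \omega\tau^{a_\omega(g)\epsilon}(\eta_{g,L}^\epsilon\hat\otimes\eta_{g,R}^\epsilon)\tilde\gamma_g = \omega'\circ\left[\gamma^{-1}\tau^{a_\omega(g)\epsilon}(\eta_{g,L}^\epsilon\hat\otimes\eta_{g,R}^\epsilon)\tilde\gamma_g\right].
\end{align*}
Applying Lemma \ref{lem29} to split and shift the $\tau^{a_\omega(g)\epsilon}$-factor, then using Proposition \ref{lem32} to relocate supports into $C_{\theta'}\cap H_\sigma^{c_\sigma}$ for some $\theta' > \theta$, produces $(\eta_{g,\sigma}^{\prime\epsilon}) \in I(\omega',\theta')$ with the same prefactor $\tau^{a_\omega(g)\epsilon}$. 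By the uniqueness assertion of Proposition \ref{lem24}(ii), $a_{\omega'} = a_\omega$.

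Step 2 (vertical factorisation and transported implementing unitaries). Pick $(\alpha_L,\alpha_R,\Upsilon) \in \caD^V(\alpha,\theta)$ and $(\gamma_L,\gamma_R,\Upsilon_\gamma) \in \caD^V(\gamma,\theta)$. Since $\alpha\gamma \in \QAut(\caA_{\bbZ^2})$, one obtains $(\alpha_L\gamma_L,\alpha_R\gamma_R,\Upsilon') \in \caD^V(\alpha\gamma,\theta'')$ for some $\theta'' > \theta$, where $\Upsilon'$ absorbs both $\Upsilon_\gamma$ and the commutator of $(\gamma_L\hat\otimes\gamma_R)$ with $\Upsilon$. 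Given $((W_g^\epsilon),(u_\sigma^\epsilon(g,h))) \in \IP(\omega,\alpha,\theta,(\eta_{g,\sigma}^\epsilon),(\alpha_L,\alpha_R,\Upsilon))$, define candidates $u_\sigma^{\prime\epsilon}(g,h) := u_\sigma^\epsilon(g,h)\,Y_\sigma^\epsilon(g,h)$ and $W_g^{\prime\epsilon} := W_g^\epsilon\,X_g^\epsilon$, where $X_g^\epsilon$ and $Y_\sigma^\epsilon(g,h)$ are homogeneous correction unitaries implementing the discrepancies between the defining equations (\ref{u39}) and (\ref{w40}) for $\omega$ and for $\omega'$. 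Their existence and homogeneity follow from Lemma \ref{lem42}, since the discrepancies are graded automorphisms sitting in the relevant relative commutants exploited in Lemma \ref{lem43}(i),(iii).

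Step 3 (cocycle comparison). Record the $\bbZ_2$-grading $m^\epsilon(g)$ of the correction $X_g^\epsilon$ and set $m(g) := (m^{+1}(g),m^{-1}(g)) \in \bbZ_2\oplus\bbZ_2$; the gradings of the $Y_\sigma^\epsilon(g,h)$ contribute the $\kappa$-shifts. A sign-tracking argument strictly parallel to the proof of Lemma \ref{lem46}, using (\ref{lem41eq}) together with Lemma \ref{lem43}, shows that the resulting tuple $(c',\kappa_R',\kappa_L',b',a_\omega)$ differs from $(c,\kappa_R,\kappa_L,b,a_\omega)$ by exactly the coboundary shifts defining $\sim_\PD$, with parameters $m \in C^1(G,\bbZ_2\oplus\bbZ_2)$ as above and $\sigma \in C^2(G,\Uo\oplus\Uo)$ collecting the residual $\Uo$-phase ambiguities in the $X$'s and $Y$'s. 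Hence $[(c,\kappa_R,\kappa_L,b,a_\omega)]_\PD = [(c',\kappa_R',\kappa_L',b',a_\omega)]_\PD$ and, on passing to $\PDz$-classes via Lemma \ref{lem52}, $h(\omega) = h(\omega')$. The main obstacle is this last bookkeeping: every grading sign that appears when commuting the $\gamma$-factors through the $W$'s and $u$'s must assemble into the combinatorial pattern of $\sim_\PD$ with no residual term. The $\beta$-invariance and gradedness of $\gamma$ are precisely what prevent such an anomaly; the computation is structurally identical to that of Lemma \ref{lem46}, with the parameter $m$ now encoding the transport by $\gamma$ rather than the change-of-$\eta$ data $K_{g,\sigma}^\epsilon$ used there.
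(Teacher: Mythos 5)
The central difficulty in this theorem is not to produce \emph{some} data for $\omega' = \omega\circ\gamma$, but to arrange it so that the resulting $\widetilde\PD$-tuple is provably $\sim_\PD$-equivalent to the one for $\omega$. Your Step~2 chooses a generic $(\gamma_L,\gamma_R,\Upsilon_\gamma)\in\caD^V(\gamma,\theta)$ and then forms $\hat\alpha_\sigma = \alpha_\sigma\gamma_\sigma$. The problem is that a generic $\gamma_\sigma\in\QAut(\caA_{H_\sigma})$ need not commute with $\beta_g^U$ or with $\tau$ (it is $\gamma$ itself that commutes with $\beta_g$, and this does not pass to the factors of an arbitrary vertical factorization). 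Consequently, when you try to verify the defining equations (\ref{u39}) and (\ref{w40}) for the primed data, the automorphisms being implemented are genuinely different from the unprimed ones, and there is no guarantee that the ``discrepancies'' are graded automorphisms localized near the $x$-axis in the way Lemma~\ref{lem43}(i),(iii) requires. You assert the existence of correction unitaries $X_g^\epsilon$, $Y_\sigma^\epsilon(g,h)$ via Lemma~\ref{lem42}, but Lemma~\ref{lem42} only gives implementability when the automorphism in question is already implementable in the relevant split representation; you have not shown this. In Lemma~\ref{lem46} the corrections $V_{g\sigma}^\epsilon$ do exist precisely because the perturbation $K_{g,\sigma}^\epsilon$ is by construction a graded automorphism of $\caA_{C_\theta\cap H_\sigma^{c_\sigma}}$ commuting with $\Upsilon$; your $\gamma_\sigma$'s do not enjoy the analogous localization and commutation properties. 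This is the missing idea.

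The paper sidesteps the entire bookkeeping by proving a sharper fact: with the \emph{right} decomposition of $\gamma$, the two implementing sets coincide exactly. Following Theorem~5.2 of \cite{2dSPT}, one decomposes $\gamma = \gamma_C\circ\gamma_H$ where $\gamma_H$ is localized in a thin cone around the $x$-axis and $\gamma_C$ has a fine conical decomposition, and crucially each piece $\gamma_I$ of $\gamma_C$ satisfies $\gamma_I\beta_g^U = \beta_g^U\gamma_I$ and $\gamma_I\tau = \tau\gamma_I$ (equation (\ref{san47})). Using this, the paper builds $(\hat\alpha_L,\hat\alpha_R,\hat\Upsilon)$ and $(\widetilde\eta_{g\sigma}^\epsilon)$ (equations (\ref{athens}), (\ref{rome}), (\ref{ettt})) so that the automorphisms to be implemented for $(\omega\gamma,\alpha\gamma,\ldots)$ are \emph{literally identical} to those for $(\omega,\alpha,\ldots)$, and hence
\[
\IP\lmk\omega,\alpha,\theta_2,(\eta_{g\sigma}^\epsilon),(\alpha_L,\alpha_R,\Upsilon)\rmk
= \IP\lmk\omega\gamma,\alpha\gamma,\theta_{1.2},(\widetilde{\eta_{g\sigma}^\epsilon}),(\hat\alpha_L,\hat\alpha_R,\hat\Upsilon)\rmk.
\]
No correction unitaries ever appear, and no cocycle comparison is needed. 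If you want to salvage your route, you must first establish the analogue of the $\gamma=\gamma_C\gamma_H$ decomposition with the commutation property (\ref{san47}); without it, the assertion that the corrections assemble into a $\sim_\PD$-coboundary is not proven and may not hold for an arbitrary $\caD^V$-splitting of $\gamma$.
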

\begin{proof}
Let $\alpha\in \EAut(\omega)$, $0<\theta_0<\theta<\frac\pi 2$,
$(\etas g \sigma \epsilon{})\in I(\omega, \theta_0)$,
$(\alpha_L,\alpha_R,\Upsilon)\in \caD^V(\alpha,\theta)$.
Set $\theta_{2}:=\theta$ and choose 
\begin{align}
0<\theta_{0}<\theta_{0.8}<\theta_1<\theta_{1.2}<\theta_{1.8}<\theta_2<\theta_{2.2}<
\theta_{2.8}<\theta_3<\theta_{3.2}<\frac\pi 2.
\end{align}
Because $\gamma\in \QAut_\beta(\caA_{\bbZ^2})$, by the same proof as 
Theorem 5.2 \cite{2dSPT},
we can decompose 
$\gamma$ as
\begin{align}\label{gcz}
\gamma=\gamma_{C}\circ\gamma_{H}.
\end{align}
Here, 
automorphisms 
$\gamma_{H}$ and 
$\gamma_{C}$
belong to 
$\QAut\lmk \caA_{\bbZ^2}\rmk$
and 
allow the following decompositions:
{\it 1.}$\gamma_H$ is decomposed as 
\begin{align}\label{ght}
\gamma_{H}=\inn\circ\lmk \gamma_{H, {L}}\hat\otimes \gamma_{H,R}\rmk
=\inn\circ\gamma_{0}
\end{align}
with
some $\gamma_{H, {\sigma}}\in \Aut^{(0)}\lmk \caA_{{\lmk C_{\theta_{0}}\rmk\cap H_\sigma^{c_\sigma^{(-1)}}}}\rmk$, $\sigma=L,R$.
Here set $\gamma_{0}:=\gamma_{H, {L}}\hat\otimes \gamma_{H,R}\in 
\auz\lmk \caA_{{ C_{\theta_{0}}}}\rmk$,
and $c_R^{(-1)}=6$, $c_L^{(-1)}:=-6$.
We have chosen the support of 
$\gamma_{H, {\sigma}}$ as $ C_{\theta_{0}}\cap H_\sigma^{c_\sigma^{(-1)}}$
so that $\widetilde {\etas g\sigma \epsilon{}}$
in (\ref{ettt}) below has a support in $C_{\theta_{0}}\cap H_\sigma^{c_\sigma}$.
\\
{\it 2. }The automorphism 
$\gamma_{C}$ allows a
decomposition
\begin{align}\label{sqautg}
&\gamma_{C}=\inn\circ\gamma_{CS}\notag\\
&\gamma_{CS}=\lmk
\gamma_{[0,\theta_1]}{\hat \otimes}\gamma_{(\theta_1,\theta_2]}
{\hat \otimes} \gamma_{(\theta_2,\theta_3]}{\hat \otimes}
\gamma_{(\theta_3,\frac\pi 2]}
\rmk
\circ
\lmk
\gamma_{(\theta_{0.8}, \theta_{1.2}]}{\hat \otimes}
\gamma_{(\theta_{1.8},\theta_{2.2}]}
{\hat \otimes} \gamma_{(\theta_{2.8},\theta_{3.2}]}
\rmk
\end{align}
with  \begin{align}\label{sqaut2g}
  \begin{split}
&  \gamma_X:=\widehat \bigotimes_{\sigma=L,R,\zeta=D,U} \gamma_{X,\sigma,\zeta},\quad
 \gamma_{[0,\theta_{1}]}:=\widehat \bigotimes_{\sigma=L,R}\gamma_{[0,\theta_{1}],\sigma},\quad
 \gamma_{(\theta_3,\frac\pi 2]}:=\widehat\bigotimes_{\zeta=D,U}  \gamma_{(\theta_3,\frac\pi 2],\zeta},\\
 &\gamma_{X,\sigma,\zeta}\in \auz\lmk\caA_{C_{X}
 \cap H_\sigma^{c_\sigma}\cap H_\zeta}\rmk,\quad
 \gamma_{X,\sigma}:=\widehat\bigotimes_{\zeta=U,D}\gamma_{X,\sigma,\zeta},\quad
\gamma_{X,\zeta}:=\widehat\bigotimes_{\sigma=L,R}\gamma_{X,\sigma,\zeta},\\
&\gamma_{[0,\theta_{1}],\sigma}\in \auz\lmk\caA_{C_{[0,\theta_{1}]}\cap H_\sigma^{c_\sigma}}\rmk,\quad
 \gamma_{(\theta_3,\frac\pi 2],\zeta}\in \auz\lmk\caA_{C_{(\theta_3,\frac\pi 2]}\cap H_\zeta}\rmk, 
  \end{split} 
  \end{align}
 for
 \begin{align}\label{sqaut3g}
 X=(\theta_1,\theta_2], (\theta_2,\theta_3],
 (\theta_{0.8},\theta_{1.2}],
 (\theta_{1.8},\theta_{2.2}], 
(\theta_{2.8},\theta_{3.2}],\quad \sigma=L,R,\quad \zeta=D,U.
 \end{align}
 These automorphisms satisfy
 \begin{align}\label{san47}
\gamma_{I}\circ\beta_g^{U}=\beta_g^{U}\circ\gamma_{I}\;\quad\text{for all}\quad g\in G,\quad\text{and}\quad
\gamma_{I}\circ\tau=\tau\circ\gamma_{I}
\end{align}
for any 
\begin{align}\label{sqaut3g}
I=[0,\theta_1],(\theta_1,\theta_2], (\theta_2,\theta_3], \left(\theta_3,\frac\pi 2\right],
(\theta_{0.8}, \theta_{1.2}], (\theta_{1.8},\theta_{2.2}], (\theta_{2.8},\theta_{3.2}].
 \end{align}
\quad
\\\quad

Now we set
\begin{align}\label{athens}
\hat\Upsilon:=\Upsilon\circ
\lmk
\gamma_{(\theta_2,\theta_3]}\otimes
\gamma_{(\theta_3,\frac\pi 2]}
\rmk
\circ
\lmk
\gamma_{(\theta_{1.8},\theta_{2.2}]}
\otimes \gamma_{(\theta_{2.8},\theta_{3.2}]}
\rmk
\in\auz\lmk \caA_{C_{\theta_{1.8}}^{c}}\rmk
\subset \auz\lmk \caA_{C_{\theta_{1.2}}^{c}}\rmk,
\end{align}
and
\begin{align}\label{rome}
\hat\alpha_{\sigma}:=
\alpha_{\sigma}\circ
\lmk
\gamma_{[0,\theta_1],\sigma}\hat\otimes\gamma_{(\theta_1,\theta_2],\sigma}
\rmk
\circ
\gamma_{(\theta_{0.8}, \theta_{1.2}],\sigma}
\circ\gamma_{H,\sigma}\in\auz(\caA_{H_{\sigma}}),\quad\sigma=L,R.
\end{align}
Then as in the proof of Theorem 3.1 of \cite{2dSPT} {\it Step 1}, we can check
\begin{align}\label{algm}
\alpha\circ\gamma=\inn\circ\lmk\hat\alpha_{L}\otimes\hat \alpha_{R}\rmk\circ\hat\Upsilon.
\end{align}
Therefore, we have 
$(\hat\alpha_L,\hat\alpha_R,\hat \Upsilon)\in\caD^V\lmk \alpha\gamma, \theta_{1.2}\rmk$.
Because $\tau$ and $\gamma_{CS}$ commute,
we also have
\begin{align}\label{ichi47}
\gamma^{-1} \tau^a\gamma 
=\tau^a \lmk
\lmk \gamma_{HL}\rmk_{-a}^{-1} \gamma_{HL}
\hat\otimes  \lmk \gamma_{HR}\rmk_{-a}^{-1} \gamma_{HR}
\rmk\circ\inn 
\end{align}
As in the proof of Theorem 3.1 of \cite{2dSPT} {\it Step 2}
we have
\begin{align}\label{ni47}
\begin{split}
&\gamma^{-1}\eta_g^\epsilon \gamma
=\gamma_0^{-1} \gamma_{(\theta_{0.8}, \theta_{1.2}]}^{-1}
\gamma_{[0,\theta_1]}^{-1} \eta_g^\epsilon
\gamma_{[0,\theta_1]}\gamma_{(\theta_{0.8}, \theta_{1.2}]}\gamma_0\circ\inn,\\
&\gamma^{-1}{\beta_g^U}^{-1} \gamma \beta_g^U
=\gamma_0^{-1} {\beta_g^U}^{-1} \gamma_0 \beta_g^U\circ\inn.
\end{split}
\end{align}
Set
\begin{align}\label{ettt}
\widetilde {\etas g\sigma \epsilon{}}:=
\lmk \gamma_{H_\sigma}\rmk_{-\ao g\epsilon}^{-1}
\lmk \gamma_{(\theta_{0.8}, \theta_{1.2}]\sigma}\rmk^{-1}
\lmk \gamma_{[0, \theta_{1}]\sigma}\rmk^{-1}
\etas g\sigma \epsilon{}
\gamma_{[0, \theta_{1}]\sigma}\gamma_{(\theta_{0.8}, \theta_{1.2}]\sigma}
\lmk\beta_g^{U\sigma}\rmk^{-1}
 \gamma_{H_\sigma}\beta_g^{U\sigma}.
\end{align}
We claim $(\widetilde {\etas g\sigma \epsilon{}})\in I(\omega\gamma,\theta_{1.2})$.
It can be seen as follows using (\ref{ichi47}), (\ref{ni47}) :
\begin{align}
\begin{split}
&\omega\gamma\beta_g^U
=\omega\beta_g^U{\beta_g^U}^{-1}\gamma\beta_g^U 
=\omega\tau^{\ao g \epsilon}\lmk {\etas g L\epsilon{}}\hat\otimes {\etas g R \epsilon{}}\rmk{\beta_g^U}^{-1}\gamma\beta_g^U \circ\inn\\
&=
\omega\gamma\gamma^{-1}\tau^{\ao g \epsilon}\gamma
\circ\gamma^{-1}\lmk {\etas g L\epsilon{}}\hat\otimes {\etas g R \epsilon{}}\rmk
\gamma\circ\gamma^{-1}{\beta_g^U}^{-1}\gamma\beta_g^U \circ\inn
=\omega\gamma\tau^{\ao g \epsilon}
\lmk\widetilde {\etas g L\epsilon{}}\hat\otimes \widetilde {\etas g R \epsilon{}}\rmk\circ\inn.
\end{split}
\end{align}
From this equality, we can also see that $\ao g=a_{\omega\gamma}(g)$
for any $g\in G$.

Because $(\hat\alpha_L,\hat\alpha_R,\hat \Upsilon)\in\caD^V\lmk \alpha\gamma, \theta_{1.2}\rmk$
and $(\widetilde {\etas g\sigma \epsilon{}})\in I(\omega\gamma,\theta_{1.2})$,
in order to prove the Theorem,
it suffices to show
\begin{align}
\begin{split}
\IP\lmk\omega,\alpha,\theta_2,\lmk \etas g\sigma \epsilon {}\rmk, (\alpha_L,\alpha_R,\Upsilon)\rmk
=\IP\lmk\omega\gamma,\alpha\gamma,\theta_{1.2},\lmk \widetilde{\etas g\sigma \epsilon {}}\rmk, 
(\hat \alpha_L,\hat \alpha_R,\hat \Upsilon)\rmk.
\end{split}
\end{align}
In other words,
it suffices to show the following:
\begin{align}
\begin{split}
&\lmk\hat \alpha_L\hat\otimes \hat \alpha_R\rmk\hat \Upsilon\beta_g^U
\lmk \widetilde\eta_g^\epsilon\rmk^{-1}\tau^{-a_{\omega\gamma}(g) \epsilon}
\hat \Upsilon^{-1} 
\lmk\hat \alpha_L\hat\otimes \hat \alpha_R\rmk^{-1}
=
\lmk\alpha_L\hat\otimes \alpha_R\rmk \Upsilon\beta_g^U
\lmk \eta_g^\epsilon\rmk^{-1}\tau^{-\ao g \epsilon}
\Upsilon^{-1} 
\lmk\alpha_L\hat\otimes \alpha_R\rmk^{-1},\\
&\hat\alpha_\sigma \beta_{g}^{U\sigma}\lmk \widetilde{\etas g \sigma \epsilon{}}\rmk^{-1}
\beta_h^{U\sigma}
\lmk
\lmk \widetilde {\etas h \sigma{\eg g\epsilon }{}}\rmk_
{-a_{\omega\gamma}(g) \epsilon}
\rmk^{-1}
\widetilde {\etas {gh} \sigma \epsilon{}}
{\beta_{gh}^{U\sigma}}^{-1}
\hat \alpha_\sigma^{-1}\\
&=\alpha_\sigma \beta_{g}^{U\sigma}\lmk {\etas g \sigma \epsilon{}}\rmk^{-1}
\beta_h^{U\sigma}
\lmk
\lmk  {\etas h \sigma{\eg g\epsilon }{}}\rmk_{-\ao g \epsilon}
\rmk^{-1}
 {\etas {gh} \sigma \epsilon{}}
{\beta_{gh}^{U\sigma}}^{-1}
\alpha_\sigma^{-1}
\end{split}
\end{align}
This can be shown as in {\it Step 3.} of Theorem 3.1 \cite{2dSPT},
noting the commutativity (\ref{san47}).

\end{proof}

\section{Automorphisms on $\sdckc{}$}\label{sdautosec}
In this section we show the following proposition.
It allows us to reduce the support of homogeneous automorphisms
by finite portion.
\begin{prop}\label{lem32}
Let $\mkk_1$ be a finite even dimensional Hilbert space with a complex conjugation $\mkC_1$.
Let $\mkk_2$ be an infinite dimensional Hilbert space with a complex conjugation $\mkC_2$.
Set $\mkk:=\mkk_1\oplus\mkk_2$ with a complex conjugation $\mkC=\mkC_1\oplus \mkC_2$.
Then for any $\alpha\in\auz\lmk \sdckc{}\rmk$,
there are unitary $u\in\sdckc{}$ and an automorphism
$\alpha_2\in\auz\lmk \sdckc{2}\rmk$ such that
\begin{align}\label{uaua}
\Ad u\circ\alpha
=\id_{\sdckc{1}}\hat\otimes \alpha_2.
\end{align}
\end{prop}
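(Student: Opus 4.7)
The plan is to find a homogeneous unitary $u \in \sdckc{}$ that conjugates $\alpha$ to the identity on $\sdckc{1}$, and then use the graded relative commutant to identify the resulting automorphism as $\id_{\sdckc{1}} \hat\otimes \alpha_2$. Because $\mkk_1$ is even-dimensional, fixing a basis projection $p_1$ identifies $\sdckc{1}$ with the finite CAR algebra $\mkA_{\mathrm{CAR}}(p_1\mkk_1) \cong M_N$, $N = 2^{\dim\mkk_1/2}$, which admits a system of matrix units $\{e_{IJ}\}$ homogeneous of grade $(|I|+|J|)\bmod 2$ (built from the standard monomial basis of creation/annihilation operators). Fix a reference index $I_0$ and set $f_{IJ} := \alpha(e_{IJ})$; since $\alpha$ is graded, $\{f_{IJ}\}$ is a system of matrix units for $\alpha(\sdckc{1}) \subset \sdckc{}$ with the same grading pattern, and in particular $e_{I_0I_0}$ and $f_{I_0I_0}$ are both even projections.

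The core technical step is to produce an \emph{even} partial isometry $v \in \sdckcz{}$ satisfying $v^*v = e_{I_0I_0}$ and $vv^* = f_{I_0I_0}$. For this I would extend $p_1$ to a basis projection $p = p_1 \oplus p_2$ of $(\mkk,\mkC)$ and identify $\sdckc{}$ with the infinite CAR algebra $\mkA_{\mathrm{CAR}}(p\mkk)$, whose even part $\sdckcz{}$ is a simple AF algebra with a unique normalized trace $\tau$ (in fact a UHF algebra). Any automorphism of such an algebra preserves the unique trace, so $\tau(e_{I_0I_0}) = \tau(f_{I_0I_0})$; and in a simple AF algebra with unique trace, two projections are Murray-von Neumann equivalent if and only if they have equal trace, which yields the desired $v$.

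Given $v$, set
\begin{align*}
u := \sum_I f_{II_0}\, v\, e_{I_0 I} \;\in\; \sdckc{}.
\end{align*}
A direct computation using the two sets of matrix-unit relations together with $v^*v = e_{I_0I_0}$ and $vv^* = f_{I_0I_0}$ gives $u^*u = uu^* = \unit$ and $u\, e_{IJ}\, u^* = f_{IJ}$ for every $I,J$, so $\Ad u\circ\alpha$ restricts to the identity on $\sdckc{1}$. Each summand has grade $(|I|+|I_0|) + 0 + (|I_0|+|I|) \equiv 0 \pmod 2$, so $u$ is even and $\Ad u$ is graded; hence $\tilde\alpha := \Ad u\circ\alpha \in \auz(\sdckc{})$ fixes $\sdckc{1}$ pointwise.

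To conclude, under the canonical identification $\sdckc{} = \sdckc{1}\hat\otimes\sdckc{2}$, the graded commutant of $\sdckc{1}\hat\otimes\unit$ inside $\sdckc{}$ equals $\unit\hat\otimes\sdckc{2}$---a standard consequence of the graded-tensor structure combined with the fact that the graded center of the finite-dimensional self-dual CAR algebra $\sdckc{1}$ is $\bbC\unit$. Since $\tilde\alpha$ is graded and fixes $\sdckc{1}\hat\otimes\unit$ pointwise, it preserves this graded commutant, and its restriction there defines a graded $\alpha_2 \in \auz(\sdckc{2})$ with $\tilde\alpha = \id_{\sdckc{1}}\hat\otimes\alpha_2$, giving (\ref{uaua}). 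The main obstacle I anticipate is the Murray-von Neumann step inside the \emph{even} subalgebra: it requires knowing that $\sdckcz{}$ is a simple AF algebra with unique trace (so that projection equivalence is governed purely by trace), which in turn relies on identifying the even part of the infinite CAR algebra structurally; the rest is essentially a direct computation once $v$ has been produced.
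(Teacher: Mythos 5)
Your argument is correct, but it is a genuinely different route from the paper's. The paper proceeds by perturbation: it adds an auxiliary two-dimensional $\mkC_2$-invariant subspace $\mkk_0\subset\mkk_2$ so that $\SDC(\mkk_1\oplus\mkk_0,\mkC_1\oplus\mkC_0)$ contains an odd self-adjoint unitary, uses Davidson's stability lemma (Lemma \ref{lem34}) to conjugate $\alpha(\sdciz)$ into a finite-dimensional $\sdcizs$ by an even unitary, and then constructs the matching unitary $V$ explicitly in a Fock representation (Lemma \ref{lem35}, which needs that odd unitary to ensure the restricted grading has balanced eigenspaces), before finishing with the commutant computation of Lemma \ref{lem31}. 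You instead produce the conjugating unitary in one shot by appealing to the structure of the even subalgebra $\sdckcz{}$: since this is a simple AF algebra with unique trace and (after computing $K_0$) is in fact UHF of type $2^\infty$, equality of traces already forces the Murray--von Neumann equivalence of $e_{I_0I_0}$ and $f_{I_0I_0}=\alpha(e_{I_0I_0})$ inside the even part, and your explicit sum $u=\sum_I f_{II_0}\,v\,e_{I_0I}$ then works out correctly (it is even, unitary, and conjugates $e_{IJ}$ to $f_{IJ}$). What your approach buys is conceptual economy --- no auxiliary $\mkk_0$, no perturbation estimates, no Fock space computation --- at the cost of leaning on a nontrivial structural input, namely that the even part of the infinite CAR algebra is UHF (equivalently, that its $K_0$ is $\bbZ[1/2]$ with the trace injective), which is indeed true but should be cited or established; what the paper's approach buys is that it stays closer to elementary matrix-unit manipulations within the given algebra, which matches the proof toolkit used elsewhere in the paper. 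Your final step, identifying the graded relative commutant of $\sdckc{1}\hat\otimes\unit$ with $\unit\hat\otimes\sdckc{2}$ and restricting $\Ad u\circ\alpha$ there, is essentially the content of the paper's Lemma \ref{lem31} and is correct as stated.
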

We start by several Lemmas.
\begin{lem}\label{lem34}
For each $n\in2\bbN$, there is some $\delta_n>0$
satisfying the following condition.:
Let $\mkk_i$, $i=1,2,3$ be Hilbert spaces with complex conjugations $\mkC_i$.
Suppose that $\mkk_1$ is $n$-dimensional.
Suppose that 
$\mkk_3$ is even-finite-dimensional, and $\mkk_2$ is infinite dimensional. 
Set $\mkk:=\mkk_1\oplus\mkk_2\oplus \mkk_3$ 
with a complex conjugation $\mkC=\mkC_1\oplus \mkC_2\oplus \mkC_3$.
Let $\alpha\in \auz\lmk\sdckc{}\rmk$.
Let $\{e_{IJ}\}_{I,J\subset \{1,\ldots,\frac{\dim \mkk_1}{2}\}}$ 
be a system of matrix units spanning $\sdckc{1}$
with  grading of $e_{IJ}$ being $|I|+|J| \mod 2$. 
Suppose that
\begin{align}
\dist\lmk
\alpha(e_{IJ}), \sdcis
\rmk<\delta_n.
\end{align}
Here $\dist(X,Y)$ means the distance between the two subsets
$X,Y$ of $\sdckc{}$, with respect to the $C^*$-norm.
Then there is an even unitary $u\in\sdckc{}$ such that
\begin{align}
\Ad\lmk u\rmk\lmk \alpha\lmk \sdckc{1}\rmk\rmk
\subset \sdcis.
\end{align}
\end{lem}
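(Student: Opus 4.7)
The plan is to use standard $C^*$-algebraic perturbation of approximate matrix units while tracking the $\bbZ_2$-grading. With $\delta_n$ chosen small enough, I expect to produce an exact system of matrix units $\{\tilde e_{IJ}\} \subset \sdcis$ close to $\{\alpha(e_{IJ})\}$ and carrying the same grading, and then to construct an even unitary $u \in \sdckc{}$ with $\Ad(u)\lmk \alpha(e_{IJ})\rmk = \tilde e_{IJ}$. Since the $e_{IJ}$ span $\sdckc{1}$, the desired inclusion $\Ad(u)\lmk\alpha\lmk\sdckc{1}\rmk\rmk \subset \sdcis$ will then follow.

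First I would choose, for each pair $I,J$, an element $f_{IJ}\in \sdcis$ with $\lV f_{IJ}-\alpha(e_{IJ})\rV<\delta_n$; replacing $f_{IJ}$ by its homogeneous component of grade $|I|+|J|\bmod 2$ (the projection onto either graded component is contractive and preserves membership in $\sdcis$), I may assume each $f_{IJ}$ is homogeneous of the correct parity without worsening the bound by more than a constant factor. The family $\{f_{IJ}\}$ then satisfies the matrix unit relations up to an error $O(\delta_n)$. Inside the finite-dimensional graded algebra $\sdcis$ I apply the classical polishing procedure for approximate matrix units: functional calculus on the even self-adjoint element $f_{I_0 I_0}$ gives an exact projection $\tilde e_{I_0 I_0}\in\sdcis$, the off-diagonal entries $\tilde e_{II_0}$ are obtained by polar-decomposing suitable homogeneous corrections of $f_{II_0}\tilde e_{I_0 I_0}$, and the remaining $\tilde e_{IJ}$ are reconstructed from the matrix-unit relations. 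Because every step uses functional calculus on even self-adjoints or polar decomposition of homogeneous elements, each $\tilde e_{IJ}$ is homogeneous of grade $|I|+|J|\bmod 2$ and close to $\alpha(e_{IJ})$, and by shrinking $\delta_n$ further I enforce $\sum_I \tilde e_{II}=\unit$.

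The even unitary is then produced in two stages. Since $p:=\alpha(e_{I_0 I_0})$ and $q:=\tilde e_{I_0 I_0}$ are close even self-adjoint projections in $\sdckc{}$, the standard formula
\[
v:=\lmk qp+(1-q)(1-p)\rmk\lmk 1-(p-q)^2\rmk^{-1/2}
\]
defines a unitary $v\in\sdckc{}$ with $vpv^*=q$; each factor is manifestly even, so $v$ is even. After passing from $\alpha$ to $\Ad(v)\circ\alpha$ I may assume $\alpha(e_{I_0 I_0})=\tilde e_{I_0 I_0}$. Now set
\[
u:=\sum_{I}\tilde e_{II_0}\,\alpha(e_{I_0 I}).
\]
Each summand has grade $\lmk |I|+|I_0|\rmk+\lmk |I_0|+|I|\rmk\equiv 0\pmod 2$, so $u$ is even; a direct computation using the matrix-unit relations together with $\alpha(e_{I_0 I_0})=\tilde e_{I_0 I_0}$ yields $u^*u=uu^*=\unit$ and $u\,\alpha(e_{KL})\,u^*=\tilde e_{KL}$ for all $K,L$. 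The composition $uv$ is the required even unitary. The main delicate point is the grading-preserving polishing: one must verify that every functional calculus and polar decomposition carried out inside the finite-dimensional graded algebra $\sdcis$ actually produces homogeneous elements of the expected parity, which governs the choice of $\delta_n$.
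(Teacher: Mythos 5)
Your proof is correct and follows essentially the same route as the paper, which simply invokes the proof of Davidson's Lemma III.3.2 on perturbing approximate matrix units and notes that the resulting unitary is even. Your write-up fills in the graded bookkeeping (homogeneous truncation, parity of the polished matrix units and of the unitaries $v$ and $u$) that the paper leaves as an observation, but the construction is the same.
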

\begin{proof}
This holds directly from the proof of Lemma III 3.2, 3.2
of \cite{Davidson}.
It is easy to see that the unitary obtained by the argument there 
is even.{\che{}}
\end{proof}
\begin{lem}\label{lem35}
Let $\mkk_i$, $i=1,2$ be finite even dimensional Hilbert spaces with
complex conjugations $\mkC_i$.
Let $\{e_{IJ}\}_{I,J\subset \{1,\ldots, \frac{\dim\mkk_1}{2} \}}$
be a system of matrix units spanning $\sdckc{1}$
with  grading of $e_{IJ}$ being $|I|+|J| \mod 2$. 
Let $\{f_{IJ}\}_{I,J\subset \{1,\ldots, \frac{\dim\mkk_1}{2} \}}$
be a system of matrix units in $\sdcin$
with  grading of $f_{IJ}$ being $|I|+|J| \mod 2$. 
Suppose that there is a self-adjoint unitary $v\in \sdcin^{(1)}$
such that $\Ad(v)(f_{IJ})=(-1)^{|I|+|J|} f_{IJ}$.
Then there is an even unitary $U\in\sdcin$
such that $f_{IJ}=\Ad\lmk U\rmk (e_{IJ})$.
\end{lem}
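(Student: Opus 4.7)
The plan is to reduce the problem to finding a single even unitary $V\in\sdcin^{(0)}$ that conjugates the even ``vacuum'' projection $e_{\emptyset\emptyset}$ to $f_{\emptyset\emptyset}$, and then assemble $U$ from $V$ and the matrix units via the standard Skolem--Noether-type formula $U:=\sum_I f_{I\emptyset}\,V\,e_{\emptyset I}$. This $U$ is manifestly even because each term has total grading $|I|+0+|I|\equiv 0\pmod 2$. Once one has $V e_{\emptyset\emptyset}V^*=f_{\emptyset\emptyset}$ together with $\sum_I e_{II}=\sum_I f_{II}=\unit$ (the first is automatic because $\{e_{IJ}\}$ spans $\sdckc{1}$; the second is forced by the conclusion and is implicit in the statement), the relations $UU^*=U^*U=\unit$ and $Ue_{IJ}U^*=f_{IJ}$ follow by the direct matrix-unit computation $e_{\emptyset K}e_{IJ}e_{L\emptyset}=\delta_{KI}\delta_{JL}e_{\emptyset\emptyset}$, collapsing everything to $f_{I\emptyset}f_{\emptyset\emptyset}f_{\emptyset J}=f_{IJ}$.

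So the real content is producing this even $V$. I would first identify $\sdcin$ as the matrix algebra $M_N$ with $N:=2^{(\dim\mkk_1+\dim\mkk_2)/2}$; its grading is inner, and the graded tensor decomposition $\sdcin=\sdckc{1}\hat\otimes\sdckc{2}$ lets one take the implementing involution as $\Gamma:=\Gamma_1\otimes\Gamma_2$, which is balanced, so $\sdcin^{(0)}\cong B_1\oplus B_2$ with $B_i\cong M_{N/2}$. The hypothesized odd self-adjoint unitary $v$ satisfies $\Ad(v)(\Gamma)=-\Gamma$, so $\Ad(v)$ swaps the two factors $B_1$ and $B_2$. Next I would compare the rank distributions of $e_{\emptyset\emptyset}$ and $f_{\emptyset\emptyset}$ across $B_1$ and $B_2$. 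For $e_{\emptyset\emptyset}=e_{\emptyset\emptyset}\otimes\unit$ a direct tensor-product calculation gives rank $n_2/2=N/(2n)$ in each factor (where $n=2^{\dim\mkk_1/2}$ and $n_2=2^{\dim\mkk_2/2}$). For $f_{\emptyset\emptyset}$ the hypothesis on $v$ yields $\Ad(v)(f_{\emptyset\emptyset})=(-1)^{0}f_{\emptyset\emptyset}=f_{\emptyset\emptyset}$, so the factor-swap $\Ad(v)$ exchanges its $B_1$- and $B_2$-components, forcing them to have equal rank; since the total rank is $N/n$ (as $\{f_{IJ}\}''$ is a unital copy of $M_n$ in $\sdcin$), each factor again carries rank $n_2/2$. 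With matching rank profiles, I can pick unitaries $V_i\in B_i$ intertwining the two projections and set $V:=V_1\oplus V_2\in\sdcin^{(0)}$.

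The delicate step is the rank matching for $f_{\emptyset\emptyset}$, and this is precisely where the hypothesis about the odd self-adjoint $v$ is used in an essential way: without it one could imagine a matrix-unit system whose even vacuum projection sits entirely in one factor of $\sdcin^{(0)}$, and then no even unitary could conjugate it to $e_{\emptyset\emptyset}$. Incidentally this also forces $\dim\mkk_2\ge 2$, since $\dim\mkk_2=0$ would make $\sdcin=\sdckc{1}$, and any $v$ implementing the grading on all of $\sdckc{1}$ would have to be a scalar multiple of the \emph{even} element $\Gamma_1$, contradicting oddness; the case $\dim\mkk_2=0$ is therefore vacuously handled. Everything else is standard matrix-unit bookkeeping combined with the grading.
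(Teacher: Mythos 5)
Your proof is correct, and it takes a genuinely different route from the paper's.

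The paper works inside a Fock representation $(\caH,\pi)=(\caH_1\otimes\caH_2,\pi_1\hat\otimes\pi_2)$ with grading operator $\Gamma=\Gamma_1\otimes\Gamma_2$. It builds a vector-level unitary $W:\caH\to\caH_1\otimes\caK$ with $\caK:=\pi(f_{\emptyset\emptyset})\caH$ via $W\xi:=\sum_I e_I\otimes\pi(f_{\emptyset I})\xi$, computes $W\Gamma=(\Gamma_1\otimes\Gamma\pi(f_{\emptyset\emptyset}))W$, and then uses the hypothesized odd self-adjoint $v$ to show that $\Gamma\pi(f_{\emptyset\emptyset})$ has equal $\pm 1$ eigenvalue multiplicities on $\caK$. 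This lets one pick a unitary $u:\caH_2\to\caK$ with $\Ad(u)(\Gamma_2)=\Gamma\pi(f_{\emptyset\emptyset})$, whence $\tilde U:=(\unih{1}\otimes u^*)W$ commutes with $\Gamma$, is therefore of the form $\pi(U^*)$ with $U$ even, and one checks $f_{IJ}=\Ad(U)(e_{IJ})$. Your argument stays entirely inside the algebra: you reduce to producing an even $V$ with $Ve_{\emptyset\emptyset}V^*=f_{\emptyset\emptyset}$ and then assemble $U:=\sum_I f_{I\emptyset}\,V\,e_{\emptyset I}$ (which is even because each summand has grading $|I|+0+|I|\equiv 0$, and the matrix-unit bookkeeping you sketch correctly yields both unitarity and the intertwining). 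To find $V$ you exploit that $\sdcin^{(0)}=B_1\oplus B_2$ with $\Ad(v)$ swapping the factors (since $\Ad(v)(\Gamma)=-\Gamma$ when $v$ is odd), and then match ranks of the two vacuum projections factor-by-factor. Both approaches funnel the odd $v$ into exactly the same balance statement -- equality of the two ``halves'' of the vacuum projection $f_{\emptyset\emptyset}$ across the $\bbZ_2$-graded decomposition -- but yours does so via a rank computation inside the algebra rather than via an eigenspace count on a chosen representation. Your route also isolates cleanly the needed hypothesis $\sum_I f_{II}=\unit$, which is left implicit in the lemma but is indeed used (both by you and by the paper, where it is what makes $W$ an isometry); it is automatic in the application of the lemma in Proposition~\ref{lem32}. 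Finally your observation that $\dim\mkk_2\geq 2$ is forced by the existence of $v$ (and that the balanced-grading count $\dim\caH_{2+}=\dim\caH_{2-}$ relies on this) is a point the paper does not spell out but is consistent with.

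Minor caveats: you should state explicitly (rather than inline) that $\sum_I f_{II}=\unit$ is being assumed, since a ``system of matrix units in $\sdcin$'' need not a priori be unital; and when you pass from ``$\Ad(v)$ swaps $B_1,B_2$'' to ``the components of $f_{\emptyset\emptyset}$ in $B_1,B_2$ have equal rank'' it is worth spelling out that $f_{\emptyset\emptyset}$ commutes with $v$ (grading $0$), so $v$ conjugates one component onto the other. These are presentational rather than substantive gaps.
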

\begin{proof}
Let $(\caH_1,\pi_1)$ be 
 Fock representations of $\sdckc{1}$ with
 grading unitary $\tilde \Gamma_1$.
 For each $I\subset \{1,\ldots, \frac{\dim\mkk_1}{2}\}$, 
 note that $\pi_1(e_{II})\caH_1$ is one-dimensional
 space because $\pi_1(e_{II})$ is a minimal projection of $\sdckc{1}$
 and $\pi_1$ is irreducible.
 Let
  $e_I$ be a unit vector in $\pi_1(e_{II})\caH_1$.
  Because $\tilde \Gamma_1\pi_1(e_{II})\caH_1=\pi_1(e_{II}) \tilde \Gamma_1\caH_1
  =\pi_1(e_{II})\caH_1$,
  $e_I$ is eigenvector of $\tilde \Gamma_1$ with eigenvalue $\pm 1$.
  If $\tilde\Gamma_1 e_\emptyset=e_\emptyset$/ $\tilde\Gamma_1 e_\emptyset=-e_\emptyset$, set 
  $\Gamma_1:=\tilde \Gamma_1$/ $\Gamma_1:=-\tilde\Gamma_1$.
  so that $\Gamma_1 e_\emptyset=e_\emptyset$.
  For general $I\subset \{1,\ldots, \frac{\dim\mkk_1}{2}\}$, 
  we have $e_I=c_I \pi_1(e_{I\emptyset})e_\emptyset$, with some $c_I\in \Uo$.
  Hence we have
  $\Gamma_1 e_I=c_I \Gamma_1\pi_1(e_{I\emptyset})e_\emptyset
  =(-1)^{|I| }c_I\pi_1(e_{I\emptyset})e_\emptyset=(-1)^{|I| } e_I$.
  
  Let $(\caH_2,\pi_2)$ be 
 Fock representations of $\sdckc{2}$ with
 grading unitary $\Gamma_2$.
 Then $(\caH,\pi):=(\caH_1\otimes \caH_2, \pi_1\hat\otimes \pi_2)$
 is a Fock representation of $\sdcin$
 with a grading unitary $\Gamma:=\Gamma_1\otimes \Gamma_2$.
  
Setting $\caK:=\pi(f_{\emptyset\emptyset})\caH$,
\begin{align}
W\xi:=\sum_{I\subset \{1,\ldots, \frac{\dim\mkk_1}{2}\}}e_I\otimes
\pi(f_{\emptyset I})\xi,\quad \xi\in \caH
\end{align}
defines a unitary from $\caH$ to $\caH_1\otimes\caK$ satisfying
\begin{align}
\Ad\lmk W\rmk\lmk \pi(f_{IJ})\rmk
=e_{IJ}\otimes \unit.
\end{align}
We also have
\begin{align}
W \Gamma=\lmk \Gamma_1\otimes  \Gamma\pi\lmk f_{\emptyset\emptyset}\rmk\rmk W.
\end{align}
Because $f_{\emptyset\emptyset}$ is even,
$\caK$ is invariant under $\Gamma$ and $\Gamma\pi\lmk f_{\emptyset\emptyset}\rmk$
is a self-adjoint unitary on $\caK$.
We have
$\Ad\lmk \pi(v)\rmk\lmk\Gamma\pi\lmk f_{\emptyset\emptyset}\rmk\rmk
=-\Gamma\pi\lmk f_{\emptyset\emptyset}\rmk$,
with the unitary $v$.
It means the eigenvalue $1$ of $\pi(\Gamma)\pi\lmk f_{\emptyset\emptyset}\rmk$
and that of $-1$ have the same degeneracy.
From this, we may find a unitary
$u : \caH_2\to \caK$
such that $\Ad u\lmk \Gamma_2\rmk=\Gamma\pi\lmk f_{\emptyset\emptyset}\rmk$.
Setting $\tilde U:=\lmk\unih{1}\otimes u^*\rmk W\in\caU(\caH)$,
we have 
$
\tilde U\Gamma=\Gamma \tilde U.
$
Then there is an even unitary $U\in\sdcin$ such that $\tilde U=\pi(U^*)$.
From the definition, we can check that $f_{IJ}=\Ad\lmk U\rmk (e_{IJ})$.

\end{proof}

\begin{lem}\label{lem31}
Let $\mkk_1$ be a finite even dimensional Hilbert space with a complex conjugation $\mkC_1$.
Let $\mkk_2$ be an infinite dimensional Hilbert space with a complex conjugation $\mkC_2$.
Set $\mkk:=\mkk_1\oplus\mkk_2$ with a complex conjugation $\mkC=\mkC_1\oplus \mkC_2$.
Let $\alpha\in\auz\lmk \sdckc{}\rmk$ such that $\alpha\vert_{\sdckc{1}}=\id_{\sdckc{1}}$.
Then there is an automorphism $\alpha_2\in \auz\lmk \sdckc{2}\rmk$
such that $\alpha=\id_{\sdckc{1}}\hat\otimes \alpha_2$.
\end{lem}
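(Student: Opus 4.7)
The plan is to identify the subalgebra $\unit\hat\otimes\sdckc{2}$ with the graded commutant of $\sdckc{1}$ inside $\sdckc{}$, and then observe that $\alpha$ must preserve this graded commutant.

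First I would fix a homogeneous system of matrix units $\{e_{IJ}\}$ in $\sdckc{1}$ with $\partial e_{IJ}=|I|+|J|\pmod 2$, available because $\mkk_1$ is even-dimensional and $\sdckc{1}$ is isomorphic to a finite-dimensional CAR-algebra, as already invoked in Lemma \ref{lem34} and Lemma \ref{lem35}. Every $x\in\sdckc{}$ then admits a unique decomposition $x=\sum_{I,J}e_{IJ}\hat\otimes y_{IJ}$ with $y_{IJ}\in\sdckc{2}$, and for homogeneous $x$ the $y_{IJ}$ are themselves homogeneous of grade $\partial x+|I|+|J|\pmod 2$. Using the graded product rule $(a_1\hat\otimes b_1)(a_2\hat\otimes b_2)=(-1)^{\partial b_1\cdot \partial a_2}(a_1a_2)\hat\otimes(b_1b_2)$, I would expand both sides of the graded commutation relation $x\cdot(e_{KL}\hat\otimes\unit)=(-1)^{\partial x\cdot(|K|+|L|)}(e_{KL}\hat\otimes\unit)\cdot x$, match the $\sdckc{2}$-valued coefficient of each $e_{MN}$, and read off the constraints $y_{IJ}=0$ whenever $I\neq J$ and $y_{II}$ independent of $I$. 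Since $\sum_I e_{II}=\unit$, this forces $x=\unit\hat\otimes y_0$ for some $y_0\in\sdckc{2}$, so the graded commutant of $\sdckc{1}$ in $\sdckc{}$ is exactly $\unit\hat\otimes\sdckc{2}$.

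Next, since $\alpha\in\auz\lmk\sdckc{}\rmk$ preserves the grading and fixes $\sdckc{1}$ pointwise, for any homogeneous $x$ graded-commuting with $\sdckc{1}$ and any homogeneous $e\in\sdckc{1}$, applying $\alpha$ to $xe=(-1)^{\partial x\partial e}ex$ and using $\alpha(e)=e$ and $\partial\alpha(x)=\partial x$ yields $\alpha(x)e=(-1)^{\partial\alpha(x)\partial e}e\alpha(x)$. Thus $\alpha$ maps the graded commutant into itself, so $\alpha\lmk\unit\hat\otimes\sdckc{2}\rmk\subseteq\unit\hat\otimes\sdckc{2}$; applying the same reasoning to $\alpha^{-1}$ yields equality, and the restriction gives a graded automorphism $\alpha_2\in\auz\lmk\sdckc{2}\rmk$ defined by $\alpha(\unit\hat\otimes b)=\unit\hat\otimes\alpha_2(b)$.

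Finally, to verify $\alpha=\id_{\sdckc{1}}\hat\otimes\alpha_2$, I would check the equality on the generating set $\{a\hat\otimes\unit,\,\unit\hat\otimes b\}$: the first is fixed by both sides by hypothesis, and the second reads $\alpha(\unit\hat\otimes b)=\unit\hat\otimes\alpha_2(b)=(\id_{\sdckc{1}}\hat\otimes\alpha_2)(\unit\hat\otimes b)$ by the construction of $\alpha_2$. Since every elementary tensor factors as $a\hat\otimes b=(a\hat\otimes\unit)(\unit\hat\otimes b)$ and both $\alpha$ and $\id_{\sdckc{1}}\hat\otimes\alpha_2$ are $*$-homomorphisms, they agree on all of $\sdckc{}$. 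The only substantive work is the sign bookkeeping in the first step, which is mechanical rather than conceptual; no deeper obstacle is expected.
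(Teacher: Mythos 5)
Your proof is correct, and it reaches the paper's intermediate key claim---that $\alpha$ maps $\sdckc{2}$ (viewed inside $\sdckc{}$) into itself---by a somewhat different technical device. The paper invokes Lemma~4.15 of Araki--Moriya, which gives the \emph{ordinary} relative commutant $\sdckc1'\cap\sdckc{}=\sdckc2^{(0)}+v_{\mkk_1}\sdckc2^{(1)}$ where $v_{\mkk_1}\in\sdckc{1}$ is the even self-adjoint unitary implementing $\Theta_1$; it then observes that $\alpha(B(f))$ is odd and graded-commutes with $\sdckc{1}$ (since $\alpha$ fixes $\sdckc{1}$ pointwise), multiplies by $v_{\mkk_1}$ to convert graded commutation to ordinary commutation, and reads off from the commutant formula and the grading that $\alpha(B(f))\in\sdckc2^{(1)}$. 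You instead compute the \emph{graded} relative commutant directly by expanding a general element against the homogeneous matrix units $e_{IJ}$ of $\sdckc{1}$, showing it is exactly $\unit\hat\otimes\sdckc{2}$, and then noting that $\alpha$ (being graded and fixing $\sdckc{1}$) preserves the graded commutant. The sign bookkeeping in your matrix-unit expansion does close up correctly: the off-diagonal coefficients $y_{IJ}$ ($I\neq J$) vanish, and the constraint $\partial y_{II}\equiv\partial x$ kills the stray sign so that all $y_{II}$ coincide. Your approach is more self-contained, replacing the cited commutant formula by an explicit linear-algebra computation; the paper's is shorter by leaning on that formula and the $v_{\mkk_1}$ trick. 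The concluding factorization step---checking $\alpha=\id_{\sdckc{1}}\hat\otimes\alpha_2$ on elementary tensors and extending by multiplicativity---is the same in both.
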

\begin{proof}
It suffices to show $\alpha\lmk\sdckc2\rmk=\sdckc2$.
Because $\mkk_1$ is of finite even dimensional and $\mkk_2$
is of infinite dimensional they are $*$-isomorphic to some CAR-algebras.
Let $v_{\mkk_1}\in \sdckc1$ be the grading operator
of $\sdckc1$. It is an even self-adjoint unitary.
From \cite{aramori} Lemma 4.15,
we have
\begin{align}\label{amf}
\sdckc1'\cap\sdckc{}=\sdckc2^{(0)}+v_{\mkk_1} \sdckc2^{(1)}.
\end{align} 

For any $f\in\mkk_2$, we claim $\alpha\lmk B(f)\rmk$
belongs to $\sdckc2$.
To see this, note because $\alpha$ is graded, that $\alpha\lmk B(f)\rmk$
is odd. 
Furthermore,  any homogeneous $a\in\sdckc1$, we have
\begin{align}
a\alpha\lmk B(f)\rmk-(-1)^{\partial a}\alpha\lmk B(f)\rmk a
=\alpha\lmk
aB(f)-(-1)^{\partial a} B(f)a
\rmk=0,
\end{align}
because $\alpha(a)=a$ from the assumption.
Hence we see that $\alpha\lmk B(f)\rmk v_{\mkk_1}$
is an odd element in $\sdckc1'\cap\sdckc{}$.
From (\ref{amf}), we conclude $\alpha(B(f))\in \sdckc2^{(1)}$, proving the claim.

As this holds for any $f\in\mkk_2$, we conclude
$\alpha\lmk\sdckc2\rmk\subset \sdckc2$.
The same argument for $\alpha^{-1}$ implies the opposite inclusion and we obtain
$\alpha\lmk\sdckc2\rmk=\sdckc2$.
\end{proof}

\begin{proofof}[Proposition \ref{lem32}]
Let $\mkk_0$ be a $2$-dimensional $\mkC_2$-invariant subspace of
$\mkK_2$ with the complex conjugation 
$\mkC_0:=\left.\mkC_2\right\vert_{\mkk_0}$.
Let $\{e_{IJ}\}_{I,J\subset \{1,\ldots, \frac{\dim\mkk_1}{2}+1 \}}$
be a system of matrix units spanning $\SDC\lmk\mkk_1\oplus\mkk_0, \mkC_1\oplus \mkC_0\rmk$
with  grading of $e_{IJ}$ being $|I|+|J| \mod 2$. 
Let $\{E_{IJ}\}_{I,J\subset \{1,\ldots, \frac{\dim\mkk_1}{2} \}}$
be a system of matrix units spanning $\SDC\lmk\mkk_1, \mkC_1\rmk$
with  grading of $E_{IJ}$ being $|I|+|J| \mod 2$. 

We claim that there is an even finite dimensional $\mkC_2$-invariant subspace 
$\mkK_3$ of $\mkk_2\ominus \mkk_0$
with the complex conjugation $\mkC_3:=\mkC_2\vert_{\mkk_3}$,
and a even unitary $U\in\caU\lmk \sdckc{}\rmk$
such that
\begin{align}\label{ueij}
\Ad U\circ \alpha\lmk\sdciz\rmk
\subset \sdcizs.
\end{align}
To show this, let $\delta_{{\dim\mkk_1}+2}$ be the number given in Lemma \ref{lem34}.
Then there exists an even finite dimensional $\mkC_2$-invariant subspace
$\mkk_3$ of $\mkk_2\ominus \mkk_0$
such that
\begin{align}
\dist\lmk \alpha\lmk e_{IJ}\rmk, \sdcizs\rmk<\delta_{{\dim\mkk_1}+2}.
\end{align}
Applying Lemma \ref{lem34}
with $\mkk_1$, $\mkk_2$, $\mkk_3$
replaced by $\mkk_1\oplus \mkk_0$, $\mkk_2\ominus\lmk \mkk_0\oplus\mkk_3\rmk$,
$\mkk_3$
we obtain an even $U\in\caU\lmk \sdckc{}\rmk$ satisfying (\ref{ueij}).

Next we apply Lemma \ref{lem35}
with $\mkk_1$, $\mkk_2$, $e_{IJ}$
$f_{IJ}$
replaced by $\mkk_1$
$\mkk_0\oplus \mkk_3$, $E_{IJ}$
$\Ad U\circ\alpha(E_{IJ})$.
Note that because $U$ is even and $\alpha$ is homogeneous,
$\Ad U\circ\alpha(E_{IJ})$ has a degree $|I|+|J| \mod 2$.
The algebra $\sdckc{0}$ has an odd self-adjoint unitary $v_0$.
Because $U$ is even and $\alpha$ is homogeneous, $v:=\Ad U\circ\alpha(v_0)$
is an odd self-adjoint unitary in $\sdcizs$.
It satisfies
\begin{align}
\Ad v\lmk  \Ad U\circ\alpha(e_{IJ}) \rmk
=\Ad U\circ\alpha\lmk\Ad(v_0)(e_{IJ})\rmk
=(-1)^{|I|+|J|}\Ad U\circ\alpha(e_{IJ}).
\end{align}
Hence the required conditions of Lemma \ref{lem35} is satisfied.
Applying Lemma \ref{lem35}, we obtain
an even unitary $V\in\sdcizs$
such that $\Ad V(E_{IJ})=\Ad U\circ\alpha(E_{IJ})$.
Setting $u:=V^* U\in \sdckc{}$,
$u$ is a even unitary such that $\Ad u\circ\alpha(e_{IJ})=e_{IJ}$
for all $I,J \subset \{1,\ldots, \frac{\dim\mkk_1}{2} \}$.
 It means $\Ad u \circ\alpha\vert_{\sdckc{1}}=\id_{\sdckc{1}}$.
By Lemma \ref{lem31} it means 
an automorphism
$\alpha_2\in\auz\lmk\sdckc {2}\rmk$ satisfying (\ref{uaua}). 
\end{proofof}

{\bf Acknowledgment.}\\
{
The author is grateful to Yuji Tachikawa for a stimulating discussion and helpful comments. 
This work was supported by JSPS KAKENHI Grant Number 16K05171 and 19K03534.
It was also supported by JST CREST Grant Number JPMJCR19T2.
}

\appendix
\section{Basic Notations}\label{notasec}
 
For a Hilbert space $\caH$, $B(\caH)$ denotes the set of all bounded operators on $\caH$,
while $\caU(\caH)$ denotes the set of all unitaries on $\caH$.
If $V:\caH_1\to\caH_2$ is a linear map from a Hilbert space $\caH_1$ to 
another Hilbert space $\caH_2$,
then $\Ad (V):B(\caH_1)\to B(\caH_2)$ denotes the map
$\Ad(V)(x):=V x V^*$, $x\in B(\caH_1)$.
Occasionally we write $\Ad_V$ instead of $\Ad(V)$.
For a $C^{*}$-algebra $\caB$ and $v\in \caB$, we set 
$\Ad(v)(x):=\Ad_{v}(x):=vxv^{*}$, $x\in \caB$.

For a state $\omega$ on a $C^*$-algebra
$\mkB$,
we denote by $(\caH_\omega, \pi_\omega,\Omega_\omega)$
its GNS triple.
For a $C^*$-algebra $\mkB$, we denote by $\caU(\mkB)$
the set of all unitaries in $\mkB$.
For a $C^*$-algebra $\mkB$, $\mkB_{+,1}$
denotes the set of all positive elements of $\mkB$ with
norm less than or equal to $1$.
For states $\omega,\varphi$ on a $C^*$-algebra $\mkB$,
we write $\omega\simeq \varphi$ if they are equivalent
and $\omega\qe\varphi$ if they are quasi-equivalent.
We denote by $\Aut \caB$ the group of automorphisms on a $C^{*}$-algebra $\caB$.
The group of inner automorphisms on  a unital $C^{*}$-algebra $\caB$ is denoted by $\Inn \caB$.
For $\gamma_1,\gamma_2\in\Aut(\caB)$, $\gamma_1=\inn\circ\gamma_2$
means there is some unitary $u$ in $\caB$ such that $\gamma_1=\Ad(u)\circ\gamma_2$.
For a unital $C^{*}$-algebra $\caB$, the unit of $\caB$ is denoted by $\unit_{\caB}$.
For a Hilbert space we write $\unit_{\caH}:=\unit_{\caB(\caH)}$.
For a unital $C^{*}$-algebra $\caB$, by $\caU(\caB)$, we mean
the set of all unitary elements in $\caB$.For a state $\varphi$ on $\caB$ and a $C^{*}$-subalgebra $\caC$ of $\caB$,
$\varphi\vert_{\caC}$ indicates the restriction of $\varphi$ to $\caC$.

The center of a von Neumann algebra $\caM$ is denoted by $Z(\caM)$.

%

To denote the composition of automorphisms $\alpha_1$,
$\alpha_2$, all of
$\alpha_1\circ\alpha_2$,
$\alpha_1\alpha_2$, $\alpha_1\cdot\alpha_2$
are used.
Frequently, the first one serves as a bracket 
to visually separate a group of operators.

\section{Graded von Neumann algebras}\label{gravn}
In this section we collect facts we use about graded von Neumann algebras.
See \cite{1dFermi} for further explanation.
A graded von Neumann algebra is a pair $(\caM,\theta)$ with 
$\caM$ a von Neumann algebra $\theta$ an involutive automorphism on $\caM$, 
$\theta^2 = \mathrm{Id}$.
The even/odd part of $\caM$ with respect to the grading is denoted by $\caM^{(0)}$/$\caM^{(1)}$. 
 If $\caM \subset \caB(\caH)$ and there is a self-adjoint unitary
$\Gamma$ on $\caH$ such that $\mathrm{Ad}_{\Gamma}|_\caM = \theta$, then
we call $(\caM,\theta)$ a spatially graded von Neumann algebra with
grading operator $\Gamma$. 
We say $(\caM,\theta)$
 is balanced if $\caM$ contains an odd self-adjoint unitary. 
If $Z(\caM)\cap \caM^{(0)}=\bbC\unit$ for
 the center $Z(\caM)$ of $\caM$, we say $(\caM, \theta)$ is central.

For a homogeneous state $\omega$ on a graded $C^*$-algebra $\mkB$,
there is a self-adjoint unitary $\Gamma_\omega$ implementing the grading with respect
to $\pi_\omega$.
As a result, the grading extends to the von Neumann algebra
$\pi_\omega(\mkB)''$ by $\Ad\Gamma_\omega$.
We always consider this extension without mentioning explicitly.

\begin{lem}\label{apichi}
Let $(\caM,\Gamma)$ be a balanced central graded von Neumann algebra on a Hilbert space $\caH$.
Then both of $\caM$ and $\caM^{(0)}$ are either a factor or a direct sum of two factors of the same type.
\end{lem}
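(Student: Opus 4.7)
My plan is to handle $\caM$ and $\caM^{(0)}$ separately. For $\caM$ itself, I will first show it is automatically a factor under the balanced and central hypotheses, which dispatches the claim for $\caM$ completely. The centre $Z(\caM)$ is invariant under the grading, so it decomposes into even and odd components. Any $z \in Z(\caM)^{(1)}$ is forced to vanish: picking an odd self-adjoint unitary $v \in \caM^{(1)}$ from the balanced hypothesis, the relation $vz = zv$ (centrality) gives $\Ad v(z) = z$, while oddness of $z$ gives $\Ad v(z) = -z$, so $z = 0$. Combined with $Z(\caM)^{(0)} = \bbC \unit$ from centrality, this forces $Z(\caM) = \bbC \unit$.

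For $\caM^{(0)}$, my plan is to analyse $Z(\caM^{(0)})$ through the involutive automorphism $\Ad v$ restricted to $\caM^{(0)}$, which is well-defined because $\caM^{(0)}$ is $\Ad v$-invariant and $v^2 = \unit$. The bridge I will use is that $\caM = \caM^{(0)} + v\caM^{(0)}$; consequently an $\Ad v$-fixed element of $Z(\caM^{(0)})$ commutes with all of $\caM$ and thus lies in $Z(\caM)$, yielding $Z(\caM^{(0)})^{\Ad v} = \bbC \unit$ by the previous step. I then decompose $Z(\caM^{(0)})$ into $\Ad v$-symmetric and antisymmetric parts. A self-adjoint antisymmetric $w$ has $w^2 \in Z(\caM^{(0)})^{\Ad v} = \bbC \unit$, so either $w = 0$ or, after rescaling, there is a self-adjoint unitary $u_0 \in Z(\caM^{(0)})$ with $\Ad v(u_0) = -u_0$. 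In the latter case, multiplying any antisymmetric element by $u_0$ produces a symmetric element, hence a scalar, which shows $Z(\caM^{(0)}) = \bbC \unit + \bbC u_0 = \bbC p_+ \oplus \bbC p_-$ with $p_\pm := (\unit \pm u_0)/2$ and $\Ad v$ interchanging $p_+$ and $p_-$.

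For the same-type assertion in the two-factor case, I will observe that $\Ad v$ is an automorphism of $\caM^{(0)}$ carrying $p_+$ to $p_-$, so it induces an isomorphism of the corner factors $p_+\caM^{(0)} \cong p_-\caM^{(0)}$, forcing the two factors to be of the same type. The step I expect to require most care is the identity $\caM = \caM^{(0)} + v\caM^{(0)}$ underlying the bridge above: it relies on $v$ being a self-adjoint unitary of odd degree (so $v \cdot \caM^{(1)} \subset \caM^{(0)}$ and $v^2 = \unit$), and it is the single place where the balanced hypothesis is used essentially. Everything remaining is elementary symmetry manipulation inside the abelian algebra $Z(\caM^{(0)})$.
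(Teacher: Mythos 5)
The step asserting ``oddness of $z$ gives $\Ad v(z) = -z$'' is incorrect, and as a result your claim that $\caM$ is automatically a factor fails. Oddness of $z$ means $\Ad\Gamma(z) = -z$, involving the grading operator $\Gamma$, not the odd self-adjoint unitary $v$; for a central odd element $z$ you have only $\Ad v(z) = z$ (ordinary commutativity, which you correctly derived first), and there is no second relation forcing $z = 0$. The two would be reconciled by \emph{graded} commutativity, but the paper's notion of ``central'' imposes only $Z(\caM) \cap \caM^{(0)} = \bbC\unit$ on the \emph{ordinary} centre, leaving $Z(\caM)^{(1)}$ unconstrained. A concrete counterexample: take $\caM = M_2(\bbC) \oplus M_2(\bbC)$ on $\bbC^2 \oplus \bbC^2$ with $\Ad\Gamma$ the flip of the two summands. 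Then $Z(\caM)\cap\caM^{(0)} = \bbC\unit$, and $v := \unit \oplus (-\unit)$ is an odd self-adjoint unitary, so $(\caM,\Ad\Gamma)$ is central and balanced; yet $\caM$ is not a factor, and $v$ itself is an odd central element (indeed $\Ad v(v) = v$, not $-v$). The paper dispatches the $\caM$ case by invoking Lemma A.2 of \cite{1dFermi}, which shows a balanced central graded von Neumann algebra that is not a factor is a direct sum of two factors of the same type interchanged by $\Ad\Gamma$; you need a comparable argument, since this does not follow from centrality alone.

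Your treatment of $\caM^{(0)}$ is correct and does not depend on the faulty first step, so it stands on its own. It also takes a slightly different route from the paper's: you diagonalize the involution $\Ad v$ on $Z(\caM^{(0)})$, identify the $+1$-eigenspace with $Z(\caM)\cap\caM^{(0)}=\bbC\unit$, and show the $-1$-eigenspace is at most one-dimensional by squaring a self-adjoint antisymmetric element; the paper instead picks a nontrivial projection $z\in Z(\caM^{(0)})$, shows $z+\Ad v(z)=\unit$, and tests all other central projections against $z$. Both are valid, and the same-type conclusion via the swap $\Ad v$ of the two corners is identical in both.
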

\begin{proof}The proof is basically the same as part of Proposition 2.9 of \cite{1dFermi}.
 From Lemma A.2 of \cite{1dFermi},
 if $\caM$ is not a factor, 
 it is a direct sum of two factors of the same type
 which maps to each other by $\Ad\Gamma$.
 
Let $U$ be a self-adjoint odd unitary in $\caM$.
Suppose that $\caM^{(0)}$ is not a factor.
Then there exists a projection $z$ in $Z(\caM^{(0)})$ which is not
$0$ nor $\unit$.
For such a projection, we have $z+\Ad_{U}(z)\in\caM^{(0)}\cap \lmk {\caM^{(0)}}\rmk'\cap\{U\}'=Z(\caM)
\cap \caM^{(0)}=\bbC \unit$, 
which then implies that $z+\Ad_{U}(z)=\unit$.
(We note
that for orthogonal projections 
$p,q$ satisfying $p+q=t\unit$ with $t\in\bbR$, either $p+q=\unit$ or $p=0,\,\unit$ holds,
by considering the spectrum of $p=t\unit-q$.)

We claim $Z(\caM^{(0)})=\bbC z+\bbC\unit$. 
Now, for any projection $s$ in $Z(\caM^{(0)})$, 
$zs$ is a projection in $Z(\caM^{(0)})$. Therefore either $zs=0$ or $zs+\Ad_{U}(zs)=\unit$.
The latter is possible only if $zs=z$ because $z+\Ad_{U}(z)=\unit$. 
Similarly, we have $(\unit-z)s=0$ or $(\unit-z)s=\unit-z$.
Hence we have $Z(\caM^{(0)})=\bbC z+\bbC\unit$, proving the claim.

Hence $\caM^{(0)}$ is a summation of two factors $\caM^{(0)}=\caM^{(0)}z\oplus \caM^{(0)}(\unit-z)$.
Because $\Ad U(z)=\unit-z$, 
these factors $\caM^{(0)}z$, $\caM^{(0)}(\unit-z)$ are isomorphic.
In particular, they have the same type.
\end{proof}

Let  $(\caM_1,\Ad_{\Gamma_1})$ and $(\caM_2,\Ad_{\Gamma_2})$ be spatially
graded von Neumann algebras acting on 
on $\caH_1$, $\caH_2$
with grading operators $\Gamma_1$, $\Gamma_2$.
We define a product and involution on the algebraic tensor product $\caM_1\odot \caM_2$
by
\begin{align}
(a_1\hat\otimes b_1)(a_2\hat\otimes b_2)
&=(-1)^{\partial b_1\partial a_2}(a_1a_2\hat\otimes b_1b_2), \nonumber \\
(a\hat\otimes b)^* &=(-1)^{\partial a\partial b} a^*\hat\otimes b^*.
%
\end{align}
for homogeneous elementary tensors.
The algebraic tensor product with this multiplication and involution 
is a $*$-algebra, denoted $\caM_1\,\hat{\odot}\,\caM_2$.
On the Hilbert space $\caH_1\otimes\caH_2$,
\begin{align}\label{pprep}
\pi (a\hat\otimes b)
:=a\Gamma_1^{\partial b}\otimes b
\end{align}
for homogeneous $a\in \caM_1$, $b\in\caM_2$
defines a faithful $*$-representation of $\caM_1\,\hat{\odot}\,\caM_2$.
We call the von Neumann algebra generated by $\pi(\caM_1\,\hat{\odot}\,\caM_2)$
the graded tensor product of 
$(\caM_1,\caH_1,\Gamma_1)$ and $(\caM_2,\caH_2,\Gamma_2)$ 
and denote it by $\caM_1 \hat\otimes \caM_2$. 
It is simple to check that 
$\caM_1 \hat\otimes \caM_2$ is a spatially graded von Neumann algebra with a grading operator
$\Gamma_1\otimes \Gamma_2$.

For  $a\in \caM_1$ and homogeneous $b\in\caM_2$,
we denote $\pi(a\hat\otimes b)$ by $a{\hat\otimes} b$, embedding $\caM_1\,\hat{\odot}\,\caM_2$
in $\caM_1\hat\otimes\caM_2$.
Note that $\partial(a\hat\otimes b) = \partial(a) + \partial(b)$ for homogeneous
$a\in \caM_1$ and $b\in \caM_2$.

\begin{lem}\label{apni}
For each $i=1,2$,
let $(\caM_i,\Gamma_i)$  be 
balanced central graded von Neumann algebras on a Hilbert space $\caH_i$.
Suppose that $\caM_1\hat\otimes \caM_2$ is of type I factor.
Then both of $\caM_1$ and $\caM_2$ are type I.
\end{lem}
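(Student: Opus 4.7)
The plan is to reduce to the classical statement for ordinary (ungraded) von Neumann tensor products by building a spatial isomorphism $\caM_1\hat\otimes\caM_2\cong\caM_1\otimes\caM_2$, exploiting the balanced structure of $\caM_1$.

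Since $\caM_1$ is balanced, I would pick an odd self-adjoint unitary $u_1\in\caM_1^{(1)}$ and set $J_1:=u_1\Gamma_1$. Because $u_1$ is odd, $\Gamma_1 u_1=-u_1\Gamma_1$, and one checks directly that $J_1\in\caM_1'$ with $J_1^*=-J_1$ and $J_1^2=-\unit$; intuitively $J_1$ measures the discrepancy between the internal grading implementer $u_1\in\caM_1$ and the external one $\Gamma_1$. With $P_\pm:=\tfrac{1}{2}(\unit\pm\Gamma_2)$ the spectral projections of $\Gamma_2$, define
\[
U:=\unit\otimes P_++J_1\otimes P_-\in\caU(\caH_1\otimes\caH_2).
\]
Unitarity follows from $J_1J_1^*=\unit$ and $P_+P_-=0$. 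A sign-careful computation using $J_1\Gamma_1=u_1$, $J_1\in\caM_1'$, and the identities $P_\pm b=bP_\pm$ for $b\in\caM_2^{(0)}$, $P_\pm b=bP_\mp$ for $b\in\caM_2^{(1)}$ yields: $\Ad(U)$ fixes $a\otimes\unit$ for $a\in\caM_1$ and $\unit\otimes b$ for $b\in\caM_2^{(0)}$, while $\Ad(U)(\Gamma_1\otimes b)=u_1\otimes b$ for $b\in\caM_2^{(1)}$. Consequently $\Ad(U)\circ\pi(a\hat\otimes b)=au_1^{\partial b}\otimes b$ for homogeneous $a$, $b$.

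Because $u_1$ is a unitary in $\caM_1$, the map $a\mapsto au_1$ is a bijection of $\caM_1$, so $\Ad(U)$ sends the dense $*$-subalgebra $\pi(\caM_1\odot\caM_2)$ onto the dense $*$-subalgebra of simple tensors $a\otimes b$ generating the ordinary von Neumann tensor product $\caM_1\otimes\caM_2\subset\caB(\caH_1\otimes\caH_2)$. Taking weak closures, $\Ad(U)$ induces the desired spatial isomorphism $\caM_1\hat\otimes\caM_2\cong\caM_1\otimes\caM_2$. Under this isomorphism the hypothesis that the left-hand side is a type I factor translates to the right-hand side being a type I factor; the classical identity $Z(\caM_1\otimes\caM_2)=Z(\caM_1)\otimes Z(\caM_2)$ forces both $\caM_i$ to be factors, and the multiplicativity of types under the ungraded tensor product of factors forces each $\caM_i$ to be type I.

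The main obstacle is computational rather than conceptual: verifying the three intertwining identities for $\Ad(U)$ requires disciplined bookkeeping of the anticommutation $\Gamma_1u_1=-u_1\Gamma_1$ and of the parity-dependent commutation of $\Gamma_2$ with homogeneous elements of $\caM_2$. Once those identities are checked, the reduction to the ungraded tensor product is immediate and the classical theory finishes the argument. Note that the balanced hypothesis on $\caM_2$ is not used in this strategy; only the balanced-ness of $\caM_1$ is needed to manufacture $u_1$ and hence $J_1$.
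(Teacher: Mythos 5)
Your key claim that $J_1:=u_1\Gamma_1$ lies in $\caM_1'$ is false. For $a\in\caM_1$ homogeneous one computes $J_1 a J_1^*=u_1\,\theta_1(a)\,u_1^*$, so $J_1\in\caM_1'$ holds if and only if $\Ad(u_1)|_{\caM_1}=\theta_1$; the balanced hypothesis merely gives you an odd self-adjoint unitary $u_1\in\caM_1^{(1)}$, not one that implements the grading. Concretely, take $\caM_1=M_2(\bbC)$ graded by $\Ad(\sigma_z)$ and $u_1=\sigma_x$; then $J_1=\sigma_x\sigma_z=-i\sigma_y$ and $J_1\sigma_x J_1^*=-\sigma_x\neq\sigma_x$. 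Without $J_1\in\caM_1'$ the first of your three intertwining identities, $\Ad(U)(a\otimes\unit)=a\otimes\unit$, fails (one gets $a\otimes P_+ + J_1aJ_1^*\otimes P_-$), so $\Ad(U)$ is not the desired isomorphism. The "Klein transformation" trick you are reaching for requires an odd self-adjoint unitary that also implements the grading internally, which is a genuinely stronger assumption than balancedness.

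Beyond the computational error, the \emph{conclusion} you would draw from a spatial isomorphism $\caM_1\hat\otimes\caM_2\cong\caM_1\otimes\caM_2$ — that $\caM_1$ and $\caM_2$ must be factors — is itself false in the generality of this lemma, and the paper's own Lemma~\ref{sanap} shows why: there one has $\caM_1,\caM_2$ each a direct sum of two type I factors (with centers $\bbC\unit+\bbC V_i$), yet $\caM_1\vee\caM_2=\caB(\caH)$ is a factor. So the graded and ungraded tensor products are genuinely different von Neumann algebras here, and any strategy whose output is "both $\caM_i$ are factors" cannot be correct. Your remark that balancedness of $\caM_2$ goes unused should have been a warning sign. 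The paper instead invokes Lemma~\ref{apichi} (which does use balanced-central for both $\caM_i$) to reduce to the factor-or-sum-of-two-factors dichotomy, matches the types of $\caM_i$ and $\caM_i^{(0)}$, and then restricts a faithful semifinite normal trace from the type I factor $\caM_1\hat\otimes\caM_2$ to the honest ungraded subalgebra $\caM_1^{(0)}\otimes\caM_2^{(0)}$ (legitimate because even elements commute) to force the even parts to be semifinite and not type II; this avoids any untenable identification of the graded and ungraded tensor products.
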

\begin{proof}
By Lemma \ref{apichi}, all of
$\caM_1$, $\caM_2$, $\caM_1^{(0)}$, $\caM_2^{(0)}$ 
are either a factor or a direct sum of two factors of the same type.
Then by Lemma A.1 of \cite{1dFermi}, the type of $\caM_i$ and $\caM_i^{(0)}$
are the same for each $i=1,2$.

Because $\caM_1\hat\otimes \caM_2$ is a type I factor,
it has a faithful semifinite normal trace $\tau$ whose restriction
to $\caM_1^{(0)}\otimes \caM_2^{(0)}$ is also a faithful semifinite normal trace.
Therefore, from Theorem 2.15 of \cite{takesaki}, $\caM_1^{(0)}\otimes \caM_2^{(0)}$ is semifinite.
From Theorem 2.30, it means both of $\caM_1^{(0)}$ and $\caM_2^{(0)}$ are semifinite.
Because $\caM_1\hat\otimes \caM_2$ is a type I factor,
for the set of orthogonal projections $\caP\lmk \caM_1\hat\otimes \caM_2\rmk $ of
$\caM_1\hat\otimes \caM_2$,
$\tau\lmk\caP\lmk \caM_1\hat\otimes \caM_2\rmk\rmk$
is a countable set.
It means that   for the set of orthogonal projections $\caP\lmk \caM_1^{(0)}\otimes \caM_2^{(0)}\rmk $ of
$\caM_1^{(0)}\otimes \caM_2^{(0)}$,
$\tau\lmk\caP\lmk \caM_1^{(0)}\hat\otimes \caM_2^{(0)}\rmk\rmk$
is also countable.
It means that $\caM_1^{(0)}\otimes \caM_2^{(0)}$ is not of type II.
It means both of $\caM_1^{(0)}$ and $ \caM_2^{(0)}$ are type I, hence 
 both of $\caM_1$ and $ \caM_2$ are type I.
\end{proof}
\begin{lem}\label{sanap}
Let $\caH$ be a Hilbert space with a self-adjoint unitary $\Gamma$ 
that gives a grading for $\caB(\caH)$. 
Let $\caM_1$, $\caM_2$  be $\Ad_{\Gamma}$-invariant
type I von Neumann subalgebras of $\caB(\caH)$ with $\caM_1 \vee \caM_2 = \caB(\caH)$.
Suppose with respect to the grading given by $\Ad_{\Gamma}$
that both of $\caM_1$ and $\caM_2$ have a center of the form
$Z\lmk \caM_i\rmk=\bbC\unit+\bbC V_i$ with a self-adjoint odd unitary $V_i$.
Suppose further that
\begin{align}\label{anti}
ab- (-1)^{\partial a \partial b} b a = 0,\quad \text{for homogeneous}\quad a\in\caM_1, b \in\caM_2.
\end{align}
Then there are Hilbert spaces $\caK_1,\caK_2$ and a unitary
$W:\caH\to\caK_1\otimes\caK_2\otimes \bbC^2$ such that
\begin{align}\label{apgo}
\begin{split}
\Ad W \lmk \caM_1^{(0)}\rmk=\caB(\caK_1)\otimes \bbC\unit_{\caK_2}\otimes \bbC\unit_{\bbC^2},\\
\Ad W \lmk \caM_2^{(0)}\rmk=\bbC\unit_{\caK_1} \otimes \caB(\caK_2)\otimes \bbC\unit_{\bbC^2},\\
\Ad W (V_1)=\unit_{\caK_1}\otimes \unit_{\caK_2}\otimes \sigma_z,\\
\Ad W(V_2)=\unit_{\caK_1}\otimes \unit_{\caK_2}\otimes \sigma_x.
\end{split}
\end{align}
\end{lem}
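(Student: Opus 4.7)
The plan is to use the graded commutation relation $(\ref{anti})$ to exhibit $\caM_1^{(0)}$ and $\caM_2^{(0)}$ as a pair of commuting type I factors in $\caB(\caH)$, obtain the resulting tensor splitting $\caH\cong\caK_1\otimes\caK_2\otimes\caL$, and then exploit $\caM_1\vee\caM_2=\caB(\caH)$ to force $\caL\cong\bbC^2$ with $V_1,V_2$ realized as the Pauli matrices $\sigma_z,\sigma_x$.

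First, I would show each $\caM_i^{(0)}$ is a type I factor. Since $V_i$ is an odd self-adjoint unitary in $Z(\caM_i)$, we have $Z(\caM_i)\cap\caM_i^{(0)}=\bbC\unit$, so $(\caM_i,\Ad\Gamma)$ is balanced and central, and Lemma \ref{apichi} says $\caM_i^{(0)}$ is either a factor or a direct sum of two factors. A direct argument rules out the second possibility: any $z\in Z(\caM_i^{(0)})$ commutes with $V_i$ (which lies in $Z(\caM_i)$), and therefore with $\caM_i^{(1)}=V_i\caM_i^{(0)}$, giving $z\in Z(\caM_i)\cap\caM_i^{(0)}=\bbC\unit$. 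Hence $\caM_i^{(0)}$ is a factor, and it is type I because $\caM_i$ is.

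Taking $\partial a=\partial b=0$ in $(\ref{anti})$ shows that $\caM_1^{(0)}$ and $\caM_2^{(0)}$ commute. The standard tensor splitting theorem for a pair of commuting type I factors acting on $\caH$ then provides a unitary $W_0\colon\caH\to\caK_1\otimes\caK_2\otimes\caL$ with $\Ad W_0(\caM_1^{(0)})=\caB(\caK_1)\otimes\bbC\unit_{\caK_2}\otimes\bbC\unit_\caL$ and $\Ad W_0(\caM_2^{(0)})=\bbC\unit_{\caK_1}\otimes\caB(\caK_2)\otimes\bbC\unit_\caL$. Since $V_1\in Z(\caM_1)$, it commutes with $\caM_1^{(0)}$, and taking $a=V_1$ (odd) and $b\in\caM_2^{(0)}$ (even) in $(\ref{anti})$ shows $V_1$ also commutes with $\caM_2^{(0)}$; hence $\Ad W_0(V_1)=\unit_{\caK_1}\otimes\unit_{\caK_2}\otimes V_1'$ for a self-adjoint unitary $V_1'\in\caB(\caL)$, and similarly $\Ad W_0(V_2)=\unit_{\caK_1}\otimes\unit_{\caK_2}\otimes V_2'$. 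Applying $(\ref{anti})$ with $a=V_1$, $b=V_2$ (both odd) yields $V_1V_2+V_2V_1=0$, hence $V_1'V_2'+V_2'V_1'=0$.

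To finish, I would use $\caM_1\vee\caM_2=\caB(\caH)$. Writing $\caM_i=\caM_i^{(0)}+V_i\caM_i^{(0)}$, we obtain $\Ad W_0(\caM_1\vee\caM_2)=\caB(\caK_1)\otimes\caB(\caK_2)\otimes\caN$, where $\caN:=\{V_1',V_2'\}''$. Comparing with $\caB(\caK_1)\otimes\caB(\caK_2)\otimes\caB(\caL)$ forces $\caN=\caB(\caL)$. On the other hand, two anti-commuting self-adjoint unitaries admit a canonical normal form: splitting $\caL$ into the $\pm1$ eigenspaces $\caL_\pm$ of $V_1'$, the unitary $V_2'$ intertwines $\caL_+$ and $\caL_-$, so choosing an orthonormal basis of $\caL_+$ together with its $V_2'$-image identifies $\caL\cong\bbC^2\otimes\caE$ with $V_1'=\sigma_z\otimes\unit_\caE$ and $V_2'=\sigma_x\otimes\unit_\caE$. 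Then $\caN=M_2(\bbC)\otimes\bbC\unit_\caE$, and $\caN=\caB(\caL)$ forces $\caE=\bbC$. Composing $W_0$ with the resulting identification produces the desired unitary $W$ satisfying $(\ref{apgo})$. The main technical point is the tensor splitting for commuting type I factors combined with the bookkeeping needed to locate $V_1,V_2$ in the last tensor factor; the generation hypothesis $\caM_1\vee\caM_2=\caB(\caH)$ is precisely what prevents extra multiplicity from appearing in $\caL$.
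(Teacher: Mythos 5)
Your proposal follows essentially the same route as the paper's proof: split $\caH$ using the commuting type I factors $\caM_1^{(0)},\caM_2^{(0)}$, observe that $V_1,V_2$ act only on the residual factor as anti-commuting self-adjoint unitaries, put them into Pauli normal form, and use $\caM_1\vee\caM_2=\caB(\caH)$ to collapse the multiplicity space to $\bbC$. The only cosmetic differences are that you derive the factoriality of $\caM_i^{(0)}$ directly from Lemma~\ref{apichi} and the presence of $V_i\in Z(\caM_i)$ (where the paper cites Proposition~2.9 of \cite{1dFermi}), and you reorganize the final dimension count through $\{V_1',V_2'\}''=\caB(\caL)$ rather than checking the rank of an eigenprojection of $y_1$; both are correct and equivalent.
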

\begin{proof}
By Proposition 2.9 of \cite{1dFermi} (with $G$ a trivial group),
$\caM_i^{(0)}$ $i=1,2$ are type I factors.
By the assumption (\ref{anti}),
$\caM_1^{(0)}$ and $\caM_2^{(0)}$ commute.
From (\ref{anti}) and the fact that $V_i$ belongs to center of $\caM_i$,
 we know that
both of $V_1$ and $V_2$ commute with $\caM_1^{(0)}$ and $\caM_2^{(0)}$.
As a result, there are Hilbert spaces 
$\caK_1,\caK_2,\caK_3$ and a unitary $\hat W: \caH\to\caK_1\otimes \caK_2\otimes \caK_3$
such that 
\begin{align}\label{apyon}
\begin{split}
\Ad \hat W \lmk \caM_1^{(0)}\rmk=\caB(\caK_1)\otimes \bbC\unit_{\caK_2}\otimes \bbC\unit_{\caK_3},\\
\Ad \hat W \lmk \caM_2^{(0)}\rmk=\bbC\unit_{\caK_1} \otimes \caB(\caK_2)\otimes \bbC\unit_{\caK_3},\\
\Ad \hat W (V_1)=\unit_{\caK_1}\otimes \unit_{\caK_2}\otimes y_1,\\
\Ad \hat W(V_2)=\unit_{\caK_1}\otimes \unit_{\caK_2}\otimes y_2,
\end{split}
\end{align}
with $y_1,y_2$ self-adjoint unitaries on $\caK_3$.
Because $V_2 V_1 V_2^*=-V_1$,
we have $y_2 y_1y_2^*=-y_1$.
With $y_1=r_{1+}-r_{1-}$ as a spectral projection,
this means that $y_2 r_{1\pm }y_2=r_{1\mp}$,
and $u:=r_{1+} y_2 r_{1-}: r_{1-}\caK_3\to r_{1+}\caK_3$
is a unitary.
Let $\{e_1,e_2\}$ be the standard basis of $\bbC^2$
and set $v:\caK_3\to \bbC^2\otimes r_{1+}\caK$ be a unitary given 
by
\begin{align}
v\xi:= e_1\otimes r_{1+}\xi+e_2\otimes ur_{1-}\xi,\quad \xi\in\caK_3.
\end{align}
It is then straightforward to show that
\begin{align}
\begin{split}
\Ad v(y_1)=\sigma_z\otimes \unit_{r_{1+}\caK},\\
\Ad v(y_2)=\sigma_x\otimes \unit_{r_{1+}\caK}.
\end{split}
\end{align}
From this, (\ref{apyon}) and $\caM_1 \vee \caM_2 = \caB(\caH)$,
we see that $r_{1+}\caK$ is one-dimensional.
Hence $W:=(\unit_{\caK_1}\otimes \unit_{\caK_2}\otimes v)\hat W
: \caH\to \caK_1\otimes \caK_2\otimes \bbC^2$
defines a unitary
satisfying (\ref{apgo}).

\end{proof}

\section{Miscellaneous Lemmas}
It is elementary to show the following Lemma. We omit the proof.
\begin{lem}\label{lem42}
For $\sigma=L,R$,
let $\caB_\sigma$  be a graded $C^*$-algebra 
with a grading automorphism $\Theta_\sigma$.
Let $(\caH_\sigma,\pi_\sigma)$ be an irreducible representation of $\caB_\sigma$
with a self-adjoint unitary $\Gamma_\sigma$ implementing
$\Theta_\sigma$.
Let $\zeta_\sigma\in\Aut^{(0)}(\caB_\sigma)$.
Then the followings hold.
\begin{description}
\item[(i)]
If each $\zeta_\sigma$ is implemented by a unitary $u_\sigma$ on 
$(\caH_\sigma,\pi_\sigma)$, i.e., $\Ad \lmk u_\sigma\rmk\circ\pi_\sigma=\pi_\sigma\circ\zeta_\sigma$,
then $u_\sigma$ is homogeneous with respect to $\Ad(\Gamma_\sigma)$ and
\begin{align}
\begin{split}
&\Ad\lmk \unit\otimes u_R\rmk\circ\lmk\pi_L\hat\otimes \pi_R\rmk
=\lmk\pi_L\hat\otimes \pi_R\rmk\circ\lmk \id_{\caB_L}\hat\otimes \zeta_R\rmk,\\
&\Ad\lmk u_L\otimes \Gamma_R^{\partial u_L}\rmk\circ
\lmk\pi_L\hat\otimes \pi_R\rmk
=\lmk\pi_L\hat\otimes \pi_R\rmk\lmk \zeta_L\hat\otimes \id_{\caB_R}\rmk.
\end{split}
\end{align}
Here $\partial u_L$ denotes the grade of $u_L$ with respect to $\Ad(\Gamma_L)$.
\item[(ii)]
Suppose that there are unitaries $U_\sigma$, $\sigma=L,R$ on $\caH_L\otimes\caH_R$
such that 
\begin{align}
\begin{split}
&\Ad\lmk U_R\rmk\circ\lmk\pi_L\hat\otimes \pi_R\rmk
=\lmk\pi_L\hat\otimes \pi_R\rmk\circ\lmk \id_{\caB_L}\hat\otimes \zeta_R\rmk,\\
&\Ad\lmk U_L\rmk\circ
\lmk\pi_L\hat\otimes \pi_R\rmk
=\lmk\pi_L\hat\otimes \pi_R\rmk\lmk \zeta_L\hat\otimes \id_{\caB_R}\rmk.
\end{split}
\end{align}
Then there are unitaries $u_\sigma\in\caU(\caH_\sigma)$
such that $\Ad \lmk u_\sigma\rmk\circ\pi_\sigma=\pi_\sigma\circ\zeta_\sigma$,
and
\begin{align}
\unit\otimes u_R=U_R,\quad
u_L\otimes \Gamma_R^{\partial u_L}=U_L.
\end{align}
\end{description}
\end{lem}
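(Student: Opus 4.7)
The proof is a direct unpacking of the definition of the graded tensor product representation, $(\pi_L\hat\otimes\pi_R)(a\hat\otimes b)=\pi_L(a)\Gamma_L^{\partial b}\otimes\pi_R(b)$ for homogeneous $b$, together with irreducibility. My plan is to handle (i) and (ii) separately, and in (ii) to extract $u_R$ first (which is easy) and then $u_L$ (which requires slightly more care with gradings).

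For (i), I would first establish homogeneity of $u_\sigma$. Since $\zeta_\sigma$ is graded, $\Theta_\sigma\zeta_\sigma=\zeta_\sigma\Theta_\sigma$ translates via $\Ad(\Gamma_\sigma)\pi_\sigma=\pi_\sigma\Theta_\sigma$ and $\Ad(u_\sigma)\pi_\sigma=\pi_\sigma\zeta_\sigma$ into $\Ad(\Gamma_\sigma u_\sigma\Gamma_\sigma u_\sigma^*)\circ\pi_\sigma=\pi_\sigma$. Irreducibility of $\pi_\sigma$ forces $\Gamma_\sigma u_\sigma\Gamma_\sigma u_\sigma^*=c\unit$ for a scalar $c$; conjugating once more by $\Gamma_\sigma$ yields $c^2=1$, so $u_\sigma$ is homogeneous. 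The two intertwining identities then follow by evaluating both sides on a homogeneous simple tensor $a\hat\otimes b$: for the $\zeta_R$ case both sides equal $\pi_L(a)\Gamma_L^{\partial b}\otimes u_R\pi_R(b)u_R^*$, and for the $\zeta_L$ case the two sign factors $(-1)^{\partial u_L\partial b}$ coming from moving $u_L$ past $\Gamma_L^{\partial b}$ and from $\Gamma_R^{\partial u_L}\pi_R(b)\Gamma_R^{-\partial u_L}$ cancel, leaving $u_L\pi_L(a)u_L^*\Gamma_L^{\partial b}\otimes\pi_R(b)$.

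For (ii), I would extract $u_R$ by plugging $a\hat\otimes\unit$ into the $U_R$-intertwining relation: the right-hand side equals $\pi_L(a)\otimes\unit$, so $\Ad(U_R)$ fixes every element of $\pi_L(\caB_L)\otimes\unit$. By irreducibility of $\pi_L$ we have $\pi_L(\caB_L)''=\caB(\caH_L)$, so $U_R$ lies in the commutant $\unit\otimes\caB(\caH_R)$, giving $U_R=\unit\otimes u_R$. Plugging $\unit\hat\otimes b$ into the same relation yields $\Ad(u_R)\pi_R=\pi_R\zeta_R$. To extract $u_L$, I would apply the $U_L$-relation to $a\hat\otimes\unit$, getting $\Ad(U_L)(\pi_L(a)\otimes\unit)=\pi_L(\zeta_L(a))\otimes\unit$; as $\zeta_L$ is surjective this shows $\Ad(U_L)$ preserves the type I factor $\caB(\caH_L)\otimes\unit$ setwise, and its restriction—being inner on this factor—is implemented by some unitary $u_L$ on $\caH_L$ with $\Ad(u_L)\pi_L=\pi_L\zeta_L$. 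By part (i), $u_L$ is homogeneous. Then $U_L(u_L^*\otimes\unit)$ commutes with $\caB(\caH_L)\otimes\unit$, hence $U_L=u_L\otimes w$ for some unitary $w$ on $\caH_R$. Applying the $U_L$-relation to $\unit\hat\otimes b$ for homogeneous $b$ gives $w\pi_R(b)w^*=(-1)^{\partial u_L\partial b}\pi_R(b)=\Ad(\Gamma_R^{\partial u_L})\pi_R(b)$; irreducibility of $\pi_R$ then forces $w=c\Gamma_R^{\partial u_L}$ for some scalar $c$, which can be absorbed into $u_L$.

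The only real obstacle is bookkeeping of grading signs in (i) and in the last step of (ii); no substantial machinery is needed beyond irreducibility, the commutant theorem for type I factors, and the fact that every automorphism of $\caB(\caH)$ is inner. This is why the paper simply calls the lemma elementary.
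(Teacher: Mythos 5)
Your proof is correct, and since the paper omits the argument as elementary, there is no competing exposition to compare against: what you give—deriving homogeneity of $u_\sigma$ from irreducibility and gradedness of $\zeta_\sigma$, unpacking the defining formula $(\pi_L\hat\otimes\pi_R)(a\hat\otimes b)=\pi_L(a)\Gamma_L^{\partial b}\otimes\pi_R(b)$ and checking the $(-1)^{\partial u_L\partial b}$ signs cancel, then in (ii) extracting $U_R=\unit\otimes u_R$ via the commutant theorem and peeling $U_L=u_L\otimes w$ off the type I factor with $w$ pinned to $\Gamma_R^{\partial u_L}$ up to an absorbable phase—is exactly the elementary argument the authors have in mind. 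All the sign bookkeeping checks out.
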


\end{document}